\newtheorem{lem}{Lemma}
\newtheorem{theo}{Theorem}
\newtheorem{theoA}{Theorem B}
\newtheorem{assum}{Assumption}
\newtheorem{rem}{Remark}
\newcommand{\pim}{\underset{T\rightarrow\infty}{\text{plim}}}
\newcommand{\lm}{\underset{T\rightarrow\infty}{\text{lim}}}
\newcommand{\inp}{\stackrel{p}{\rightarrow}}
\newcommand{\ins}{\stackrel{p^b}{\rightarrow}}
\newcommand{\op}{o_p}
\newcommand{\Op}{O_p}
\newcommand{\vlam}{\bm \lambda}
\renewcommand{\vec}{\mbox{\bf vect}}
\newcommand{\vecc}{\mbox{vec}}
\renewcommand{\diag}{\mbox{\bf diag}}
\newcommand{\vbetai}{\vbeta_{(i)}}
\renewcommand{\vepsi}{\bm \epsilon}
\newcommand{\black}[1]{{\color{black}#1}}
\newcommand{\weaks}{\stackrel{d_p^b}{\Rightarrow}}
\newcommand{\mat}{\mbox{\bf mat}}
\newtheorem*{namedthm}{\namedthmname}
\newcounter{namedthm}
\newenvironment{named}[1]
  {\def\namedthmname{#1}%
   \refstepcounter{namedthm}%
   \namedthm\def\@currentlabel{#1}}
  {\endnamedthm}
\renewenvironment{equation*}{\gather\textstyle}{\nonumber\endgather}
\renewcommand{\var}{\mbox{Var}}
\renewcommand{\cov}{\mbox{Cov}}
\begin{document}

\title{Bootstrapping Structural Change Tests}
\author{Otilia Boldea, \small{Tilburg University}\\Adriana Cornea-Madeira, \small{University of York}\\Alastair R. Hall, \small{University of Manchester}}
\maketitle \thispagestyle{empty}
\begin{abstract}
This paper analyses the use of bootstrap methods to test for parameter change in linear models estimated via Two Stage Least Squares (2SLS). Two types of test are considered:  one where the null hypothesis is of no change and the alternative hypothesis involves discrete change at $k$ unknown break-points in the sample; and a second test where the null hypothesis is that there is discrete parameter change at $l$ break-points in the sample against an alternative in which the parameters change at $l+1$ break-points. In both cases, we consider inferences based on a $\sup$-$Wald$-type statistic using either the wild recursive bootstrap or the wild fixed bootstrap. We establish the asymptotic validity of these bootstrap tests under a set of general conditions that allow the errors to exhibit conditional and/or unconditional heteroskedasticity, and report results from a simulation study that indicate the tests yield reliable inferences in the sample sizes often encountered in macroeconomics. The analysis covers the cases where the first-stage estimation of 2SLS involves a model whose parameters are either constant or themselves subject to discrete parameter change. If the errors exhibit unconditional heteroscedasticity and/or the reduced form is unstable then the bootstrap methods are particularly attractive because the limiting distributions of the test statistics are not pivotal.

\vspace*{0.2in}
\noindent {\it JEL classification:} C12, C13, C15, C22\\
{\it Keywords:} Multiple Break Points; Instrumental Variables Estimation; Two-stage Least Squares;  Wild bootstrap; Recursive bootstrap; Fixed-regressor bootstrap; Heteroskedasticity. \vspace*{0.2in}
\par
\noindent
----------------------------------------------------------------------------------------------------------------------------------------------

\noindent
\tiny{The first author acknowledges the support of the VENI Grant 451-11-001.  The second author acknowledges the support of the British Academy's PDF/2009/370. The third author acknowledges the support of ESRC grant RES-062-23-1351. We thank the following for very valuable comments on this work: Yacine A\"{i}t-Sahalia, Donald Andrews, Giuseppe Cavaliere, Xu Cheng, Russell Davidson, Frank Diebold, S\'ilvia Gon\c{c}alves, Bruce Hansen, Jan Magnus, Ulrich M\"uller, Anders Rahbek, Nikolaus Schweizer, Terence Tao, three anonymous referees, and the participants at University of Pennsylvania Econometrics Seminar, 2011; the Workshop on Recent Developments in Econometric Analysis, University of Liverpool, 2012; Tilburg University seminar, 2012; University of York seminar 2012; the European Meeting of the Econometric Society, Malaga 2012 and Toulouse 2014; CFE-ERCIM Conference, Pisa 2014; Durham University Business School seminar, 2015; Inference Issues in Econometrics, Amsterdam 2017; IAAE Conference, Montreal, 2018.}
\end{abstract}

\pagebreak
\section{Introduction}

Linear models with endogenous regressors are commonly employed in time series econometric analysis.\footnote{For example, \citet{Brady:2008} examines consumption smoothing using by regressing consumption growth on consumer credit, the latter being endogenous  because it depends on liquidity constraints. \citet{Zhang/Osborn/Kim:2008}, \citet{Kleibergen/Mavroeidis:2009}, \citet{Hall/Han/Boldea:2012} and \citet{Kim/Manopimoke/Nelson:2014} investigate the New Keynesian Phillips curve, where inflation is driven by expected inflation and marginal costs, both endogenous since they are correlated with inflation surprises. \citet{Bunzel/Enders:2010} and \citet{Qian/Su:2014} estimate the forward-looking Taylor rule, a model where the Federal fund rate is set based on expected inflation and output, both endogenous as they depend either on forecast errors or on current macroeconomic shocks. All these studies test for structural change in their estimated equations as part of their analysis.} In many cases, the parameters of these models are assumed constant throughout the sample. However, given the span of many economic time series data sets, this assumption may be questionable and a more appropriate specification may involve parameters that change value during the sample period. Such parameter changes could reflect legislative, institutional or technological changes, shifts in governmental and economic policy, political conflicts, or could be due to large macroeconomic shocks such as the oil shocks experienced over the past decades and the productivity slowdown. It is therefore important to test for parameter - or structural - change.  Various tests for structural change have been proposed with one difference between them being in the type of structural change against which the tests are designed to have power. In this paper, we focus on the scenario in which the potential structural change consists  of discrete changes in the parameter values at unknown points in the sample, known as break - (or change-) points. Within this framework, two types of hypotheses tests are of natural interest: tests of no parameter change against an alternative of change at a fixed number of break-points, and tests of whether the parameters change at $\ell$ break-points against an alternative that they change at $\ell+1$ points. These hypotheses tests are of interest in their own right, and also because they can form the basis of a sequential testing strategy for estimating the number of parameter break-points, see \citet{Bai/Perron:1998}.

\citet{Hall/Han/Boldea:2012} (HHB hereafter) propose various statistics for testing these hypotheses in linear models with endogenous regressors based on Two Stage Least Squares (2SLS).\footnote{\citet{Perron/Yamamoto:2015} propose an alternative approach based on OLS.} Their tests are the natural extensions of the analogous tests for linear models with exogenous regressors estimated via OLS that are introduced in the seminal paper by \citet{Bai/Perron:1998}. A critical issue in the implementation of these tests in a 2SLS setting is whether or not the reduced form for the endogenous regressors is stable. If it is then, under certain conditions, HHB''s test statistics converge in distribution to the same distributions as their OLS counterparts and are pivotal, see HHB and \citet{Perron/Yamamoto:2014}. However, if the reduced form itself  is unstable and/or there is unconditional heteroskedasticity, then these limiting distributions no longer apply (HHB), and are, in fact, no longer pivotal (\citeauthor{Perron/Yamamoto:2014}, \citeyear{Perron/Yamamoto:2014}). This is a severe drawback as in most cases of interest the reduced form is likely to be unstable. This problem has been circumvented in two ways. HHB suggest a testing strategy based on dividing the sample into sub-samples over which the RF is stable but this is inefficient compared to inferences based on the whole sample, and can be infeasible if the sub-samples are small. \citet{Perron/Yamamoto:2015} propose using a variant of \citet{HansenBE:2000}'s  fixed regressor bootstrap to calculate the critical values of the test. Their simulation evidence suggests the use of this bootstrap improves the reliability of inferences but they do not establish the asymptotic validity of the method.\footnote{An alternative approach is to estimate the number and location of the breaks via an information criteria, see  \citet{Hall/Osborn/Sakkas:2015}. However, this approach has the drawback that inferences can be sensitive to the choice of penalty function.}

In this paper, we explore the use of bootstrap versions of 2SLS-based tests for parameter change in far greater detail than previous studies. We consider inferences based on two different types of bootstrap versions of the structural change tests, provide formal proofs of their asymptotic validity and report simulations results that demonstrate that the bootstrap tests provide reliable inferences in the finite sample sizes encountered in practice. More specifically, we consider the case where the right-hand side variables of the equation of interest contains endogenous regressors, contemporaneously exogenous variables, lagged values of both and lagged values of the dependent variable.  This equation of interest is part of a system of equations that is completed by the reduced form for the endogenous regressors and equations for the contemporaneously exogenous variables. This system of equations is assumed to follow a Structural Vector Autoregressive (SVAR) model in which the parameters of the mean are subject to discrete shifts at a finite number of break-points in the sample. Both the number and location of the break-points are unknown to the researcher. These break-points define regimes over which the parameters are constant, and it is assumed that the implied reduced form VAR is stable within each such regime. The errors of the VAR are assumed to follow a vector martingale difference sequence that potentially exhibits both conditional and unconditional heteroskedasticity. Given this error structure, we explore methods for inference based on the wild bootstrap proposed by \citet{Liu:1988} because it has been found to replicate the conditional and unconditional heteroskedasticity of the errors in other contexts. In particular, we consider two versions of the wild bootstrap: the wild recursive bootstrap (which generates recursively the bootstrap observations) and the wild fixed-regressor bootstrap (which adds the wild bootstrap residuals to the estimated conditional mean, thus keeping all lagged regressors fixed). These bootstraps have been proposed by  \citet{Goncalvez/Kilian:2004} to test the significance of parameters in autoregressions with (stationary) conditional heteroskedastic errors.   Our primary focus is on bootstrap versions of $\sup$-$Wald$ - type statistics to test for structural changes in the parameters of  the equation of interest (with endogenous variables) estimated by 2SLS, but our validity arguments also extend straightforwardly to analogous $\sup$-$F$-type statistics.

While our primary focus is on models where the reduced form for the endogenous regressors is unstable, our results also cover the case where this reduced form is stable. In the latter case, the test statistics have a pivotal limiting distribution under conditions covered by our framework,  specialized to errors that are unconditionally homoskedastic. For these situations, the bootstrap methods we propose are expected to provide a superior approximation to finite sample behaviour compared to the limiting distribution because the bootstrap, by its nature, incorporates sample information. Thus bootstrap versions of the tests are attractive in this setting as well.

In the case where there are no endogenous regressors in the equation of interest, our framework reduces to a linear model estimated by Ordinary Least Squares (OLS). For this set-up, \citet{HansenBE:2000} proposes the wild fixed-design bootstrap to test for structural changes using a $\sup$-$F$ statistic. Very recently, \\ \citet{Georgievetal:2018} (GHLT, hereafter) consider  \citet{HansenBE:2000}'s bootstrap for versions of $\sup$-$F$ type tests for parameter variation in predictive regressions with exogenous regressors.  Both \citet{HansenBE:2000} and GHLT establish the asymptotic validity of this bootstrap within the settings they consider.\footnote{In fact, GHLT demonstrate that \citet{HansenBE:2000}'s proof of the asymptotic validity of the bootstrap needs an amendment when the predictive regressors are (near-) unit root processes.} There are some similarities and important differences between our framework (specialized to the no endogenous regressor case) and those in \citet{HansenBE:2000} and GHLT. We adopt similar assumptions about the error process to GHLT and like both \citet{HansenBE:2000} and GHLT consider fixed regressor bootstrap tests of a null of constant parameters versus an alternative of parameter change. Important differences include: GHLT allow for strongly persistent variables whereas our framework assumes the system is stable within (suitably defined) regimes;  our analysis covers tests for additional breaks in the model, the use of the recursive bootstrap and also inferences based on $\sup$-$Wald$ tests. Thus our results for this case complement those of \citet{HansenBE:2000} and GHLT.\footnote{The wild fixed-regressor bootstrap is also included in the recent simulation study exploring the finite sample properties of inference methods about the location of the break-point in models estimated via OLS reported in \citet{Chang/Perron:2018}.}

Although the frameworks are different, \citet{HansenBE:2000}, GHLT and our own study all find their bootstrap versions of the structural change tests work well in finite samples. Interestingly, \citet{Chang/Perron:2018} find that bootstrap-based inferences about the location of breaks have similar advantages in finite samples.\footnote{\citet{Chang/Perron:2018} report results from a comprehensive simulation study that investigates the finite sample methods of various methods for constructing confidence intervals for the break fractions in linear regression models with exogenous regressors. They consider variants of the intervals based on i.i.d., wild and sieve bootstraps.} Collectively, our paper and these other recent studies suggest the use of the bootstrap can yield reliable inferences in linear models with multiple break-points in the sample sizes encountered in practice.

An outline of the paper is as follows. Section 2 lays out the model, test statistics and their bootstrap versions. Section 3 details the assumptions and contains theoretical results establishing the asymptotic validity of the bootstrap methods. Section 4 contains simulation results that provide evidence on the finite sample performance of the bootstrap tests. Section 5 concludes. Appendix A contains all the tables for Section 4, with additional simulations relegated to a Supplementary Appendix. Appendix B contains the proofs, with some background results relegated to the same Supplementary Appendix. 

\textit{Notation: } Matrices and vectors are denoted with bold symbols, and scalars are not. Define for a scalar $N$, the generalized vec operator $\vec_{s=1:N}(\mA_s) = (\mA_1', \ldots, \mA_N')'$, stacking in order the matrices $\mA_s, (s=1,\ldots, N)$, which have the same number of columns. Let $\diag_{s=1:N}(\bm{A}_s) = \diag(\mA_1, \ldots, \mA_{N})$ be the matrix that puts the blocks $\mA_1, \ldots, \mA_N$ on the diagonal. If it is clear over which set $\vec$ and $\diag$ operations are taken, then the subscript $s=1:N$ is dropped on these operators. If $N$ is the number of breaks in a quantity,  $T_{1},\ldots, T_{N}$ are the ordered candidate change-points and $T$ the number of time series observations, where $\tau_{0}=0$, $\tau_{N+1}=1$, and where $\vtau_N=(\tau_{0},\vec_{s=1:N}(\tau_s)',\tau_{N+1})$ is a  partition of the time interval $[1,T]$ divided by $T$, such that $[T \tau_{s}] = T_{s,\vtau_N}$ for $s=1,\ldots N+1$. Define the regimes where parameters are assumed constant as $ I_{s,\vtau_N} = [T_{s-1}+1, T_{s}]$ for $s=1, \ldots, N$.  Below the breaks in the structural equation are denoted by $\vtau_N=\vlam_m$,  and those in the reduced form by $\vtau_N=\vpi_h$, where $m$ an $h$ are the number of breaks in each equation. A superscript zero on any quantity refers to the true quantity, which is a fixed number, vector or matrix.   For any random vector or matrix  $\bm Z$, denote by $||\mZ||$ the Euclidean norm for vectors, or the Frobenius norm for matrices. Finally, $\vzeros_a$ and $\vzeros_{a\times a}$ denote, respectively, an $a\times 1$ vector and a $a\times a$ matrix of zeros, and $1_A$ denotes an indicator function that takes the value one if event $A$ occurs.

\section{The model and test statistics with their bootstrap versions}
\label{sect2}
This section is divided into three sub-sections. Section 2.1 outlines the model. Section 2.2 outlines the hypotheses of interest and the test statistics. Section 2.3 presents the bootstrap versions of the test statistics.
\subsection{The model} \label{sect2.1}
Consider the case where the equation of interest takes the form
\begin{equation}
\label{e1}
y_t\;=\; \underbrace{\begin{array}{c}\vw_t^\prime\end{array}}_{1 \times (p_1+q_{1})} \underbrace{\begin{array}{c}\vbeta_{(i)}^0 \end{array}}_{(p_1+q_1) \times 1}\,+\,\ u_t,\;\;\; i=1,\ldots,m+1,\;\;\; t\in I_{i,\vlam_m^0},
\end{equation}
where $\vw_t=\vec(\vx_t, \vz_{1,t})$, $\vz_{1,t}$ includes the intercept, $\vr_t$ and lagged values of $y_t$, $\vx_t$, and $\vr_t$, and $\vbeta_{(i)}^0$ are parameters in regime $i$. The key difference between $\vx_t$ and $\vr_t$ is that $\vx_t$ represents the set of explanatory variables which are correlated with $u_t$, and $\vr_t$ represents the set of explanatory variables that are uncorrelated with $u_t$. We therefore refer to $\vx_t$ as the {\it endogenous} regressors and $\vr_t$ as the {\it contemporaneously exogenous} regressors.\footnote{This terminology is taken from \citet{Wooldridge:2006}[p.349] and reflects that fact $\vr_t$ may be correlated with $u_n$ for $t\neq n$.} Equation (\ref{e1}) can be re-written as:
\begin{align*}
y_t \;=\; \vx_t^\prime \vbeta_{\vx, t}^0\,+\, \vz_{1,t}^\prime \vbeta_{\vz, t}^0\, +\, u_t\; =\; \vw_t^\prime\vbeta_t^0\, +\,u_t,
\end{align*}
where  $\vbeta_t^0 = \vbeta_{(i)}^0$ if $t \in  I_{i,\vlam_m^0}, i=1,\ldots,m$ and similar notation holds for $\vbeta_{\vx, t}$ and $\vbeta_{\vz, t}$. For simplicity, we refer to (\ref{e1}) as the ``structural equation'' (SE).

The SE is assumed to be part of a system that is completed by the following equations for $\vx_t$ and $\vr_t$.  The reduced form (RF) equation for the endogenous regressors $\vx_t$ is a regression model with $h$ breaks ($h+1$ regimes), that is:
\begin{equation}
\label{e2}
\underbrace{\begin{array}{c}\vx_t'\end{array}}_{1 \times p_1}\;=\;\underbrace{\begin{array}{c}\vz_t'\end{array}}_{1 \times q} \underbrace{\begin{array}{c}\mDelta_{(i)}^0\end{array}}_{q \times p_1}\,+\,\underbrace{\begin{array}{c}\vv_t'\end{array}}_{1 \times p_1},\qquad j=1,\ldots, h+1,\qquad t\in I_{i,\vpi_h^0}.
\end{equation}
The vector $\vz_t$ includes the constant, $\vr_t$ and lagged values of $y_{t}$, $\vx_t$ and $\vr_t$. It is assumed that the variables in $\vz_{1,t}$ are a strict subset of those in $\vz_t$ and therefore we write $\vz_t=\vec(\vz_{1,t},\vz_{2,t})$. Equation (\ref{e2}) can also be rewritten as:
\begin{align*}
\vx_t'\; =\; \vz_t^\prime \mDelta_t^0\, +\,\vv_t^\prime,
\end{align*}
where $\mDelta_t^0 = \mDelta_{(i)}^0$ if $t\in I_{i,\vpi_h^0}$, $i=1,\ldots, h+1$. The contemporaneously exogenous variables  $\vr_t$ are assumed to be generated as follows,
\begin{equation}
\label{er}
\underbrace{\begin{array}{c}\vr_t^\prime\end{array}}_{1 \times p_2}\;=\;\vz_{3,t}^\prime \mPhi_{(i)}^0\,+\,\underbrace{\begin{array}{c}\vzeta_t^\prime\end{array}}_{1 \times p_2}\qquad i=1,\ldots,d+1,\;\;\; t\in I_{i,\vomega_d^0},
\end{equation}
where $\vz_{3,t}$ includes the constant and lagged values of $\vr_t$, $y_t$ and $\vx_t$.

Equations (\ref{e1}), (\ref{e2}) and (\ref{er}) imply $\tilde{\vz}_t=\vec(y_t, \vx_t, \vr_{t})$ evolves over time via a SVAR process whose parameters are subject to discrete shifts at unknown points in the sample. To present the reduced form VAR version of the model, define $n=\dim(\tilde{\vz}_t)$ and let $\vtau_N$ denote the partition of the sample such that all three equations have constant parameters within the associated regimes.\footnote{For example, suppose $m=1$, $h=2$ and $d=1$ with $\vlam_m^0=[0,0.5,1]^\prime$,$\vpi_h^0=[0,0.3,0.5,1]^\prime$ and $\vomega_d^0=[0,0.7,1]^\prime$, then $N=3$ and $\vtau_N=[0,0.3,0.5,0.7,1]^\prime$.} We can then write equations (\ref{e1}), (\ref{e2}) and (\ref{er}) as:
 \begin{equation}
\label{ah1}
\tilde\vz_t \;=\;\vc_{\tilde{z},s}\,+\, \sum_{i=1}^{p} \mC_{i,s} \tilde\vz_{t-i}+ \ve_t,\qquad [\tau_{s-1}T]+1\,\leq t\,\leq [\tau_s T],\; s=1,2,\ldots, N+1,
\end{equation}
where $\ve_t=\mA_s^{-1}\vepsi_t$,
\begin{eqnarray}
\mA_s\;=\;\left[\,\begin{array}{ccc} 1 & -\vbeta_{\vx,s}^{0\prime} & -\vbeta_{\vr, s}^{0\prime}\\ \vzeros_{p_1} & \mI_{p_1} & \mDelta_{\vr,s}^{0'}\\ \vzeros_{p_2} & \vzeros_{p_2\times p_1} & \mI_{p_2}\end{array}\,\right],\label{AS}
\end{eqnarray}
$\vbeta_{\vr, s}^{0\prime}$ denotes the sub-vector of $\vbeta_{s}^{0\prime}$ that contain the coefficients on $\vr_t$ in (\ref{e1}) ($\vbeta_{\vr, s}^{0\prime}$ and $\vbeta_{s}^{0\prime}$ are the values of $\vbeta_{\vr, t}^{0\prime}$ and $\vbeta_{t}^{0\prime}$ for $[\tau_{s-1}T]+1\,\leq t\,\leq [\tau_s T]$);  $\mDelta_{\vr,s}^{0'}$ denotes the sub-matrix of $\mDelta_{s}^{0'}$ that contains the coefficients on $\vr_t$ in (\ref{e2}) ($\mDelta_{\vr, s}^{0\prime}$ and $\mDelta_{s}^{0\prime}$ are the  values of $\mDelta_{\vr, t}^{0\prime}$ and $\mDelta_{t}^{0\prime}$ for $[\tau_{s-1}T]+1\,\leq t\,\leq [\tau_s T]$), and $\vepsi_t=\vec(u_t,\vv_t,\vzeta_t)$. For ease of notation, we assume the order of the VAR is the same in each regime, but our results easily extend to the case where the order varies by regime.

\subsection{Testing parameter variation}\label{sect2.2}

As stated in the introduction, this paper focuses on the issue of testing for structural change in the SE. Within the model described above, there are two types of test that are of particular interest. The first tests the null hypothesis of no parameter change against the alternative of a fixed number of parameter changes in the sample that is, a test of $H_0:\, m=0$ versus $H_1:\,m=k$. The second tests the null of a fixed number of parameter changes against the alternative that there is one more, that is, it tests $H_0:\, m=\ell$ versus $H_1:\,m=\ell+1$. We consider appropriate test statistics for each of these scenarios in turn below.

As the tests are based on the Wald principle, calculation of the test statistics  here requires 2SLS estimation of the SE under $H_1$. On the first stage, the RF is estimated via least squares methods. If the number and location of the breaks in the RF are known then this estimation is straightforward. However, in general, neither the number or location of the breaks is known and so they must be estimated. For our purposes here, it is important that both $h$ and $\vpi_h^0$ are consistently estimated and that $\hat \vpi_h$, the estimator of $\vpi_h^0$, converges sufficiently fast (see Lemma \ref{lem7} in the Appendix B). These properties can be achieved by estimating the RF either as a system or equation by equation, and using a sequential testing strategy to estimate $h$; see, respectively \citet{Qu/Perron:2007} and \citet{Bai/Perron:1998}. Provided the significance levels of the tests shrink to zero slowly enough, $\hat h$ approaches $h$ with probability one as the sample size $T$ grows;  {\it e.g.} see \citet{Bai/Perron:1998}[Proposition 8]. The same consistency result holds if we estimate $h$ via the information criteria; {\it e.g.} see \citet{Hall/Osborn/Sakkas:2013}. For this reason, in the rest of the theoretical analysis, we treat $h$ as known. However, we explore the potential sensitivity of the finite sample performance of the tests for structural change in the SE to estimation of $h$ in our simulation study. Let $\hat \mDelta_{(j)}$ be the estimator of $\mDelta_{(j)}^0$, $\hat \mDelta_t=\sum_{j=1}^{h+1}\hat \mDelta_{(j)}1_{t\in \hat I_j^*}$, where $\hat{I}_j^*=\left\{[\hat{\pi}_{j-1}T]+1, [\hat{\pi}_{j-1}T]+2, \ldots,[\hat{\pi}_{j}T]\right\}$, and $\hat{\vx}_t=\hat \mDelta_t\vz_t^\prime$ that is, $\hat{\vx}_t$ is the predicted value for $\vx_t$ from the estimated RF.\\[0.1in]

\noindent
{\it Case (i): $H_0:\, m=0$ versus $H_1:\,m=k$}\\
\noindent
Under $H_1$, the second stage estimation involves estimation via OLS of the model,
\begin{equation}
\label{2sls_step2_mod}
y_t\;=\;\hat{\vw}_t^\prime\vbeta_{(i)}\,+\,\text{error},\quad i=1,...,k+1,\quad  t\in I_{i,\vlambda_k},
\end{equation}
for all possible $k$-partitions $\vlambda_k$. Let $\hat \vbeta_{(i)}$ denote the OLS estimator of $\vbeta_{(i)}$ in (\ref{2sls_step2_mod}), $\hat \vbeta_{\vlam_k}$ denote the OLS estimator of $\vec_{i=1:k+1}(\hat\vbeta _{(i)})$ in (\ref{2sls_step2_mod}) (that is,  $\hat \vbeta_{\vlam_k}$ is the OLS estimator of $\vec_{i=1:k+1}(\vbetai)$ based on partition  $\vlambda_k$).\footnote{Strictly, $\hat \vbeta_{(i)}$ depends on $\vlambda_k$ but we have suppressed this to avoid excessive notation.} To present the $\sup$-$Wald$ test, we define $\mR_k=\tilde{\mR}_k\otimes \mI_{p}$ where $\tilde{\mR}_k$ is the $k\times(k+1)$ matrix whose $(i,j)^{th}$ element, $\tilde{R}_k(i,j)$, is given by: $\tilde{R}_k(i,i)=1$, $\tilde{R}_k(i,i+1)=-1$, $\tilde{R}_k(i,j)=0$ for $i=1,2,\ldots, k$, and $j\neq i$, $j \neq i+1$. Also let $\mLambda_{\epsilon,k}=\{\vlambda_k: |\lambda_{i+1}-\lambda_i|\ge\epsilon,\lambda_1\ge\epsilon,\lambda_k\le 1-\epsilon\}$. With this notation, the test statistic is:
\begin{align}
\label{supwald}
&\sup\text{-}Wald_T\;=\;\underset{\vlambda_k \in\mLambda_{\epsilon,k}}{\sup}  \ Wald_{T\vlam_k}, \\\label{supwald2}
& Wald_{T\vlam_k}\; =\;  T \, \hat \vbeta_{\vlam_k}^\prime \,
\mR_k' \,\left(\mR_k \hat \mV_{\vlam_k} \mR_k^\prime\right)^{-1}\mR_k \,\hat \vbeta_{\vlam_k},
\end{align}
where:
\begin{align}
&\hat{\mV}_{\vlam_k}=\diag_{i=1:k+1}(\hat \mV_{(i)}) , \qquad \hat \mV_{(i)}\;=\;\hat \mQ_{(i)}^{-1}  \ \hat \mM_{(i)} \ \hat \mQ_{(i)}^{-1}\, , \qquad \hat \mQ_{(i)} = T^{-1} \sum_{t \in  I_{i,\vlam_k}} \hat \vw_t \hat \vw_t'\, , \label{hatV} \\
&\hat{\mM}_{(i)} \inp \lm \var\left(T^{-1/2}\sum_{t \in I_{i,\vlam_k}} \mUpsilon_t^{0'} \vz_t\left(u_t+\vv_t'\vbeta_{\vx}^0\right)\right),
\label{hatMi}
\end{align}
and $\vbeta_{\vx}^0$ is the true value of $\vbeta_{\vx,(i)}^0$ for $i=1,2,\ldots,m+1$ under $H_0$.

As mentioned in the introduction, our framework assumes the errors are a m.d.s. that potentially exhibits heteroskedasticity, and so the natural choice of $\hat{\mM}_{(i)}$ is the Eicker-White estimator, see \citet{Eicker:1967} and \citet{White:1980}. This can be constructed using the estimator of $\vbeta_{x,(i)}$ in \eqref{2sls_step2_mod} under either $H_0$ or $H_1$, where $\vbeta_{x,(i)}$ are the elements of $\vbeta_{(i)}$ containing the coefficients on $\hat \vx_t$. For the purposes of the theory presented below, it does not matter which is used because the null hypothesis is assumed to be true. However, the power properties may be sensitive to this choice. In our simulation study reported below, we use the Eicker-White estimator based on $\hat \vbeta_{\vx,(i)}$, the estimator of  $\vbeta_{x,(i)}$ under $H_1$, that is,
$$
\hat{\mM}_{(i)}\;=\;\widehat{EW}\left[\,\hat\mUpsilon_t^\prime \vz_t(\hat u_t+\hat\vv_t^\prime \hat \vbeta_{\vx,(i)});\,I_{i,\vlam_k}\,\right],
$$
where $\hat{u}_t=y_t-\vw_t^\prime\hat\vbeta_{(i)}$ for $t\in I_{i,\vlam_k}$, $\hat\vv_t=\vx_t-\hat \mDelta_t \vz_t^\prime$, $\hat\mUpsilon_t=[\hat \mDelta_t,\mPi]$ and, for any vector $\va_t$ and  $I\subseteq\{1,2,\ldots,T\}$, $\widehat{EW}\left[\va_t;\,I\,\right]=T^{-1}\sum_{t\in I}\va_t\va_t^\prime$.\\[0.1in]

\noindent
{\it Case (ii): $H_0:\, m=\ell$ versus $H_1:\,m=\ell+1$}\\
\noindent
Following the same approach used by \citet{Bai/Perron:1998} for OLS based inferences, suitable tests statistics can be constructed as follows. The model with $\ell$ breaks is estimated via a global minimization of the sum of squared residuals associated with the second stage of the 2SLS estimation of the SE. For each of the $\ell+1$ regimes of this estimated model, the $\sup$-$Wald$ statistic for testing no breaks versus one break is calculated. Inference about $H_0:m=\ell$ versus $H_1:\,m=\ell+1$ is based on the largest of these $\ell+1$ $\sup\text{-}Wald$ statistics.

More formally, let the estimated SE break fractions for the $\ell$-break model be $\hat{\vlambda}_\ell$ and the associated break points be denoted $\{\hat{T}_i\}_{i=1}^\ell$ where $\hat{T}_i=[T\hat{\lambda}_i]$. Let $\hat{I}_i=I_{i,\hat{\vlambda}_\ell}$, the set of observations in the $i^{th}$ regime of the $\ell$-break model and partition this set as $\hat{I}_i=\hat{I}_i^{(1)}(\varpi_i)\cup\hat{I}_i^{(2)}(\varpi_i)$ where  $\hat{I}_i^{(1)}(\varpi_i)=\{t:\,[\hat{\lambda}_{i-1}T]+1,[\hat{\lambda}_{i-1}T]+2,\ldots,[\varpi_iT]\}$ and  $\hat{I}_i^{(2)}(\varpi_i)=\{t:\,[\varpi_iT]+1,[\varpi_iT]+2,\ldots,[\hat{\lambda}_iT]\}$.  Consider estimation of the model
\begin{equation}
\label{2sls_step2_mod_caseii}
y_t\;=\;\hat{\vw}_t^\prime\vbeta_{(j)}\,+\,\text{error},\quad j=1,2\quad  t\in \hat{I}_{i}^{(j)},
\end{equation}
for all possible choices of $\varpi_i$ (where for notational brevity we suppress the dependence of $\vbeta_{(j)}$ on $i$). Let $\hat{\vbeta}(\varpi_i)=\vec\left(\hat{\vbeta}_{(1)}(\varpi_i), \hat{\vbeta}_{(2)}(\varpi_i)\right)$ be the OLS estimators of $\vec(\vbeta_{(1)},\vbeta_{(2)})$ from (\ref{2sls_step2_mod_caseii}). Also let $\mathcal{N}(\hat{\vlambda}_\ell)=[\hat{\lambda}_{i-1}+\epsilon,\hat{\lambda}_{i}-\epsilon]$.
The $\sup\text{-}Wald$ statistic for testing $H_0:\,m=\ell$ versus $H_1:\,m=\ell+1$ is given by
\begin{equation}\label{supWald_ell}
\sup\text{-}Wald_T(\ell+1\,|\,\ell)\;=\;\max_{i=1,2,\ldots\ell+1}\,\left\{\,\sup_{\varpi_i\in \mathcal{N}(\hat{\vlambda}_\ell)}T\hat{\vbeta}(\varpi_i)^\prime\mR_1^\prime(\mR_1\hat{\mV}(\varpi_i)\mR_1^\prime]^{-1}\mR_1\hat{\vbeta}(\varpi_i)\,\right\}
\end{equation}
where\footnote{The comment above about the calculation of $\hat{\mM}_{(i)}$ apply equally to $\hat{\mM}_{j}(\varpi_i)$.}
\begin{align*}
&\hat{\mV}(\varpi_i)\;=\;\diag\left(\,\hat \mV_{1}(\varpi_i),\hat \mV_{2}(\varpi_i)\,\right) , \qquad \hat \mV_{j}(\varpi_i)\;=\;\{\hat \mQ_{j}(\varpi_i)\}^{-1}  \ \hat \mM_{j}(\varpi_i) \ \{\hat \mQ_{j}(\varpi_i)\}^{-1},\\
& \hat \mQ_{j}(\varpi_i) = T^{-1} \sum_{t \in  \hat{I}_{i}^{(j)}} \hat \vw_t \hat \vw_t^\prime,\qquad
\hat{\mM}_{j}(\varpi_i)=\widehat{EW}\left[\,\hat\mUpsilon_t^{\prime} \vz_t(\hat u_t+\hat\vv_t^\prime\hat \vbeta_{\vx,(i)});\,\hat{I}_{i}^{(j)}\,\right].
\end{align*}

\subsection{Bootstrap versions of the test statistics}\label{sect2.3}
In this section, we introduce the bootstrap analogues of the test statistics presented in the previous section. As noted above, our framework assumes the error vector $\vepsi_t$ to be a m.d.s that potentially exhibits conditional and unconditional heteroskedasticity, and so we use the wild bootstrap proposed by \citet{Liu:1988} because it has been found to replicate the conditional and unconditional heteroskedasticity of the errors in other contexts.\footnote{\citet{Liu:1988} developed the wild bootstrap has been developed in \citet{Liu:1988} following suggestions in \citet{Wu:1986} and \citet{Beran:1986} in the context of static linear regression models with (unconditionally) heteroskedastic errors.} We consider both the wild recursive (WR) bootstrap and the wild fixed (WF) bootstrap. These procedures differ in their treatment of the right-hand side variables in the bootstrap samples as described below.\\[0.1in]

\noindent
{\it Generation of the bootstrap samples:}\\
Let $\tilde{\vz}^b_t=\vec(y_t^{b}, \vx_t^{b}, \vr_{t})$ where $y_t^{b}$ and $\vx_t^{b}$ denote the bootstrap values of $y_t$ and $\vx_t$. Note that because $\vr_t$ is contemporaneously exogenous its sample value is used in the bootstrap samples. In all cases below, the bootstrap residuals are obtained as $u_t^b=\hat u_t\nu_t$ and $\vv_t^b=\hat\vv_t\nu_t$, where $\hat u_t$ and $\hat \vv_t$ are the (non-centered) residuals under the null hypothesis and $\nu_t$ is a random variable that is discussed further in Section 3.

For the WR bootstrap, $\{y_t^{b\prime}\}$ and $\{\vx_t^{b\prime}\}$ are generated recursively as follows:
\begin{eqnarray}
\label{wr1}
\vx_t^{b\prime}\; &=&\; \vz_t^{b\prime} \hat\mDelta_t\, +\,\vv_t^{b\prime},\\
y_t^b \;&=&\; \vx_t^{b\prime} \hat\vbeta_{\vx, t}\,+\, \vz_{1,t}^{b\prime} \hat\vbeta_{\vz, t}\, +\, u_t^b,\label{wr2}
\end{eqnarray}
where the vector $\vz_t^b$ contains a constant, $\vr_t$ and lags of $y_t^b$, $\vx_t^b$ and $\vr_{t}$; $\hat \vbeta_{\vx,t}$ and  $\hat \vbeta_{\vz,t}$ are the sample estimates of  $\vbeta_{\vx,t}^0$ and  $ \vbeta_{\vz,t}^0$ under $H_0$ of the test in question.

For the WF bootstrap, $\vz_t$ is kept fixed and, following  \citet{Goncalvez/Kilian:2004}, the bootstrap samples are generated as follows:
\begin{eqnarray}
\label{wf1}
\vx_t^{b\prime}\; &=&\; \vz_t^\prime \hat\mDelta_t\, +\,\vv_t^{b\prime},\\
y_t^b \;&=&\; \vx_t^{b\prime} \hat\vbeta_{\vx, t}\,+\, \vz_{1,t}^\prime \hat\vbeta_{\vz, t}\, +\, u_t^b,\label{wf2}
\end{eqnarray}
where again $\hat \vbeta_{\vx,t}$ and  $\hat \vbeta_{\vz,t}$ are the sample estimates of  $\vbeta_{\vx,t}^0$ and  $ \vbeta_{\vz,t}^0$ under $H_0$ of the test in question.


\vspace*{0.1in}
\noindent
{\it Case (i): $H_0:\,m=0$ vs $H_1:\,m=k$}\\
First consider the WR bootstrap.  2SLS estimation is implemented in the bootstrap samples as follows. On the first stage, the following model is estimated via OLS
$$
\vx_t^b\; =\; \vz_t^{b\prime} \mDelta_j\, +\,\mbox{error},  \qquad t\in\hat{I}_j^*,\qquad j=1,2,\ldots,h+1,
$$
to obtain $\hat{\mDelta}_j^b=\left\{\sum_{t\in \hat I_j^*}\vz_t^b\vz_t^{b\prime}\right\}^{-1}\sum_{t\in \hat I_j^*} \vz_t^b \vx_t^{b\prime}$. Define $\hat \mDelta_t^b  = \sum_{j=1}^{\hat h+1}1_{t\in \hat I_j^*}\hat{\mDelta}_j^b$, $\hat \vx_t^{b\prime}= \vz_t^{b\prime}\hat\mDelta^b_t$, and $\hat \vw_t^b=\vec(\hat \vx_t^{b}, \vz_{1,t}^{b})$. For a given $k$-partition $\vlambda_k$, the second stage of the 2SLS in the bootstrap samples involves OLS estimation of
\begin{equation}
\label{2sls_step2_mod_boot}
y_t^b\;=\;\hat{\vw}_t^{b\prime}\vbeta_{(i)}\,+\,\text{error},\quad i=1,...,k+1,\quad  t\in I_{i,\vlambda_k},
\end{equation}
and let $\hat \vbeta_{\vlam_k}^b$ be the resulting OLS estimator of $\vec_{i=1:k+1}(\vbetai)$. The WR bootstrap version of the $\sup$-$Wald_T$ statistic is:
 \begin{align}
\label{supwaldb}
\sup\text{-}Wald_T^b&=\;\underset{\vlambda_k \in\mLambda_{\epsilon,k}}{\sup}  \ Wald_{T\vlam_k}^b, \\ \label{supwald2b}
 Wald_{T\vlam_k}^b&=\;  T \, \hat \vbeta_{\vlam_k}^{b\prime} \,
\mR_k^\prime \,\left(\mR_k \hat \mV_{\vlam_k}^b \mR_k^\prime\right)^{-1}\mR_k \,\hat \vbeta_{\vlam_k}^b,
\end{align}
where:
\begin{align}
&\hat{\mV}^b_{\vlam_k}=\diag_{i=1:k+1}(\hat \mV_{(i)}^b) , \qquad \hat \mV_{(i)}^b\;=\;(\hat \mQ_{(i)}^b)^{-1}  \ \hat \mM_{(i)}^b \ (\hat \mQ_{(i)}^b)^{-1}\, , \qquad \hat \mQ_{(i)}^b = T^{-1} \sum_{t \in  I_{i,\vlam_k}} \hat \vw_t^b \hat \vw_t^{b\prime}\, , \label{hatVb} \\
&\hat{\mM}_{(i)}^b =\widehat{EW}\left[\, \hat\mUpsilon_t^{b\prime} \vz_t^b\left(u_t^b+\vv_t^{b\prime}\hat\vbeta_{\vx,(i)}^b\right);\,I_{i,\vlam_k}\right], \qquad \hat\mUpsilon^{b}_{t} = (\hat\mDelta^{b}_t, \mPi).
\label{hatMib}
\end{align}

Now consider the WF bootstrap, for which $y_t^b$ and $\vx_t^b$ are generated via \eqref{wf1}-\eqref{wf2}. The first stage of the 2SLS involves LS estimation of
$$
\vx_t^b\; =\; \vz_t^{\prime} \mDelta_j\, +\,\mbox{error},  \qquad t\in\hat{I}_j^*,\qquad j=1,2,\ldots,h+1,
$$
to obtain $\hat{\mDelta}_j^b=\left\{\sum_{t\in \hat I_j^*}\vz_t\vz_t^{\prime}\right\}^{-1}\sum_{t\in \hat I_j^*} \vz_t \vx_t^{b\prime}$. Now re-define $\hat \mDelta_t^b  = \sum_{j=1}^{\hat h+1}1_{t\in \hat I_j^*}\hat{\mDelta}_j^b$, $\hat \vx_t^{b\prime}= \vz_t^{\prime}\hat\mDelta^b_t$, and $\hat \vw_t^b=\vec( \hat \vx_t^{b}, \vz_{1,t})$. For a given $k$-partitions $\vlambda_k$, the second stage of the 2SLS in the bootstrap samples involves OLS estimation of \eqref{2sls_step2_mod_boot} and let $\hat \vbeta_{\vlam_k}^b$ be the resulting OLS estimator of $\vec_{i=1:k+1}(\vbetai)$. The WF bootstrap $\sup\text{-}Wald$ statistic is defined as in \eqref{supwaldb} with $Wald_{T\vlam_k}^b$ defined as in (\ref{supwald2b}) only with $\hat{\vw}_t$ and $\hat{\mDelta}_t^b$ redefined in the way described in this paragraph, and $\hat{\mM}_{(i)}^b$ in \eqref{hatMib} replaced by $\hat{\mM}_{(i)}^b=\widehat{EW}\left[ \hat\mUpsilon_t^{b\prime} \vz_t(u_t^b+\vv_t^{b\prime}\hat\vbeta_{\vx,(i)}^b);\,I_{i,\vlam_k}\,\right]$.\\[0.1in]

\noindent
{\it Case (ii): $H_0:\, m=\ell$ versus $H_1:\,m=\ell+1$}\\
\noindent
For each bootstrap the first stage of the 2SLS estimation and the construction of $\hat{\vw}_t$ is the same as described under {\it Case (i)} above.  Let $\hat{I}_{i}^{(j)}$ be defined as in the discussion of {\it Case (ii)} in Section \ref{sect2.2}, and consider
\begin{equation}
\label{2sls_step2_mod_caseii_boot}
y_t^b\;=\;\hat{\vw}_t^{b\prime}\vbeta_{(j)}\,+\,\text{error},\quad j=1,2\quad  t\in \hat{I}_{i}^{(j)},
\end{equation}
for all possible choices of $\varpi_i$ (where, once again, we suppress the dependence of $\vbeta_{(j)}$ on $i$). Let $\hat{\vbeta}^b(\varpi_i)=\vec\left(\hat{\vbeta}_{(1)}^{b}(\varpi_i), \hat{\vbeta}_{(2)}^{b}(\varpi_i)\right)$ be the OLS estimators of $\vec(\vbeta_{(1)},\vbeta_{(2)})$ from (\ref{2sls_step2_mod_caseii_boot}). The bootstrap version of\linebreak $\sup\text{-}Wald_T(\ell+1\,|\,\ell)$ is given by
\begin{equation}\label{supWald_ellb}
\sup\text{-}Wald_T^b(\ell+1\,|\,\ell)\;=\;\max_{i=1,2,\ldots\ell+1}\,\left\{\,\sup_{\varpi_i\in \mathcal{N}(\hat{\vlambda}_\ell)}T\hat{\vbeta}^b(\varpi_i)^\prime\mR_1^\prime(\mR_1\hat{\mV}^b(\varpi_i)\mR_1^\prime]^{-1}\mR_1\hat{\vbeta}^b(\varpi_i)\,\right\}
\end{equation}
where
\begin{align*}
&\hat{\mV}^b(\varpi_i)\;=\;\diag\left(\,\hat \mV_{1}^b(\varpi_i),\hat \mV_{2}^b(\varpi_i)\,\right) , \qquad \hat \mV_{j}^b(\varpi_i)\;=\;\{\hat \mQ_{j}^b(\varpi_i)\}^{-1}  \ \hat \mM_{j}^b(\varpi_i) \ \{\hat \mQ_{j}^b(\varpi_i)\}^{-1},\\
& \hat \mQ_{j}^b(\varpi_i) = T^{-1} \sum_{t \in  \hat{I}_{i}^{(j)}} \hat \vw_t^b \hat \vw_t^{b\prime},
\end{align*}
and $\hat{\mM}_{j}^b(\varpi_i)\, =\,\widehat{EW}\left[\,\hat\mUpsilon_t^{b\prime} \vz_t^b(u_t^b+\vv_t^{b\prime}\hat\vbeta_{\vx,(i)}^b); \hat{I}_{i}^{(j)}\,
\right]$ for  WR, and $\hat{\mM}_{j}^b(\varpi_i)\, =\,\widehat{EW}\left[\,\hat\mUpsilon_t^{b\prime} \vz_t(u_t^b+\vv_t^{b\prime}\hat\vbeta_{\vx,(i)}^b); \hat{I}_{i}^{(j)}\,\right]$ for  WF.

\section{The asymptotic validity of the bootstrap tests}\label{asymval}
In this section, we establish the asymptotic validity of the bootstrap versions of the test statistics described above. To this end we impose the following conditions.

\begin{assum}
\label{a1}
If $m>0$, $T_{i}^{0}=[T\lambda_i^0]$, where $0<\lambda_1^0<\ldots<\lambda_m^0<1$.
\end{assum}

\begin{assum}
\label{a2}
If $m>0$, $\vbeta_{(i+1)}^0-\vbeta^0_{(i)}\neq \vzeros_{p_1+q_1}$ is a vector of constants for $i=1,\ldots, m$.
\end{assum}

\begin{assum}
\label{a3}
\black{If $h>0$, }then $T_{i}^*=[T\pi_i^0]$, where $0<\pi_1^0<\ldots<\pi_h^0<1$.
\end{assum}

\begin{assum}
\label{a4}
If $h>0$, $\mDelta^0_{(j+1)}-\mDelta^0_{(j)} \neq \vzeros_{q\times p_1}$ is a matrix of constants for $j=1,\ldots, h$.
\end{assum}

\begin{assum}
\label{a5}
If $k>0$, then $0<\omega_1^0<\ldots<\omega_k^0<1$ and $\mPhi_{(i+1)}^0- \mPhi_{(i)}^0 \neq \vzeros_{q\times p_2}$ is a matrix of constants for $i=1,\ldots, k$.
\end{assum}

\begin{assum}
\label{a6}
The first and second stage estimations in 2SLS are over respectively all partitions of $\vpi$ and $\vlambda$ such that $T_i-T_{i-1}>\max\left(q-1,\epsilon T\right)$ for some $\epsilon >0$ and $\epsilon< \min_i(\lambda_{i+1}^0-\lambda_i^0)$ and $\epsilon< \min_j(\pi_{j+1}^0-\pi_j^0)$.
\end{assum}

\begin{assum}
\label{a7}
(i) $p <\infty$; (ii) $\left|\mI_{n}\,-\,\mC_{1,s}a\, -\, \mC_{2,s} a^2\, -\, \cdots\, -\, \mC_{p,s} a^{ p}\right|\neq 0$, for all $s=1,\ldots, N+1$, and all $\left|a\right|\leq 1$.
\end{assum}

\begin{assum}
\noindent
\label{a9}
 $\rk\left(\mUpsilon_{t}^0\right)=p_1+q_1$ where $\mUpsilon_t^0 = (\mDelta_t^0, \mPi)$ and $\mPi^\prime=\left(\mI_{q_1},\,\vzeros_{q_1\times(q-q_1)}\right)$.
\end{assum}

\begin{assum} \label{a8}
\noindent The innovations can be written as $\vepsi_t=\mS \mD_t \vl_t$, where:\\
(i) $\mS$ is a $ n \times  n$ lower triangular non-stochastic matrix with real-valued diagonal elements $s_{ii}=1$ and elements below the diagonal equal to $s_{ij}$ (which are also zero for $i>p_1+1,\, j <p_1+1$), such that $\mS \mS'$ is positive definite;  $\mD_t = \diag_{i=1:n}(d_{it})$, a non-stochastic matrix where  $d_{it} = d_i(t/T): [0,1] \rightarrow D^{n}[0,1]$, the space of cadlag strictly positive real-valued functions equipped with the Skorokhod topology;\\
(ii) $\vl_t=\vec(l_{u,t},\vl_{\vv,t},\vl_{\vzeta,t})$ is a $n \times 1$ vector m.d.s.  w.r.t to $\mathcal F_t = \{ \vl_t, \vl_{t-1}, \ldots\}$ to which it is adapted, with conditional covariance matrix $\mSigma_{t\mid t-1} = \E(\vl_t \vl_t^\prime\mid\mathcal{F}_{t-1})=\diag\left(\mSigma_{t|t-1}^{(1)},\mSigma_{t|t-1}^{(2)}\right)$  and  unconditional variance $\E(\vl_t\vl_t^\prime)=\mI_n$.\\
(iii) $\sup_t \E||\vl_t||^{4+\delta}<\infty$ for some $\delta>0$. \\
(iv) $\E\left((\vl_t \vl_t^\prime) \otimes \vl_{t-i} \right)=\,\vrho_i$ for all $i\geq 0$, with $\sup_{i\geq 0} \|\vrho_i\| <\infty$.\\
(v)  $\E\left((\vl_t \vl_t^\prime) \otimes (\vl_{t-i} \vl_{t-j}^\prime)\right)=\,\vrho_{i,j}$, for all $i,j \geq 0$  with $\sup_{i,j\geq 0} \|\vrho_{i,j}\| <\infty$.
\end{assum}
\begin{named}{Assumption 9$^\prime$}\label{A8prime} Let $\vn_t = \vec(l_{u,t},\vl_{\vv,t})$. Then:\\
\noindent (i) Assumption \ref{a8}(iv) holds with $\E[(\vn_t \vn_t') \otimes \vn_{t-i}] = \vzeros$ for all $i\geq 1$.\\
\noindent (ii) Assumption \ref{a8}(v) holds with $\E[(\vn_t \vn_t') \otimes (\vn_{t-i} \vn_{t-j}')] = \vzeros$ for all $i,j \geq 1$ and $i\neq j$.\\
\noindent (iii) Assumption \ref{a8}(v) holds with $\E[(\vn_t \vn_t') \otimes (\vn_{t-i} \vl_{\vzeta,t-j}')] = \vzeros$ for all $i\geq 1$ and $j\geq 0$.
\end{named}

\begin{assum}
\label{aboot}(i) $\nu_{t}\stackrel{IID}{\sim}(0,1)$ independent of the original data generated by \eqref{e1}, \eqref{e2} and \eqref{er}; (ii) $\E^b\left|\nu_t\right|^{4+\delta^*}=\bar c<\infty$, for some $\delta^*>0$, for all $t$, where $E^b$ denotes the expectation under the bootstrap measure.
\end{assum}

Before presenting our main theoretical results, we discuss certain aspects of the assumptions.

\vspace*{0.1in}
\begin{rem}
\label{rem0}
Assumptions \ref{a1}-\ref{a5} indicate that the breaks are ``fixed'' in the sense that the size of the associated  shifts in the parameters between regimes is constant and does not change with the sample size.
\end{rem}
\begin{rem}
\label{rem1}
It follows from Assumption \ref{a7} that $\tilde{\vz}_t$ follows a finite order VAR in \eqref{ah1} that is stable within each regime.
\end{rem}
\begin{rem}
\label{rem6}
Assumption \ref{a9} is the identification condition for estimation of the structural equation parameters; see HHB for further discussion.
\end{rem}

 \begin{rem}
                \label{rem3}
                From Assumption \ref{a8} it follows that $\vepsi_t$ is a vector m.d.s. relative to $\mathcal{F}_{t-1}$ with time varying conditional and unconditional variance given by $\E(\vepsi_{t}\vepsi_{t}'|\mathcal{F}_{t-1})=\mS \mD_t\mSigma_{t\mid t-1}\mD_t^\prime\mS^\prime$ and $\E(\vepsi_{t}\vepsi_{t}')=\mS \mD_t\mD_t^\prime\mS^\prime$ respectively. The m.d.s. property implies that all the dynamic structure in the SE and RF for $\vx_t$ is accounted for by the variables in $\vz_{1,t}$ and $\vz_t$ respectively.
                As noted by \citet{Boswijketal:2016} and GHLT, Assumption \ref{a8} allows for $\vepsi_t$ to exhibit conditional and unconditional heteroskedasticity of unknown and general form that can include single or multiple variance shifts, variances that follow a broken trend or follow a smooth transition model.  When $\mD_t=\mD$, the unconditional variance is constant but we may have conditional heteroskedasticity. When $\mSigma_{t\mid t-1}=\mI_n$, the unconditional variance may still be time-varying. Note that Assumption \ref{a8} (i)-(ii) imply that $\vx_t$ is endogenous and $\vr_t$ is contemporaneously exogenous in the SE.  Assumption \ref{a8}(iv) allows for leverage effects (the correlation between the conditional variance and $\vl_{t-i}$ is nonzero, when $i\geq 1$). Assumption \ref{a8}(v)
allows for (asymmetric) volatility clustering (the conditional variance is correlated with cross-products $\vl_{t-i}\vl_{t-j}$, for $i,j\geq 1$).\footnote{The clustering is asymmetric if $\rho_{i,j}\neq 0$ when $i\neq j$.}
   \end{rem}

\begin{rem}

\ref{A8prime} is only imposed in the case of the WR bootstrap.  \ref{A8prime}(i)-(iii) is needed because the WR bootstrap sets to zero certain covariance terms in the distribution of the bootstrapped parameter estimates given the data. This happens because these moments depend on products of bootstrap errors at different lags and these terms have zero expectation under the bootstrap measure due to the fact that $\nu_t$ is mean zero and i.i.d.  \ref{A8prime}(i) is a restriction on the leverage effects and \ref{A8prime}(ii) is a restriction of the asymmetric effects allowed in volatility clustering. Note that \ref{A8prime}(i) is only needed when we have an intercept in \eqref{ah1}.

\ref{A8prime}(iii) arises because the WR design  bootstraps the lags of $y_{t}$ and $\vx_t$ in \eqref{ah1}, but it does not bootstrap $\vr_t$ and its lags. Therefore, certain fourth cross-moments involving both types of quantities are set to zero by the WR bootstrap,  leading to the restriction on clustering effects in \ref{A8prime}(iii) (where $i=j$ is imposed for replicating certain variances, and $i\neq j$ is imposed for replicating certain covariances in the asymptotic distribution of the parameter estimates).
\end{rem}

\begin{rem}
                \label{rem_varsigma}
                There are several choices for the distribution of $\nu_t$, the random variable used in construction of the bootstrap errors: \citet{Goncalvez/Kilian:2004} use the standard normal distribution, while \citet{Mammen:1993} and \citet{Liu:1988} suggested a two-point distribution. In this paper, we report simulation results for \citeauthor{Liu:1988} (1988)'s two-point distribution, which we found performed the best compared to the other distributions in simulations not reported here. This conclusion is similar to \citet{Davidson/Flachaire:2008} and \citet{Davidson/MacKinnon:2010}.
        \end{rem}

The following theorems establish the asymptotic validity of the bootstrap versions of the $sup\text{-}Wald$ tests.

\begin{theo}
\label{theo1boot}
If the WF bootstrap is used let Assumptions \ref{a1}-\ref{aboot} hold and if the WR bootstrap is used let Assumptions \ref{a1}-\ref{aboot} and \ref{A8prime} hold. If $y_t$, $\vx_t$ and $\vr_t$ are generated by (\ref{e1}), (\ref{e2}) and (\ref{er}) and $m=0$ then it follows that
 $$\sup_{c\in\mathbb{R}}\left| P^{b} \left(\sup\text{-}Wald^b_T\leq c\right)-P(\sup\text{-}Wald_T\leq c)\right|\inp 0$$ as $T\rightarrow\infty,$ where $P^b$ denotes the probability measure induced by the bootstrap.
\end{theo}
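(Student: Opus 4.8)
The plan is to reduce the claim to two functional weak-convergence statements — one for the sample statistic, one for its bootstrap analogue conditional on the data — and then to invoke P\'olya's theorem. Throughout write $\xi_t = u_t + \vv_t'\vbeta_{\vx}^0$, which under $m=0$ is a martingale difference relative to $\mathcal{F}_{t-1}$ since $\vepsi_t$ is (Assumption \ref{a8}) and $\vz_t$ is $\mathcal{F}_{t-1}$-measurable. Because $\mR_k$ annihilates any vector with identical blocks and the true coefficient is constant under the null, $\mR_k\hat\vbeta_{\vlam_k} = \mR_k(\hat\vbeta_{\vlam_k} - \vones_{k+1}\otimes\vbeta^0)$, so each $Wald_{T\vlam_k}$ is driven entirely by the centred estimator.

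\emph{Sample limit.} First I would show that, uniformly in $\vlam_k\in\mLambda_{\epsilon,k}$,
$$\sqrt{T}\,(\hat\vbeta_{(i)} - \vbeta^0) \;=\; \hat\mQ_{(i)}^{-1}\, T^{-1/2}\!\!\sum_{t\in I_{i,\vlam_k}} \mUpsilon_t^{0\prime}\vz_t\,\xi_t \;+\; \op(1),$$
the remainder collecting the first-stage error $\hat\mDelta_t-\mDelta_t^0$ (negligible by the consistency and rate of Lemma \ref{lem7}) and the replacement of $\hat\vw_t$ by $\mUpsilon_t^{0\prime}\vz_t$. A martingale functional CLT then yields $T^{-1/2}\sum_{t\le[T\lambda]}\mUpsilon_t^{0\prime}\vz_t\,\xi_t \weak \mathcal{G}(\lambda)$, a mean-zero Gaussian process with independent increments whose covariance is the integrated instantaneous variance of $\mUpsilon_t^{0\prime}\vz_t\,\xi_t$; the m.d.s.\ property of $\xi_t$ makes all lagged cross-terms vanish, so this covariance coincides with the probability limit of the Eicker--White matrices $\hat\mM_{(i)}$. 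With $\hat\mV_{\vlam_k}\inp\mV_{\vlam_k}$ and the continuous mapping theorem this gives $\sup\text{-}Wald_T \weak \mathcal{W}$, where $\mathcal{W}$ is a supremum of a squared standardised Gaussian process and hence has continuous distribution function $F$.

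\emph{Bootstrap limit} — the crux. Conditional on the data, $\xi_t^b = \hat\xi_t\,\nu_t$ with $\hat\xi_t=\hat u_t+\hat\vv_t'\hat\vbeta_{\vx}$ and $\{\nu_t\}$ i.i.d.\ $(0,1)$, so, after replacing $\hat\mUpsilon_t^b$ by its bootstrap limit, the bootstrap score $T^{-1/2}\sum_{t\le[T\lambda]}\hat\mUpsilon_t^{b\prime}\vz_t^b\,\xi_t^b$ is a conditionally independent (WF) or martingale-difference (WR) array. I would verify a conditional FCLT: the Lindeberg condition follows from the $4+\delta$ moments in Assumptions \ref{a8}(iii) and \ref{aboot}(ii), while the conditional covariance equals $T^{-1}\sum_{t\le[T\lambda]}\hat\xi_t^2\,(\hat\mUpsilon_t^{b\prime}\vz_t^b)(\hat\mUpsilon_t^{b\prime}\vz_t^b)'$ because $E^b[\nu_t\nu_s]=0$ for $t\neq s$. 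The essential point is that this converges in probability to the \emph{same} limit as in the sample step: the wild bootstrap reproduces the instantaneous variance through $\hat\xi_t^2$, and since the sample long-run variance already reduces to its integrated instantaneous part, the two limits agree — provided the lagged cross-moments that $E^b$ forces to zero are also asymptotically zero in the DGP. For WF this is automatic, since $\vz_t^b=\vz_t$ is held fixed. For WR the recursions \eqref{wr1}--\eqref{wr2} propagate the bootstrap errors into $\vz_t^b$, generating fourth cross-moments (leverage and asymmetric-clustering terms) that vanish under $E^b$ but not in general; Assumption \ref{A8prime}(i)--(iii) is exactly what annihilates the corresponding population terms, so no covariance mismatch arises. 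A bootstrap analogue of Lemma \ref{lem7} controls $\hat\mDelta_t^b$, and $\hat\mV^b_{\vlam_k}\ins\mV_{\vlam_k}$ follows as before, giving $\sup\text{-}Wald_T^b \weaks \mathcal{W}$ with the same limit law.

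\emph{Conclusion.} Since $\sup\text{-}Wald_T\weak\mathcal{W}$ and $\sup\text{-}Wald_T^b\weaks\mathcal{W}$ share the continuous limit $F$, both $P(\sup\text{-}Wald_T\le c)$ and $P^b(\sup\text{-}Wald_T^b\le c)$ converge to $F(c)$, and by P\'olya's theorem the convergence is uniform in $c$; the triangle inequality then delivers $\sup_{c}\big|P^b(\sup\text{-}Wald_T^b\le c)-P(\sup\text{-}Wald_T\le c)\big|\inp 0$. I expect the main obstacle to be the covariance-matching in the bootstrap step, and within it the WR recursion: tracking how estimated-coefficient and bootstrap-innovation errors propagate through the lagged regressors, and confirming that Assumption \ref{A8prime} removes precisely the fourth-moment terms the bootstrap cannot reproduce.
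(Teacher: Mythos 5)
Your overall strategy --- show that the sample and bootstrap statistics converge to a common limiting distribution and then conclude via P\'olya's theorem --- is exactly the route the paper takes (cf.\ Remark \ref{remlimitd}), and your discussion of why Assumption \ref{A8prime} is needed for the WR but not the WF bootstrap is on target. However, there is a genuine gap at the very first step: the first-stage error $\hat\mDelta_t-\mDelta_t^0$ is \emph{not} asymptotically negligible in the $\sqrt{T}$-expansion of the second-stage estimator. By Lemma \ref{lem7} it is $\Op(T^{-1/2})$, so after scaling by $\sqrt{T}$ it contributes an $\Op(1)$ term, not $\op(1)$. The textbook argument that this term vanishes in 2SLS relies on the first and second stages being computed on the same set of observations, so that the fitted regressors are exactly orthogonal to the first-stage residuals; here the first stage is estimated over the RF partition $\hat I_j^*$ while $\hat\vbeta_{(i)}$ is a subsample estimator over $I_{i,\vlam_k}$, and the orthogonality fails. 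The correct expansion is the paper's \eqref{beta1}, which retains the correction
$$
T^{-1} \sum_{t\in I_i}  \mUpsilon_t^{0\prime}\vz_t\vz_t^\prime\left\{\,\sum_{j=1}^{h+1}1_{t\in I_j^*} \mathbb Q_{\vz,j^*}^{-1}\, T^{-1/2} \sum_{t\in I_j^*} \vz_t \vv_t^\prime\vbeta_{\vx}^0\,\right\}
$$
as a first-order term.

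This error then propagates through the rest of your argument. Because $\hat\mM_{(i)}$ estimates only the variance of the direct score $\mUpsilon_t^{0\prime}\vz_t(u_t+\vv_t^\prime\vbeta_{\vx}^0)$ and not of the full influence function (which includes the correction above), the Wald statistic is not an asymptotically standardized quadratic form; its limit is non-pivotal, which is precisely the motivation for bootstrapping in this setting. Your identification of $\mathcal W$ as the supremum of a squared standardized Gaussian process is therefore incorrect. On the bootstrap side the matching correction appears in \eqref{beta2}, and a substantial part of the paper's proof (Lemmas \ref{lem9}, \ref{lemzuv_WR} and \ref{lemzuv_WF}) consists of showing that $\hat\mQ_{(i)}^b$, the bootstrap scores, \emph{and} this bootstrap first-stage correction each converge, conditionally on the data and in probability, to the same limits as their sample counterparts; treating $\hat\mDelta_t^b$ as merely something a ``bootstrap analogue of Lemma \ref{lem7}'' controls misses the fact that it enters the limiting law itself. (A further small point: P\'olya's theorem requires continuity of the limiting distribution function, which deserves justification once the limit is recognised as the non-pivotal object above rather than a standard sup-chi-square functional.)
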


\begin{theo}
\label{theo2boot}
If the WF bootstrap is used let Assumptions \ref{a1}-\ref{aboot} hold and if the WR bootstrap is used let Assumptions \ref{a1}-\ref{aboot} and \ref{A8prime} hold. If $y_t$, $\vx_t$ and $\vr_t$ are generated by (\ref{e1}), (\ref{e2}) and (\ref{er}) and $m=\ell$ then it follows that: $$\sup_{c\in\mathbb{R}}\left| P^{b} \left(\sup\text{-}Wald^b_T(\ell+1\,|\,\ell)\leq c\right)-P(\sup\text{-}Wald_T(\ell+1\,|\,\ell)\leq c)\right|\inp 0$$ as $T\rightarrow\infty,$ where $P^b$ denotes the probability measure induced
by the bootstrap.
\end{theo}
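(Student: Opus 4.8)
The plan is to reduce the $(\ell+1\,|\,\ell)$ testing problem to $\ell+1$ separate instances of the no-break-versus-one-break problem already settled in Theorem \ref{theo1boot}, and then to handle the maximum over regimes. Both the sample statistic \eqref{supWald_ell} and its bootstrap analogue \eqref{supWald_ellb} are a maximum over $i=1,\ldots,\ell+1$ of a within-regime $\sup\text{-}Wald$ process indexed by the candidate break fraction $\varpi_i\in\mathcal{N}(\hat{\vlambda}_\ell)$. Under $H_0:m=\ell$ the coefficients $\vbeta_t^0$ are constant on each true regime $I_{i,\vlam_\ell^0}$, and the bootstrap DGP, being built from the $\ell$-break estimates under $H_0$, generates $y_t^b$ with constant parameters on each estimated regime as well. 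Hence each within-regime process — in the sample and in the bootstrap — is exactly a $\sup\text{-}Wald$ statistic for $m=0$ versus $m=1$ computed on a sub-sample, so the building blocks are precisely those of Case (i) with $k=1$.

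First I would show that the estimated $\ell$-break partition may be replaced by the true one with asymptotically negligible effect. Under Assumptions \ref{a1}--\ref{a2} the global 2SLS minimiser $\hat{\vlambda}_\ell$ is consistent for $\vlam_\ell^0$ and, by the standard rate results for $\sup\text{-}Wald$ based break estimation (the SE analogue of Lemma \ref{lem7}), satisfies $T(\hat\lambda_i-\lambda_i^0)=O_p(1)$. Each estimated regime $\hat I_i$ and its true counterpart $I_{i,\vlam_\ell^0}$ therefore differ by $O_p(1)$ observations, so the contribution of these boundary observations to the within-regime partial-sum processes, to $\hat\mQ_j(\varpi_i)$, and to the Eicker--White terms $\hat\mM_j(\varpi_i)$ is $o_p(1)$ uniformly in $\varpi_i\in\mathcal{N}(\hat{\vlambda}_\ell)$. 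The same is true of the first-stage quantities $\hat\mDelta_t$ and $\hat\vw_t$, whose consistency and rate are guaranteed by Assumptions \ref{a3}--\ref{a4} and Lemma \ref{lem7}. Thus, uniformly in $\varpi_i$, each within-regime process coincides asymptotically with the one computed on the true, fixed regime.

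With the partition effectively fixed, Theorem \ref{theo1boot} (specialised to $k=1$ and applied on each fixed regime) gives, for every $i$, weak convergence of the sample within-regime process to a functional $\sup_s\mathcal{W}_i(s)$ of a regime-specific Gaussian process, and convergence of the bootstrap within-regime process, in bootstrap probability conditional on the data, to the same limit; this is where Assumptions \ref{a8}--\ref{aboot} (and \ref{A8prime} for the WR design) enter, ensuring the wild bootstrap reproduces the conditional and unconditional heteroskedasticity that determines $\hat\mV(\varpi_i)$. Because the $\ell+1$ regimes rest on asymptotically disjoint blocks of observations, the vector of within-regime processes converges jointly, with limits independent across $i$ by the m.d.s.\ structure. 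Since the maximum is a continuous functional, the continuous mapping theorem yields convergence of both $\sup\text{-}Wald_T(\ell+1\,|\,\ell)$ and $\sup\text{-}Wald_T^b(\ell+1\,|\,\ell)$ to $\max_i\sup_s\mathcal{W}_i(s)$, the latter in bootstrap probability. As this limit has a continuous distribution function, a P\'olya-type argument upgrades the pointwise convergence of the bootstrap and sample CDFs to the uniform statement claimed.

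The hard part will be the first step: rigorously bounding the effect of the random estimated partition $\hat{\vlambda}_\ell$, both in the sample and — more delicately — in the bootstrap, where the regime boundaries are held fixed at their sample values while the data are regenerated. One must verify that the $O_p(1)$ boundary misclassification has negligible bootstrap-probability effect uniformly over $\varpi_i$, and that the maximum operation interacts cleanly with the endpoint trimming in $\mathcal{N}(\hat{\vlambda}_\ell)$. Once the partition is controlled, the remainder is essentially a regime-by-regime invocation of Theorem \ref{theo1boot}.
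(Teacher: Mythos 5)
Your proposal is correct and follows essentially the same route as the paper: the paper's proof likewise replaces the estimated $\ell$-break partition by the true one via Lemma \ref{lem7}(iii) ($T(\hat\lambda_i-\lambda_i^0)=O_p(1)$), rewrites each within-regime statistic in the alternative (Frisch--Waugh) representation of Theorem B2 in the Supplementary Appendix, and then repeats the steps of Theorem \ref{theo1boot} regime by regime before taking the maximum. The only cosmetic difference is that you invoke Theorem \ref{theo1boot} directly on each sub-sample rather than going through the explicit reparametrisation, and your remark about independence of the regime-specific limits is unnecessary (joint convergence suffices for the continuous mapping step).
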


\vspace*{0.1in}
\begin{rem}
\label{remlimitd}
The proof rests on showing the sample and bootstrap statistics have the same limiting distribution. Although this distribution is known to be non-pivotal if the RF is unstable (see \citeauthor{Perron/Yamamoto:2014}, 2014), to our knowledge this distribution has not previously been presented in the literature. A formal characterization of this distribution is provided in the Supplementary Appendix.

\end{rem}
\begin{rem}
\label{remboot1}
Theorems \ref{theo1boot}-\ref{theo2boot} cover the case where the reduced form is stable and the errors are unconditionally homoskedastic. In this case, the $\sup$-$Wald$ tests are asymptotically pivotal and so the bootstrap is expected to provide a superior approximation to finite sample behaviour compared to the limiting distribution because the bootstrap, by its nature, incorporates sample information. However, a formal proof is left to future research.
\end{rem}

\begin{rem}
\label{remboot2}
HHB also propose testing the hypotheses described above using $\sup$-$F$ tests. While $F$-tests are designed for use in regression models with homoskedastic errors,\footnote{If the reduced form is stable then the limiting distribution of the $sup$-$F$ statistics are only pivotal if the errors are homoskedastic.} wild bootstrap versions of the tests can be used as a basis for inference when the errors exhibit heteroskedasticity. In the Supplementary Appendix, we present WR bootstrap and WF bootstrap versions of appropriate $\sup$-$F$ statistics for testing both $H_0:\,m=0$ versus $H_1:\,m=k$ and $H_0:m=\ell$ versus $H_1:\,m=\ell+1$, and show that these bootstrap versions of the $\sup$-$F$ tests are asymptotically valid under the same conditions as their $\sup$-$Wald$ counterparts. Simulation evidence indicated no systematic difference in the finite sample behaviour of the $\sup$-$Wald$ and $\sup$-$F$ tests for a given null and bootstrap method, and so further details about this approach are relegated to the Supplementary Appendix.
\end{rem}

\begin{rem}
\label{remOLS}
In the special case where there are no endogenous regressors in the equation of interest then our framework reduces to one in which a linear regression model is estimated via OLS. For this set-up, the asymptotic validity of wild fixed bootstrap versions of $\sup$-$F$ test for parameter variation (our Case(i) above) has been established under different sets of conditions by \citet{HansenBE:2000} and GHLT.  \citet{HansenBE:2000} considers the case where the marginal distribution of the exogenous regressors changes during the sample.  GHLT consider  \citet{HansenBE:2000}'s bootstrap in the context of predictive regressions with strongly persistent exogenous regressors. Our results complement these earlier studies because we provide results for the wild recursive bootstrap and a theoretical justification for tests of $\ell$ breaks against $\ell+1$ based on bootstrap methods.
\end{rem}

\section{Simulation results}

In this section, we investigate the finite sample performance of the bootstrap versions of the $\sup$-$Wald$ and $\sup$-$F$ statistics.
We consider a number of designs that involve stability or instability in the SE and/or the RF. In all the designs the variable $x_t$ is endogenous and the SE is estimated by 2SLS. Recalling from above that $h$ and $m$ denote the true number of breaks in the RF and SE respectively, the four scenarios we consider are as follows.
\begin{itemize}
\item {\it Scenario: (h,m)=(0,0)}\\
The DGP is as follows:
\begin{eqnarray}
\label{xtsim2}
x_t &=& \alpha_x + \vr_t'\vdelta^0_{r} + \delta^0_{x_1} x_{t-1} + \delta^0_{y_1} y_{t-1} + v_t, \;\;\;\;\; \mbox{for}\;\; t=1,\dots,T,\\
\label{ytsim2}
y_t &=& \alpha_y + x_t\beta^0_x + \beta^0_{r_1}r_{1,t}+ \beta^0_{y_1} y_{t-1} +u_t, \;\;\;\;\; \mbox{for}\;\; t=1,\dots,T,
\end{eqnarray}
where the parameters of the SE - see equation \eqref{ytsim2} - are $\alpha_y=0.5$, $\beta^0_x=0.5$, $\beta^0_{r_1}=0.5$, $\beta^0_{y_1}=0.8$; the parameters of the RF in equation \eqref{xtsim2} are $\alpha_x=0.5$, $\vdelta_r^0=(1.5,1.5,1.5,1.5)'$ a $4\times 1$ parameter vector, $\delta^0_{x_1}=0.5$, $\delta^0_{y_1}=0.2$; $\vr_t=(r_{1,t},\vr_{2,t}')'$.

\item {\it Scenario: (h,m)=(1,0)}\\
The DGP is as follows:
\begin{eqnarray}
\label{xtsim3i}
x_t &=& \alpha_{x,(1)} + \vr_t'\vdelta^0_{r,(1)} + \delta^0_{x_1,(1)} x_{t-1} + \delta^0_{y_1,(1)} y_{t-1} + v_t, \;\;\;\;\;\;\; \mbox{for}\;\; t=1,\dots,[T/4],
\\
\label{xtsim3ii}
&=&\alpha_{x,(2)} + \vr_t'\vdelta^0_{r,(2)} + \delta^0_{x_1,(2)} x_{t-1} + \delta^0_{y_1,(2)} y_{t-1} + v_t, \;\;\;\;\;\;\ \mbox{for}\;\;  t=[T/4]+1,\dots,T,    \\
\label{ytsim3}
y_t &=& \alpha_y + x_t\beta^0_x + \beta^0_{r_1}r_{1,t}+ \beta^0_{y_1} y_{t-1} +u_t, \;\;\;\;\;\;\;\;\;\;\;\;\;\;\;\;\;\;\;\;\;\;\;\;\;\; \mbox{for}\;\; t=1,\dots,T,
\end{eqnarray}
where the parameters of the SE - equation (\ref{ytsim3}) - are the same as in scenario $(h,m)=(0,0)$, and the RF parameters - equations \eqref{xtsim3i}-\eqref{xtsim3ii} - are: $\alpha_{x,(1)}=0.1$, $\alpha_{x,(2)}=0.5$, $\vdelta_{r,(1)}^0=(0.1,0.1,0.1,0.1)'$, $\vdelta_{r,(2)}^0=(1.5,1.5,1.5,1.5)'$, $\delta^0_{x_1,(1)}=0.1$, $\delta^0_{x_1,(2)}=0.5$, $\delta^0_{y_1,(1)}=0.1$, and $\delta^0_{y_1,(2)}=0.2$. In our simulation study, prior to testing the null hypothesis of zero breaks in the SE parameters from \eqref{ytsim3}, we test sequentially for breaks in the RF parameters (assuming for a maximum of 2 breaks) by applying our bootstrap $\sup$-$Wald$ test.

\item {\it Scenario: (h,m)=(0,1)}\\
The DGP is as follows:
\begin{eqnarray}
\label{xtsim4}
x_t &=& \alpha_x + \vr_t'\vdelta^0_{r} + \delta^0_{x_1} x_{t-1} + \delta^0_{y_1} y_{t-1} + v_t, \;\;\;\;\;\;\;\;\;\;\;\;\;\;\;\;\;\;\;\;\;\;\;\;\; \mbox{for}\;\; t=1,\dots,T,\\
\label{ytsim4i}
y_t &=& \alpha_{y,(1)} + x_t\beta^0_{x,(1)} + \beta^0_{r_1,(1)}r_{1,t}+\beta^0_{y_1,(1)} y_{t-1} +u_t, \;\;\;\;\;\;\; \mbox{for}\;\; t=1,\dots,[3T/4],\\
\label{ytsim4ii}
&=&\alpha_{y,(2)} + x_t\beta^0_{x,(2)} + \beta^0_{r_1,(2)}r_{1,t}+ \beta^0_{y_1,(2)} y_{t-1} +u_t,\;\;\;\;\;\;\; \mbox{for}\;\; t=[3T/4]+1,\dots,T,
\end{eqnarray}
where the parameter values for the RF - equation (\ref{xtsim4}) - are as in scenario $(h,m)=(0,0)$, and the parameters on the SE  - equations  \eqref{ytsim4i}-\eqref{ytsim4ii} - are: $\alpha_{y,(1)} =0.5$, $\alpha_{y,(2)} =-0.5$; $\beta^0_{x,(1)}=0.5$, $\beta^0_{x,(2)}=-0.5$; $\beta^0_{r_1,(1)}=0.5$, $\beta^0_{r_1,(2)}=-0.5$, $\beta^0_{y_1,(1)}=0.8$, and $\beta^0_{y_1,(2)}=0.1$.

\item {\it Scenario: (h,m)=(1,1)}\\
The DGP is as follows:
\begin{eqnarray}
x_t &=& \alpha_{x,(1)} + \vr_t'\vdelta^0_{r,(1)} + \delta^0_{x_1,(1)} x_{t-1}
+ \delta^0_{y_1,(1)} y_{t-1} + v_t, \;\;\;\;\;\;\;\;\; \mbox{for}\;\; t=1,\dots,[T/4],\label{xtsim5i}\\
&=&\alpha_{x,(2)} + \vr_t'\vdelta^0_{r,(2)} + \delta^0_{x_1,(2)} x_{t-1}
+ \delta^0_{y_1,(2)} y_{t-1} + v_t, \;\;\;\;\;\;\;\;\; \mbox{for}\;\;  t=[T/4]+1,\dots,T, \label{xtsim5ii}   \\
y_t &=& \alpha_{y,(1)} + x_t\beta^0_{x,(1)} + \beta^0_{r_1,(1)}r_{1,t}+\beta^0_{y_1,(1)}
y_{t-1} +u_t, \;\;\;\;\;\;\;\;\; \mbox{for}\;\; t=1,\dots,[3T/4],\label{ytsim5i}\\
&=&\alpha_{y,(2)} + x_t\beta^0_{x,(2)} + \beta^0_{r_1,(2)}r_{1,t}+ \beta^0_{y_1,(2)}
y_{t-1} +u_t,\;\;\;\;\;\;\;\;\; \mbox{for}\;\; t=[3T/4]+1,\dots,T,    \label{ytsim5ii}
\end{eqnarray}
where the parameters of the RF - equations \eqref{xtsim5i}-\eqref{xtsim5ii} - are as in scenario $(h,m)=(1,0)$ and the parameters in the SE - equations \eqref{ytsim5i}-\eqref{ytsim5ii} - are as in $(h,m)=(0,1)$. In our simulation study, prior to testing the null hypothesis of zero breaks in the SE parameters from \eqref{ytsim3}, we test sequentially for breaks in the RF parameters (assuming for a maximum of 2 breaks) by applying our bootstrap $\sup$-$Wald$ test.

\end{itemize}

For the four scenarios above we consider the following choices for $u_t$, $\vv_t$ and $\vr_t$:
\begin{enumerate}
\item[{\it Case A:}] $u_t$ and $v_t\stackrel{IID}{\sim}N(0,1)$, $\cov(u_t,v_t)=0.5$, $t=1,\ldots,T$, $\vr_t\stackrel{IID}{\sim}N(\vzeros_{4 \times 1},\mI_{4})$.

\item[{\it Case B:}] $u_t$ and $v_t$ are GARCH(1,1) processes \textit{i.e.} $u_t=\tilde u_t/\sqrt{\var(\tilde u_t)}$ and $v_t=\tilde v_t/\sqrt{\var(\tilde v_t)}$ with $\tilde u_t=\sigma_{\tilde u,t}\vartheta_{\tilde u,t}$ and $\tilde v_t=\sigma_{\tilde v,t}\vartheta_{\tilde v,t}$, $\vartheta_{\tilde u,t}$ and $\vartheta_{\tilde v,t}\stackrel{IID}{\sim}N(0,1)$, $\cov(\vartheta_{\tilde u,t},\vartheta_{\tilde v,t})=0.5$,  $\sigma_{\tilde u,t}^2=\gamma_0+\gamma_1\tilde u_{t-1}^2+\gamma_2\sigma_{\tilde u,t-1}^2$, $\sigma_{\tilde v,t}^2=\gamma_0+\gamma_1\tilde v_{t-1}^2+\gamma_2\sigma_{\tilde v,t-1}^2$, where $\gamma_0=0.1$ and $\gamma_1=\gamma_2=0.4$, $t=1,\ldots,T$, $\vr_t$ is as in \textit{Case A}.

\item[{\it Case C:}] $u_t$ and $v_t\stackrel{IID}{\sim}N(0,1)$, $\cov(u_t,v_t)=0.5$, $t=1,\ldots,[T/3]$; $u_t$ and $v_t\stackrel{IID}{\sim}N(0,2)$, $\cov(u_t,v_t)=0.5$, $t=[T/3]+1,\ldots,T$.

\item[{\it Case D:}] $u_t$ and $v_t$ are as in \textit{Case D} and $\vr_t\stackrel{IID}{\sim}N(\vzeros_{4 \times 1},\mI_{4})$ for $t=1,\ldots,[3T/5]$, and  $\vr_t\stackrel{IID}{\sim}N(\vzeros_{4\times 1},1.5\mI_{4})$ for $t=[3T/5]+1,\ldots,T$.

\end{enumerate}
In {\it Case A}, the errors $u_t$ and $v_t$ are homoskedastic and the contemporaneous exogenous regressors $\vr_t$ are stable. In {\it Case B}, the errors are conditionally heteroskedastic. In {\it Case C} the errors have a contemporaneous  upward shift in the unconditional variance, while in {\it Case D} there is also an upward shift in the variance of $\vr_t$.

In our simulations we consider the behavior of the bootstrap tests both under their null and alternative hypotheses. For scenarios $(h,m)=(0,0)$ and $(h,m)=(1,0)$ we consider the behavior of the $\sup$-$Wald_T$. For scenarios $(h,m)=(0,1)$ and $(h,m)=(1,1)$ we consider the performance of the $\sup$-$Wald_T(2|1)$. In order to assess the power of our bootstrap tests we also consider the case when the null hypotheses are not true and there is an additional break in the SE parameters at $[T/2]$. More exactly we consider in all the four scenarios described above the following:
\begin{align}
y_t &= (\alpha_{y,(i)}+g) + x_t(\beta^0_{x,(i)}+g) + (\beta^0_{r_1,(i)}+g)r_{1,t}+(\beta^0_{y_1,(i)}+g)
y_{t-1} +u_t,\;\;\;\text{for}\,\,t=[T/2]+1,\ldots,\tilde T,\label{power}
\end{align}
with $g$ a constant; $i=1$ and $\tilde T=T$ for scenarios $(h,m)=(0,0)$ and $(h,m)=(1,0)$, and the equation for $y_t$ for $t<[T/2]+1$ is the same as that given in the two scenarios $(h,m)=(0,0)$ and $(h,m)=(1,0)$; $i=2$ and $\tilde T=[3T/4]$ for scenarios $(h,m)=(1,0)$ and $(h,m)=(1,1)$, and the equation for $y_t$ for $t<[T/2]+1$ and $t>[3T/4]$ is the same as that given in the two scenarios $(h,m)=(1,0)$ and $(h,m)=(1,1)$. When $g=0$, the null hypothesis is satisfied. We illustrate the behavior of the tests under the alternative hypothesis for the following values of $g$: $g=-0.007, -0.009$ for scenario $(h,m)=(0,0)$; $g=-0.05,-0.07$ for scenario $(h,m)=(1,0)$; $g=0.3,0.4$ for scenario $(h,m)=(0,1)$; $g=-0.5,0.5$ for scenario $(h,m)=(1,1)$.

For scenarios $(h,m)=(1,0)$ and $(h,m)=(1,1)$ we have tested for the presence of max $2$ breaks in the RF for $x_t$ (in \eqref{xtsim3i}-\eqref{xtsim3ii} and \eqref{xtsim5i}-\eqref{xtsim5ii} respectively) prior to testing for breaks in the SE. More exactly we tested the null hypothesis $H_0:h=\ell$ against $H_1:h=\ell+1$, $\ell=0,1$ using the WR and WF bootstrap $\sup$-$Wald$ for OLS. If the bootstrap $p$-value (given by the fraction of bootstrap statistics more extreme than the $\sup$-$Wald$ based on the original sample) was larger than $5\%$, then we imposed the $\ell$ breaks (assumed under null $H_0:h=\ell$) in the RF and estimated their locations which were subsequently accounted for in the estimation of the SE.

We now describe other features of the calculations before discussing the results. For the WR and the  WF bootstraps the auxiliary distribution (from Assumption \ref{aboot}) is the Rademacher distribution  proposed by \citet{Liu:1988} which assigns 0.5 probability to the value $\nu_t=-1$ and 0.5 probability to $\nu_t=1$, $t=1,\ldots,T$. The same $\nu_t$ is used to obtain both the bootstrap residuals $u_t^b=\hat u_t\nu_t$ and  $v_t^b=\hat v_t\nu_t$ in order to preserve the contemporaneous correlation between the error terms.  We consider $T=120,240,480$ for the sample size and $B=399$ for the number bootstrap replications. All results are calculated using  $N=1,000$ replications.

The reported rejection rates of the WR and WF bootstraps are calculated as: $N^{-1}\sum_{j=1}^N 1_{t_j\geq t^b_{1-\alpha_1,j}}$, where $\alpha_1=0.10,0.05,0.01$ are the nominal values of the tests; $t_j$ is the statistic (sup-$Wald$) computed from the original sample; $t^b_{1-\alpha_1,j}$ is $1-\alpha_1$ quantile of the bootstrap distribution calculated as $(1-\alpha_1)(B+1)$ bootstrap order statistic from the sample of bootstrap statistics in simulation $j=1,\ldots, N$.


For the WR bootstrap, the bootstrap samples were generated recursively with start-up values for $y^b_1$ and $x^b_1$ being given by  the first observations from the sample ($x_1,y_1$); see \citet{Davidson/MacKinnon:1993}.

In all settings, the bootstrap samples are generated by imposing the null hypothesis. The value of $\epsilon$, the trimming parameter in Assumption \ref{a6}, is set to $0.15$ which is a typical value used in the literature.

We now turn to our results. We present results for the $\sup$-$Wald$ test under both the null and alternative hypotheses in Tables \ref{tabhm=00}-\ref{tabhm=11} of the paper. In Tables \ref{tabhm=00F}-\ref{tabhm=11F} of the Supplementary Appendix we also present similar results for the $\sup$-$F$ test. The first two columns of these tables give the rejection rates of the tests under the null hypothesis, while columns 3-6 give the rejection rates of the tests under the alternative hypothesis.\footnote{ The rejection rates under the alternative are not level-adjusted, but since we have used the same sequence of random numbers for repetition $i$, $i=1,\ldots,N$, in the experiments under both null and the alternative hypotheses, one can always subtract (or add) the positive (or negative) size discrepancy (relative to the nominal size) from the rejection rate under the alternative in order to obtain the level-adjusted power of the test; see \citet{Davidson:1998}.}.

 From the first two columns of Tables \ref{tabhm=00}-\ref{tabhm=11}, it can be seen that the WR bootstrap works better in general than the WF bootstrap. The latter has large size distortions for scenarios $(h,m)=(0,0)$, $(h,m)=(0,1)$ and $(h,m)=(1,0)$ whether the errors are conditionally homoskedastic, are conditionally heteroskedastic or have a break in the unconditional variance. For scenario $(h,m)=(1,1)$, the WF bootstrap is only slightly undersized or oversized. Regarding the behavior of the $\sup$-$Wald$ test under the alternative hypothesis, the main conclusion that emerges from columns 3-6 of Tables \ref{tabhm=00}-\ref{tabhm=11} is that the power is influenced in small sample ($T=120$) by the number of breaks in RF and SE, the distribution of the errors $u_t$ and $v_t$, the distribution of $r_t$, as well as the number of breaks in the variance of the errors and in the variance of $\vr_t$. When there is a break in SE, we need a larger $g$ in \eqref{power} to be able to see an increase in the power of the test, compared with scenarios with no break in SE ($g=0.3, 0.4$  for scenario $(h,m)=(0,1)$, and $g=-0.5,0.5$ for scenario $(h,m)=(1,1)$, while $g=-0.007,-0.009$ for scenario $(h,m)=(0,0)$ and $g=-0.05,-0.07$ for scenario $(h,m)=(1,0)$). This can be explained by the fact that the second break in the SE is tested over smaller samples than the first break in the SE. Moreover, the power is lower for the smallest sample ($T=120$) when the error terms have an upward shift in the variance ($Case$ $C$ in Tables \ref{tabhm=00}-\ref{tabhm=11}) and the contemporaneous exogenous regressors also have an upward shift in their variance ($Case$ $D$). However, for $T=240, 480$ the power increases sharply in all cases.

In Tables \ref{tabhm=10} and \ref{tabhm=11} we have sequentially   tested for the presence of max $2$ breaks in the RF for $x_t$ (in \eqref{xtsim3i}-\eqref{xtsim3ii} and \eqref{xtsim5i}-\eqref{xtsim5ii} respectively) using the WR/WR $\sup$-$Wald$ for OLS, and the resulting number of RF breaks was imposed in each simulation prior to estimating the RF and SE and computing the test statistics for 2SLS.  The fraction of times that $0,1,2$ breaks were detected in RF (out of 1,000 replications of the scenarios), is given  in Tables \ref{tabhm=10fracRF}-\ref{tabhm=11fracRF} from Section \ref{supsec6} of the Supplemental Appendix. To assess the impact of the pre-testing in RF (in the first two columns of Tables \ref{tabhm=10} and \ref{tabhm=11}), we have obtained the rejection frequencies of the bootstrap tests when the number of breaks in the RF is held at the true number; see (the first two columns of) Tables \ref{tabhm=10true2} and \ref{tabhm=11true2} from Section \ref{supsec6} of the Supplemental Appendix.
To complement our results, we have also considered a break in RF of smaller size than the one mentioned after \eqref{xtsim3i}-\eqref{xtsim3ii} by taking $\vdelta_{r,(1)}^0=(1,1,1,1)'$ (and the rest of the parameter values are as mentioned after \eqref{xtsim3i}-\eqref{xtsim3ii}); see Tables \ref{tabhm=10small} and \ref{tabhm=11small} from Section \ref{supsec6} of the Supplemental Appendix.

Looking at the results for the $\sup$-$Wald$ our results suggest that in the smaller samples ($T=120$, $240$) the recursive bootstrap is clearly to be preferred over the fixed regressor bootstrap. In the larger sample ($T=480$), the case for the WR over the WF is more marginal as the latter yields  only slightly oversized tests. This relative ranking of the two methods is intuitive from the perspective of \citeauthor{Davidson:2016}'s (\citeyear{Davidson:2016}) first ``golden rule" of bootstrap, which states: ``The bootstrap DGP [...] must belong to the model [...] that represents the null hypothesis.'' The fixed regressor bootstraps treat the lagged dependent variables in the RF and SE as fixed across bootstrap samples, and as such do not seem to replicate the true model that represents the null hypothesis.  This would seem to point toward a recommendation to use the WR but it is important to note an important caveat to our results: our designs involve models for which both recursive and fixed bootstraps are valid. As discussed in Section \ref{asymval}, the fixed regressor bootstrap is asymptotically valid under weaker conditions than the recursive bootstrap. Therefore, while the recursive bootstrap works best in the settings considered here, there may be other settings of interest in which only the fixed bootstrap is valid and so would obviously be preferred.
\section{Concluding remarks}
In this paper, we analyse the use of bootstrap methods to test for parameter change in linear models estimated via Two Stage Least Squares (2SLS). Two types of test are considered:  one where the null hypothesis is of no change and the alternative hypothesis involves discrete change at $k$ unknown break-points in the sample; and a second test where the null hypothesis is that there is discrete parameter change at $l$ break-points in the sample against an alternative in which the parameters change at $l+1$ break-points. In both cases, we consider inferences based on a $\sup$-$Wald$-type statistic using either the wild recursive bootstrap or the wild fixed regressor bootstrap.   We establish the asymptotic validity of these bootstrap tests under a set of general conditions that allow the errors to exhibit conditional and/or  unconditional heteroskedasticity and the regressors to have breaks in their marginal distributions. While we focus on inferences based on $\sup$-$Wald$ statistics, our arguments are easily extended to establish the asymptotic validity of inferences based on bootstrap versions of the analogous tests based on $\sup$-$F$ statistics; see the Supplementary Appendix.

Our simulation results show that the wild recursive bootstrap is more reliable compared to the wild fixed regressor bootstrap, yielding $\sup$-$Wald$-type tests with empirical size equal or close to the nominal size. The gains from using the wild recursive bootstrap are quite clear in the smaller sample sizes, but are more marginal in the largest sample size ($T=480$) in our simulation study. This would seem to point  toward a recommendation to use the wild recursive bootstrap but it is important to note that the wild fixed bootstrap is asymptotically valid under less restrictive conditions than the wild recursive bootstrap. Thus, while both bootstraps are valid in our simulation design, there may be other circumstances when the recursive bootstrap is invalid and the fixed bootstrap would be preferred. The powers of the bootstrap tests are  affected in small sample by the characteristics of the error distribution, but in moderate sample sizes often encountered in macroeconomics, there is a very sharp increase in power.

Our analysis covers the cases where the first-stage estimation of 2SLS involves a model whose parameters are either constant or themselves subject to discrete parameter change. If the errors exhibit unconditional heteroscedasticity and/or the reduced form is unstable then the bootstrap methods are particularly attractive because the limiting distributions are non-pivotal. As a result, critical values have to be simulated on a case-by-case basis. In principle it may be possible to simulate these critical values directly from the limiting distributions presented in our Supplementary Appendix replacing unknown moments and parameters by their sample estimates but this would seem to require knowledge (or an estimate of) the function driving the unconditional heteroskedasticity.  In contrast, the bootstrap approach is far more convenient because it involves simulations of the estimated data generation process using the residuals and so does not require knowledge of the form of heteroskedasticity. Furthermore, our results indicate that the bootstrap approach yields reliable inferences in the sample sizes often encountered in macroeconomics.

\newpage
\appendix
\section{Tables}
\begin{table}[h!]
\caption{\label{tabhm=00}{\it Scenario:(h,m)=(0,0)} - rejection probabilities from testing $H_0:\,m=0$ vs. $H_1:\,m=1$ with bootstrap $\sup$-$Wald$ test.}
\begin{center}
\addtolength{\tabcolsep}{-2pt}
\begin{tabular}{@{}c ccc c ccc c ccc c ccc c ccc c ccc@{}}
\toprule
\midrule
&\multicolumn{3}{c}{WR bootstrap}&&\multicolumn{3}{c}{WF bootstrap}&&\multicolumn{3}{c}{WR bootstrap}&&
\multicolumn{3}{c}{WF bootstrap}&&\multicolumn{3}{c}{WR bootstrap}&&\multicolumn{3}{c}{WF bootstrap}\\
&\multicolumn{3}{c}{Size }&&\multicolumn{3}{c}{Size}&&\multicolumn{3}{c}{Power }&&
\multicolumn{3}{c}{Power}&&\multicolumn{3}{c}{Power}&&\multicolumn{3}{c}{Power }\\
&\multicolumn{3}{c}{$g$=0}&&\multicolumn{3}{c}{$g$=0}&&\multicolumn{3}{c}{$g=-0.007$}&&
\multicolumn{3}{c}{$g=-0.007$}&&\multicolumn{3}{c}{$g=-0.009$}&&\multicolumn{3}{c}{$g=-0.009$}
\\
\cmidrule{2-4}\cmidrule{6-8} \cmidrule{10-12} \cmidrule{14-16} \cmidrule{18-20} \cmidrule{22-24}\\
T&10\%&5\%&1\%&&10\%&5\%&1\%&&10\%&5\%&1\%&&10\%&5\%&1\%&&10\%&5\%&1\%&&10\%&5\%&1\%\\
\cmidrule{2-4}\cmidrule{6-8} \cmidrule{10-12} \cmidrule{14-16} \cmidrule{18-20} \cmidrule{22-24}
\multicolumn{24}{c}{{\it Case A}}\\
\midrule
120&11.8&       6.1&    1.6&&   15.1&   8.7&    2.4&&   59.2&   48.3&   25&&    61.1&   55.3&   31.5&&  79.4&   70.3&   49.1&&  84.5&   75&     56.3\\
240&9.3&        4&      0.8&&   12.9&   6.4&    0.9&&   99.7&   99.7&   99.6&&  99.8&   99.7&   99.7&&  100&    100&    99.9&&  100&    100&    100\\
480&10.08       &5.09&  1.15&&  9.76&   5.52&   1.04&&  100&    100&    100&&   100&    100&    100&&   100&    100&    100&&   100&    100&    100\\

\midrule
\multicolumn{24}{c}{{\it Case B}}\\
\midrule
120&12& 5.9&    0.7&&   14.4&   8.5&    1.7&&   65.5&   54.2&   32.6&&  71.3&   61.1&   38.6&&  83.2&   75.9&   54.7&&  87.1&   80.3&   62.6\\
240&9.5&        4.7&    1.1&&   11.9&   6.2&    1.4&&   99.8&   99.8&   99.5&&  99.9&   99.9&   99.7&&  100&    100&    100&&   100&    100&    100\\
480&10.1&       4.9&    0.5&&   11.5&   6.1&    1.3&&100   &100&100&&    100&    100&    100&&   100&    100&    100&&   100&    100&    100\\

\midrule
\multicolumn{24}{c}{{\it Case C}}\\

\midrule
120&9.9&        5.6&    1.5&&   15.7&   8.3&    1.9&&   46.8&   33.6&   12.5&&  58.9&   45&     22.1&&  70      &56.8&  29.4&&  78.4&   68.1&   43\\
240&9.9&        5.1&    0.7&&   15.2&   8.4&    1.2&&   99.6&   99.5&   99&&    99.6&   99.5&   99&&   99.8&   99.8&   99.5&&  99.9&   99.9    &99.7\\
480&8.8&        5.1&    1.3&&   12      &6.5&   2.1&&   100&100&100&&100&100&100&&        100&    100&    100&&   100&    100&    100\\

\midrule
\multicolumn{24}{c}{{\it Case D}}\\
\midrule
120&10.7&       5.3&    1       &&13.8& 7.3&    1.9&&   50.1&   37.7&   14.1&&  57.6&   44.2&   22.7&&  70.7&   58.7&   34.4&&  76.1&   65.8&   43.2\\
240&9.8&        4.5&    0.9&&   14.1&   7.2&    1.7&&   100&    99.8&   98.6&&  100&    100&    99.4&&  100&    100&    99.9&&  100&    100&    100\\
480&10.1&       4.3&    0.9&&   12.2&   6       &1.2&&100  &100&100&&    100&    100&    100&&    100&   100&    100&&   100&    100&    100\\
\midrule
\bottomrule
\multicolumn{24}{l}{\small{Notes. The first two columns refer to the case when $H_0: m=0$ is true ($g$=0 in equation \eqref{power}). The next columns refer }}\\
\multicolumn{24}{l}{\small{  to the case when we test for $H_0: m=0$, but $H_1:m=1$ is true ($g=-0.007, -0.009$ in equation \eqref{power}). Under the null}}\\ \multicolumn{24}{l}{\small{ and the alternative hypotheses we impose $h=0$ in the RF.}}
\end{tabular}%
\end{center}
\end{table}

\vspace*{2.5in}
\newpage
\begin{table}[h!]
\caption{\label{tabhm=01Wald}{\it Scenario:(h,m)=(0,1)} - rejection probabilities from testing $H_0:\,m=1$ vs. $H_1:\,m=2$ with bootstrap $\sup$-$Wald$ test.}
\begin{center}
\addtolength{\tabcolsep}{-2pt}
\begin{tabular}{@{}c ccc c ccc c ccc c ccc c ccc c ccc@{}}
\toprule
\midrule
&\multicolumn{3}{c}{WR bootstrap}&&\multicolumn{3}{c}{WF bootstrap}&&\multicolumn{3}{c}{WR bootstrap}&&
\multicolumn{3}{c}{WF bootstrap}&&\multicolumn{3}{c}{WR bootstrap}&&\multicolumn{3}{c}{WF bootstrap}\\
&\multicolumn{3}{c}{Size }&&\multicolumn{3}{c}{Size}&&\multicolumn{3}{c}{Power }&&
\multicolumn{3}{c}{Power}&&\multicolumn{3}{c}{Power}&&\multicolumn{3}{c}{Power }\\
&\multicolumn{3}{c}{$g$=0}&&\multicolumn{3}{c}{$g$=0}&&\multicolumn{3}{c}{$g$=0.3}&&
\multicolumn{3}{c}{$g$=0.3}&&\multicolumn{3}{c}{$g$=0.4}&&\multicolumn{3}{c}{$g$=0.4}
\\
\cmidrule{2-4}\cmidrule{6-8} \cmidrule{10-12} \cmidrule{14-16} \cmidrule{18-20} \cmidrule{22-24}\\
T&10\%&5\%&1\%&&10\%&5\%&1\%&&10\%&5\%&1\%&&10\%&5\%&1\%&&10\%&5\%&1\%&&10\%&5\%&1\%\\
\cmidrule{2-4}\cmidrule{6-8} \cmidrule{10-12} \cmidrule{14-16} \cmidrule{18-20} \cmidrule{22-24}
\multicolumn{24}{c}{{\it Case A}}
\\
\midrule
120&10.7&       5&      1.2&&   15.5&   9.9&    5.6&&   54.9&   36.9&   10.8&&  61.5&   45.6&   19.8&&  78.1&   60.3&   24.8&&  82.2&   67.8&   38.5\\
240&10.2&       4.9&    0.5&&   12.5&   7.1&    3.4&&   99.5&   98.9&   89.9&&  99.6&   98.6&   92.2&&  100&    100&    98.8&&  100&    100&    99.1\\
480&8   &4.5&   1&&     8.6&    4.4&    0.8&&   100&    100&    100&&   100&    100&    100&&   100&    100&    100&&   100&    100&    100\\

\midrule
\multicolumn{24}{c}{{\it Case B}}\\
\midrule
120&9.7&        4.6&    1       &&16&   10.2&   6.5&&   62.3&   44.8&   16.1&&  67.2&   53.9&   26.2&&  82.2&   67.6&   31.1&&  84.1&   73.6&   44.7\\
240&10.6&       5.2&    1.2&&   13.8&   8.1&    3       &&99.3& 97.5&   86.2&&  99.1&   93&     91.5&&  99.9&   99.6&   96.6&&  100&    99.9&   98.3\\
480&8   &4.2&   0.9&&   8.4&    4.8&    0.8&&   100&    99.8&   99.5&   &99.8&  99.7&   99.5&&  100&    100&    100&&   100&    100&    99.9\\

\midrule
\multicolumn{24}{c}{{\it Case C}}\\

\midrule
120&10.5        &5.2&   0.9&&   16.3&   11&     5.8&&   26.3&   14.8&   3.3&&   36.1&   21.4&   6.6&&   40.1&   24.5&   7.5&&   51.1&   34.6&   13\\
240&11& 4.8&    0.9&&   13.2&   8.3&    2.4&&   83.1&   68.7&   31.4&&  87.2&   77.6&   47.2&&  98.5&   93.4&   68.7&&  99&     97&     80.1\\
480&10.4&       5.6&    0.5&&   11.2&   6.1&    1.2&&   100&    99.9&   98.4&&  100&    99.9&   99.2&&  100&    100&    100&&   100&    100&    100\\

\midrule
\multicolumn{24}{c}{{\it Case D}}\\
\midrule
120&11.6&       5.8&    1.5&&   15.3&   9.5&    5.3&&   39.8&   24.1&   6.5&&   51.2&   33.2&   13.3&&  64.8&   43.33&  14&&    72.5&   54.8&   24.2\\
240&11.5&       6&      1       &&14.9& 9.1&    2.9&&   98.9&   94.6&   73&&    98.9&   97.1&   82.7&&  100&    99.8&   95.6&&  100&    99.9&   97.9\\
480&9.6&        4       &1.3&&  9.5&    5.3&    1.5&&   100&    100&    100&&   100&    100&    100&&   100&    100&    100&&   100&    100&    100\\

\midrule
\bottomrule
\multicolumn{24}{l}{\small{Notes. The first two columns refer to the case when $H_0: m=1$ is true ($g$=0 in equation \eqref{power}). The next columns refer }}\\ \multicolumn{24}{l}{\small{  to the case when we test for $H_0: m=1$, but $H_1:m=2$ is true ($g=0.3, 0.4$ in equation \eqref{power}). Under the null and }}\\ \multicolumn{24}{l}{\small {the alternative hypotheses we impose $h=0$ in the RF.}}
\end{tabular}%
\end{center}
\end{table}
\vspace*{2.5in}
\newpage

\begin{table}[h!]
\caption{\label{tabhm=10}{\it Scenario:(h,m)=(1,0)} - rejection probabilities from testing $H_0:\,m=0$ vs. $H_1:\,m=1$ with bootstrap $\sup$-$Wald$ test; number of breaks in the RF was estimated and imposed in each simulation using a sequential strategy based on the WR/WF $\sup$-$Wald$ for OLS}
\begin{center}
\addtolength{\tabcolsep}{-2pt}
\begin{tabular}{@{}c ccc c ccc c ccc c ccc c ccc c ccc@{}}
\toprule
\midrule
&\multicolumn{3}{c}{WR bootstrap}&&\multicolumn{3}{c}{WF bootstrap}&&\multicolumn{3}{c}{WR bootstrap}&&
\multicolumn{3}{c}{WF bootstrap}&&\multicolumn{3}{c}{WR bootstrap}&&\multicolumn{3}{c}{WF bootstrap}\\
&\multicolumn{3}{c}{Size }&&\multicolumn{3}{c}{Size }&&\multicolumn{3}{c}{Power }&&
\multicolumn{3}{c}{Power }&&\multicolumn{3}{c}{Power }&&\multicolumn{3}{c}{Power }\\
&\multicolumn{3}{c}{$g=0$}&&\multicolumn{3}{c}{$g=0$}&&\multicolumn{3}{c}{$g=-0.05$}&&
\multicolumn{3}{c}{$g=-0.05$}&&\multicolumn{3}{c}{$g=-0.07$}&&\multicolumn{3}{c}{$g=-0.07$}\\
\cmidrule{2-4}\cmidrule{6-8} \cmidrule{10-12} \cmidrule{14-16} \cmidrule{18-20} \cmidrule{22-24}\\
T&10\%&5\%&1\%&&10\%&5\%&1\%&&10\%&5\%&1\%&&10\%&5\%&1\%&&10\%&5\%&1\%&&10\%&5\%&1\%\\
\cmidrule{2-4}\cmidrule{6-8} \cmidrule{10-12} \cmidrule{14-16} \cmidrule{18-20} \cmidrule{22-24}
\multicolumn{24}{c}{{\it Case A}}\\
\midrule
120&10.2&       3.7&    0.9&&   15.3&   7.1&    1.3&&   52.3&   42.8&   22.5&&  60.5&   50.6&   29.3&&  67.9&   58.3&   38.1&&  74.9&   66.3&   45.6\\
240&10.8&       5.7&    0.8&&   14&     6.7&    1.2&&   94.6&   91.3&   84.5&&  95.1&   92.1&   86&&    98.1&   96.7&   91.7&&  98.7&   97.1&   92.7\\
480&10.9&       5.2&    0.9&&   12.5&   6&      0.8&&   99.9&   99.8&   99.3&&100  &99.8&99.5&&    100&    100&    99.8&&  100&    100&    99.7\\
\midrule
\multicolumn{24}{c}{{\it Case B}}\\
\midrule
120&10.1&       4.8&    1       &&13.3& 7.8&    1.5&&   54.5&   44.9&   28.1&&  63.1&   51.5&   33.7&&  68.9&   59.6&   43.4&&  77      &68     &48.9\\
240&10& 5.4&    1.2&&   12.2&   6.8&    1.5&&   94.5&   92.1&   83.7&&  95.8&   93.3&   85.9&&  97.9&   96.7&   92&&    98.8&   97.5&   93.3\\
480&11& 5.4&    0.7&&   12.8&   5.9&    1.2&&   100&    100&    100&&100&99.7&99.2&&100&   99.9&   99.8&&  100&    99.9&   99.8\\
\midrule
\multicolumn{24}{c}{{\it Case C}}\\
\midrule
120&9.6&        4.3&    0.9&&   15.6&   7.5&    1.7&&   39.6&   28.7&   11.3&&  50.8&   37.4&   18.4&&  54.9&   43.8&   22.9&&  66.6&   53.8&   33.5\\
240&11.8&       6&      0.6&&   15.6&   8.6&    1.4&&   88.8&   83.5&   71.8&&  91.3&   87.3&   76.2&&  94.3&   92&     83.5&&96  &93.7&86.7\\
480&10.8&       5.9&    1.1&&   12.6&   7.1&    1.4&&   99.9&   99.6&   98.6&&  99.9&99.5&98.5&&    99.9&   99.9&   99.8&&  100&    99.9&   99.4\\
\midrule
\multicolumn{24}{c}{{\it Case D}}\\
120&10.2&       4.8&    1.2&&   14.8&   6.7&    1.6&&   40.9&   29.9&   12.9&&  49&     37.3&   16.8&&  56&     45.1&   24.3&&  64.5&   52.3&   31.9\\
240&10.6        &5.7&   0.9&&   14.2&   7.5&    1.8&&   89.6&   85.2&   73.2&&  91.1&   87.2&   76.6&&  94.9&   92.4&   85.3&&  95.7&   93.7&   86.9\\
480&11.6&       6.2&    0.9&&   13.5&   7.4&    1       &&99.4& 99.1&   98&     &99.5&99.3&98&&   100&    99.8&   98.9&&99.8  &99.8&99.1\\

\midrule
\bottomrule
\multicolumn{24}{l}{\small{Notes. The first two columns refer to the case when $H_0: m=0$ is true ($g$=0 in equation \eqref{power}). The next columns refer }}\\ \multicolumn{24}{l}{\small{  to the case when we test for $H_0: m=0$, but $H_1:m=1$ is true ($g=-0.05, -0.07$ in equation \eqref{power}). Prior to testing    }}\\ \multicolumn{24}{l}{\small{$H_0:m=0$ vs $H_1:m=1$ (for all columns above), we tested sequentially for the presence of maximum two breaks in the }}\\ \multicolumn{24}{l}{\small{RF (we used the  WR/WF bootstrap $\sup$-$Wald$ for OLS to test $H_0:h=\ell$ vs. $H_1:\ell+1$, $\ell=0,1$). If breaks are detected}}\\ \multicolumn{24}{l}{\small{in RF, the number of breaks and the estimated locations are imposed when estimating the SE.}}
\end{tabular}%
\end{center}
\end{table}
\vspace*{2.5in}
\newpage

\begin{table}[h!]
\caption{\label{tabhm=11}{\it Scenario:(h,m)=(1,1)} - rejection probabilities from testing $H_0:\,m=1$ vs. $H_1:\,m=2$ with bootstrap $\sup$-$Wald$ test;  number of breaks in the RF was estimated and imposed in each simulation using a sequential strategy based on the WR/WF $\sup$-$Wald$ for OLS}
\begin{center}
\addtolength{\tabcolsep}{-2pt}
\begin{tabular}{@{}c ccc c ccc c ccc c ccc c ccc c ccc@{}}
\toprule
\midrule
&\multicolumn{3}{c}{WR bootstrap}&&\multicolumn{3}{c}{WF bootstrap}&&\multicolumn{3}{c}{WR bootstrap}&&
\multicolumn{3}{c}{WF bootstrap}&&\multicolumn{3}{c}{WR bootstrap}&&\multicolumn{3}{c}{WF bootstrap}\\
&\multicolumn{3}{c}{Size}&&\multicolumn{3}{c}{Size}&&\multicolumn{3}{c}{Power}&&
\multicolumn{3}{c}{Power}&&\multicolumn{3}{c}{Power}&&\multicolumn{3}{c}{Power}\\
&\multicolumn{3}{c}{$g=0$}&&\multicolumn{3}{c}{$g=0$}&&\multicolumn{3}{c}{$g=0.5$}&&
\multicolumn{3}{c}{$g=0.5$}&&\multicolumn{3}{c}{$g=-0.5$}&&\multicolumn{3}{c}{$g=-0.5$}\\
\cmidrule{2-4}\cmidrule{6-8} \cmidrule{10-12} \cmidrule{14-16} \cmidrule{18-20} \cmidrule{22-24}\\
T&10\%&5\%&1\%&&10\%&5\%&1\%&&10\%&5\%&1\%&&10\%&5\%&1\%&&10\%&5\%&1\%&&10\%&5\%&1\%\\
\cmidrule{2-4}\cmidrule{6-8} \cmidrule{10-12} \cmidrule{14-16} \cmidrule{18-20} \cmidrule{22-24}
\multicolumn{24}{c}{{\it Case A}}\\
\midrule
120&8.8&        4.7&    0.7&&   8.7&    4.5&    0.8&&   52&     40.7&   16&&    57.4&   45.4&   22.7&&  85&     71.9&   32.1&&  88.2&   74.8&   37.5\\
240&10.4&       5.7&    0.7&&   10.4&   5.2&    0.8&&   99.8&   99.4&   97.6&&  99.6&   99.4&   97.4&&  100&    100&    99.8&&  100&    100&    99.7\\
480&9.7 &4.2&   0.7&&   10.2&   4.6&    0.8&&   100&    100&    100&&   100&    99.8&   99.1&&  100&    100&    100&&   100&    100&    100\\

\midrule
\multicolumn{24}{c}{{\it Case B}}\\
\midrule
120&8.9&        3.7&    0.9&&   8.7&    3.4&    0.9&&   50.1&   40.1&   18.6&&  54.8&   45.4&   24.4&&  81.8&   70.9&   38.3&&  85.5&   73&     39.9\\
240&10.8&       4.7&    0.8&&   10.6&   5.3&    0.9&&   98.8&   98.3&   96&&    99.2&   98.7&   95.8&&  99.6&   99.5&   98.1&&  98&     99.6&   98.3\\
480&9.9&        4.1&    0.9&&   10.9&   5.4&    0.9&&   100&    100&    99.8&&  100&    99.8&   99.6&&100  &100&100&&100&100&100\\

\midrule
\multicolumn{24}{c}{{\it Case C}}\\

\midrule
120&9.1&        3.5&    1       &&9.4&  4       &0.4&&  30.7&   17.5&   3.1&&   38.3&   25.9&   8&&     45.1&   25.4&   6.9&&   49.7&   31.7&   8.8\\
240&10.3&       5.2&    1&&     10.2&   5&      1&&     98.6&   96.8&   86.2&&  99&     97.7&   88.4&&  99.3&   98.5&   86.9&&  100&    99.8&   90.8\\
480&11.3&       4.8&    1       &&12.1& 5.3&    0.6     &&100&  100&    99.9&&  100&    100&    99.2&& 99.9 &99.9&99.7&&100&100&99.8\\

\midrule
\multicolumn{24}{c}{{\it Case D}}\\
\midrule
120&10.1&       4.4&    1.6&&   8.5&    3.8&    0.6&&   36.8&   23.4&   6.3&&   42.1&   30.4&   12.4&&  69.3&   52.6&   16.4&&  76.8&   59.4&   25.4\\
240&10.9&       4.9&    0.8&&   11.8&   5.2&    0.8&&   99.2&   98.6&   94&&    99.6&   98.9&   94.5&&  99.5&   99.4&   98&&    99.9&   99.9&   98.6\\
480&10.2&       5.3&    1.4&&   11&     5.6&    1.2&&   100&    100&    100&&   100&    100&    98.1  &&100&100&100&&100&100&100\\
\midrule
\bottomrule
\multicolumn{24}{l}{\small{Notes. The first two columns refer to the case when $H_0: m=1$ is true ($g$=0 in equation \eqref{power}). The next columns refer }}\\ \multicolumn{24}{l}{\small{  to the case when we test for $H_0: m=1$, but $H_1:m=2$ is true ($g= -0.5, 0.5$ in equation \eqref{power}). Prior to testing    }}\\ \multicolumn{24}{l}{\small{$H_0:m=1$ vs $H_1:m=2$ (for all columns above), we tested sequentially for the presence of maximum two breaks in the }}\\ \multicolumn{24}{l}{\small{RF (we used the  WR/WF bootstrap $\sup$-$Wald$ for OLS to test $H_0:h=\ell$ vs. $H_1:\ell+1$, $\ell=0,1$). If breaks are detected}}\\ \multicolumn{24}{l}{\small{in RF, the number of breaks and the estimated locations are imposed when estimating the SE.}}
\end{tabular}%
\end{center}
\end{table}

\vspace*{2.5in}
\newpage

\section{Appendix: Proof of Theorems}
For the purposes of our analysis, it is convenient to write the system in (\ref{ah1}) as a VAR($1$) model.\footnote{For example, see Using \citet{Hamilton:1994}[p.259].} To this end, define:
\begin{eqnarray*}
\underbrace{\begin{array}{c}\vxi_t\end{array}}_{np\times 1}&\equiv& \left[\,\begin{array}{c} \tilde{\vz}_t\\\tilde{\vz}_{t-1}\\\vdots\\\tilde{\vz}_{t-p+1}\end{array}\,\right],\qquad
\underbrace{\begin{array}{c}\mF_s\end{array}}_{np\times np}\;\equiv\; \left[\,\begin{array}{cccccc} \mC_{1,s}& \mC_{2,s}& \mC_{3,s}&\hdots& \mC_{p -1,s}& \mC_{p,s}\\ \mI_n & \vzeros_{n\times n} & \vzeros_{n\times n} &\hdots & \vzeros_{n\times n} & \vzeros_{n\times n}\\ \vzeros_{n\times n} & \mI_{n} & \vzeros_{n\times n} & \hdots & \vzeros_{n\times n} & \vzeros_{n\times n}\\ \vdots & \vdots & \vdots & \hdots & \vdots & \vdots\\ \vzeros_{n\times n} & \vzeros_{n\times n} & \vzeros_{n\times n} & \hdots & \mI_{n} & \vzeros_{n\times n}\end{array} \,\right],\\
\underbrace{\begin{array}{c}\veta_t\end{array}}_{np\times 1}&\equiv& \left[\,\begin{array}{c} \ve_t\\\vzeros_{n}\\\vdots\\\vzeros_{n}\end{array}\,\right],\,\qquad \mbox{ and }\qquad\underbrace{\begin{array}{c}\vmu_s\end{array}}_{np\times 1}\;\equiv\; \left[\,\begin{array}{c} \vc_{\tilde{\vz},s}\\\vzeros_{n}\\\vdots\\\vzeros_{n}\end{array}\,\right].
\end{eqnarray*}
Then equation (\ref{ah1}) is the first $n$ entries of:
\begin{equation}
\label{xivar}
\vxi_t\;=\;\vmu_s\,+\,\mF_s\vxi_{t-1}\,+\,\veta_t,
\end{equation}
where we have suppressed the dependence of $\vxi_t$ and $\veta_t$ on $s$ for notational convenience.

From Assumption \ref{a8} it follows that $\veta_t$ is a vector m.d.s. relative to $\mathcal{F}_{t-1}$ with conditional covariance matrix
\begin{eqnarray}
\E(\veta_t\veta_j'\mid\mathcal{F}_{t-1})=\left\{\begin{array}{c}\mOmega_{t\mid t-1},\text{ for }t=j,\\\vzeros_{np\times np}\text{ otherwise,}\end{array}\right.
\end{eqnarray}
\begin{eqnarray*}
\underbrace{\begin{array}{c}\mOmega_{t\mid t-1}\end{array}}_{np\times np}&\equiv& \left[\,\begin{array}{cc}\mA_s^{-1}\overline\mSigma_{t\mid t-1}\mA_s^{-1'}& \vzeros_{n\times n(p-1)}\\
\vzeros_{n\times n(p-1)}'&\vzeros_{n(p-1)\times n(p-1)}\end{array}\,\right],
\end{eqnarray*}
where $\overline\mSigma_{t\mid t-1}=\mS \mD_t\mSigma_{t\mid t-1}\mD_t^\prime\mS^\prime$, and time-varying unconditional covariance matrix $$\underbrace{\begin{array}{c}\mOmega_t\end{array}}_{np\times np}\equiv\E(\veta_t \veta_t^\prime)=\left[\,\begin{array}{cc}\mA_s^{-1}\overline\mSigma_t\mA_s^{-1'}& \vzeros_{n\times n(p-1)}\\
\vzeros_{n\times n(p-1)}^\prime&\vzeros_{n(p-1)\times n(p-1)}\end{array}\,\right] $$
where $\overline\mSigma_t=\mS \mD_t\E(\vl_t\vl_t^\prime)\mD_t^\prime\mS^\prime$.

From (\ref{xivar}), it follows that within each regime we have, for $t=[\tau_{s-1}T]+1,[\tau_{s-1}T]+2,\ldots, [\tau_{s}T]$,
\begin{equation}
\vxi_t\;=\; \mF^{t-[\tau_{s-1}T]}_s \vxi_{[\tau_{s-1}T]} + \tilde\vxi_t+\left(\sum_{l=0}^{t-[\tau_{s-1}T]-1}\mF_s^l\right)\vmu_s,\label{xima}.
\end{equation}
where
 $\tilde\vxi_t=\sum_{l=0}^{t-[\tau_{s-1}T]-1}\mF_s^l\veta_{t-l}$, $\{\veta_t\}$ is a m.d.s. sequence, and, from Assumption \ref{a7}, all the eigenvalues of $\mF_s$ have modulus less than one.

The following lemmas are used in proofs; Lemmas \ref{lem2} and \ref{lem4}-\ref{lem8} are proven in the Supplementary Appendix, which also contains the asymptotic distributions of the sup Wald test statistics. The rest of the lemmas are proven below.

\begin{lem}
\label{lem1}
If $\{\vartheta_t,\mathcal{F}_t\}$ is a mean-zero sequence of  $L^1$-mixingale random variables with constants $\{c_t\}$ that satisfy $\overline{\lim}_{T\to\infty}T^{-1}\sum_{t=1}^Tc_t<\infty$, and $\sup_t \E|\vartheta_t|^{b} <\infty$ for some $b>1$, then $\sup_{s\in(0,1]}|T^{-1}\sum_{t=1}^{[Ts]} \vartheta_t|\stackrel{p}{\to} 0$.
\end{lem}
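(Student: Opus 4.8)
\emph{Proof plan.} The first step is to restate the claim as a maximal law of large numbers. Writing $S_k=\sum_{t=1}^{k}\vartheta_t$ with $S_0=0$, and noting that as $s$ ranges over $(0,1]$ the index $[Ts]$ takes every value in $\{0,1,\dots,T\}$, the assertion is equivalent to
\[
T^{-1}\max_{1\le k\le T}|S_k|\inp 0 .
\]
Let $\{\psi_m\}$, $\psi_m\downarrow 0$, denote the mixingale numbers of $\{\vartheta_t\}$, so that $\E|\E(\vartheta_t\mid\mathcal F_{t-m})|\le c_t\psi_m$ and $\E|\vartheta_t-\E(\vartheta_t\mid\mathcal F_{t+m})|\le c_t\psi_{m+1}$. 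We may assume $1<b\le 2$, since if $b>2$ Lyapunov's inequality lets us replace $b$ by $2$.

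My plan is a truncation of the telescoping mixingale representation, controlling the ``tail'' in $L^1$ and the ``body'' by a martingale maximal inequality. For an integer $m\ge 0$ put $\vartheta_t^{(m)}=\E(\vartheta_t\mid\mathcal F_{t+m})-\E(\vartheta_t\mid\mathcal F_{t-m-1})$ and $R_t^{(m)}=\vartheta_t-\vartheta_t^{(m)}$. The two mixingale bounds give $\E|R_t^{(m)}|\le 2c_t\psi_{m+1}$, so by the triangle inequality and the Ces\`aro boundedness of $\{c_t\}$,
\[
\E\Bigl[T^{-1}\max_{1\le k\le T}\bigl|\textstyle\sum_{t=1}^{k}R_t^{(m)}\bigr|\Bigr]\le T^{-1}\sum_{t=1}^{T}\E|R_t^{(m)}|\le 2\psi_{m+1}\Bigl(T^{-1}\sum_{t=1}^{T}c_t\Bigr)\le C\psi_{m+1}
\]
for all large $T$, which can be made arbitrarily small by taking $m$ large.

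For the body, write $h_{t,j}=\E(\vartheta_t\mid\mathcal F_{t-j})-\E(\vartheta_t\mid\mathcal F_{t-j-1})$, so that $\vartheta_t^{(m)}=\sum_{|j|\le m}h_{t,j}$ telescopes into a \emph{finite} sum. For each fixed $j$ the array $\{h_{t,j}\}_t$ is a martingale difference sequence for the shifted filtration $\{\mathcal F_{t-j}\}_t$, because $h_{t,j}$ is $\mathcal F_{t-j}$-measurable with $\E(h_{t,j}\mid\mathcal F_{t-j-1})=0$. Since conditional expectation is an $L^b$-contraction, $\sup_{t,j}\E|h_{t,j}|^b\le 2^b\sup_t\E|\vartheta_t|^b=:K_b<\infty$. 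Doob's $L^b$ maximal inequality together with the von Bahr--Esseen/Burkholder moment inequality for martingale differences ($1<b\le 2$) then yields, for each fixed $j$,
\[
\E\max_{1\le k\le T}\bigl|\textstyle\sum_{t=1}^{k}h_{t,j}\bigr|^b\le\Bigl(\tfrac{b}{b-1}\Bigr)^b\E\bigl|\textstyle\sum_{t=1}^{T}h_{t,j}\bigr|^b\le C_b\sum_{t=1}^{T}\E|h_{t,j}|^b\le C_bK_bT,
\]
so that $T^{-b}\E\max_{k\le T}|\sum_{t\le k}h_{t,j}|^b=O(T^{1-b})\to 0$, and hence $T^{-1}\max_{k\le T}|\sum_{t\le k}h_{t,j}|\inp 0$ by Markov's inequality, using $b>1$. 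Summing over the $2m+1$ indices $|j|\le m$ (fixed and finite) gives $T^{-1}\max_{k\le T}|\sum_{t\le k}\vartheta_t^{(m)}|\inp 0$ for every fixed $m$.

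Combining the pieces finishes the argument: given $\epsilon>0$, I would first choose $m$ so large that the remainder bound $C\psi_{m+1}<\epsilon$, and then let $T\to\infty$ with that $m$ held fixed so that the body is $o_p(1)$. The main obstacle is the correct orchestration of this ``$m$ first, then $T$'' double limit, since the mixingale tail vanishes only as $m\to\infty$ (in $L^1$), whereas the finite-lag martingale part vanishes only as $T\to\infty$ for fixed $m$; getting the two bounds to interlock is the one delicate point. The moment condition $b>1$ enters solely through Doob's inequality and the rate $T^{1-b}\to0$, and is the only place where a moment beyond $L^1$ is used.
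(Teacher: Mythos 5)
Your proof is correct, but it is worth noting that the paper does not actually prove Lemma \ref{lem1}: it simply cites \citet{Andrews:1988}[Theorem 1] together with the uniform modification in Lemma A2 of \citet{Andrews:1993}. Relative to that cited argument, your route is genuinely different in one respect. Andrews's proof truncates the random variables $\vartheta_t$ themselves at a fixed level and relies on uniform integrability (a strictly weaker hypothesis than $\sup_t\E|\vartheta_t|^b<\infty$ for some $b>1$), handling the truncated part with an $L^2$ martingale argument; you instead keep $\vartheta_t$ intact, telescope $\E(\vartheta_t\mid\mathcal F_{t+m})-\E(\vartheta_t\mid\mathcal F_{t-m-1})$ into the finite family of shifted martingale differences $h_{t,j}$, and exploit the $b>1$ moment directly through Doob's $L^b$ maximal inequality and von Bahr--Esseen, which gives the maximal (uniform-in-$s$) version in one stroke rather than as a separate strengthening. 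What your approach buys is a short, self-contained proof of exactly the statement as given, with the maximal inequality built in; what it costs is generality, since it would not work under mere uniform integrability, whereas the Andrews route would. All the individual steps check out: the remainder bound $\E|R_t^{(m)}|\le 2c_t\psi_{m+1}$ uses both one-sided mixingale inequalities correctly, each $\{h_{t,j}\}_t$ is indeed an MDS for the shifted filtration $\{\mathcal F_{t-j}\}_t$ with $\sup_{t,j}\E|h_{t,j}|^b\le 2^b\sup_t\E|\vartheta_t|^b$ by conditional Jensen, the rate $O(T^{1-b})$ is where $b>1$ enters, and the ``choose $m$ first, then send $T\to\infty$'' orchestration is handled correctly (the mean-zero hypothesis is in fact implied by the mixingale property, so nothing is lost there).
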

\noindent
This follows from applying the LLN in \citet{Andrews:1988}[Theorem 1], modified to be a uniform LLN in the proof of Lemma A2 of \citet{Andrews:1993}.

\begin{lem}\label{lem2}
\black{For $s=1,\ldots, N+1$, where $N$ is the total number of} \black{breaks in the coefficients of the VAR($p$) representation of $\tilde \vz_t$, define the following functions:
$
\mF(\tau)= \mF_s, \mA(\tau)= \mA_s,\vmu(\tau)=\vmu_s,\mUpsilon(\tau)= \mUpsilon_s \mbox{ for } \tau_{s-1}<\tau\leq \tau_s.
$. Also, define the function $\overline{\mSigma}(\tau)$ on $\tau \in [0,1]$ as follows $\overline{\mSigma}(0)=0$, and $\overline{\mSigma}(\tau)  = \mSigma_t$ for $\tau \in ((t-1)/T, t/T]$, $t=1,\ldots, T$. Let  $\mathcal{S}$ and $\mathcal{S}_{\vr}$ be the selection matrices such that $\vz_t = \vec(1,\mathcal{S}_{\vr}\vxi_t,\mathcal{S}\vxi_{t-1}) =\vec(1,\vr_t,\mathcal{S}\vxi_{t-1})$, and
$$
\mathbb{Q}_{\vz}(\tau)\;=\;\left[\,\begin{array}{ccc} 1 & \{\mathcal{S}_{\vr}\mathbb{Q}_1(\tau)\}^\prime
&\{\mathcal{S}\mathbb{Q}_1(\tau)\}^\prime\\
\mathcal{S}_{\vr}\mathbb{Q}_1(\tau)& \mathcal{S}_{\vr}\mathbb{Q}_2(\tau) \mathcal{S}_{\vr}^\prime & \mathcal{S}_{\vr}(\vmu(\tau) \mathbb Q_1'(\tau) +\mF(\tau) \mathbb{Q}_2(\tau)) \mathcal{S}^\prime
\\\mathcal{S}\mathbb{Q}_1(\tau) & (\mathcal{S}_{\vr}(\vmu(\tau) \mathbb Q_1'(\tau) +\mF(\tau) \mathbb{Q}_2(\tau)) \mathcal{S}^\prime)^\prime& \mathcal{S}\mathbb{Q}_2(\tau) \mathcal{S}^\prime \end{array}\,\right],
$$
where
$\mathbb{Q}_1(\tau)=\left\{\mI_{np}\,-\,\mF(\tau)\right\}\vmu(\tau)$ and \begin{eqnarray*}
\mathbb{Q}_2(\tau)&=& \sum_{l=0}^{\infty} \mF(\tau)^l \left[\begin{array}{cc} \mA(\tau)^{-1}\overline \mSigma(\tau)\mA(\tau)^{-1^\prime} &\vzeros_{n \times n(p-1)}\\\vzeros_{n \times n(p-1)}' & \vzeros_{n(p-1) \times n(p-1)}\end{array}\,\right] (\mF(\tau)^{l})'+ \, \mathbb{Q}_1(\tau) \mathbb{Q}_1'(\tau).
\end{eqnarray*}
Also, let $\mathbb Q_i = \int_{\lambda_{i-1}}^{\lambda_{i}}\mUpsilon'(\tau)\mathbb{Q}_{\vz}(\tau)\mUpsilon(\tau)\rd\tau$.

Under Assumptions \ref{a1}-\ref{a9},} $$
\hat{\mQ}_{(i)}\;=\; T^{-1} \sum_{t \in I_{i,\vlambda_k}} \hat{\mUpsilon}_t' \vz_t \vz_t^\prime\hat{\mUpsilon}_t\;\stackrel{p}{\to}\; \mathbb Q_i.
$$

\end{lem}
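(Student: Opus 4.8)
The plan is to replace the estimated first-stage object $\hat{\mUpsilon}_t=[\hat{\mDelta}_t,\mPi]$ by its population counterpart and then to prove a uniform law of large numbers for the resulting heterogeneous array. Since $h$ is treated as known and, by Lemma~\ref{lem7}, $\hat{\vpi}_h$ converges to $\vpi_h^0$ fast enough while $\hat{\mDelta}_{(j)}\inp\mDelta_{(j)}^0$, the step function $\hat{\mDelta}_t=\sum_j\hat{\mDelta}_{(j)}1_{t\in\hat{I}_j^*}$ coincides with $\mDelta_t^0$ except on a set of observations of cardinality $\op(T)$ contained in shrinking neighbourhoods of the true reduced-form break dates. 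Using the uniform moment bound implied by Assumption~\ref{a8}(iii), I would then show
$$T^{-1}\sum_{t\in I_{i,\vlambda_k}}\left(\hat{\mUpsilon}_t'\vz_t\vz_t'\hat{\mUpsilon}_t-\mUpsilon_t^{0\prime}\vz_t\vz_t'\mUpsilon_t^0\right)\inp\vzeros,$$
so it suffices to analyse $T^{-1}\sum_{t\in I_{i,\vlambda_k}}\mUpsilon_t^{0\prime}\vz_t\vz_t'\mUpsilon_t^0$. Because $\mUpsilon_t^0=\mUpsilon_s$ is constant within each VAR regime and the candidate block $I_{i,\vlambda_k}$ (that is, $\tau\in(\lambda_{i-1},\lambda_i]$) may straddle several true regimes, the target $\int_{\lambda_{i-1}}^{\lambda_i}\mUpsilon'(\tau)\mathbb{Q}_{\vz}(\tau)\mUpsilon(\tau)\rd\tau$ is exactly the Riemann limit obtained by summing the regime-wise contributions.

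The core is therefore the functional LLN $T^{-1}\sum_{t=1}^{[Ts]}\vz_t\vz_t'\inp\int_0^s\mathbb{Q}_{\vz}(\tau)\rd\tau$, uniformly in $s\in(0,1]$; evaluation at $s=\lambda_{i-1}$ and $s=\lambda_i$ and differencing then gives the claim, after pre- and post-multiplying by $\mUpsilon_s$ regime by regime. Writing $\vz_t=\vec(1,\vr_t,\mathcal{S}\vxi_{t-1})$ and using the VAR($1$) form \eqref{xivar} with the within-regime expansion \eqref{xima}, I would split $\vz_t\vz_t'=\E[\vz_t\vz_t']+(\vz_t\vz_t'-\E[\vz_t\vz_t'])$ and treat the two pieces separately. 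For the mean piece, since $\{\veta_t\}$ is an m.d.s.\ with unconditional covariance $\mOmega_t$ determined by $\overline{\mSigma}_t=\mS\mD_t\E(\vl_t\vl_t')\mD_t'\mS'$, and all eigenvalues of $\mF_s$ lie inside the unit circle by Assumption~\ref{a7}, the initial-condition term $\mF_s^{t-[\tau_{s-1}T]}\vxi_{[\tau_{s-1}T]}$ decays geometrically, so deep inside a regime $\E[\vxi_{t-1}]\to\mathbb{Q}_1(\tau)$ and $\E[\vxi_{t-1}\vxi_{t-1}']\to\mathbb{Q}_2(\tau)$, the boundary transients averaging out in the Cesàro sum. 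The off-diagonal blocks follow from $\vr_t=\mathcal{S}_{\vr}(\vmu_s+\mF_s\vxi_{t-1}+\veta_t)$, whence $\E[\vr_t\vxi_{t-1}']=\mathcal{S}_{\vr}(\vmu_s\E[\vxi_{t-1}']+\mF_s\E[\vxi_{t-1}\vxi_{t-1}'])$, matching the stated form of $\mathbb{Q}_{\vz}(\tau)$. As $\mD_t=\mD(t/T)$ is a bounded cadlag function, $\E[\vz_t\vz_t']$ is asymptotically a function of $t/T$, and the normalized partial sum converges to $\int_0^s\mathbb{Q}_{\vz}(\tau)\rd\tau$.

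For the demeaned piece, I would let $\vartheta_t$ be a generic scalar entry of $\vz_t\vz_t'-\E[\vz_t\vz_t']$ and verify the hypotheses of Lemma~\ref{lem1}: Assumption~\ref{a8}(iii) yields $\sup_t\E|\vartheta_t|^{(4+\delta)/2}<\infty$ with exponent $(4+\delta)/2>1$, while the stability of $\mF_s$ together with the m.d.s.\ structure and the cross-moment controls in Assumption~\ref{a8}(iv)--(v) make $\{\vartheta_t\}$ an $L^1$-mixingale whose constants are bounded on average. Lemma~\ref{lem1} then gives $\sup_s|T^{-1}\sum_{t=1}^{[Ts]}\vartheta_t|\inp 0$, so the demeaned piece is uniformly negligible; combining the two pieces and differencing at the endpoints delivers the result.

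The hardest part will be establishing the $L^1$-mixingale property, with the correct on-average decay of the constants, for the quartic array $\vz_t\vz_t'-\E[\vz_t\vz_t']$ in a setting that is simultaneously regime-switching with deterministic breaks, unconditionally heteroskedastic through $\mD_t=\mD(t/T)$, and only martingale-difference (rather than independent) in the innovations, so that the fourth-moment bookkeeping genuinely requires Assumption~\ref{a8}(iv)--(v). The second delicate point is the accounting that shows the initial-condition carry-over across regime boundaries and the boundary transients contribute only $\op(1)$ to the Cesàro average, uniformly in $s$.
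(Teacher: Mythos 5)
Your proposal is correct and follows essentially the same route as the paper's proof: within each stable regime you expand $\vxi_t$ via the MA($\infty$) representation with geometrically decaying $\mF_s^l$ (Assumption \ref{a7}), send the deterministic and initial-condition pieces to their Ces\`{a}ro limits, dispose of the stochastic deviations with the uniform $L^1$-mixingale LLN of Lemma \ref{lem1}, and finally substitute $\hat{\mUpsilon}_t=\mUpsilon_t^0+\op(1)$ via Lemma \ref{lem7}. The only organizational difference is that the paper decomposes $T^{-1}\sum_t\vz_t\vz_t'$ term by term (into $\mathcal{A}_1,\mathcal{A}_2,\mathcal{B}_1,\mathcal{B}_2,\mathcal{B}_3$ and then into m.d.s., cross-product and conditional-variance components) rather than making your single mean-versus-demeaned split, but the underlying estimates are identical.
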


\begin{lem}\label{lem3}
If $(\va_t, \mathcal F_{t})$ is a $o\times 1$ vector of m.d.s. with $\sup_t \E |a_{t,j}|^{2+\delta^*} < \infty$ for some $\delta^*>0$ and all elements $a_{t,j}$ of the vector $\va_t$,  if $T^{-1} \sum_{t=1}^{[Tr]} [\E(\va_t \va_t'| \mathcal F_{t-1}) - \E(\va_t \va_t')]  \inp 0$ uniformly in $r$ and if $T^{-1} \sum_{t=1}^{[Tr]} \E(\va_t\va_t') \rightarrow r \mI_o$ uniformly in $r$, then $T^{-1/2} \sum_{t=1}^{[Tr]} \va_t \Rightarrow \mB(r)$, a $o\times 1$ vector of independent standard Brownian motions.
\end{lem}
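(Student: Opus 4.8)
\textit{Proof strategy.} The claim is a multivariate functional central limit theorem for a vector martingale difference array, and the two stated conditions are precisely the normalised conditional–variance requirements that pin the limit down as a vector of \emph{independent} standard Brownian motions. The plan is to reduce to the scalar case by the Cram\'er--Wold device, verify the hypotheses of a standard scalar martingale invariance principle for each linear combination, and then reassemble the vector process, exploiting the continuity of the limit.

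First I would fix an arbitrary $\vlambda\in\SR^{o}$ and set $\xi_t=\vlambda'\va_t$, a scalar m.d.s.\ adapted to $\mathcal F_t$. Since $\E(\xi_t^2\mid\mathcal F_{t-1})=\vlambda'\E(\va_t\va_t'\mid\mathcal F_{t-1})\vlambda$, adding and subtracting $\E(\va_t\va_t')$ inside the average and invoking the two hypotheses gives
\[
T^{-1}\sum_{t=1}^{[Tr]}\E(\xi_t^2\mid\mathcal F_{t-1})
=\vlambda'\Big(T^{-1}\sum_{t=1}^{[Tr]}\E(\va_t\va_t'\mid\mathcal F_{t-1})\Big)\vlambda
\;\inp\;r\,\vlambda'\vlambda
\]
uniformly in $r\in[0,1]$, so the quadratic characteristic of the normalised partial sums converges to the deterministic linear function $r\,\vlambda'\vlambda$. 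For the negligibility requirement, the moment bound $\sup_t\E|a_{t,j}|^{2+\delta^*}<\infty$ yields $\sup_t\E|\xi_t|^{2+\delta^*}<\infty$ by Minkowski's inequality, whence the Lyapunov ratio $T^{-(2+\delta^*)/2}\sum_{t=1}^{T}\E|\xi_t|^{2+\delta^*}=O(T^{-\delta^*/2})\to0$; because the limit of the conditional variance is nonrandom, this Lyapunov bound supplies the Lindeberg condition needed. A standard scalar martingale invariance principle (Brown, 1971; McLeish, 1974; Hall and Heyde, 1980) then delivers $T^{-1/2}\sum_{t=1}^{[Tr]}\xi_t\Rightarrow W_{\vlambda}(r)$ in $D[0,1]$, with $W_{\vlambda}$ Gaussian, having independent increments and variance function $r\,\vlambda'\vlambda$.

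It remains to pass from linear combinations to the vector process. Polarising in $\vlambda$ shows the limiting increments have covariance $\mI_o$, so the candidate limit is exactly the vector $\mB$ of independent standard Brownian motions in the statement; finite-dimensional convergence of $T^{-1/2}\sum_{t=1}^{[T\,\cdot\,]}\va_t$ to $\mB$ then follows from the scalar result above applied, via Cram\'er--Wold, to arbitrary linear combinations across finitely many time points. For tightness I would argue that each coordinate process is tight in $D[0,1]$ (from the scalar analysis together with a Doob/maximal inequality controlling the $(2+\delta^*)$-moment of the increments), so the vector process is tight in the product topology, and, because $\mB$ has continuous paths, product-topology convergence upgrades to convergence in the Skorokhod topology on $D[0,1]^{o}$. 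The main obstacle is precisely this assembly step: the scalar FCLT yields functional convergence only of individual linear combinations, so one must combine Cram\'er--Wold finite-dimensional convergence with joint tightness and justify the upgrade from coordinatewise to vector-valued Skorokhod convergence; verifying that the Lyapunov bound legitimately replaces the conditional Lindeberg condition in the degenerate-limit setting is the secondary point that must be handled with care.
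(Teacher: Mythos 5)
Your proposal is correct and matches the paper's approach: the paper's entire proof is the one-line remark that the stated conditions are sufficient for Theorem 3 of Brown (1971), and your argument is exactly the verification of those conditions (conditional variance convergence from the two hypotheses, Lindeberg via the Lyapunov bound from the uniform $(2+\delta^*)$ moments) together with the Cram\'er--Wold assembly of the vector limit that the paper leaves implicit. Nothing in your route departs from what the citation presupposes; you have simply written out the details.
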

Lemma \ref{lem3} provides sufficient conditions so that Theorem 3 in \citet{Brown:1971} is satisfied.

\begin{lem}\label{lem4} Under Assumption \ref{a8},\hfill \\
(i)
$T^{-1} \sum_{t=1}^{[Tr]} \E\, (\vl_t \vl_{t}'\, | \mathcal F_{t-1}) \inp r \mI_n$ uniformly in $r$. \\
(ii) $T^{-1} \sum_{t=1}^{[Tr]} \E\, ( (\vl_t \vl_t') \otimes \vl_{t-i} \, | \mathcal F_{t-1}) \inp  \vrho_i $ uniformly in $r$, for all $i\geq 0$. \\
(iii) $T^{-1} \sum_{t=1}^{[Tr]} \E\, ( (\vl_t \vl_t') \otimes( \vl_{t-i} \vl_{t-j}) | \mathcal F_{t-1}) \inp r \vrho_{i,j}$ uniformly in $r$, for all $i,j\geq 0$.
\end{lem}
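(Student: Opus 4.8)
The plan is to prove all three parts from a single template, since each asserts that a normalized partial sum of an $\mathcal{F}_{t-1}$-measurable conditional-moment array converges in probability, uniformly in $r$, to a deterministic limit. First I would reduce every matrix/vector/Kronecker identity to its scalar entries, so that it suffices to show, for a generic scalar $\mathcal{F}_{t-1}$-measurable summand $a_t$ with constant unconditional mean $\bar a=\E(a_t)$, that $T^{-1}\sum_{t=1}^{[Tr]}a_t\inp r\bar a$ uniformly in $r\in(0,1]$. For part~(i) the summand is $(\mSigma_{t\mid t-1})_{cd}$ with $\bar a=\delta_{cd}$, using $\E(\vl_t\vl_t')=\mI_n$. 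For parts~(ii)--(iii) with $i,j\ge1$ I would first pull the $\mathcal{F}_{t-1}$-measurable lagged factors through the conditional expectation, writing $\E((\vl_t\vl_t')\otimes\vl_{t-i}\mid\mathcal{F}_{t-1})=\mSigma_{t\mid t-1}\otimes\vl_{t-i}$ and $\E((\vl_t\vl_t')\otimes(\vl_{t-i}\vl_{t-j}')\mid\mathcal{F}_{t-1})=\mSigma_{t\mid t-1}\otimes(\vl_{t-i}\vl_{t-j}')$; the lag-zero cases are genuine conditional third and fourth moments. In each case the per-period unconditional mean is the constant furnished by Assumption~\ref{a8}(iv)--(v) ($\vrho_i$, $\vrho_{i,j}$), so the deterministic part $([Tr]/T)\bar a$ converges uniformly to $r\bar a$, the factor $r$ arising precisely from Cesàro-averaging the constant $\bar a$.

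It then remains to show that the centered sum $T^{-1}\sum_{t=1}^{[Tr]}\vartheta_t$, with $\vartheta_t:=a_t-\bar a$, vanishes uniformly in $r$, and I would obtain this from Lemma~\ref{lem1}. The moment hypothesis $\sup_t\E|\vartheta_t|^b<\infty$ for some $b>1$ follows from Assumption~\ref{a8}(iii): applying conditional Jensen to the convex map $x\mapsto|x|^{b}$ bounds each conditional moment by the corresponding conditional moment of $\|\vl_t\|$, and a Hölder split handles the products with the lagged factors; taking $b=1+\delta/4>1$ keeps the total power of the $\vl$'s at or below $4+\delta$, so that $\sup_t\E\|\vl_t\|^{4+\delta}<\infty$ yields a uniform $L^b$ bound. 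Moreover, because every $\vartheta_t$ is $\mathcal{F}_{t-1}$-measurable, the backward mixingale inequality is automatic: $\E|\vartheta_t-\E(\vartheta_t\mid\mathcal{F}_{t+m})|=0$ for all $m\ge0$.

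The remaining hypothesis of Lemma~\ref{lem1}, and the genuinely delicate one, is the forward $L^1$-mixingale inequality $\E|\E(\vartheta_t\mid\mathcal{F}_{t-m})|\le c_t\psi_m$ with $\psi_m\to0$ and $\{c_t\}$ of bounded Cesàro average; this amounts to $\E(a_t\mid\mathcal{F}_{t-m})\to\bar a$ in $L^1$ at a summable rate, i.e.\ to weak dependence of the conditional-moment array $a_t$. I expect this to be the main obstacle. The martingale-difference structure of Assumption~\ref{a8}(ii) is what lets me write $\vl_t\vl_t'\otimes(\,\cdot\,)$ as $\mSigma_{t\mid t-1}\otimes(\,\cdot\,)$ plus a matrix m.d.s.\ whose conditional expectation given $\mathcal{F}_{t-m}$ vanishes for $m\ge1$, reducing the problem to the weak dependence of $\mSigma_{t\mid t-1}$ and of its products with the lagged innovations. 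Establishing the required decay $\psi_m\to0$ with bounded $c_t$ is precisely where the homogeneity and uniform boundedness of the moment arrays in Assumption~\ref{a8}(iv)--(v)---rather than the bare m.d.s.\ property---must be exploited. Once the mixingale property is verified, Lemma~\ref{lem1} gives $\sup_{r\in(0,1]}|T^{-1}\sum_{t=1}^{[Tr]}\vartheta_t|\inp0$ for each scalar entry, and recombining the finitely many entries delivers the three uniform limits.
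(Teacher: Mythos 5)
Your overall architecture is the same as the paper's: reduce each claim to scalar entries, split off the constant unconditional mean so that the Ces\`aro term supplies the factor $r$, bound $\sup_t\E|\vartheta_t|^{b}$ for $b=1+\delta/4$ via conditional Jensen and the $(4+\delta)$-moment bound of Assumption \ref{a8}(iii), and then invoke Lemma \ref{lem1}. All of those steps are fine and match the paper's proof essentially line for line, so the issue is not the route you chose.

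The genuine gap is the one you name and then leave open: the forward $L^1$-mixingale inequality $\E|\E(\vartheta_t\mid\mathcal{F}_{t-m})|\le c_t\psi_m$ with $\psi_m\to 0$, which is the hypothesis of Lemma \ref{lem1} that actually carries the load. You write that this is "precisely where" Assumption \ref{a8}(iv)--(v) "must be exploited," but you do not derive it, and it is not a routine verification: those conditions only fix the \emph{unconditional} moment arrays as constants and do not force $\E(\phi_t\mid\mathcal{F}_{t-m})$ to approach $\E(\phi_t)$ as $m\to\infty$. (For instance, if $\vl_t=\sigma_0\eta_t$ with $\eta_t$ i.i.d.\ standard Gaussian and $\sigma_0$ a bounded, time-invariant random scale with $\E(\sigma_0^2)=1$, every part of Assumption \ref{a8} holds, $\sigma_0^2$ is measurable with respect to the remote past and hence $\mathcal{F}_{t-1}$, and $T^{-1}\sum_{t\le Tr}\E(\vl_t\vl_t'\mid\mathcal{F}_{t-1})\to r\sigma_0^2\mI_n$, not $r\mI_n$.) So flagging the obstacle is not the same as overcoming it, and as written your proposal is not a proof. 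For comparison, the paper disposes of this step by asserting that the centered array $\E(\vl_t\vl_t'\mid\mathcal{F}_{t-1})-\mI_n$ "is a m.d.s., so it is a $L^1$-mixingale with constants satisfying Lemma \ref{lem1}"; that claim is not correct as stated, since this quantity is $\mathcal{F}_{t-1}$-measurable and its conditional expectation given $\mathcal{F}_{t-m}$ need not decay. Your instinct about where the real content lies is therefore sound --- indeed it points at a weakness in the paper's own argument --- but your proposal stops exactly where the proof has to begin.
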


For the following lemmas and the rest of the proofs, we need additional notation. Define $\tilde{\mathcal S}_{1}= [ \mI_{p_1+1}\,\,\,\, \,  \,\vzeros_{(p_1+1)\times p_2}] $ and $\tilde{\mathcal S}_2 = [\vzeros_{p_2 \times (p_1+1)} \,\,\,\,\,\,\mI_{p_2}]$. Also, define the following vectors of Brownian motions: $\mB_0(r)$, a $n\times 1$ vector with variance $r\mI_n$, $\mB_{l}(r)$, a $n^2\times 1$ vector with variance $r\vrho_{l,l}$ for each $l\geq 1$, $\mB_{\vzeta}(r) = \vec(\mB_{u\vzeta}(r),\mB_{\vv \vzeta}(r))$ with $\mB_{u\vzeta}(r)$ of dimension $p_2 \times 1$ and $\mB_{\vv \vzeta}(r)$ of dimension $p_1 p_2 \times 1$, where the variance of $\mB_{\vzeta}(r)$ is $r (\tilde{\mathcal S}_1 \otimes \tilde{\mathcal S}_2)\, \vrho_{0,0} (\tilde{\mathcal S}_1 \otimes \tilde{\mathcal S}_2)'=r\vrho_{\vxi,0,0}=r\left[\begin{array}{cc}\vrho_{u,\vxi,0,0}&\vrho_{u,\vv,\vxi,0,0}\\\vrho_{u,\vv,\vxi,0,0}'&\vrho_{\vv,\vxi,0,0} \end{array}\right]$, where $\vrho_{u,\vxi,0,0}$ is of dimension $p_2\times p_2$.
The covariances of these processes are:
$\cov(\mB_l(r_1), \mB_\kappa(r_2)) = \min(r_1,r_2)\,\vrho_{l,\kappa} \mbox{ for all } l,\kappa \geq 1, l\neq \kappa$, and $\cov(\mB_{\vzeta}(r_1),\mB_l(r_2))  = \min(r_1,r_2)(\tilde{\mathcal S}_1 \otimes \tilde{\mathcal S}_2)\,\vrho_{0,l} \mbox{ for all } l\geq 1$ and $\cov(\mB_{\vzeta}(r_1),\mB_0(r_2))=\min(r_1,r_2)(\tilde{\mathcal S}_1 \otimes \tilde{\mathcal S}_2)\,\vrho_{0}\,\mathcal{\tilde S}_1'=\min(r_1,r_2)\vrho_{\vxi,0}=\min(r_1,r_2)\vec(\vrho_{u,0},\vrho_{\vv,0})$, where $\vrho_{0,l}$ and $\vrho_{0}$ are given in Assumption \ref{a8}(v) and (iv) respectively, and $\vrho_{u,0}$ is of dimension $p_2 \times n$.

\begin{lem}\label{lem5}

For fixed $n^*$, under Assumption \ref{a8},\hfill
$$T^{-1/2} \sum_{t=1}^{[Tr]} \vec(\vl_t, \vl_t \otimes\vl_{t-1}, \ldots, \vl_t \otimes\vl_{t-n^*},l_{u,t}\vl_{\vzeta,t},\vl_{\vv,t} \otimes \vl_{\vzeta,t}) \Rightarrow \vec(\mB_{0}(r),\mB_{1}(r), \ldots \mB_{n^*}(r),\mB_{\vzeta}(r)),$$
where if $t-l<0$, the rest of the elements of this sum are artificially set to zero.
\end{lem}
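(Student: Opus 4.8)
The plan is to obtain the result as an instance of the martingale functional central limit theorem recorded in Lemma \ref{lem3}, applied to the single stacked process
$$
\va_t \;=\; \vec\big(\vl_t,\; \vl_t\otimes\vl_{t-1},\; \ldots,\; \vl_t\otimes\vl_{t-n^*},\; l_{u,t}\vl_{\vzeta,t},\; \vl_{\vv,t}\otimes\vl_{\vzeta,t}\big),
$$
an $o\times 1$ vector with $o=n+n^2 n^*+p_2+p_1 p_2$, so that the left-hand side of the claim is exactly $T^{-1/2}\sum_{t=1}^{[Tr]}\va_t$ (the initial terms in which a lag $t-l$ is negative being set to zero as in the statement, and contributing $\op(1)$ after the $T^{-1/2}$ scaling). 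The first step is to check that $\{\va_t,\mathcal{F}_t\}$ is a vector m.d.s. For the level block this is Assumption \ref{a8}(ii), and for each cross block with $i\geq 1$ we have $\E(\vl_t\otimes\vl_{t-i}\mid\mathcal{F}_{t-1})=\E(\vl_t\mid\mathcal{F}_{t-1})\otimes\vl_{t-i}=\vzeros$ since $\vl_{t-i}$ is $\mathcal{F}_{t-1}$-measurable. The contemporaneous products $l_{u,t}\vl_{\vzeta,t}$ and $\vl_{\vv,t}\otimes\vl_{\vzeta,t}$ are the delicate terms: here I would invoke the block-diagonal structure of $\mSigma_{t\mid t-1}$ in Assumption \ref{a8}(ii), which forces $\E(\vn_t\vl_{\vzeta,t}'\mid\mathcal{F}_{t-1})=\vzeros$ and hence gives both products zero conditional mean. (This is exactly the role of the block-diagonal conditional covariance.)

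Next I would verify the moment and variance hypotheses of Lemma \ref{lem3}. Each coordinate of $\va_t$ is a product of at most two entries of the sequence $\{\vl_s\}$, so Cauchy--Schwarz bounds $\sup_t\E|a_{t,j}|^{2+\delta^*}$ by $\sup_t\E\|\vl_t\|^{4+\delta}<\infty$ from Assumption \ref{a8}(iii), on taking $\delta^*=\delta/2$. Let $\mSigma_{\va}$ denote the symmetric $o\times o$ block matrix assembled from the variances and covariances listed immediately before the statement. Applying Lemma \ref{lem4}(i)--(iii) block by block --- Lemma \ref{lem4}(i) for the level block, (iii) with indices $(l,\kappa)$ (and $(0,0)$, with $\tilde{\mathcal S}_1,\tilde{\mathcal S}_2$ extracting the $\vzeta$-subblocks) for the fourth-moment blocks, and (ii) for the level-against-cross blocks --- gives $T^{-1}\sum_{t=1}^{[Tr]}\E(\va_t\va_t'\mid\mathcal{F}_{t-1})\inp r\mSigma_{\va}$ uniformly in $r$. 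For the unconditional analogue, the law of iterated expectations together with the constancy in $t$ of the moments $\vrho_i$ and $\vrho_{i,j}$ in Assumption \ref{a8}(iv)--(v) yields $T^{-1}\sum_{t=1}^{[Tr]}\E(\va_t\va_t')=T^{-1}[Tr]\,\mSigma_{\va}\to r\mSigma_{\va}$ uniformly. Subtracting, the centred condition $T^{-1}\sum_{t=1}^{[Tr]}\big[\E(\va_t\va_t'\mid\mathcal{F}_{t-1})-\E(\va_t\va_t')\big]\inp\vzeros$ holds uniformly in $r$.

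It remains to pass from the form of Lemma \ref{lem3}, whose limit is a vector of \emph{independent standard} Brownian motions with variance $r\mI_o$, to the required limit with variance $r\mSigma_{\va}$. When $\mSigma_{\va}$ is positive definite I would standardise by setting $\va_t^*=\mSigma_{\va}^{-1/2}\va_t$; this is again an m.d.s. with bounded $(2+\delta^*)$ moments and now satisfies $T^{-1}\sum_{t=1}^{[Tr]}\E(\va_t^*\va_t^{*\prime})\to r\mI_o$ together with the centred condition, so Lemma \ref{lem3} gives $T^{-1/2}\sum_{t=1}^{[Tr]}\va_t^*\Rightarrow\mW(r)$ for $\mW$ a vector of independent standard Brownian motions. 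Premultiplying by the continuous map $\mSigma_{\va}^{1/2}$ and using the continuous mapping theorem yields $T^{-1/2}\sum_{t=1}^{[Tr]}\va_t\Rightarrow\mSigma_{\va}^{1/2}\mW(r)$, a Gaussian process with variance $r\mSigma_{\va}$ and precisely the stated cross-covariances; this process is $\vec(\mB_0(r),\ldots,\mB_{n^*}(r),\mB_{\vzeta}(r))$.

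The step I expect to be the main obstacle is that $\mSigma_{\va}$ need not be positive definite --- the fourth-moment blocks built from $\vrho_{i,j}$ can be linearly dependent, so $\mSigma_{\va}^{-1/2}$ may fail to exist. The clean way around this is the Cramér--Wold device: for an arbitrary constant vector $\vb$, apply Lemma \ref{lem3} with $o=1$ to the normalised scalar m.d.s. $\vb'\va_t/(\vb'\mSigma_{\va}\vb)^{1/2}$ whenever $\vb'\mSigma_{\va}\vb>0$, obtaining a scalar Brownian motion of variance $r\,\vb'\mSigma_{\va}\vb$, while directions with $\vb'\mSigma_{\va}\vb=0$ give partial-sum processes of vanishing variance, hence negligible uniformly in $r$. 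This delivers convergence of all finite-dimensional distributions, and tightness of the vector process follows from coordinatewise tightness, each coordinate being a scalar m.d.s. covered by Lemma \ref{lem3}. The only other point needing care is the bookkeeping that matches the $\vzeta$-blocks through $\tilde{\mathcal S}_1,\tilde{\mathcal S}_2$ and confirms that the conditional fourth-moment limits of Lemma \ref{lem4} align with the covariance list preceding the statement.
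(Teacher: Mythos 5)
Your argument is correct, and it is in substance the same argument the paper relies on, but the paper does not write it out: its entire proof of Lemma \ref{lem5} is the observation that Assumption \ref{a8}(i) and Lemma \ref{lem4} verify Assumption 2 of \citet{Boswijketal:2016}, after which it delegates everything to the proof of their Lemma 2. What you have done is reconstruct that delegated argument self-containedly from the paper's own Lemma \ref{lem3} (the Brown-type martingale FCLT) and Lemma \ref{lem4}. The places where your write-up adds genuine value over the citation are exactly the two points you flag as delicate: (a) the m.d.s.\ property of the contemporaneous blocks $l_{u,t}\vl_{\vzeta,t}$ and $\vl_{\vv,t}\otimes\vl_{\vzeta,t}$, which indeed rests on the block-diagonality of $\mSigma_{t\mid t-1}$ in Assumption \ref{a8}(ii) (the lagged blocks being trivially m.d.s.\ by measurability of $\vl_{t-i}$); and (b) the mismatch between Lemma \ref{lem3}, whose limit has variance $r\mI_o$, and the target limit with variance $r\mSigma_{\va}$, which can be singular because the blocks built from $\vrho_{i,j}$ need not be linearly independent. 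Your Cram\'er--Wold plus coordinatewise-tightness workaround for (b) is standard and valid, and your moment bookkeeping (Cauchy--Schwarz reducing $(2+\delta^*)$-moments of pairwise products to the $(4+\delta)$-moment bound of Assumption \ref{a8}(iii), Lemma \ref{lem4} for the conditional second-moment averages, and the $t$-invariance of $\vrho_i,\vrho_{i,j}$ for the unconditional ones) is exactly what is needed. In short: same mathematical route, but yours is the explicit version; the trade-off is length against self-containedness, and your version makes visible which assumptions are doing the work.
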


Now define for $b=1,2$ and any $n^b \times 1$ vectors $\va$,  $\va_{\#}=\vec(\va,\vzeros_{n^b(p^b-1)})$, and for any $n^b\times n^b$ matrices $\mA$, let $\mA_{\#} = \diag(\mA,\vzeros_{n^b(p^b-1)\times n^b(p^b-1)})$, except for $\vbeta_{\vx,s,\#}$, which is $\vbeta_{\vx,s,\#}=\vec(0, \vbeta_{\vx,(s)}^0, \vzeros_{p_2+n(p-1)})$  and the subscript $s$ indicates the value of $\vbeta_{\vx,(i)}^0$ in the stable regime $\tilde I_s=[[\tau_{s-1}T]+1,[\tau_s T]]$. If $m=0$, then $\vbeta_{\vx,(s)}^0=\vbeta_{\vx}^0$, and $\vbeta_{\vx,\#}=\vec(0,\vbeta_{\vx}^0,\vzeros_{p_2+n(p-1)})$. Let $\mathcal{S}_u=\vec(1,\vzeros_{n-1},\vzeros_{n(p-1)})$  and $\mathcal{S}_{\dag}=\mathcal{S}_u$ or $\mathcal{S}_{\dag}=\vbeta_{\vx,s, \#}$, where the value  $\mathcal{S}_{\dag}$ takes is clarified in each context where the distinction between the two values is necessary. Let $\mS$, defined in Assumption \ref{a8}, and  $\mD(\tau)$, the function such that $\mD(\tau)=\mD_t$ for $\tau \in [\frac{t}{T},\frac{t+1}{T})$, be partitioned as follows:
\begin{eqnarray}\label{decompsd}
\mS&=&\left[ \begin{array}{ccc}1 &\vzeros_{1\times p_1}&\vzeros_{1\times p_2}\\
\vs_{p_1}&\mS_{p_1}&\vzeros_{p_1\times p_2}\\
\vzeros_{p_2\times 1}&\vzeros_{p_2\times p_1}&\mS_{p_2} \end{array}\right],\,\,\, \mD (\tau)=\left[\begin{array}{ccc}d_{u}(\tau)&\vzeros_{1\times p_1}&\vzeros_{1\times p_2} \\
\vzeros_{p_1 \times 1}&\mD_{\vv}(\tau)&\vzeros_{p_1\times p_2}\\
\vzeros_{p_2 \times 1}&\vzeros_{p_2\times p_1}&\mD_{\vzeta}(\tau)\end{array} \right],
\end{eqnarray} where $\vs_{p_1}$ is of dimension $p_1 \times 1$, $\mS_1$ and $\mD_{\vv}(\tau)$ are of dimension $p_1 \times p_1$, and $\mS_{p_2}$ and $\mD_{\vzeta}(\tau)$ are of dimension $p_2 \times p_2$. For any interval $[[\tau_{s-1}T]+1,[\tau_s T]]$ where the coefficients of the VAR representation in \eqref{ah1} are stable, let:
\begin{align*}
\mathbb{M}_{1}(\tau_{s-1},\tau_s) &= (\mathcal{S}_{\dag}'\mS_{\#})\left(\int_{\tau_{s-1}}^{\tau_s}\mD_{\#}(\tau)\rd \mB_{0,\#}(\tau)\right)  \\
\mathbb{M}_{2,1}(\tau_{s-1},\tau_s)&=  \sum_{l=0}^{\infty}((\mathcal{S}_{\dag}'\mS_{\#})\otimes (\mathcal{S}_{\vr}\mF_s^l)) \, \left(\left[\int_{\tau_{s-1}}^{\tau_s}  \mD_{\#}(\tau)\rd\mB_{0,\#}(\tau)\right] \otimes \vmu_s\right)\\
\mathbb{M}_{2,2}(\tau_{s-1},\tau_s) &= \sum_{l=0}^{\infty}  ((\mathcal{S}_{\dag}'\mS_{\#})\otimes (\mathcal{S}_{\vr}\mF_s^{l+1}\mA_{s,\#}^{-1}\mS_{\#}))\int_{\tau_{s-1}}^{\tau_s}  \left(\mD_{\#}(\tau)\otimes\mD_{\#}\left(\tau\right)\right)\rd\mB_{l+1,\#}(\tau) \\
\mathbb{M}_{2,3}^{(1)}(\tau_{s-1},\tau_s) &=\mS_{p_2} \int_{\tau_{s-1}}^{\tau_s} (d_{u}(\tau) \mD_{\vzeta}(\tau))\rd\mB_{u\vzeta}(\tau) \\
 \mathbb{M}_{2,3}^{(2)}(\tau_{s-1},\tau_s)& =((\vbeta_{\vx,(s)}^{0'}\vs_{p_1}) \otimes \mS_{p_2})\int_{\tau_{s-1}}^{\tau_s} (d_{u}(\tau)\otimes\vd_{\vzeta}(\tau))\rd\mB_{u\vzeta}(\tau)+((\vbeta_{\vx,(s)}^{0'}\mS_{p_1}) \otimes \mS_{p_2})\int_{\tau_{s-1}}^{\tau_s} (\mD_{\vv}(\tau)\otimes\mD_{\vzeta}(\tau))\rd\mB_{\vv\vzeta}(\tau)\\
\mathbb M_2 (\tau_{s-1},\tau_s) &= \mathbb{M}_{2,1}(\tau_{s-1},\tau_s) + \mathbb{M}_{2,2}(\tau_{s-1},\tau_s)+ \mathbb{M}_{2,3}^{(j)}(\tau_{s-1},\tau_s), \mbox{ where } j=1 \mbox{ if } \mathcal{S}_{\dag}=\mathcal{S}_u \mbox{ and } j=2 \mbox{ otherwise} \\
\mathbb{M}_3(\tau_{s-1},\tau_s) & = \sum_{l=0}^{\infty}  \left((\mathcal{S}_{\dag}'\mS_{\#}) \otimes (\mathcal{S}\mF_s^l)\right) \, \left(\left[\int_{\tau_{s-1}}^{\tau_s}  \mD_{\#}(\tau)\rd\mB_{0,\#}(\tau)\right] \otimes \vmu_s\right)\\
&+  \sum_{l=0}^{\infty}  ((\mathcal{S}_{\dag}'\mS_{\#})\otimes (\mathcal{S}\mF_s^l\mA_{s,\#}^{-1}\mS_{\#}))\int_{\tau_{s-1}}^{\tau_s}  \left(\mD_{\#}(\tau)\otimes\mD_{\#}\left(\tau\right)\right)\rd\mB_{l+1,\#}(\tau)\\
 \mathbb M(\tau_{s-1},\tau_s) & = \vec( \mathbb{M}_{1}(\tau_{s-1},\tau_s),\mathbb{M}_{2}(\tau_{s-1},\tau_s),\mathbb{M}_{3}(\tau_{s-1},\tau_s)),
\end{align*}
where $\mathcal S_{\vr}$ was defined in Lemma \ref{lem2}.
\begin{lem}\label{lem6}
Let the interval $I_i$ contain $N_i$ breaks from the total set of $N$ breaks. Then, under Assumptions \ref{a1}-\ref{a8}, \begin{align*}
T^{-1/2} \sum_{t \in I_i} \vz_t u_t \Rightarrow \widetilde{ \mathbb M}_i = \begin{cases} \mathbb M(\lambda_{i-1}, \tau_{s}) + \sum_{j=1}^{N_i}\mathbb M(\tau_{s+j-1}, \tau_{s+j}) + \mathbb M(\tau_{s+N_i},\lambda_i) & \mbox{ if } N_i\geq 2\\
 \mathbb M(\lambda_{i-1}, \tau_s) +  \mathbb M(\tau_s, \lambda_i) & \mbox{ if } N_i =1 \\
 \mathbb M(\lambda_{i-1}, \lambda_i) & \mbox{ if } N_i=0. \end{cases},
\end{align*}
with  $\mathcal{S}_{\dag}=\mathcal{S}_u$. Similarly, $ T^{-1/2} \sum_{t \in I_i} \vz_t \vv_t'\vbeta_{\vx,(i)}^ 0 \Rightarrow \widetilde{\mathbb{M}}_i$ but with $\mathcal{S}_{\dag}=\vbeta_{\vx,i,\#} = \vec(0, \vbeta_{\vx,(i)}^0,\vzeros_{p_2+n(p-1)})$. If $m=0$, then
$\mathcal{S}_{\dag}= \vbeta_{\vx,\#}$.

\end{lem}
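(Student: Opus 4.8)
The plan is to exploit a common structure shared by the two claims. Both $u_t$ and $\vv_t'\vbeta_{\vx,(i)}^0$ can be written as a single scalar linear combination of the current innovation vector, namely $\psi_t=\mathcal S_{\dag}'\mS_{\#}\mD_{\#}(t/T)\vl_{t,\#}$, with $\mathcal S_{\dag}=\mathcal S_u$ in the first case and $\mathcal S_{\dag}=\vbeta_{\vx,i,\#}$ in the second (using $\vepsi_{t,\#}=\mS_{\#}\mD_{\#}\vl_{t,\#}$ and the lower-triangular block form \eqref{decompsd}). Since $\vz_t=\vec(1,\vr_t,\mathcal S\vxi_{t-1})$ with $\vr_t=\mathcal S_{\vr}\vxi_t$, the target sum $T^{-1/2}\sum_{t\in I_i}\vz_t\psi_t$ splits into three blocks, and I would show each converges to the corresponding component $\mathbb M_1,\mathbb M_2,\mathbb M_3$ of $\mathbb M(\cdot)$. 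The first block is immediate: $T^{-1/2}\sum\psi_t=\mathcal S_{\dag}'\mS_{\#}\,T^{-1/2}\sum\mD_{\#}(t/T)\vl_{t,\#}$ converges to $\mathbb M_1=\mathcal S_{\dag}'\mS_{\#}\int\mD_{\#}\,\rd\mB_{0,\#}$ by the $\mB_0$ coordinate of Lemma \ref{lem5}, with the deterministic cadlag weights $\mD_{\#}$ turned into a stochastic integral by a standard step-function approximation argument (the same device is needed throughout).

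For the second and third blocks I would substitute the within-regime moving-average representation \eqref{xima}, $\vxi_t=\mF_s^{t-[\tau_{s-1}T]}\vxi_{[\tau_{s-1}T]}+\tilde\vxi_t+(\sum_{l\ge0}\mF_s^l)\vmu_s$, and its lagged analogue. The initial-condition term is negligible: by Assumption \ref{a7} the powers $\mF_s^j$ decay geometrically, so the geometrically-weighted m.d.s. sum $\sum_t\mF_s^{t-[\tau_{s-1}T]}\psi_t$ has $\Op(1)$ norm by the m.d.s. property and the moment bounds of Assumption \ref{a8}, whence its $T^{-1/2}$-scaling is $\op(1)$. The deterministic drift $(\mI-\mF_s)^{-1}\vmu_s$ simply multiplies $T^{-1/2}\sum\psi_t$, contributing $\mathbb M_1\cdot\mathcal S_{\vr}(\mI-\mF_s)^{-1}\vmu_s$, which is exactly $\mathbb M_{2,1}$ after the rearrangement $(A\otimes B)(c\otimes d)=(Ac)(Bd)$; the analogous piece of the third block (with $\mathcal S$ in place of $\mathcal S_{\vr}$) yields the first line of $\mathbb M_3$.

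The crux is the stochastic term $\mathcal S_{\vr}\tilde\vxi_t\,\psi_t=\sum_{l\ge0}\mathcal S_{\vr}\mF_s^l\veta_{t-l}\,\psi_t$, where $\veta_{t-l}=\mA_{s,\#}^{-1}\mS_{\#}\mD_{\#}\vl_{t-l,\#}$. Expanding produces products $\vl_{t,\#}\otimes\vl_{t-l,\#}$ weighted by $\mD_{\#}\otimes\mD_{\#}$. The contemporaneous term $l=0$, where $\mathcal S_{\vr}\veta_t=\vzeta_t$, generates the cross-moments $l_{u,t}\vl_{\vzeta,t}$ and $\vl_{\vv,t}\otimes\vl_{\vzeta,t}$, and hence, through the $\mB_{u\vzeta}$ and $\mB_{\vv\vzeta}$ coordinates of Lemma \ref{lem5}, exactly the terms $\mathbb M_{2,3}^{(j)}$ with $j$ selected by $\mathcal S_{\dag}$. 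The lagged terms $l\ge1$, reindexed so that $\mF_s^l$ becomes $\mF_s^{l+1}$, give $\mathbb M_{2,2}$ via the $\mB_{l+1}$ coordinates. Because $\vxi_{t-1}$ contains no $\veta_t$, the third block has no contemporaneous analogue, which is precisely why $\mathbb M_3$ carries only two lines.

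I expect the main obstacle to be the joint truncation-and-convergence argument for the infinite sums over the lag $l$, since Lemma \ref{lem5} delivers a joint FCLT only for a fixed finite number $n^*$ of lagged cross-products. The plan is to first show the tail $\sum_{l>n^*}\mathcal S_{\vr}\mF_s^l\veta_{t-l}\,\psi_t$ is uniformly (in $T$) negligible in second moment, using the geometric decay of $\mF_s^l$ together with the uniform bounds $\sup_{i,j}\|\vrho_{i,j}\|<\infty$ from Assumption \ref{a8}(iv)--(v); then apply Lemma \ref{lem5} at fixed $n^*$; and finally let $n^*\to\infty$. Once the within-regime limit $\mathbb M(\tau_{s-1},\tau_s)$ is established, the statement for a general $I_i$ containing $N_i$ breaks follows by decomposing $\sum_{t\in I_i}$ into sums over the maximal sub-intervals on which the VAR coefficients are constant and adding the limits; the additivity is justified by the independence of the Brownian increments over disjoint intervals, and this reproduces exactly the three cases in the definition of $\widetilde{\mathbb M}_i$.
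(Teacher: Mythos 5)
Your proposal follows essentially the same route as the paper's own proof: the same three-block decomposition of $\vz_t$, the same moving-average substitution with negligible initial-condition and $t$-dependent drift terms, the same identification of the contemporaneous $l=0$ term via $\mathcal S_{\vr}\veta_t=\vzeta_t$ yielding $\mathbb M_{2,3}^{(j)}$, the same truncation-at-$n^*$ argument combining Lemma \ref{lem5} with the \citet{Hansen:1992} stochastic-integral convergence, and the same additive splitting over maximal stable sub-intervals for general $I_i$. The approach is correct and no further comparison is needed.
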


\begin{lem}\label{lem7}
Under Assumptions \ref{a1}-\ref{a8}, \\
(i) if $h>0$, then $T(\hat \pi_i -\pi_i^0) = \Op(1), i=1,\ldots, h+1$;\\
(ii) $T^{1/2}(\hat \mDelta_{(i)}-\mDelta_{(i)}^0) = \Op(1)$ for $i=1,\ldots, h+1$;\\
(iii) if $m>0$, $T(\hat \lambda_i-\lambda_i^0)=\Op(1)$, $i=1, \ldots, m+1$.
\end{lem}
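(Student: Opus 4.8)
The plan is to follow the classical multiple-break argument (and its 2SLS extension in HHB): first establish consistency of the rescaled break-fraction estimators, then upgrade this to the $\Op(1)$ rate, and finally deduce the $\sqrt{T}$ rate for the coefficients. I would treat the reduced form, parts (i)--(ii) estimated by OLS, separately from the structural equation, part (iii) estimated by 2SLS, since the latter differs only through the generated regressor $\hat\vw_t$. For (i), I first show $\hat\pi_i\pto\pi_i^0$: writing $\hat\vpi_h$ as the minimizer over admissible partitions of the concentrated sum of squared residuals of the RF \eqref{e2}, the mixingale uniform law of large numbers of Lemma \ref{lem1} applies because the sequences $\vz_t\vz_t'-\E[\vz_t\vz_t']$ and $\vz_t\vv_t'$ form $L^1$-mixingales under the stable VAR representation \eqref{xivar} guaranteed by Assumptions \ref{a7} and \ref{a8}. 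This gives uniform convergence of the normalized objective to a deterministic limit that, by the fixed non-zero break magnitude of Assumption \ref{a4}, is uniquely minimized at $\vpi_h^0$.

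For the rate in (i), I would use the standard device: conditional on consistency, with probability approaching one each estimated break lies in a shrinking neighbourhood of the corresponding true break and each regime straddles exactly one true break, so it suffices to show $P(T|\hat\pi_i-\pi_i^0|>M)\to 0$ as $M\to\infty$, uniformly in $T$. Because $\mDelta_{(j+1)}^0-\mDelta_{(j)}^0$ is fixed, displacing a break by $k$ periods inflates the SSR by a term of order $k$ times the squared break size, whereas the compensating stochastic term is only $\Op(\sqrt{k})$ by a maximal inequality for the partial sums of $\vz_t\vv_t'$, whose weak behaviour is supplied by Lemma \ref{lem5}; balancing the two forces $k=\Op(1)$. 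Given this $T$-rate, the estimated regimes $\hat I_j^*$ differ from the true regimes by $\Op(1)$ observations, so for (ii) the OLS estimator $\hat\mDelta_{(j)}$ admits the usual sandwich form: $T^{-1}\sum_{t\in\hat I_j^*}\vz_t\vz_t'$ converges to the positive-definite regime limit and $T^{-1/2}\sum_{t\in\hat I_j^*}\vz_t\vv_t'=\Op(1)$, the discrepancy from summing over the true regime being negligible since only $\Op(1)$ summands differ. Hence $T^{1/2}(\hat\mDelta_{(j)}-\mDelta_{(j)}^0)=\Op(1)$.

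For (iii) I would repeat the consistency-then-rate template on the second-stage regression \eqref{2sls_step2_mod}, the sole novelty being the generated regressor $\hat\vw_t=\vec(\hat\vx_t,\vz_{1,t})$ with $\hat\vx_t^\prime=\vz_t^\prime\hat\mDelta_t$. Decomposing $\hat\vx_t^\prime-\vz_t^\prime\mDelta_t^0=\vz_t^\prime(\hat\mDelta_t-\mDelta_t^0)$ and invoking (i)--(ii), the generated-regressor error is uniformly $\Op(T^{-1/2})$ except on the $\Op(1)$ observations adjacent to the RF breaks, hence asymptotically negligible relative to the fixed structural break $\vbeta_{(i+1)}^0-\vbeta_{(i)}^0$ of Assumption \ref{a2}. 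The relevant partial-sum process is now that of $\hat\vw_t(u_t+\vv_t^\prime\vbeta_{\vx}^0)$, whose weak limit is furnished by Lemma \ref{lem6}, and the same maximal-inequality balancing as in (i) delivers $T(\hat\lambda_i-\lambda_i^0)=\Op(1)$.

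The hard part will be the rate argument: constructing a H\'ajek--R\'enyi-type maximal inequality for the partial sums of $\vz_t\vv_t'$ and of $\hat\vw_t(u_t+\vv_t^\prime\vbeta_{\vx}^0)$ that remains valid under the general conditional and unconditional heteroskedasticity permitted by Assumption \ref{a8}, and, in (iii), verifying carefully that the first-stage estimation error carried by $\hat\vx_t$ does not degrade the $T$-rate. The most delicate bookkeeping is controlling the $\Op(1)$ observations straddling each true break when passing between the infeasible and feasible (generated-regressor) 2SLS objectives.
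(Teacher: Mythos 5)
Your proposal is architecturally sound, but it takes a genuinely different route from the paper. The paper does not re-derive the consistency-then-rate argument at all: its proof of Lemma \ref{lem7} is a verification-of-assumptions proof that maps the paper's conditions onto those of \citet{Bai/Perron:1998} (Proposition 1, for part (i)) and \citet{Hall/Han/Boldea:2012} (Theorems 1--2 and 8, for part (iii)), checking each of their high-level assumptions against Assumptions \ref{a1}--\ref{a9} together with Lemma \ref{lem2} (uniform convergence of the design matrices) and Lemma \ref{lem6} (weak convergence of the relevant partial sums); part (ii) is then exactly the sandwich argument you describe, obtained from Lemmas \ref{lem2} and \ref{lem6}. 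What the citation route buys is that it never has to construct the H\'ajek--R\'enyi-type maximal inequality you correctly flag as ``the hard part'': the cited results state their conditions at the level of convergence of normalized sums, and Lemmas \ref{lem2} and \ref{lem6} already deliver those under the conditional and unconditional heteroskedasticity of Assumption \ref{a8}. What your route would buy is a self-contained proof, but as written it is only a plan for one --- the maximal inequality for the partial sums of $\vz_t\vv_t'$ and $\hat\vw_t(u_t+\vv_t'\vbeta_{\vx}^0)$ under Assumption \ref{a8} is named but not supplied, and without it the $k$ versus $\sqrt{k}$ balancing step in your rate argument does not close. If you want to complete your version, the shortest repair is the same one the paper uses: observe that the displacement/compensation trade-off is exactly the content of \citet{Bai/Perron:1998} Proposition 1 and HHB's Theorems, and reduce your task to verifying their hypotheses from Lemmas \ref{lem1}, \ref{lem2} and \ref{lem6}. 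Your treatment of the generated-regressor error in part (iii) --- uniformly $\Op(T^{-1/2})$ off an $\Op(1)$ set of observations adjacent to the RF breaks, hence dominated by the fixed break magnitudes of Assumption \ref{a2} --- is consistent with how the paper handles $\hat\vx_t$ elsewhere and is not a point of concern.
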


\begin{lem}\label{lem8}
Under Assumption \ref{a8}, uniformly in $r$,\hfill \\
(i) $T^{-1} \sum_{t=1}^{[Tr]} \left\{\vepsi_t \vepsi_t' -\E(\vepsi_t \vepsi_t')\right\} \inp 0
$, \\
(ii) $T^{-1} \sum_{t=1}^{[Tr]} \left\{(\vepsi_t \vepsi_t') \otimes \vepsi_{t-i} - \E[(\vepsi_t \vepsi_t') \otimes \vepsi_{t-i}]\right\} \inp 0$ for all $i\geq 0,$\\
(iii) $T^{-1} \sum_{t=1}^{[Tr]} \left\{(\vepsi_t \vepsi_t') \otimes (\vepsi_{t-i} \vepsi_{t-j}') - \E[(\vepsi_t \vepsi_t') \otimes (\vepsi_{t-i} \vepsi_{t-j}')]\right\} \inp 0$ for all $i,j\geq 0$\\
(iv) Parts (i)-(iii) hold with $\vl_t,\vl_{t-i},\vl_{t-j}$ replacing $\vepsi_t,\vepsi_{t-i},\vepsi_{t-j}$. \end{lem}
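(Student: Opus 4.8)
\emph{Strategy.} The plan is to prove part (iv) first --- the statement for the standardized innovations $\vl_t$ --- and then deduce parts (i)--(iii) for $\vepsi_t=\mS\mD_t\vl_t$ by observing that, entrywise, the $\vepsi$-products are bounded, deterministically weighted combinations of the corresponding $\vl$-products. The engine for part (iv) is a martingale decomposition of each centered product into a martingale difference term, controlled by the maximal-type uniform law in Lemma~\ref{lem1}, and a term depending only on $\mathcal{F}_{t-1}$-conditional moments, controlled by Lemma~\ref{lem4}.

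\emph{Proof of (iv).} Consider a generic scalar entry of the array in (iii), say $q_t=l_{a,t}l_{b,t}l_{c,t-i}l_{d,t-j}$ (the entries for (i) and (ii) are the analogous double and triple products). Writing $\bar q_t=\E(q_t\mid\mathcal{F}_{t-1})$, I would decompose
\[
T^{-1}\sum_{t=1}^{[Tr]}\{q_t-\E q_t\}=T^{-1}\sum_{t=1}^{[Tr]}\{q_t-\bar q_t\}+T^{-1}\sum_{t=1}^{[Tr]}\{\bar q_t-\E q_t\}.
\]
The increments $q_t-\bar q_t$ form a martingale difference sequence relative to $\{\mathcal{F}_t\}$, hence a trivial $L^1$-mixingale (mixingale numbers vanishing for lag $\ge 1$, constants $c_t=\|q_t-\bar q_t\|_1$ bounded); since the generalized H\"older inequality and Assumption~\ref{a8}(iii) give $\sup_t\E|q_t|^{(4+\delta)/4}<\infty$ with $(4+\delta)/4>1$, Lemma~\ref{lem1} delivers $\sup_r|T^{-1}\sum_{t=1}^{[Tr]}(q_t-\bar q_t)|\inp 0$. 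The second sum is handled by Lemma~\ref{lem4}: for $i,j\ge 1$ one has $\bar q_t=(\mSigma_{t\mid t-1})_{ab}\,l_{c,t-i}l_{d,t-j}$, so by Lemma~\ref{lem4}(iii) the average $T^{-1}\sum_{t=1}^{[Tr]}\bar q_t$ has, uniformly in $r$, the same probability limit as $T^{-1}\sum_{t=1}^{[Tr]}\E q_t$ and the difference vanishes; the boundary cases $i=0$ or $j=0$ are covered by the same part. The arguments for (i) and (ii) are identical, invoking Lemma~\ref{lem4}(i) and (ii) respectively, with the weaker moment needs there ($(4+\delta)/2$ and $(4+\delta)/3$) again supplied by Assumption~\ref{a8}(iii).

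\emph{From (iv) to (i)--(iii).} Because $\mD_t=\mD(t/T)$ is diagonal and nonstochastic, each scalar entry of $\vepsi_t\vepsi_t'-\E(\vepsi_t\vepsi_t')$ is a finite fixed linear combination $\sum_{a,b}s_{\cdot a}s_{\cdot b}\,g_{ab}(t/T)\{l_{a,t}l_{b,t}-(\mI_n)_{ab}\}$ with $g_{ab}(\cdot)=d_a(\cdot)d_b(\cdot)$ bounded (a cadlag function on $[0,1]$ is bounded), and the entries of the arrays in (ii)--(iii) carry analogous products of the $d_i(\cdot)$ evaluated at $t/T$ and at the fixed-lag-shifted points $(t-i)/T$, $(t-j)/T$. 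It thus suffices to show $\sup_r|T^{-1}\sum_{t=1}^{[Tr]}g(t/T)w_t|\inp 0$ whenever $g$ is bounded and cadlag and $S_T(r):=T^{-1}\sum_{t=1}^{[Tr]}w_t\to 0$ uniformly in $r$ --- which is exactly what (iv) supplies for the relevant $w_t$. I would prove this by summation by parts: approximate $g$ uniformly by a step function $g^\eta$ (every cadlag function on $[0,1]$ is a uniform limit of step functions); on each constant piece the partial sums of $w_t$ telescope into increments of $S_T$ that vanish uniformly, while the remainder is bounded by $\eta\,T^{-1}\sum_{t=1}^{T}|w_t|=\eta\,\Op(1)$, so letting $\eta\downarrow 0$ gives the result. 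The fixed lags shift the arguments by $O(1/T)$ and are asymptotically negligible.

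\emph{Main obstacle.} The substantive difficulty is this last transfer: since $\mD_t$ is merely cadlag --- neither continuous nor necessarily of bounded variation --- it cannot be factored out of the time average and naive summation by parts fails. The resolution is the uniform step-function approximation, which lets the uniform convergence established in (iv) carry the argument. A secondary point to flag is the moment accounting: the quadruple products in (iii) are integrable to a power strictly above one precisely because Assumption~\ref{a8}(iii) supplies $4+\delta$ finite moments, so this is where the strength of that assumption is actually used.
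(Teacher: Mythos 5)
Your proof is correct and rests on exactly the same engine as the paper's: decompose each centered scalar product into a martingale-difference part handled by the uniform mixingale LLN of Lemma~\ref{lem1} (with the moment bound $\sup_t\E|\phi_t|^{1+\delta/4}<\infty$ from H\"older and Assumption~\ref{a8}(iii)) plus a conditional-mean deviation handled by Lemma~\ref{lem4}. The only structural difference is organizational: the paper applies this generic element-wise argument directly to entries of the $\vepsi$-product arrays (so parts (i)--(iii) and (iv) are literally the same proof with a different choice of $\phi_t$), whereas you prove (iv) first and then transfer to (i)--(iii) by writing the $\vepsi$-products as deterministically weighted combinations of $\vl$-products and invoking a step-function approximation of the cadlag weights $d_a(\cdot)d_b(\cdot)$. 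That transfer lemma is sound (uniform approximability of cadlag functions by step functions, telescoping on the constant pieces, and $T^{-1}\sum_t|w_t|=\Op(1)$ for the remainder), and it correctly flags the obstacle that $\mD_t$ cannot simply be factored out of the time average; but it is extra machinery the paper avoids, since boundedness of $\mS$ and $\mD_t$ already lets the direct two-term decomposition go through on the $\vepsi$-elements themselves.
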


\begin{lem}\label{lem9}
Let $\hat \mQ^b_{(i)} = T^{-1}  \sum_{t\in I_i} \hat{\mUpsilon}_t' \vz_t^b \vz_t^{b'}\hat{\mUpsilon}_t$. Then, under Assumptions \ref{a1}-\ref{a8},
$
\hat{\mQ}^b_{(i)}=\mathbb Q_i + o_p^b(1),
$
where
\begin{equation}\label{defq}
\mathbb Q_i = \int_{\lambda_{i-1}}^{\lambda_i} \mUpsilon(\tau)' \mathbb Q_{\vz}(\tau)\mUpsilon(\tau) d\tau.
\end{equation}
\end{lem}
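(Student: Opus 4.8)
The plan is to establish this bootstrap law of large numbers by mirroring the proof of Lemma~\ref{lem2}, but transferring every step to the bootstrap probability space. The first task is to obtain a bootstrap analogue of the VAR representation in \eqref{xivar}. Collecting the recursions \eqref{wr1}--\eqref{wr2} (for WR) or \eqref{wf1}--\eqref{wf2} (for WF) into companion form, I would write $\vxi_t^b = \hat\vmu_s + \hat\mF_s\vxi_{t-1}^b + \veta_t^b$, where $\hat\mF_s$ and $\hat\vmu_s$ are built from the sample estimates $\hat\mDelta_t$, $\hat\vbeta_{\vx,t}$, $\hat\vbeta_{\vz,t}$, and the leading block of $\veta_t^b$ is $\ve_t^b=\hat\mA_s^{-1}\vepsi_t^b$ with $\vepsi_t^b=\vec(u_t^b,\vv_t^b,\vzeros_{p_2})$; the $\vr$-block is zero because $\vr_t$ is contemporaneously exogenous and is held fixed. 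For the WF bootstrap the lagged regressors are fixed, so $\vz_t^b=\vz_t$ and the claim reduces to Lemma~\ref{lem2} once $\hat\mUpsilon_t\inp\mUpsilon(\tau)$ is invoked; the substantive work is therefore the WR case, treated below.

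Second, I would show that the bootstrap companion matrix inherits stability from the data. By Lemma~\ref{lem7}(i)--(ii) the estimated break fractions and reduced-form coefficients are consistent at rates $\Op(T^{-1})$ and $\Op(T^{-1/2})$, and an analogous argument delivers consistency of the estimated structural-equation coefficients under $H_0$; hence $\hat\mF_s\inp\mF_s$ and $\hat\vmu_s\inp\vmu_s$, so by Assumption~\ref{a7} all eigenvalues of $\hat\mF_s$ lie inside the unit circle with probability approaching one. This justifies the bootstrap moving-average expansion $\vxi_t^b = \hat\mF_s^{\,t-[\tau_{s-1}T]}\vxi_{[\tau_{s-1}T]}^b + \sum_{l\geq 0}\hat\mF_s^{\,l}\veta_{t-l}^b + (\sum_{l}\hat\mF_s^{\,l})\hat\vmu_s$, the exact counterpart of \eqref{xima}, with the initial-condition and truncation terms negligible uniformly in $t$.

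Third, and this is where the main effort lies, I would prove the bootstrap LLN $T^{-1}\sum_{t\in I_i}\vz_t^b\vz_t^{b\prime}\ins\int_{\lambda_{i-1}}^{\lambda_i}\mathbb{Q}_{\vz}(\tau)\,\rd\tau$. The structural fact is that, under the bootstrap measure, $\{\veta_t^b\}$ is a martingale difference sequence whose leading-block conditional second moment is $\hat\mA_s^{-1}\hat\vepsi_t\hat\vepsi_t'\hat\mA_s^{-1\prime}$ with $\hat\vepsi_t=\vec(\hat u_t,\hat\vv_t,\vzeros_{p_2})$, since $E^b(\nu_t^2)=1$ by Assumption~\ref{aboot}. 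Substituting the moving-average expansion into $\vz_t^b\vz_t^{b\prime}$ produces quadratic forms in the bootstrap errors; taking bootstrap expectations and averaging over $t$, the cross terms at distinct lags vanish by the m.d.s.\ property, while the diagonal terms reproduce $\sum_{l}\hat\mF_s^{\,l}[\,\cdot\,](\hat\mF_s^{\,l})'$ with the sample residual products $\hat\vepsi_t\hat\vepsi_t'$ in place of $\mA_s^{-1}\overline\mSigma(\tau)\mA_s^{-1\prime}$. Lemma~\ref{lem8} (Eicker--White consistency of the averaged residual products, uniform in $r$) together with $\hat\mF_s\inp\mF_s$ then identifies the limit as $\mathbb{Q}_2(\tau)$, while the deterministic and mixed blocks reproduce $\mathbb{Q}_1(\tau)$ exactly as in Lemma~\ref{lem2}; because $\vr_t$ is fixed, the blocks of $\mathbb{Q}_{\vz}$ involving only $\vr_t$ carry over verbatim and only the blocks involving $\mathcal{S}\vxi_{t-1}^b$ require the new argument. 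A conditional Chebyshev bound exploiting the independence of $\{\nu_t\}$ and $\sup_t E^b|\nu_t|^{4+\delta^*}<\infty$ controls the fluctuation around these conditional means. Combining this with $\hat\mUpsilon_t\inp\mUpsilon(\tau)$ and a Slutsky argument then yields $\hat\mQ^b_{(i)}=\mathbb{Q}_i+o_p^b(1)$.

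The hardest part will be the third step: unlike the sample LLN, the summands here are nonlinear functions of the entire history of heteroskedastic bootstrap errors filtered through the \emph{estimated} rather than true autoregressive matrix, so one must simultaneously control the replacement of $\mF_s$ by $\hat\mF_s$ inside the infinite moving-average sum and show that the time-varying sample conditional variances $\hat\vepsi_t\hat\vepsi_t'$ aggregate to the correct integrated unconditional-variance path built from $\overline\mSigma(\tau)$.
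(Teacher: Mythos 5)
Your overall architecture --- reduce WF to Lemma~\ref{lem2} since $\vz_t^b=\vz_t$, put the WR recursion in companion form $\vxi_t^b=\hat\vmu_s+\hat\mF_s\vxi_{t-1}^b+\veta_t^b$, establish stability of $\hat\mF_s$ and control $\sum_{l}\|\hat\mF_s^l-\mF_s^l\|$, then prove a bootstrap LLN by computing bootstrap conditional means and applying a Chebyshev bound --- is exactly the paper's route. But there is a genuine error in your third step on which the rest of the argument leans. You set the $\vr$-block of $\vepsi_t^b$ to $\vzeros_{p_2}$ and conclude that $\{\veta_t^b\}$ is a martingale difference sequence under the bootstrap measure, so that ``the cross terms at distinct lags vanish by the m.d.s.\ property.'' Neither claim is correct. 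Because $\vr_t$ is held at its sample value, the innovation in the $\vr$-equation of the bootstrap companion form is the data-dependent, non-zero $\vzeta_t$, which is \emph{not} multiplied by $\nu_t$: the paper writes $\vepsi_t^b=\vec(u_t^b,\vv_t^{b},\vzeta_t)$ and $\vg_t^b=\hat\vg_t\odot\vnu_t$ with $\E^b(\vnu_t)=\mathcal I\neq\vzeros_{np}$. Consequently $\E^b(\veta_t^b)=\hat\mA_{s,\#}^{-1}\vg_{t,2}$ with $\vg_{t,2}=\vec(\vzeros_{p_1+1},\vzeta_t,\vzeros_{n(p-1)})\neq\vzeros_{np}$, and for $l\neq\kappa$ the cross moments $\E^b(\veta_{t-l}^b\veta_{t-\kappa}^{b\prime})=\veta_{t-l,2}\veta_{t-\kappa,2}^{\prime}$ do not vanish. (Setting that block to zero would also make your recursion generate fitted values of $\vr_t$ rather than the sample $\vr_t$ actually contained in $\vz_t^b$, so the representation itself would be inconsistent with the bootstrap DGP.)

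If those terms were genuinely zero your computation would close, but handling them is precisely the new work in this lemma relative to Lemma~\ref{lem2}. The bootstrap mean of $T^{-1}\sum_{t}\tilde\vxi_t^b$ is the random quantity $\mathcal H_1=T^{-1}\sum_{t}\sum_{l}\hat\mF_s^l\hat\mA_{s,\#}^{-1}\left((\hat\mA_{s,\#}\mA_{s,\#}^{-1}\vg_{t-l})\odot\mathcal I\right)$, which is only shown to be $o_p(1)$ under the \emph{original} measure via the $L^1$-mixingale LLN (Lemma~\ref{lem1}) applied to functions of $\vzeta_{t-l}$; the bootstrap variance of that average acquires an extra double sum $\mathcal V_2=T^{-2}\sum_{t}\sum_{l\neq\kappa}\hat\mF_s^l\veta_{t-l,2}\veta_{t-\kappa,2}^{\prime}(\hat\mF_s^{\kappa})^{\prime}$ that must be killed by a separate argument; and the second bootstrap moment of $\vecc\mathcal B_{1,1}^b$ requires a case analysis of $\E^b\left((\vnu_{t-\kappa}\vnu_{t^*-\kappa^*}^{\prime})\otimes(\vnu_{t-l}\vnu_{t^*-l^*}^{\prime})\right)$ whose blocks in the $\vr$-rows are again non-zero. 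Your remaining steps (the reduction of the pure-$\vr$ blocks to their sample counterparts, the use of Lemma~\ref{lem8} to aggregate $\hat\vepsi_t\hat\vepsi_t^{\prime}$ to the integrated variance path, and the Slutsky step with $\hat\mUpsilon_t$) match the paper, but as written the proof asserts away exactly the terms whose control constitutes the substance of the result.
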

\begin{proof}[Proof of Lemma \ref{lem9}]\hfill\\
For the WF bootstraps, $\vz_t^b=\vz_t$, and therefore  Lemma \ref{lem9} holds by Lemma \ref{lem2}. Consider the WR bootstrap, first for $I_i = \tilde I_s$. Define $\tilde \vz_{t}^b=(y_t^b,\vx_t^{b\prime},\vr_t)'$,  and:
\begin{eqnarray*}
\hat \mQ^b_{(i)}=T^{-1} \sum_{t \in I_i} \hat\mUpsilon_t^{\prime} \vz_t^b \vz_t^{b\prime} \hat\mUpsilon_t=\hat\mUpsilon^{\prime}_s\,\left[\,\begin{array}{ccc} \Delta\tau_s &\mathcal{A}_1^{b\prime}\mathcal{S}_{\vr}^\prime&\mathcal{A}_2^{b\prime}\mathcal{S}'\\ \mathcal{S}_{\vr}\mathcal{A}_1^b & \mathcal{S}_{\vr}\mathcal{B}_1^b\mathcal{S}_{\vr}^\prime&\mathcal{S}_{\vr}\mathcal{B}_{2}^b\mathcal{S}'\\
\mathcal{S}\mathcal{A}_2^b&\mathcal{S}\mathcal{B}_{2}^{b\prime}\mathcal{S}_{\vr}'&\mathcal{S}\mathcal{B}_3^b\mathcal{S}'\end{array}\,\right]\,\hat\mUpsilon_s,\end{eqnarray*}
where
$$
\mathcal{A}_1^b\;=\;\,T^{-1} \sum_{t \in I_i}\vxi_t^b,\;\; \mathcal{A}_2^b\;=\;\,T^{-1} \sum_{t \in I_i}\vxi_{t-1}^b
$$
and
$$
\mathcal{B}_1^b\;=\;T^{-1} \sum_{t \in I_i} \vxi_t^b\vxi_t^{b\prime},\;\; \mathcal{B}_{2}^b\;=\;T^{-1} \sum_{t \in I_i} \vxi_t^b\vxi_{t-1}^{b\prime},\;\;\mathcal{B}_3\;=\;T^{-1} \sum_{t \in I_i} \vxi_{t-1}^b\vxi_{t-1}^{b\prime}.
$$

Note that, because $\vr_t$ is kept fixed, $\mathcal{S}_{\vr}\mathcal{A}_1^b = T^{-1} \sum_{t \in I_i}\mathcal{S}_{\vr}\vxi_{t}=\mathcal{S}_{\vr}\mathcal A_1$,  and $ \mathcal{S}_{\vr}\mathcal{B}_1^b\mathcal{S}_{\vr} = \mathcal{S}_{\vr}\mathcal{B}_1\mathcal{S}_{\vr}^\prime$, where $\mathcal A_1, \mathcal A_2$ are the sample counterparts of $\mathcal A_1^b, \mathcal B_1^b$ defined at the beginning of the proof of Lemma \ref{lem2}. By Lemma \ref{lem2}, the result in Lemma \ref{lem9} holds automatically for these terms.
We now analyze the rest of the terms. To that end, we first derive some preliminary results.

$\bullet$ \textbf{Preliminary results and bootstrap notation.} Note that in any stable subinterval $\tilde I_s$,
\begin{equation}
\label{ah2}
\tilde\vz_t^b \;=\;\hat\vc_{\tilde{\vz},s}\,+\, \sum_{i=1}^{p} \hat\mC_{i,s} \tilde\vz_{t-i}^b+ \ve_t^b,\qquad [\tau_{s-1}T]+1\,\leq t\,\leq [\tau_s T],\; s=1,2,\ldots, N+1,
\end{equation}
where $\ve^b_t=\hat\mA_s^{-1}\vepsi^b_t$,
$\vepsi_t^b=\vec(u_t^b,\vv_t^{b},\vzeta_t)$, of size $n\times 1$, and the elements of $\hat\mA_s$, $\hat\vc_{\tilde{\vz},s}$ and $\hat\mC_{i,s}$ corresponding to the equation for $\vr_t$ are the true parameters, not the estimated ones. Then, \begin{eqnarray}
\vxi_t^b\;&=&\;\hat \vmu_s^b\,+\,\hat\mF_s\vxi^b_{t-1}\,+\,\veta^b_t\\ \label{xib}
&=& \hat\mF^{t-[\tau_{s-1}T]}_s \vxi_{[\tau_{s-1}T]}^b + \left(\sum_{l=0}^{t-[\tau_{s-1}T]-1}\hat\mF_s^l\right)\hat\vmu_s +\sum_{l=0}^{t-[\tau_{s-1}T]-1}\hat\mF_s^l\veta^b_{t-l},
\end{eqnarray}
where $\tilde\vxi^b_t= \vec(\tilde{\vz}_t^b,\tilde{\vz}_{t-1}^b,\ldots, \tilde{\vz}_{t-p+1}^b)$, $\veta_t^b = \hat \mA_{s,\#}^{-1} \vepsi^b_{t,\#}$, and $\hat \mF_s,\hat \vmu_s$ are defined as $\mF_s,\vmu_s$, but replacing the true coefficients that are estimated by 2SLS with those estimated counterparts. Also, let $\hat\veta_t = \hat \ve_{t,\#} = \hat \mA_{s,\#}^{-1} \hat \vepsi_{t,\#}$, where $\hat\vepsi_t = \vec(\hat u_t,\hat \vv_t, \vzeta_t)$.

We now show two results that we repeatedly need in the proofs: $T^{-\alpha} \vxi_t^b =o_p^b(1)$ and $T^{-\alpha}\vxi_t^b\vxi_t^{b'}=\op^b(1)$ for any $\alpha>0$.

For this purpose, we first show that $\E^b(T^{-\alpha} \veta_t^b)=\op^b(1)$ and that $\var^b(T^{-\alpha} \veta_t^b)=\op^b(1)$. Then, by Markov's inequality, for any $C>0$, $P^b(T^{\alpha} \|\veta_t^b-E^b(\veta_b)\| \geq C) \leq C^{-2} T^{-2\alpha} \var^b \| \veta_t^b\| \inp 0$, completing the proof.

Let $\mathcal I = \vec(\vzeros_{p_1+1}, \viota_{p_2}, \vzeros_{p_2+n(p+1)})$ and $\mathcal J = [\diag(\mJ_{p_1+1},\mJ_{p_2+1})]_{\#}$, where $\viota_a$ is a $a\times 1$ vector of ones, and $\mJ_a = \viota_a\viota_a'$.
Let $\vnu_t = [\vec(\nu_t \viota_{p_1+1},\viota_{p_2})]_{\#}$. Then $E^b(\vnu_t) = \mathcal I$ and $E^b (\vnu_t \vnu_t')= \mathcal J$.

Also, let $\vg_t^b = \vepsi_{t,\#}^b = \hat \vepsi_{t,\#} \odot \vnu_t$, where $\odot$ is the element-wise multiplication. Then $\vg_t^b=\hat\mA_{s,\#}\;\veta_{t}^b$, and letting  $\hat \vg_t = \hat \vepsi_{t,\#}$,  it follows that $\vg_t^b = \hat \vg_t \,\odot\, \vnu_t$. Further, let $\hat \vg_{t,1} \equiv \vec(\hat u_t, \hat \vv_t,\vzeros_{p_2+n(p-1)})$ and  $\vg_{t,2} \equiv \vec(\vzeros_{(p_1+1)},\vzeta_t,\vzeros_{n(p-1)})$.
Also, note that  $\vg_{t}^b= \hat \mA_{s,\#}^{-1}(\hat \vg_t \odot \mathcal \vnu_t)$. Then:
\begin{align}\label{eb1}
\E^b(\veta_t^b) &= E^b(  \hat \mA_{s,\#}^{-1}(\hat \vg_t \odot \vnu_t)= (\hat \mA_{s,\#}^{-1}(\vg_t^b \, \odot  \mathcal I) =  \hat \mA_{s,\#}^{-1} \vec(\vzeros_{(p_1+1)},\vzeta_t,\vzeros_{n(p-1)}) = \hat \mA_{s,\#}^{-1} \vg_{t,2}\\ \nonumber
\E^b(\veta_t^b\veta_t^{b'}) &= E^b(  \hat \mA_{s,\#}^{-1}(\hat \vg_t \odot \vnu_t)(\hat \vg_t \odot \vnu_t) '\hat \mA_{s,\#}^{'-1})\\ \label{eb2}
&= \hat \mA_{s,\#}^{-1} \, [(\vg_t^b \vg_t^{b'}) \odot \mathcal J ]  \hat \mA_{s,\#}^{-1} = \hat \mA_{s,\#}^{-1} \begin{bmatrix}\hat u_t^2 & \hat u_t \hat \vv_t' & \vzeros_{1\times p_2}  \\
\hat \vv_t\, \hat u_t & \hat \vv_t \hat \vv_t'& \vzeros_{p_1 \times p_2} \\
\vzeros_{p_2} & \vzeros_{p_2 \times p_1} & \vzeta_t\vzeta_t' \end{bmatrix}_{\#} \hat \mA_{s,\#}^{-1} = \hat \mA_{s,\#}^{-1} (\hat \vg_{t} \hat \vg_{t}' \odot \mathcal J)\hat \mA_{s,\#}^{-1}\\   \label{varb}
\var^b(\veta_t^b) & = \E^b(\veta_t^b\veta_t^{b'}) - \E^b(\veta_t^b) \E^b(\veta_t^{b'}) = \hat \mA_{s,\#}^{-1} \begin{bmatrix}\hat u_t^2 & \hat u_t \hat \vv_t' & \vzeros_{1\times p_2}   \\
\hat \vv_t\, \hat u_t & \hat \vv_t \hat \vv_t'& \vzeros_{p_1 \times p_2} \\
\vzeros_{p_2} & \vzeros_{p_2 \times p_1} & \vzeros_{p_2\times p_2} \end{bmatrix}_{\#} \hat \mA_{s,\#}^{-1} = \hat \mA_{s,\#}^{-1} \hat \vg_{t,1} \hat \vg_{t,1}' \hat \mA_{s,\#}^{-1}.
\end{align}

By Lemma \ref{lem7}, Lemma \ref{lem8} and standard 2SLS theory, $ \hat \vg_{t,1} = \vec(\hat u_t, \hat \vv_t,\vzeros_{p_2+n(p-1)})= \Op(1)$,  and $\hat \mA_{s,\#} = \mA_{s,\#} +\op(1)$, therefore  $\hat \mA_{s,\#}^{-1} \,  \hat \vg_{t,1}=\Op(1)$, so $E^b [T^{-\alpha} \veta_t^b] =\op^b(1)$.

Next, we show that $T^{-\alpha} \vxi_{t} =\op^b(1)$ by induction. First, recall that $\vxi_0^b = \vxi_0$, and therefore, $T^{-\alpha}\vxi_1^b  =T^{-\alpha}\hat \vmu_1 + \hat \mF_1 T^{-\alpha}\vxi_0+ T^{-\alpha}\veta_1^b  =\op^b(1)$ because $\hat \vmu_s-\vmu_s = \op(1)$, $\hat \mF_s -\hat \mF_s = \op(1)$, and $T^{-\alpha} \veta_t^b = \op^b(1)$. Now let $T^{-\alpha}\vxi_{t-1}^b =\op(1)$; then for $t, t-1 \in \tilde I_s$,  $T^{-\alpha}\vxi_{t}^b = T^{-\alpha} \hat \vmu_s + \hat \mF_s  T^{-\alpha}\vxi_{t-1}^b + T^{-\alpha}\veta_{t}^b =  o_p(1)+ \hat \mF_s o_p^b(1) +  o_p^b(1) =  o_p^b(1)$.

Therefore, it follows that:
\begin{align}\label{rxib}
T^{-\alpha} \vxi_{t}^b = \op^b(1).
\end{align}

Next, we show that $T^{-\alpha} \vxi_t^b \vxi_t^{b'} = o_p^b(1)$, also by mathematical induction. Note that, from the results above,
\begin{align}\nonumber
T^{-\alpha} \vxi_t^b \vxi_t^{b'}  &= T^{-\alpha} (\hat \vmu_s + \hat \mF_s \vxi_{t-1}^b + \veta_t^b)(\hat \vmu_s+ \hat \mF_s \vxi_{t-1}^b + \veta_t^b)'\\ \nonumber
& = T^{-\alpha} \hat \vmu_s \hat \vmu_s'  + \hat \mF_s ( T^{-\alpha} \vxi_{t-1}^b \vxi_{t-1}^{b'}) \hat \mF_s' +  T^{-\alpha} \veta_t^b \veta_t^{b'} + T^{-\alpha} \hat \vmu_s \vxi_{t-1}^{b'}\hat \mF_s' + (T^{-\alpha} \hat \vmu_s \vxi_{t-1}^{b'}\hat \mF_s')' \\ \nonumber
& +  T^{-\alpha} \hat \vmu_s \veta_t^{b'} + (T^{-\alpha} \hat \vmu_s \veta_t^{b'})' + \hat \mF^s T^{-\alpha/2}\vxi_{t-1}^b T^{-\alpha/2} \veta_t^{b'} +(\hat \mF^s T^{-\alpha/2}\vxi_{t-1}^b T^{-\alpha/2} \veta_t^{b'})'\\ \label{xi2}
& =  \hat \mF_s ( T^{-\alpha} \vxi_{t-1}^b \vxi_{t-1}^{b'}) \hat \mF_s'  +  T^{-\alpha} \veta_t^b \veta_t^{b'}+ o_p^b(1).
\end{align}
Now consider $\vec(T^{-\alpha} \veta_t^b \veta_t^{b'})=T^{-\alpha} \veta_t^b \otimes \veta_t^{b}$.  We have:
\begin{align*}
&\veta_t^b \otimes \veta_t^{b}  =  (\hat \mA_{s,\#}^{-1} \otimes \hat \mA_{s,\#}^{-1}) (\vg_{t}^b \otimes \vg_t^b) \mbox{ and } E^b( \vg_t^b \otimes \vg_t^{b})( \vg_t^b \otimes \vg_t^{b})' = E^b( (\vg_t^b \vg_t^{b'}) \otimes (\vg_t^b \vg_t^{b'})).
\end{align*}
Since $\vg_t^b \vg_t^{b'} = (\hat \vg_t \hat \vg_t') \odot \vnu_t\vnu_t' =\Op(1)\odot (\vnu_t\vnu_t') $, the typical non-zero elements of $E^b( (\vg_t^b \vg_t^{b'}) \otimes (\vg_t^b \vg_t^{b'}))$ are $\Op(1) E^b(\nu_t^{j})$, with $j=0,1,\ldots, 4$. By Assumption \ref{aboot}, $\sup_t E^b(\nu_t^4) <\infty$, it follows that $ E^b( (\vg_t^b \vg_t^{b'}) \otimes (\vg_t^b \vg_t^{b'})) = \Op(1)$, which implies that  $T^{-\alpha} E^b( (\veta_t^b \veta_t^{b'}) \otimes (\veta_t^b \veta_t^{b'})) = \Op(T^{-\alpha})=\op(1)$. By Markov's inequality, for any $C>0$, $P^b( T^{-\alpha} \| \veta_t^b \otimes \veta_t^{b'})\| \geq C) \leq T^{-2\alpha} C^{-2} E^b\|\veta_t^b \otimes \veta_t^{b'}\|^2 \leq T^{-2\alpha} C^{-2}\| E^b( (\veta_t^b \veta_t^{b'}) \otimes (\veta_t^b \veta_t^{b'}))\| \inp 0$,  it follows that $T^{-\alpha} \veta_t^b \veta_t^{b'}=\op^b(1)$.

Using this result in \eqref{xi2}, by a similar mathematical induction argument as  for $T^{-\alpha} \vxi_{t}^b=o_p^b(1)$, it follows  that
\begin{align}\label{rxibx}
T^{-\alpha} \vxi_{t}^b \vxi_{t}^{b'} = \op^b(1).
\end{align}

Besides \eqref{rxib} and \eqref{rxibx}, in the proof below we will  assume that  $\left|\mI_{n}\,-\,\hat \mC_{1,s}a\, -\, \hat \mC_{2,s} a^2\, -\, \cdots\, -\, \hat \mC_{p,s} a^{ p}\right|\neq 0$, for all $s=1,\ldots, N+1$, and all $\left|a\right|\leq 1$; otherwise the estimated system is not stationary. Then we show in the Supplementary Appendix, Section \ref{supsec2}, that $\sum_{l=0}^{\infty} \| \mF_s^l\| <\infty$, and similarly, it can be shown that $\sum_{l=0}^{\infty} \| \hat \mF_s^l\| <\infty$. Moreover, the results in the Supplementary Appendix  Section \ref{supsec3} show that  $\mathcal R_{s,l} =\hat \mF_s^l - \mF_s^l $ is such that
\begin{equation}\label{mathcalR}
\sum_{l=0}^{\infty} \| \mathcal R_{s,l} \| =\| \hat \mF_s - \mF_s\| \, \,\Op(1) = \op(1),
\end{equation} an argument which will be used repeatedly in the proofs.

$\bullet$ Now consider the case where $I_i = \tilde I_s$ first, and analyze $\mathcal{A}_2^b$. From \eqref{rxib},
\begin{align} \label{a2b}
\mathcal{A}_2^b = T^{-1} \sum_{t \in \tilde I_s} \vxi_{t-1}^b  = T^{-1}\vxi_{[\tau_{s-1} T]-1}^b -  T^{-1} \vxi_{[\tau_{s} T]}^b + T^{-1} \sum_{t \in \tilde I_s} \vxi_{t}^b = T^{-1} \sum_{t \in \tilde I_s} \vxi_{t}^b+ \op^b(1) = \mathcal A_1^b+\op(1).  \end{align}
Therefore, we now derive the limit of $\mathcal A_1^b$. Note that
\begin{eqnarray*}
\vxi_t^b\;&=&\;\hat\vmu_s\,+\,\hat\mF_s\vxi^b_{t-1}\,+\,\veta^b_t \;=\;\hat\mF^{t-[\tau_{s-1}T]}_s \vxi_{[\tau_{s-1}T]}^b + \tilde\vxi_t^b+\left(\sum_{l=0}^{t-[\tau_{s-1}T]-1}\hat\mF_s^l\right)\hat\vmu_s,
\end{eqnarray*}
where $\tilde\vxi^b_t=\sum_{l=0}^{t-[\tau_{s-1}T]-1}\hat\mF_s^l\veta^b_{t-l}$. Therefore,
$
\mathcal{A}_1^b\;=\;\sum_{i=1}^4 \mathcal{A}_{1,i}^b,
$
where $\Delta\tau_sT = [\tau_{s}T]-[\tau_{s-1}T]$, and
\begin{eqnarray*}
\mathcal{A}_{1,1}^b&=&T^{-1}\sum_{t=[\tau_{s-1}T]+1}^{[\tau_sT]}\tilde{\vxi}_t^b, \, \,
\mathcal{A}_{1,2}^b \;=\;
T^{-1}\Delta\tau_sT\sum_{l=0}^{\Delta \tau_sT -1}\hat\mF_s^l\,\hat\vmu_s,\\
\mathcal{A}_{1,3}^b
&=&T^{-1}\sum_{l=1}^{\Delta \tau_s T} \hat\mF_s^l\,\vxi_{[\tau_{s-1}T]}^b,\,\,
\mathcal{A}_{1,4}^b\;=\;-T^{-1}\left(\sum_{l=1}^{\Delta \tau_s T -1} l\hat\mF_s^l\right)\,\hat\vmu_s.
\end{eqnarray*}

We show that $\mathcal{A}_{1,1}^b=\op^b(1)$. First, we show $\E^b(\mathcal{A}_{1,1}^b)=o_p(1)$. Second, we show $\var^b(\mathcal{A}_{1,1}^b)=\op(1)$ which by Markov's inequality implies that $\mathcal{A}_{1,1}^b=\op^b(1)$. Consider $\E^b(\mathcal{A}_{1,1}^b)$ with $\E^b(\tilde{\vxi}_t^b)=\sum_{l=0}^{\tilde t-1}\hat\mF_s^l\E^b(\veta^b_{t-l})=\sum_{l=0}^{\tilde t-1}\hat\mF_s^l\hat\mA_{s,\#}^{-1}\left(\hat\vg_{t-l}\odot\mathcal{I}\right)$.

We have $\vxi_t=\vmu_s+\mF_s\vxi_{t-1}+\veta_t=\hat\vmu_s+\hat\mF_s\vxi_{t-1}+\hat\veta_t$. Then,
\begin{eqnarray}
\label{etahat}\hat\veta_t&=&\veta_t+(\vmu_s-\hat\vmu_s)+(\mF_s-\hat\mF_s)\vxi_{t-1},\\
\hat\vg_t=\hat\mA_{s,\#}\;\hat\veta_t&=& \hat\mA_{s,\#}\veta_t+\hat\mA_{s,\#}(\vmu_s-\hat\vmu_s)+\hat\mA_{s,\#}(\mF_s-\hat\mF_s)\vxi_{t-1}.
\end{eqnarray}
Therefore,
$
\veta_t^b = (\veta_t+(\vmu_s-\hat\vmu_s)+(\mF_s-\hat\mF_s)\vxi_{t-1})\odot \vnu_t $.

Note that $\hat \vmu_s - \vmu_s = (\hat \vc_{\tilde z,s}-\vc_{\tilde z,s})_{\#}  = \vec( \hat{d}_{s}, \hat \vd_s,  \vzeros_{p_2})_{\#}$, where $\hat d_s, \hat \vd_s$ are of dimension $1$ and $p_1 \times 1$, respectively, and this holds because the rows $p_2+1:n$ are not estimated since the equation for $\vr_t$ is not estimated. Let $\hat \va_{1,\bcdot}, \hat \mA_{p_1,\bcdot}, \hat \mA_{p_2,\bcdot}$ be rows $1$, $2:p_1+1$ and $p_1+2:n$ of the matrix $\hat \mA_s^{-1}$ respectively. Note that like $\mA_{s}^{-1}$, $\hat \mA_s^{-1}$ is upper triangular  with $\hat \mA_{p_2,\bcdot} =[\vzeros_{p_2\times (p_1+1)}, \mI_{p_2}]$ because the equation for $\vr_t$ is not estimated, and $\vr_t$ is assumed contemporaenously exogenous. Therefore,
\begin{eqnarray}\label{decompA}
\hat \mA_{s,\#}^{-1}\;(\vmu_s-\hat\vmu_s)=\left[\begin{array}{c} \hat \mA_s^{-1} (\hat \vc_{\tilde z,s}-\vc_{\tilde z,s})\\ \vzeros_{n(p-1)}\end{array}\right]=\left[\begin{array}{c} \hat \va_{1,\bcdot}\; (\hat \vc_{\tilde z,s}-\vc_{\tilde z,s})\\ \hat \mA_{p_1,\bcdot}\; (\hat \vc_{\tilde z,s}-\vc_{\tilde z,s})\\ \hat \mA_{p_2,\bcdot}\; (\hat \vc_{\tilde z,s}-\vc_{\tilde z,s})\\\vzeros_{n(p-1)}\end{array}\right]=\left[\begin{array}{c} \hat \va_{1,\bcdot}\;(\hat \vc_{\tilde z,s}-\vc_{\tilde z,s})\\ \hat \mA_{p_1,\bcdot}\;(\hat \vc_{\tilde z,s}-\vc_{\tilde z,s})\\ \vzeros_{p_2}\\\vzeros_{n(p-1)}\end{array}\right],
\end{eqnarray}
so
\begin{equation}\label{first0}
(\hat \mA_{s,\#}(\hat \vmu_s - \vmu_s)) \odot \mathcal I =\vzeros_{np}.
\end{equation}

By similar arguments, because the $p_1+2:n$ rows of $\hat \mF_s$ are equal to the corresponding rows of $\mF_s$, $\hat \mA_{s,\#}(\hat \mF_s - \mF_s)$, the rows $p_1+2:n$ of   $\hat \mA_{s,\#}(\hat \mF_s - \mF_s)$ are equal to zero, therefore
\begin{equation}\label{second0}
(\hat \mA_{s,\#}(\hat \mF_s - \mF_s) \vxi_{t-1}) \odot \mathcal I =\vzeros_{np}.
\end{equation}
Using \eqref{first0}-\eqref{second0}, and letting $\tilde t = t-[\tau_{s-1}T]$, we have:
 \begin{eqnarray*}
\E^b(\mathcal{A}_{1,1}^b)&=& \sum_{i=1}^3\mathcal{H}_i,\,\,\text{where:}\\
\mathcal{H}_1&=&T^{-1}\sum_{t\in \tilde I_s}\sum_{l=0}^{\tilde t-1}\hat\mF_s^l\hat\mA_{s,\#}^{-1}\left((\hat\mA_{s,\#}\veta_{t-l})\odot\mathcal{I}\right)=T^{-1}\sum_{t\in \tilde I_s}\sum_{l=0}^{\tilde t-1}\hat\mF_s^l\hat\mA_{s,\#}^{-1}\left((\hat\mA_{s,\#}\mA_{s,\#}^{-1}\;\vg_{t-l})\odot\mathcal{I}\right),\\
\mathcal{H}_2&=& T^{-1}\sum_{t\in \tilde I_s}\sum_{l=0}^{\tilde t-1}\hat\mF_s^l\hat\mA_{s,\#}^{-1}\left((\hat\mA_{s,\#}(\vmu_s-\hat\vmu_s))\odot\mathcal{I}\right) =\vzeros_{np}, \\
\mathcal{H}_3&=& T^{-1}\sum_{t\in \tilde I_s}\sum_{l=0}^{\tilde t-1}\hat\mF_s^l\hat\mA_{s,\#}^{-1}\left((\hat\mA_{s,\#}(\mF_s-\hat\mF_s)\vxi_{t-1})\odot\mathcal{I}\right)=\vzeros_{np}. \end{eqnarray*}

 Since $\hat\mA_{s,\#}\mA_{s,\#}^{-1}=\mI_{n,\#}+\op(1)$, it follows that:
\begin{eqnarray*}
\mathcal{H}_1&=&T^{-1}\sum_{t\in \tilde I_s}\sum_{l=0}^{\tilde t-1}\mF_s^l\mA_{s,\#}^{-1}\left(\vg_{t-l}\odot\mathcal{I}\right)+T^{-1}\sum_{t\in \tilde I_s}\sum_{l=0}^{\tilde t-1}\mathcal{R}_{s,l}\mA_{s,\#}^{-1}\left(\vg_{t-l}\odot\mathcal{I}\right)+\op(1)\\
&=&\mathcal{H}_{1}^{(1)}+\mathcal{H}_{1}^{(2)}+\op(1).
\end{eqnarray*}
From Assumptions \ref{a7} and \ref{a8}, and using $\sum_{l=0}^{\infty} \|\mF_s^l\| <\infty$ and  $\sum_{l=0}^{\infty} \|\mathcal R_{s,l}\| <\infty$, both proven in the Supplementary Appendix Sections \ref{supsec2}-\ref{supsec3}, it can be shown that  $\sum_{l=0}^{\tilde t-1}\mF_s^l\mA_{s,\#}^{-1}\left(\vg_{t-l}\odot\mathcal{I}\right)$ and $T^{-1}\sum_{t\in \tilde I_s}\sum_{l=0}^{\tilde t-1}\mathcal{R}_{s,l}\mA_{s,\#}^{-1}\left(\vg_{t-l}\odot\mathcal{I}\right)$ are $L^1$-mixingales satisfying the conditions of Lemma \ref{lem1}, therefore $\mathcal{H}_{1}^{(1)}=o_p(1)$ and $\mathcal{H}_{1}^{(2)}=o_p(1)$. Hence $\mathcal{H}_1=\op(1)$, so $\E^b(\mathcal{A}_{1,1}^b)=\op(1)$.

Second, we show that $\var^b(\mathcal{A}_{1,1}^b)=\op(1)$. To that end, note that
\begin{eqnarray*}
\E^b(\veta^b_{t-l}\veta^{b\prime}_{t-\kappa})&=&\hat\mA_{s,\#}^{-1}\E^b((\hat \vg_{t-l}\odot\vnu_{t-l})(\hat \vg_{t-\kappa}\odot\vnu_{t-\kappa})')(\hat\mA_{s,\#}^{-1})'\;=\;\hat\mA_{s,\#}^{-1} ((\hat\vg_{t-l}\hat\vg_{t-\kappa}')\odot\E^b(\vnu_{t-l}\vnu_{t-\kappa}'))(\hat\mA_{s,\#}^{-1})',
\end{eqnarray*}
For $l\neq \kappa$, $\E^b(\vnu_{t-l}\vnu_{t-\kappa}^\prime)= \mathcal I \mathcal I' = [\diag( \vzeros_{(p_1+1)\times (p_1+1)}, \mJ_{p_2})]_{\#} = \mathcal{J}_2$. Therefore, exploiting the upper triangular structure of $\hat \mA_s^{-1}$ with $p_2 \times p_2$ lower right block equal to $\mI_{p_2}$, for $l\neq \kappa$,  \begin{align*}
\E^b(\veta^b_{t-l}\veta^{b\prime}_{t-\kappa})=\hat \mA_{s,\#}^{-1} ((\hat\vg_{t-l}\hat\vg_{t-\kappa}')\odot\mathcal J_2)(\hat\mA_{s,\#}^{-1})' = ((\hat \mA_{s,\#}^{-1} \hat \vg_{t-l})\odot\mathcal I)((\hat \mA_{s,\#}^{-1} \hat \vg_{t-\kappa})\odot\mathcal I)' =\veta_{t-l,2} \veta_{t-\kappa,2}',
 \end{align*}
where $\veta_{t,2}=\vg_{t,2}=[\vec(\vzeros_{p_1+1},\vzeta_t)]_{\#}$.
For
$l=\kappa$, $\E^b(\veta^b_{t-l}\veta^{b\prime}_{t-l})=\hat\mA_{s,\#}^{-1} ((\hat\vg_{t-l}\hat\vg_{t-l}')\odot \E^b(\vnu_{t-l}\vnu^{\prime}_{t-l}))(\hat\mA_{s,\#}^{-1})'$, where $\E^b(\vnu_{t-l}\vnu_{t-\kappa}^\prime)=\mathcal{J}$, so $T^{-1}  \sum_{t\in \tilde I_s}(\hat\vg_{t-l}\hat\vg_{t-l}')\odot \mathcal{J} = T^{-1}  \sum_{t\in \tilde I_s}\vg_{t-l}\vg_{t-l}'+\op(1)$  by Assumption \ref{a8}(ii), Lemma \ref{lem8} followed by standard 2SLS theory. Therefore, $T^{-1}  \sum_{t\in \tilde I_s} \E^b(\veta^b_{t-l}\veta^{b\prime}_{t-l}) = \hat\mA_{s,\#}^{-1} \left(T^{-1}  \sum_{t\in \tilde I_s} \vg_{t-l} \vg_{t-l}' \right)(\hat\mA_{s,\#}^{-1})'+\op(1) = T^{-1}  \sum_{t\in \tilde I_s}  \veta_{t-l} \veta_{t-l}'+\op(1)$. Hence,
\begin{eqnarray}\label{a11b}\var^b(\mathcal{A}_{1,1}^b)&=&T^{-2}\sum_{t\in \tilde I_s}\E^b(\tilde{\vxi}_t^b\tilde{\vxi}_t^{b'})=\mathcal{V}_1+\mathcal{V}_2+\op(1)\\
\mathcal{V}_1&=&T^{-2}\sum_{t\in \tilde I_s}\sum_{l=0}^{\tilde t-1}\hat\mF_s^l \veta_{t-l} \veta_{t-l}' (\hat\mF_s^l)'\label{vv1}\\
\mathcal{V}_2&=&T^{-2}\sum_{t\in \tilde I_s}\sum_{l,\kappa=0,l\neq \kappa}^{\tilde t-1}\hat\mF_s^l \veta_{t-l,2} \veta_{t-\kappa,2}'(\hat\mF_s^\kappa)'\label{vv2}
\end{eqnarray}
Consider $\mathcal{V}_1$. We analyze first $\mathcal V_1^{*}$, where
\begin{eqnarray*}
\mathcal{V}_{1}^*&=&T^{-1}\sum_{t\in \tilde I_s}\sum_{l=0}^{\tilde t-1}\mF_s^l\left(\mA_{s,\#}^{-1} \vg_{t-l}\vg_{t-l}'(\mA_{s,\#}^{-1})'\right)(\mF_s^l)' =T^{-1}\sum_{t\in \tilde I_s}\sum_{l=0}^{\tilde t-1}\mF_s^l \veta_{t-l}\veta_{t-l}' \mF_s^{l'} \\
& = &\mathcal B_{1,1}^{(1)}+ \mathcal B_{1,1}^{(2)}+\op(1) = \mathbb B_1(\tau_{s-1},\tau_s) +\op(1),
\end{eqnarray*}
where the last three quantities above were already defined and analyzed in the proof of Lemma \ref{lem2} in Supplementary Appendix, Section \ref{supsec1}, where it was shown that $\mathcal B_{1,1}^{(1)} \inp \mathbb B_1(\tau_{s-1},\tau_s)$ and $\mathcal B_{1,1}^{(2)} =\op(1)$. By similar arguments that were employed to analyze those terms,
\begin{eqnarray*}
\mathcal{V}_{1}^{**}&=&T^{-1}\sum_{t\in \tilde I_s}\sum_{l=0}^{\tilde t-1}\mF_s^l \veta_{t-l} \veta_{t-l}' (\mF_s^l)' = \mathcal B_{1,1}^{(1)} + \op(1)=  \mathbb B_1(\tau_{s-1},\tau_s) +\op(1).
\end{eqnarray*}
Now consider $\mathcal{V}_1$, where
\begin{eqnarray*}
\mathcal{V}_1&=&T^{-1}(\mathcal{V}_1^{(1)}+\mathcal{V}_1^{(2)}+(\mathcal{V}_1^{(2)})^{\prime}+\mathcal{V}_1^{(3)})+\op(1),\\
\mathcal{V}_1^{(1)}&=&T^{-1}\sum_{t\in I_i}\sum_{l=0}^{\tilde t-1}\mF_s^l \veta_{t-l} \veta_{t-l}' (\mF_s^l)' =  \mathcal V_1^{**} = \Op(1)\\
\mathcal{V}_1^{(2)}&=&T^{-1}\sum_{t\in I_i}\sum_{l=0}^{\tilde t-1}\mF_s^l \veta_{t-l} \veta_{t-l}'(\mathcal{R}_{s,l})'\\
\mathcal{V}_1^{(3)}&=&T^{-1}\sum_{t\in I_i}\sum_{l=0}^{\tilde t-1}\mathcal{R}_{s,l} \veta_{t-l} \veta_{t-l}' (\mathcal{R}_{s,l})',
\end{eqnarray*}

Similarly to $\mathcal{V}_{1}^{**}$, because $\hat \mF_s^l-\mF_s^l =\mathcal R_{s,l} =\op(1)$, we can show that $\mathcal{V}_1^{(2)}=\op(1)$. From \eqref{mathcalR}, $\mathcal R_{s,l}$ is such that $\sum_{l=0}^{\infty} \mathcal R_{s,l} =\| \hat \mF_s - \mF_s\| \, \Op(1)=\op(1)$.    Therefore, by the same arguments as for $\mathcal{V}_{1}^{(2)}=\op(1)$, one can show that $\mathcal{V}_1^{(3)}=\op(1)$, therefore  $T \mathcal{V}_1 =B_1(\tau_{s-1},\tau_s) +\op(1)$, and  $\mathcal{V}_1 =\op(1)$.

 By similar arguments to  the analysis of the term $\mathcal B_{1,1}$ in the Supplementary Appendix Section \ref{supsec1}, proof of Lemma \ref{lem2},  $T \mathcal{V}_2=\op(1)$. Substituting $\mathcal{V}_2=\op(1)$ and $\mathcal{V}_1=\op(1)$ into \eqref{a11b}, it follows that $\var^b(\mathcal{A}_{1,1}^b)=\mathcal{V}_1+\mathcal{V}_2=o_p(1)$, and, by Markov's inequality, that $\mathcal{A}_{1,1}^b=o_p^b(1)$. It also follows that:
\begin{align}\label{a11b2}
T\var^b(\mathcal{A}_{1,1}^b) = B_1(\tau_{s-1},\tau_s) +\op(1),
\end{align}
a stronger result that we need later in this proof.

Consider now $\mathcal{A}_{1,2}^b, \mathcal{A}_{1,3}^b, \mathcal{A}_{1,4}^b$. We have $\mathcal{A}_{1,2}^b=T^{-1}\Delta\tau_sT\sum_{l=0}^{\Delta \tau_sT -1}\mF_s^l\,\vmu_s+\op(1)$ and by Assumption \ref{a7}, it follows that $\mathcal{A}_{1,2}^b\stackrel{p}{\to}\Delta \tau_s(\mI_{np}-\mF_s)^{-1}\vmu_s = \int_{\tau_{s-1}}^{\tau_s} \mathbb Q_1(\tau)d\tau$.
Now consider $\mathcal A_{1,3}^b$. Because $\| T^{-1} \vxi_{[\tau_{s-1}T]}^b \| =o_p^b(1)$  and $\| \sum_{l=1}^{\infty} \hat \mF_s^l \| = \Op(1)$,
\begin{align*}
\|\mathcal A_{1,3}^b\|  = \| T^{-1}\sum_{l=1}^{\Delta \tau_s T} \hat\mF_s^l\,\vxi_{[\tau_{s-1}T]}^b\|   \leq  \| \sum_{l=1}^{\infty} \hat \mF_s^l \| \,\, \| T^{-1} \vxi_{[\tau_{s-1}T]}^b \|  = \Op(1) o_p^b(1) = o_p^b(1).
\end{align*}

Since   $\sum_{l=1}^{\Delta \tau_s T -1} l\hat\mF_s^l=O_p(1)$ and $\hat\vmu_s- \vmu_s= o_p(1)$, $\mathcal{A}_{1,4}^b=\op(1)$. Combining these results, we obtain:
$$
\mathcal{A}_1^b\;=\;\Delta \tau_s(\mI_{np}-\mF_s)^{-1}\vmu_s\,+\,o_p^b(1)= \int_{\tau_{s-1}}^{\tau_s} \mathbb Q_1(\tau)d\tau +\op^b(1) = \mathcal{A}_{1} +\op^b(1);
$$
where $\mathcal{A}_{1} =\int_{\tau_{s-1}}^{\tau_s} \mathbb Q_1(\tau)d\tau +\op(1)$ from \eqref{abb} in Supplementary Appendix, Section \ref{supsec1}, and $\mathcal A_i, i=1,2,$ are the sample equivalents of $\mathcal A_i^b$. From \eqref{a2b}, it follows that $\mathcal{A}_{2}^b = \mathcal{A}_{1} +\op(1) = \mathcal{A}_{2} +\op(1) \inp \int_{\tau_{s-1}}^{\tau_s} \mathbb Q_1(\tau)d\tau$.

$\bullet$ Next, analyze $\mathcal{B}_3^b$.  First, note that because $T^{-1}\vxi_t^b \vxi_t^{b'} =o_p^b(1)$ as shown in the preliminaries of this proof,
\begin{align}\label{ind2}
\mathcal B_1^b = T^{-1} \sum_{t\in \tilde I_s}  \vxi_t^b \vxi_t^{b'} = T^{-1}\sum_{t\in \tilde I_s} \vxi_{t-1}^b \vxi_{t-1}^{b'} +o_p^b(1) = \mathcal B_3^b\ +o_p^b(1),
\end{align}
so we analyze instead $\mathcal B_1^b$. Note that
$
\mathcal{B}_1^b\;=\;\sum_{i=1}^3\mathcal{B}_{1,i}^b\,+\,\sum_{j=1}^3 \left\{\,\mathcal{B}_{1,3+j}^b + \mathcal{B}_{1,3+j}^{b\prime}\,\right\},
$
where\begin{eqnarray*}
\mathcal{B}_{1,1}^b&=&T^{-1}\sum_{t\in \tilde I_s} \tilde{\vxi}_t^b\tilde{\vxi}_t^{b\prime}=T^{-1}\sum_{t\in \tilde I_s}\left(\,\sum_{l=0}^{\tilde{t}-1}\hat\mF_s^l\veta_{t-l}^b\,\right)\left(\,\sum_{l=0}^{\tilde{t}-1}\hat\mF_s^l\veta_{t-l}^b\,\right)^\prime\nonumber\\[0.1in]
\mathcal{B}_{1,2}^b&=&T^{-1}\sum_{t\in \tilde I_s} \left( \sum_{l=0}^{\tilde{t}-1}\hat\mF_s^l\hat\vmu_s\right)\,\left(\,\sum_{l=0}^{\tilde{t}-1}\hat\mF_s^l\hat\vmu_s\right)^\prime\nonumber\\[0.1in]
\mathcal{B}_{1,3}^b&=&T^{-1}\sum_{t\in \tilde I_s} \hat\mF_s^{\tilde{t}}\vxi_{[\tau_{s-1}T]}^b\vxi_{[\tau_{s-1}T]}^{b\prime} (\hat\mF_s^{\tilde{t}})^{\prime}\\[0.1in]
\mathcal{B}_{1,4}^b&=&T^{-1}\sum_{t\in \tilde I_s} \tilde{\vxi}_t^b \left(\,\sum_{l=0}^{\tilde{t}-1}\hat\mF_s^l\hat\vmu_s\right)^\prime\\[0.1in]
\mathcal{B}_{1,5}^b&=&T^{-1}\sum_{t\in \tilde I_s} \tilde{\vxi}_t^b \vxi_{[\tau_{s-1}T]}^{b\prime} (\hat\mF_s^{\tilde{t}})^\prime\\[0.1in]
\mathcal{B}_{1,6}^b&=&T^{-1}\sum_{t\in \tilde I_s} \left(\,\sum_{l=0}^{\tilde{t}-1}\hat\mF_s^l\hat\vmu_s\right)\vxi_{[\tau_{s-1}T]}^{b\prime} (\hat{\mF}_s^{\tilde{t}})^{\prime}.
\end{eqnarray*}

Let $$ \mathbb{B}_1(\tau_{s-1},\tau_s) = \sum_{l=0}^{\infty} \mF_s^l \left( \mA_{s}^{-1}\int_{\tau_{s-1}}^{\tau_s}\overline \mSigma(\tau)\rd \tau \mA_s^{-1'}\right)_{\#}\mF_s^{l\prime}.$$ We show  $\mathcal{B}_{1,1}^b-\mathbb{B}_1(\tau_{s-1},\tau_s)=o^b_p(1)$ by showing that $\E^b(\mathcal{B}_{1,1}^b-\mathbb{B}_1(\tau_{s-1},\tau_s))=o_p(1)$ and $\var^b(\vecc\mathcal{B}_{1,1}^b)=o_p(1)$.
\begin{eqnarray*}
\mathcal{B}_{1,1}^b&=& T^{-1}\sum_{t\in \tilde I_s}\sum_{l,\kappa=0}^{\tilde{t}-1}\hat\mF_s^l\veta_{t-l}^b \veta_{t-\kappa}^{b\prime}(\hat\mF_s^\kappa)'= T^{-1}\sum_{t\in \tilde I_s}\sum_{l,\kappa=0}^{\tilde{t}-1}\hat\mF_s^l\hat\mA_{s,\#}^{-1}\vg_{t-l}^b \vg_{t-\kappa}^{b\prime}(\hat\mA_{s,\#}^{-1})'(\hat\mF_s^\kappa)'\\
\E^b(\mathcal{B}_{1,1}^b)&=& T^{-1}\sum_{t\in \tilde I_s}\sum_{l,\kappa=0}^{\tilde{t}-1}\hat\mF_s^l\E^b(\veta_{t-l}^b\veta_{t-\kappa}^{b\prime})(\hat\mF_s^{\kappa})^{\prime}
=  T\var^b(\mathcal{A}_{1,1}^b)=\mathbb{B}_{1}(\tau_{s-1},\tau_s)+\op(1), \end{eqnarray*}
where the last equality above follows from \eqref{a11b2}.
We have:
\begin{eqnarray*}
\vecc\mathcal{B}_{1,1}^b&=&T^{-1}\sum_{t\in \tilde I_s}\sum_{l,\kappa=0}^{\tilde{t}-1}\left((\hat\mF_s^\kappa\hat\mA_{s,\#}^{-1})\otimes(\hat\mF_s^l\hat\mA_{s,\#}^{-1})\right)\left(\vg^b_{t-\kappa}\otimes\vg^b_{t-l}\right)\\
&=& T^{-1}\sum_{t\in \tilde I_s}\sum_{l,\kappa=0}^{\tilde{t}-1}\left((\hat\mF_s^\kappa\hat\mA_{s,\#}^{-1})\otimes(\hat\mF_s^l\hat\mA_{s,\#}^{-1})\right)\left((\hat\vg_{t-\kappa}\odot\vnu_{t-\kappa})\otimes(\hat\vg_{t-l}\odot\vnu_{t-l})\right)\\
&=& T^{-1}\sum_{t\in \tilde I_s}\sum_{l,\kappa=0}^{\tilde{t}-1}\left((\hat\mF_s^\kappa\hat\mA_{s,\#}^{-1})\otimes(\hat\mF_s^l\hat\mA_{s,\#}^{-1})\right)\left((\hat\vg_{t-\kappa}\otimes\hat\vg_{t-l})\odot(\vnu_{t-\kappa}\otimes\vnu_{t-l})\right).\\
\var^b(\vecc\mathcal{B}_{1,1}^b)&=&\E^b(\vecc\mathcal{B}_{1,1}^b(\vecc\mathcal{B}_{1,1}^{b})^\prime)-\E^b(\vecc\mathcal{B}_{1,1}^b)\E^b(\vecc\mathcal{B}_{1,1}^b)^\prime\\
&=&\E^b(\vecc\mathcal{B}_{1,1}^b(\vecc\mathcal{B}_{1,1}^{b})^\prime)-\vecc\mathbb{B}_1(\tau_{s-1},\tau_s)\left(\vecc\mathbb{B}_1(\tau_{s-1},\tau_s)\right)'+o_p(1). \end{eqnarray*}
We need to show that $\E^b(\vecc\mathcal{B}_{1,1}^b(\vecc\mathcal{B}_{1,1}^{b})^\prime)\stackrel{p}{\to}\vecc\mathbb{B}_1(\tau_{s-1},\tau_s)\left(\vecc\mathbb{B}_1(\tau_{s-1},\tau_s)\right)'$. Letting $\tilde t^*=t^*-[\tau_{s-1}T]$,
\begin{eqnarray*}
\E^b(\vecc\mathcal{B}_{1,1}^b(\vecc\mathcal{B}_{1,1}^{b})^\prime)&=&\left[T^{-1}\sum_{t\in \tilde I_s}\sum_{l,\kappa=0}^{\tilde{t}-1}\left((\hat\mF_s^\kappa\hat\mA_{s,\#}^{-1})\otimes(\hat\mF_s^l\hat\mA_{s,\#}^{-1})\right)\left(\vg^b_{t-\kappa}\otimes\vg^b_{t-l}\right)\right]\\
&\times&\left[T^{-1}\sum_{t^*\in \tilde I_s}\sum_{l^*,\kappa^*=0}^{\tilde{t}-1}\left((\hat\mF_s^{\kappa^*}\hat\mA_{s,\#}^{-1})\otimes(\hat\mF_s^{l^*}\hat\mA_{s,\#}^{-1})\right)\left(\vg^b_{t^*-\kappa^*}\otimes\vg^b_{t^*-l^*}\right)\right]\\
&=& T^{-2}\sum_{t,t^*\in \tilde I_s}\sum_{l,\kappa=0}^{\tilde{t}-1}\sum_{l^*,\kappa^*=0}^{\tilde{t}^*-1}((\hat\mF_s^\kappa\hat\mA_{s,\#}^{-1})\otimes(\hat\mF_s^l\hat\mA_{s,\#}^{-1}))\mathcal{G}((\hat\mF_s^{\kappa^*}\hat\mA_{s,\#}^{-1})'\otimes(\hat\mF_s^{l^*}\hat\mA_{s,\#}^{-1})')\\
&=&\sum_{i=1}^{9}\mathcal{O}_i,\\
\mathcal{G}&=&\E^b(\left(\vg^b_{t-\kappa}\otimes\vg^b_{t-l}\right)\left(\vg^b_{t^*-\kappa^*}\otimes\vg^b_{t^*-l^*}\right)^\prime),
\end{eqnarray*}
where $\mathcal{O}_i$ are the terms corresponding to nine cases when $\mathcal{G}\neq \mO_{(n^2p^2)\times(n^2p^2)}$. Case (1) is when  $t-\kappa=t-l, t^*-\kappa^*=t^*-l^*, t-\kappa\neq t^*-\kappa^*$; we show below that $\mathcal{O}_1=\vecc\mathbb{B}_1(\tau_{s-1},\tau_s)(\vecc\mathbb{B}_1(\tau_{s-1},\tau_s))'+\op(1)$. For brevity, the rest of the cases are defined and analyzed in Supplementary Appendix, Section \ref{supsec4}, where we show that
\begin{equation}\label{mathcaloi}
\mathcal O_i = \op(1) \mbox{ for } i=2,\ldots, 9.
\end{equation}

By Assumption \ref{aboot}, $\E^b[(\vnu_t\vnu_t')\otimes (\vnu_{t-l}\vnu_{t-l}')]=[\E^b(\vnu_t\vnu_t')] \otimes [\E^b(\vnu_{t-l}\vnu_{t-l})']=\E^b(\vnu_t\otimes\vnu_t)\E^b(\vnu_{t-l}\otimes\vnu_{t-l})$ since $\E^b(\nu_t^2\nu_{t-l}^2)=\E^b(\nu_t^2)\E^b(\nu_{t-l}^2)=1$, $\E^b(\nu_{t}\nu_{t-l})=0$ and $\E^b(\nu_{t}^2\nu_{t-l})=0$ (these are elements of $\E^b(\vnu_t\vnu_t'\otimes \vnu_{t-l}\vnu_{t-l}')$). Hence, conditional on the data, we have, by Assumption \ref{aboot},
\begin{eqnarray*}
\mathcal{G}&=&\E^b(\left(\vg^b_{t-\kappa}\otimes\vg^b_{t-\kappa}\right)\left(\vg^b_{t^*-\kappa^*}\otimes\vg^b_{t^*-\kappa^*}\right)^\prime)\\
&=&\E^b[((\hat\vg_{t-\kappa}\odot\vnu_{t-\kappa})\otimes (\hat\vg_{t-\kappa}\odot\vnu_{t-\kappa}))[(\hat\vg_{t^*-\kappa^*}\odot\vnu_{t^*-\kappa^*})(\hat\vg_{t^*-\kappa^*}\odot\vnu_{t^*-\kappa^*})]']\\
&=&\E^b[[(\hat\vg_{t-\kappa}\otimes \hat\vg_{t-\kappa})\odot(\vnu_{t-\kappa}\otimes \vnu_{t-\kappa})][(\hat\vg_{t^*-\kappa^*}\otimes \hat\vg_{t^*-\kappa^*})\odot(\vnu_{t^*-\kappa^*}\otimes \vnu_{t^*-\kappa^*})]]\\
&=& [(\hat\vg_{t-\kappa}\otimes \hat\vg_{t-\kappa})\odot\E^b(\vnu_{t-\kappa}\otimes \vnu_{t-\kappa})][(\hat\vg_{t^*-\kappa^*}\otimes \hat\vg_{t^*-\kappa^*})\odot\E^b(\vnu_{t^*-\kappa^*}\otimes \vnu_{t^*-\kappa^*})]\\
&=& (\hat\vg_{t-\kappa}\otimes \hat\vg_{t-\kappa})(\hat\vg_{t^*-\kappa^*}\otimes \hat\vg_{t^*-\kappa^*}),
\end{eqnarray*}
hence
\begin{align*}
\mathcal{O}_{1}&= \left[T^{-1}\sum_{t\in \tilde I_s}\sum_{\kappa=0}^{\tilde{t}-1}\left((\hat\mF_s^\kappa\hat\mA_{s,\#}^{-1})\otimes(\hat\mF_s^\kappa\hat\mA_{s,\#}^{-1})\right)\left(\vg^b_{t-\kappa}\otimes\vg^b_{t-\kappa}\right)\right]\\
&\times\left[T^{-1}\sum_{t^*\in \tilde I_s}\sum_{\kappa^*=0}^{\tilde{t}-1}\left((\hat\mF_s^{\kappa^*}\hat\mA_{s,\#}^{-1})\otimes(\hat\mF_s^{\kappa^*}\hat\mA_{s,\#}^{-1})\right)\left(\vg^b_{t^*-\kappa^*}\otimes\vg^b_{t^*-\kappa^*}\right)\right]'\\
&=\sum_{\kappa,\kappa^*=0}^{\Delta\tau_sT-1}\left((\hat\mF_s^\kappa\hat\mA_{s,\#}^{-1})\otimes(\hat\mF_s^\kappa\hat\mA_{s,\#}^{-1})\right)\left(T^{-1}\sum_{t\in\tilde I_s}(\hat\vg_{t-\kappa}\otimes \hat\vg_{t-\kappa})\right)\\
&\times\left(T^{-1}\sum_{t\in\tilde I_s}(\hat\vg_{t^*-\kappa^*}\otimes \hat\vg_{t^*-\kappa^*})\right)\left((\hat\mF_s^{\kappa^*}\hat\mA_{s,\#}^{-1})\otimes(\hat\mF_s^{\kappa^*}\hat\mA_{s,\#}^{-1})\right)'+\op(1)\\
&=(\vecc\mathbb{B}_1(\tau_{s-1},\tau_s)+\op(1))((\vecc\mathbb{B}_1(\tau_{s-1},\tau_s))'+\op(1))=\vecc\mathbb{B}_1(\tau_{s-1},\tau_s)(\vecc\mathbb{B}_1(\tau_{s-1},\tau_s))'+\op(1),
\end{align*}
where the last two lines follows because $\vg_{t-\kappa}\otimes\vg_{t-\kappa}=\vecc(\vg_{t-\kappa}\vg_{t-\kappa}')$, and
\begin{align*}
T^{-1}\sum_{t\in\tilde I_s}(\hat\vg_{t-\kappa}\otimes \hat\vg_{t-\kappa})=T^{-1}\sum_{t\in\tilde I_s}\vecc(\hat\vg_{t-\kappa}\hat\vg_{t-\kappa}')=\pim\vecc T^{-1}\sum_{t\in\tilde I_s}\vg_{t-\kappa}\vg_{t-\kappa}'+\op(1),
\end{align*}
which follows by standard 2SLS theory and Lemma \ref{lem8}. So, $\mathcal{O}_1=\vecc\mathbb{B}_1(\tau_{s-1},\tau_s)(\vecc\mathbb{B}_1(\tau_{s-1},\tau_s))'+\op(1)$.

Therefore, $\var^b(\vecc\mathcal{B}_{1,1}^b)=\op(1 $, so by Markov's inequality,
\begin{eqnarray*}
\mathcal{B}_{1,1}^b=\mathbb{B}_{1}(\tau_{s-1},\tau_s)+\op^b(1).
\end{eqnarray*}
Next, because $\hat \vmu_s = \vmu_s+\op(1)$, and $\hat \mF_s = \mF_s +\op(1)$, and $ \sum_{l=0}^{\infty} \| \hat \mF_s^l - \mF_s^l\| = \op(1)$ as shown in Supplementary Appendix, Section \ref{supsec3}, $\mathcal{B}_{1,2}^b=\mathcal{B}_{1,2}+\op(1)=\mathbb{B}_2(\tau_{s-1},\tau_s)+\op(1)$, where $\mathbb{B}_2(\tau_{s-1},\tau_s) = \int_{\tau_{s-1}}^{\tau_s} \mathbb Q_1(\tau) \mathbb Q_1'(\tau) d\tau$, and $\mathcal B_{1,2}$ is the sample equivalent of $\mathcal B_{1,2}^b$ (and in general, $\mathcal B_{1,i}, i=1,\ldots,6$ are the sample equivalents of $\mathcal B_{1,i}^b$,  defined in the proof of Lemma \ref{lem2} in Supplementary Appendix, Section \ref{supsec1}). Also, we have $\mathcal{B}_{1,3}^b=\mathcal{B}_{1,3}+\op(1)=\op(1),$ because, as shown in the preliminaries $T^{-\alpha} \vxi_t^b \vxi_t^{b'} =o_p^b(1)$, and $\hat \mF_s^l$ is exponentially decaying with $l$.

Consider $\mathcal{B}_{1,4}^b$.
\begin{eqnarray*}
\mathcal{B}_{1,4}^b&=&T^{-1}\sum_{t\in \tilde I_s} \left(\,\sum_{l=0}^{\tilde{t}-1}\hat\mF_s^l\veta_{t-l}^b\,\right) \left(\,\sum_{l=0}^{\tilde{t}-1}\hat\mF_s^l\hat\vmu_s\right)^\prime=T^{-1}\sum_{t\in \tilde I_s} \left(\,\sum_{l=0}^{\tilde{t}-1}\hat\mF_s^l((\hat\mA_{s,\#}^{-1}\hat\vg_{t-l})\odot\vnu_{t-l})\,\right) \left(\,\sum_{l=0}^{\tilde{t}-1}\hat\mF_s^l\hat\vmu_s\right)^\prime.
\end{eqnarray*}
We show $\mathcal{B}_{1,4}^b=\op^b(1)$. To that end note that
\begin{eqnarray*}
\E^b(\mathcal{B}_{1,4}^b)&=& T^{-1}\sum_{t\in \tilde I_s} \left(\,\sum_{l=0}^{\tilde{t}-1}\mF_s^l\veta_{t-l,2}\,\right) \left(\,\sum_{l=0}^{\tilde{t}-1}\mF_s^l\vmu_s\right)^\prime+\op(1) =\op(1),
\end{eqnarray*}
by similar arguments as for its sample equivalent $\mathcal B_{1,4}$ defined in the proof of Lemma \ref{lem2}.
So, $\E^b(\mathcal{B}_{1,4}^b)=\op(1)$. Moreover, by similar arguments as before, it can be shown that
$\|\vecc\var^b(\mathcal{B}_{1,4}^b)\| =\op(1)$. Hence, by Markov's inequality,  $\mathcal{B}_{1,4}^b=\op^b(1)$. Similarly, because $T^{-\alpha} \vxi_{[\tau_{s-1}T]} = \Op(1)$ for any $\alpha>0$, we can show that $\mathcal{B}_{1,5}^b=\op^b(1)$, and  $\mathcal{B}_{1,6}^b=\op^b(1)$. Putting all the results for $\mathcal{B}_{1,i}^b$ together, $i=1,\ldots,6$ we conclude $\mathcal{B}_{1}^b=\mathbb{B}_1(\tau_{s-1},\tau_s)+\mathbb{B}_2(\tau_{s-1},\tau_s)+\op^b(1)=\mathbb{B}(\tau_{s-1},\tau_s)+\op^b(1) = \mathcal B_1+\op(1)$, where $\mathbb{B}(\tau_{s-1},\tau_s) = \mathbb{B}_1(\tau_{s-1},\tau_s)+\mathbb{B}_2(\tau_{s-1},\tau_s)$, and $\mathcal B_1$ is the sample equivalents of $\mathcal B_1^b$ defined in the proof of Lemma \ref{lem2}.

Because $\mathcal B_3^b = \mathcal B_1^b +\op(1)$, it follows that $\mathcal B_3^b =\mathbb{B}(\tau_{s-1},\tau_s)+\op^b(1)$, where the same result was shown to hold for $\mathcal B_3$ defined in the proof of Lemma \ref{lem2}.
Now consider $\mathcal B_2^b$. Using $\vxi_t^b = \hat \vmu_s + \hat \mF_s \vxi_{t-1}^b + \veta_t^b$, it follows that:
\begin{align*}
\mathcal B_2^b=  \hat \vmu_s \mathcal A_2^{b'} + \hat \mF_s \mathcal B_3^b + T^{-1} \sum_{t\in \tilde I_s} \vxi_{t-1}^b \veta_t^{b'} + o_p^b(1).
\end{align*}
By similar arguments as for some elements of $\mathcal B_1$, it can be shown that  $ T^{-1} \sum_{t\in \tilde I_s} \vxi_{t-1}^b \veta_t^{b'}= o_p^b(1)$. Therefore,
\begin{align*}
\mathcal B_2^b =   \int_{\tau_{s-1}}^{\tau_s} \vmu(\tau) \mathbb Q_1(\tau)d\tau + \int_{\tau_{s-1}}^{\tau_s} \mF(\tau) \mathbb Q_2(\tau)d\tau+ o_p^b(1).
\end{align*}
Therefore, for $I_i=\tilde I_s$,
$$
\hat \mQ^b_{(i)}\;=\;\int_{\tau_{s-1}}^{\tau_s} \mUpsilon^{\prime}(\tau)\,\mathbb Q_{\vz}(\tau) \mUpsilon(\tau)d\tau+\op^b(1) = \mathbb Q_{(i)} +\op(1).
$$
For other regimes, by similar arguments as in the end of the proof of Lemma \ref{lem2},
$$
\hat \mQ^b_{(i)}\;=\;\int_{\lambda_{i-1}}^{\lambda_i} \mUpsilon^{\prime}(\tau)\,\mathbb Q_{\vz}(\tau) \mUpsilon(\tau)d\tau+\op^b(1) = \mathbb Q_{(i)} +\op(1),
$$
concluding the proof.

\end{proof}

\begin{lem}
        \label{lemzuv_WR} $T^{-1/2}\sum_{t\in I_i}\vz_t^b\vg_t^{b'}\mathcal{ S}_{\dag}^b\weaks \widetilde{ \mathbb M}_i$ in probability uniformly in $\vlambda_k$, where $\widetilde{ \mathbb M}_i$ is defined as in Lemma \ref{lem6}, and $\mathcal{S}^b_{\dag}=\mathcal{S}_u$ or $\mathcal{S}_{\dag}^b=(\hat\vbeta_{\vx,(i)})_{\#}$. If $m=0$, then $\mathcal{S}_{\dag}^b=\mathcal{S}_u$ or $\mathcal{S}_{\dag}^b=\hat\vbeta_{\vx,\#}$.
\end{lem}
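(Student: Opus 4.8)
The plan is to establish the bootstrap analogue of Lemma~\ref{lem6} by mirroring its proof in the bootstrap world, invoking the moving-average representation of $\vxi_t^b$ and the bootstrap moment bounds already developed in the proof of Lemma~\ref{lem9}. Write $w_t^b = \vg_t^{b\prime}\mathcal{S}_{\dag}^b$; by the structure of $\mathcal{S}_u$ and $(\hat\vbeta_{\vx,(i)})_{\#}$ this scalar equals $u_t^b = \hat u_t\nu_t$ in the first case and $\vv_t^{b\prime}\hat\vbeta_{\vx,(i)} = (\hat\vv_t^\prime\hat\vbeta_{\vx,(i)})\nu_t$ in the second, so in both cases $w_t^b = \hat w_t\nu_t$ for a data-measurable scalar $\hat w_t$. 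For the WF bootstrap $\vz_t^b = \vz_t$ and the claim reduces directly to Lemma~\ref{lem6} applied to the bootstrap errors, so I would treat the WR case, where $\vz_t^b = \vec(1,\vr_t,\mathcal{S}\vxi_{t-1}^b)$. Decomposing $\vz_t^b$ into its constant, $\vr_t$ and lagged blocks splits $T^{-1/2}\sum_{t\in I_i}\vz_t^b w_t^b$ into three partial sums corresponding, respectively, to the limiting pieces $\mathbb{M}_1$, $\mathbb{M}_2$ and $\mathbb{M}_3$ in the definition preceding Lemma~\ref{lem6} (with $\mathcal{S}_{\dag}=\mathcal{S}_u$ or $\mathcal{S}_{\dag}=\vbeta_{\vx,i,\#}$ according to the two cases).

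Next I would substitute the moving-average expansion $\vxi_{t-1}^b = \hat\mF_s^{t-1-[\tau_{s-1}T]}\vxi_{[\tau_{s-1}T]}^b + (\sum_{l}\hat\mF_s^l)\hat\vmu_s + \sum_{l}\hat\mF_s^l\veta_{t-1-l}^b$ established in the preliminaries of Lemma~\ref{lem9}, with $\veta_r^b = \hat\mA_{s,\#}^{-1}(\hat\vg_r\odot\vnu_r)$. The initial-condition term is negligible because $T^{-\alpha}\vxi_{[\tau_{s-1}T]}^b = \op^b(1)$ and $\hat\mF_s^l$ decays exponentially, while $\hat\mF_s$, $\hat\mA_{s,\#}$, $\hat\vmu_s$ and $\hat\vg_t$ may be replaced by $\mF_s$, $\mA_{s,\#}$, $\vmu_s$ and $\vg_t$ up to $\op(1)$, using $\sum_l\|\mathcal{R}_{s,l}\| = \op(1)$ from \eqref{mathcalR} together with standard 2SLS consistency and Lemma~\ref{lem8}. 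The key observation is that, after these reductions, every summand is conditionally a bootstrap martingale difference: $\vz_t^b w_t^b$ equals $\hat w_t\nu_t$ (through $w_t^b$) times a quantity measurable with respect to $\mathcal{F}_{t-1}^b = \sigma(\nu_{t-1},\nu_{t-2},\ldots)$ and the data, so the deterministic drift in $\mathbb{M}_2$ and $\mathbb{M}_3$ (arising from $\vmu_s$) and the stochastic Brownian parts separate cleanly.

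I would then apply a bootstrap, conditional-on-data martingale invariance principle---the natural bootstrap counterpart of Lemma~\ref{lem3} and \citet{Brown:1971}---to the resulting m.d.s.\ array. This requires two ingredients: a conditional Lindeberg/moment condition, which follows from $\sup_t\E^b|\nu_t|^{4+\delta^*} < \infty$ (Assumption~\ref{aboot}(ii)) together with $\sup_t\E\|\vl_t\|^{4+\delta} < \infty$ (Assumption~\ref{a8}(iii)) and the summability of $\mF_s^l$; and convergence in probability of the conditional covariance (quadratic-variation) process to the deterministic kernel characterising $\widetilde{\mathbb{M}}_i$. Since the increments are monotone in the time index, functional convergence on $[0,1]$ delivers the asserted convergence uniformly in $\vlambda_k\in\mLambda_{\epsilon,k}$ once stochastic equicontinuity (tightness) is checked via the same fourth-moment bounds.

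The main obstacle is the covariance-matching step. Because $\nu_t$ is i.i.d.\ with $\E^b(\nu_t)=0$ and $\E^b(\nu_t^2)=1$, the bootstrap conditional second moments of the cross-products of $\veta_{t-1-l}^b$ with $w_t^b$ factorise, so only the ``diagonal'' lag contributions survive and all cross-lag fourth moments---those governed by $\vrho_{l,\kappa}$ with $l\neq\kappa$, by $\vrho_{0,l}$, and by $\vrho_0$ in the covariance kernel of $\widetilde{\mathbb{M}}_i$---are set to zero under the bootstrap measure. For the bootstrap limit to coincide with the sample limit of Lemma~\ref{lem6}, these very terms must already vanish in the true data-generating process, which is precisely what \ref{A8prime}(i)--(iii) guarantees (the restrictions on leverage and on asymmetric volatility-clustering effects, including the cross-moments between the bootstrapped $\vec(l_{u,t},\vl_{\vv,t})$ components and the non-bootstrapped $\vl_{\vzeta,t}$). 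Verifying that, with \ref{A8prime} in force, the bootstrap conditional covariance converges in probability to the exact kernel of $\widetilde{\mathbb{M}}_i$---handling jointly the three blocks and their cross-covariances---is the delicate computation; the invariance principle and the uniformity in $\vlambda_k$ are then comparatively routine.
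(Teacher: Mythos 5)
Your proposal is correct and follows essentially the same route as the paper's proof: the same three-block decomposition of $\vz_t^b$ into constant, $\vr_t$ and lagged components, the moving-average expansion of $\vxi_{t-1}^b$ with negligible initial condition and replacement of $\hat\mF_s,\hat\mA_{s,\#},\hat\vmu_s,\hat\vg_t$ by their true values via $\sum_l\|\mathcal R_{s,l}\|=\op(1)$, a conditional-on-data m.d.s.\ invariance principle (the paper verifies Lemma~\ref{lem3} and then \citet{Hansen:1992}, Theorem 2.1, for the stochastic integrals), and covariance matching as the crux, with \ref{A8prime}(i)--(iii) entering exactly where the bootstrap zeroes out the cross-lag moments involving $\vrho_l$ and $\vrho_{l,\kappa}$ and the $\vn_t$--$\vl_{\vzeta,t}$ cross terms. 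You correctly single out the covariance-matching step as the delicate part; the paper's execution of it is lengthy but follows the plan you describe.
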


\begin{lem}
        \label{lemzuv_WF} $T^{-1/2}\sum_{t\in I_i}\vz_t\vg_t^{b'}\mathcal{S}^b_{\dag}\weaks \widetilde{ \mathbb M}_i$ in probability uniformly in $\vlambda_k$.
\end{lem}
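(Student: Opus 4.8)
The plan is to reduce the statement to a conditional functional central limit theorem for a sum of (conditionally) independent, mean-zero summands, exploiting the fact that under the WF design the regressor vector $\vz_t$ is held fixed across bootstrap repetitions. First I would make the summand explicit: writing $\vg_t^b=\hat\vg_t\odot\vnu_t$ as in the proof of Lemma \ref{lem9}, and recalling that only the first $p_1+1$ coordinates of $\vnu_t$ carry the multiplier $\nu_t$, the scalar $\vg_t^{b\prime}\mathcal{S}_\dag^b$ collapses to $\nu_t\hat\xi_t$, where $\hat\xi_t=\hat u_t$ when $\mathcal{S}_\dag^b=\mathcal{S}_u$ and $\hat\xi_t=\hat\vv_t^\prime\hat\vbeta_{\vx,(i)}$ when $\mathcal{S}_\dag^b=(\hat\vbeta_{\vx,(i)})_{\#}$. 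Hence it suffices to prove
\[
T^{-1/2}\sum_{t\in I_i}\vz_t\hat\xi_t\nu_t\;\weaks\;\widetilde{\mathbb M}_i,
\]
with $\widetilde{\mathbb M}_i$ the Gaussian limit of Lemma \ref{lem6} (taken with $\mathcal{S}_\dag=\mathcal{S}_u$ in the first case and $\mathcal{S}_\dag=\vbeta_{\vx,i,\#}$ in the second).

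Conditional on the sample, $\{\vz_t\hat\xi_t\nu_t\}_{t\in I_i}$ is an independent array with $\E^b(\vz_t\hat\xi_t\nu_t)=\vzeros$, since by Assumption \ref{aboot} the $\nu_t$ are i.i.d., mean zero, and independent of the data. I would establish the claim by verifying a conditional Lindeberg--Feller functional CLT for the partial-sum process $r\mapsto T^{-1/2}\sum_{t\in I_i,\,t\le[Tr]}\vz_t\hat\xi_t\nu_t$. Two ingredients are needed. The Lindeberg condition follows from Assumption \ref{aboot}(ii) ($\E^b|\nu_t|^{4+\delta^*}<\infty$) together with $T^{-1}\sum_t\|\vz_t\hat\xi_t\|^{2+\delta'}=\Op(1)$, the latter a consequence of the moment bound $\sup_t\E\|\vl_t\|^{4+\delta}<\infty$ in Assumption \ref{a8}(iii) and the stable VAR representation of Assumption \ref{a7}. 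The second ingredient is the conditional covariance: because the $\nu_t$ are independent with unit variance,
\[
\cov^b\!\Bigl(T^{-1/2}\!\!\sum_{t\in I_i,\,t\le[Tr_1]}\!\!\vz_t\hat\xi_t\nu_t,\;T^{-1/2}\!\!\sum_{t\in I_i,\,t\le[Tr_2]}\!\!\vz_t\hat\xi_t\nu_t\Bigr)\;=\;T^{-1}\!\!\sum_{t\in I_i,\,t\le[T\min(r_1,r_2)]}\!\!\vz_t\vz_t^\prime\,\hat\xi_t^2 ,
\]
so the whole problem reduces to the probability limit of an Eicker--White type sum.

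The core step is thus to show $T^{-1}\sum_{t\in I_i,\,t\le[Tr]}\vz_t\vz_t^\prime\hat\xi_t^2$ converges in probability, uniformly in $r$, to the covariance kernel of $\widetilde{\mathbb M}_i$. I would first replace the estimated quantities by their population counterparts, using Lemma \ref{lem7} (the $\Op(T^{-1/2})$ rates for $\hat\mDelta_{(j)}$ and standard 2SLS theory for $\hat\vbeta_{\vx,(i)}$, together with $T(\hat\pi_j-\pi_j^0)=\Op(1)$) and the moment bounds above, obtaining $T^{-1}\sum\vz_t\vz_t^\prime\hat\xi_t^2=T^{-1}\sum\vz_t\vz_t^\prime\xi_t^2+\op(1)$ with $\xi_t=u_t$ or $\xi_t=\vv_t^\prime\vbeta_{\vx,(i)}^0$. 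The remaining average is handled by the uniform law of large numbers of Lemma \ref{lem8}, applied to the fourth-order products of $\vl_t$ and its lags generated through the VAR/MA expansion of $\vz_t$, giving $T^{-1}\sum_{t\le[Tr]}\vz_t\vz_t^\prime\xi_t^2\inp\lm T^{-1}\sum\E(\vz_t\vz_t^\prime\xi_t^2)$ uniformly in $r$. Crucially, because the block-diagonal conditional covariance in Assumption \ref{a8}(ii) makes $u_t$ (resp.\ $\vv_t^\prime\vbeta_{\vx}^0$) conditionally uncorrelated with $\vzeta_t$, the summand $\vz_t\xi_t$ is a martingale difference relative to $\mathcal{F}_{t-1}$, so its limiting variance is exactly this diagonal Eicker--White limit, with no cross-time terms; by construction this coincides with the covariance of the Gaussian limit $\widetilde{\mathbb M}_i$ in Lemma \ref{lem6}. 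Since $\widetilde{\mathbb M}_i$ is Gaussian --- being a Wiener integral of the deterministic volatility functions $\mD(\cdot)$ against the Brownian motions of Lemma \ref{lem5} --- and the WF bootstrap sum is conditionally asymptotically Gaussian with matching covariance, the two laws coincide; the Cram\'er--Wold device yields the multivariate finite-dimensional limits and the uniform covariance convergence provides tightness, giving $\weaks\widetilde{\mathbb M}_i$. Using the same $\nu_t$ for $u_t^b$ and $\vv_t^b$ is what reproduces the contemporaneous cross-covariance between the $\mathcal{S}_u$ and $(\hat\vbeta_{\vx,(i)})_{\#}$ cases, so the two limits are jointly correct.

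Uniformity in $\vlambda_k\in\mLambda_{\epsilon,k}$ follows because the break fractions only fix the endpoints of $I_i$; the Skorokhod convergence of the partial-sum process on $[0,1]$, together with the uniform-in-$r$ convergence of the Eicker--White sum, controls all admissible partitions at once, exactly as at the sample level in Lemma \ref{lem6}. The main obstacle I anticipate is precisely this matching step --- verifying that the conditional Eicker--White covariance of the fixed-design bootstrap sum converges, uniformly in the partition and after the replacement $\hat\xi_t\to\xi_t$, to the covariance kernel of $\widetilde{\mathbb M}_i$ identified for Lemma \ref{lem6}. In contrast to the WR case (Lemma \ref{lemzuv_WR}), none of the delicate recursion bounds of Lemma \ref{lem9} (for instance $T^{-\alpha}\vxi_t^b=\op^b(1)$) are required here, because $\vz_t$ is never regenerated.
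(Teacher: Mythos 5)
Your proposal is correct, but it takes a genuinely different route from the paper. The paper proves this lemma by reduction to the WR case: it observes that with $\vz_t$ held fixed the first two blocks $\mathcal F_1^b,\mathcal F_2^b$ coincide with $\mathcal E_1^b,\mathcal E_2^b$ from Lemma \ref{lemzuv_WR}, and that $\mathcal F_3^b$ is the same underlying random quantity as $\mathcal E_{2,2}^b$ rescaled by $\mathcal S$ in place of $\mathcal S_{\vr}\mF_s$; the limit is then inherited from the lag-by-lag analysis there (MA expansion of $\vxi_{t-1}$, an application of \citet{Hansen:1992} Theorem 2.1 to each $\mathcal L_l^b$, truncation at $n^*=T^{\alpha}$). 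You instead exploit the fixed-design structure head on: since $\vg_t^{b\prime}\mathcal S_\dag^b=\nu_t\hat\xi_t$, the statistic is conditionally a linear form in i.i.d.\ $\nu_t$'s with fixed weights $\vz_t\hat\xi_t$, so a single conditional Lindeberg--Feller functional CLT applies and the entire problem collapses to the uniform convergence of the Eicker--White sum $T^{-1}\sum\vz_t\vz_t^\prime\hat\xi_t^2$ to the variance of $\widetilde{\mathbb M}_i$, with the m.d.s.\ property of $\vz_t\xi_t$ (which does require the block-diagonality in Assumption \ref{a8}(ii) and the zero restrictions on $\mS$, as you note) guaranteeing that this Eicker--White limit is indeed $\var(\widetilde{\mathbb M}_i)$ and that $\widetilde{\mathbb M}_i$, being Gaussian, is determined by it. Your route is shorter, makes it essentially automatic that the WF bootstrap needs none of the extra fourth-moment restrictions in \ref{A8prime} (the conditional covariance is a sample average of observables, so no bootstrap-induced zeroing of cross-moments ever arises), and avoids the recursion bounds of Lemma \ref{lem9} entirely; what it buys the paper to route through the WR machinery is modularity and an explicit accounting of which covariance terms would or would not survive under each design. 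Two small points you should make explicit: (i) identifying $\lim T^{-1}\sum\E(\vz_t\vz_t^\prime\xi_t^2)$ with $\var(\widetilde{\mathbb M}_i)$ uses convergence of second moments alongside the weak convergence of Lemma \ref{lem6}, which needs the uniform $(4+\delta)$-moment bounds for uniform integrability (or, equivalently, a direct check against the explicit $\mV_{\mathbb M_j}$ formulas); and (ii) the convergence of the Eicker--White sum itself still requires the infinite MA expansion of $\vxi_{t-1}$, the summability $\sum_l\|\mF_s^l\|<\infty$, and Lemma \ref{lem8}(iii) applied lag pair by lag pair with a tail truncation --- so the hard analytic work is relocated rather than eliminated, exactly as you anticipate in your closing remark.
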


For the proofs of Lemma \ref{lemzuv_WR}-\ref{lemzuv_WF}, it suffices to consider $\mathcal{S}^b_{\dag}=\mathcal{S}_u$ or $\mathcal{S}_{\dag}^b=\hat\vbeta_{\vx,\#}$, therefore considering $m=0$. If  $\mathcal{S}_{\dag}^b=(\hat\vbeta_{\vx,(i)})_{\#}$, by Lemma \ref{lem7} followed by standard 2SLS theory, $\hat\vbeta_{\vx,(i)}=\vbeta_{\vx,(i)}^0+\Op(T^{-1/2})$ so $\mathcal{S}_{\dag}^b= \mathcal S_{\dag} +\Op(T^{-1/2})$, and the results follow in a similar fashion.

Additionally to the notation already defined at the beginning of the proof of Lemma \ref{lem9}, we use the following results and notation, some relevant for both Lemma \ref{lemzuv_WR} and \ref{lemzuv_WF}.
 Consider the partition $\tilde I_s$, then for the WR bootstrap we have $\tilde\vz_t^b \;=\;\hat\vc_{\tilde{\vz},s}\,+\, \sum_{i=1}^{p} \hat\mC_{i,s} \tilde\vz_{t-i}^b+ \ve_t^b$, and for both WR and WF bootstraps, we have $\ve^b_t=\hat\mA_s^{-1}\vepsi^b_t$. We have for the WR bootstrap: \begin{align}
\vxi_t^b\;&=\;\hat\vmu_s\,+\,\hat\mF_s\vxi^b_{t-1}\,+\,\veta^b_t= \hat\mF^{t-[\tau_{s-1}T]}_s \vxi_{[\tau_{s-1}T]}^b + \sum_{l=0}^{t-[\tau_{s-1}T]-1}\hat\mF_s^l\veta^b_{t-l}+\left(\sum_{l=0}^{t-[\tau_{s-1}T]-1}\hat\mF_s^l\right)\hat\vmu_s\label{58}
\end{align}
except that in \eqref{58} when $s=1$ and we are in the first regime $\tilde I_1=[1,\ldots,[\tau_1 T]]$, we have that $\vxi_0^b=\vxi_0$, where $\vxi^b_t=\vec_{j=0:(p-1)}( \tilde{\vz}_{t-j}^b)$.
Let $\mathcal{F}_{t}^b=\{\nu_{t},\nu_{t-1},\ldots,\nu_1\}$. Recall that, by Assumption \ref{aboot},
\begin{eqnarray}
\E^b(\vnu_t)&=&\E^b(\vnu_t|\mathcal{F}^b_{t-1})=\vec(\vzeros_{p_1+1},\viota_{p_2},\vzeros_{n(p-1)\times 1})=\mathcal{I}\label{enu}\\
\E^b(\vnu_t\vnu_t^\prime)&=&\E^b(\vnu_t\vnu_t^\prime|\mathcal{F}^b_{t-1})=
\left(\diag(\mJ_{p_1+1},\mJ_{p_2})\right)_{\#}=\mathcal{J}.\label{varnu}\\
\E^b(\vnu_t\vnu_{t-j}^\prime)&=&\E^b(\vnu_t\vnu_{t-j}^\prime|\mathcal{F}^b_{t-1})=
(\diag(\vzeros_{p_1+1},\mJ_{p_2}))_{\#}=\mathcal{J}_2.\label{varnu2}
\end{eqnarray}
Furthermore, recall that $\vxi_t=\vmu_s+\mF_s\vxi_{t-1}+\veta_t=\hat\vmu_s+\hat\mF_s\vxi_{t-1}+\hat\veta_t$, and therefore $\hat\veta_t=\veta_t+(\vmu_s-\hat\vmu_s)+(\mF_s-\hat\mF_s)\vxi_{t-1}.
$ By backward substitution of $\vxi_t=\vmu_s+\mF_s\vxi_{t-1}+\veta_t$, we have that: $\vxi_{t-1}=\mF^{\tilde t-1}_s \vxi_{[\tau_{s-1}T]} + \sum_{l=0}^{\tilde t-2}\mF_s^l\veta_{t-l-1}+\left(\sum_{l=0}^{\tilde t-2}\mF_s^l\right)\vmu_s$, where $\tilde t =t- [\tau_{s-1}T]$. We also have $\veta_t^b=\hat\mA_{s,\#}^{-1}\;\vg_t^b=\hat\mA_{s,\#}^{-1}\;(\hat\vg_t\odot\vnu_t)$, where recall that $\odot$ is the element-wise multiplication. Hence:
\begin{eqnarray}
\hat\vg_t&=&\hat\mA_{s,\#}\;\hat\veta_t= \hat\mA_{s,\#}\veta_t+\hat\mA_{s,\#}(\vmu_s-\hat\vmu_s)+\hat\mA_{s,\#}(\mF_s-\hat\mF_s)\vxi_{t-1}\nonumber\\
&=& \hat\mA_{s,\#}\veta_t+\hat\mA_{s,\#}(\vmu_s-\hat\vmu_s)+\hat\mA_{s,\#}(\mF_s-\hat\mF_s)\mF^{\tilde t-1}_s \vxi_{[\tau_{s-1}T]}\nonumber   \\
&+&\hat\mA_{s,\#}(\mF_s-\hat\mF_s)(\sum_{l=0}^{\tilde t-2}\mF_s^l\veta_{t-l-1})+\hat\mA_{s,\#}(\mF_s-\hat\mF_s)\left(\left(\sum_{l=0}^{\tilde t-2}\mF_s^l\right)\vmu_s\right)\label{hatgt}\\\label{gtabc}
\vg_t^b&=& ((\hat\mA_{s,\#}\veta_t)\odot\vnu_t) + ((\hat\mA_{s,\#}(\vmu_s-\hat\vmu_s))\odot\vnu_t)+((\hat\mA_{s,\#}(\mF_s-\hat\mF_s)\vxi_{t-1})\odot\vnu_t)\nonumber\\
&=&\vg_{t,A}^b+\vg_{t,B}^b+\vg_{t,C}^b,\\
\veta_t^b&=& \hat\mA_{s,\#}^{-1}\;((\hat\mA_{s,\#}\veta_t)\odot\vnu_t) + \hat\mA_{s,\#}^{-1}\;((\hat\mA_{s,\#}(\vmu_s-\hat\vmu_s))\odot\vnu_t)+\hat\mA_{s,\#}^{-1}\;((\hat\mA_{s,\#}(\mF_s-\hat\mF_s)\vxi_{t-1})\odot\vnu_t)\nonumber\\
&=&\veta_{t,A}^b+\veta_{t,B}^b+\veta_{t,C}^b,
\end{eqnarray}
where
$\| \hat \mA_{s,\#} \| \leq \| \mA_{s,\#} \| + \| \hat \mA_{s,\#}- \mA_{s,\#} \| = \| \mA_{s,\#} \|+\op(1)$ and hence $\hat\mA_{s,\#}\mA_{s,\#}^{-1}=\mI_{n,\#}+\op(1)$ .
Moreover, $\| \hat \mA_{s,\#}^{-1} \| \leq \| \mA_{s,\#}^{-1} \| + \| \hat \mA_{s,\#}^{-1}- \mA_{s,\#}^{-1} \| = \| \mA_{s,\#}^{-1} \|+\op(1)$.

Finally, for a vector $\vo$, we denote $\vo^{(j_1:j_2)}$ its sub-vector with elements $j_1$ to $j_2$ selected in order, and for a matrix $\mO$, we denote by $\mO^{(j_1:j_2,j_1^*:j_2^*)}$ its sub-matrix consisting of rows $j_1$ to $j_2$ and columns $j_1^*$ to $j_2^*$.

\begin{proof}[Proof of Lemma \ref{lemzuv_WR}]

 As for the proof of Lemma \ref{lem9}, consider the interval $I_i=\tilde I_s$. Let $\mathcal{S}_{\dag}^b=\mathcal{S}_u$ or $\mathcal{S}_{\dag}^b=\hat\vbeta_{\vx,\#}$. We derive the asymptotic distribution of $T^{-1/2} \sum_{t\in \tilde I_s} \vz_t^b\vg_t^{b\prime}\mathcal{S}_{\dag}^b$,
        \begin{align}\label{defineste}
        T^{-1/2} \sum_{t\in \tilde I_s}\vz_t^b\vg_t^{b\prime}\;\mathcal{S}_{\dag}^b
        =  \begin{bmatrix}T^{-1/2} \sum_{t \in \tilde I_s}\vg_t^{b\prime}\;\mathcal{S}_{\dag}^b\\ T^{-1/2} \sum_{t \in \tilde I_s}\mathcal{S}_{\vr}\vxi_{t} \vg_t^{b\prime}\;\mathcal{S}_{\dag}^b\\T^{-1/2} \sum_{t \in \tilde I_s}\mathcal{S}\vxi^b_{t-1} \vg_t^{b\prime}\;\mathcal{S}_{\dag}^b
        \end{bmatrix} \equiv \begin{bmatrix} \mathcal E_1^b \\ \mathcal E_2^b \\ \mathcal E_3^b \end{bmatrix}.
        \end{align}

        \bigskip

        $\bullet$ Consider first $\mathcal E_1^b$. By \eqref{gtabc},
        \begin{eqnarray*}
                \mathcal{E}_1^b&=&T^{-1/2} \sum_{t \in \tilde I_s}\vg^{b\prime}_t\mathcal{S}_{\dag}^b= T^{-1/2} \sum_{t \in I_i}\vg_{t,A}^{b\prime}\;\mathcal{S}_{\dag}^b+T^{-1/2} \sum_{t \in \tilde I_s}\vg_{t,B}^{b\prime}\;\mathcal{S}_{\dag}^b+T^{-1/2} \sum_{t \in \tilde I_s}\vg_{t,C}^{b\prime}\;\mathcal{S}_{\dag}^b\\
&=& T^{-1/2} \sum_{t \in \tilde I_s}\mathcal{S}_{\dag}^{b'}((\hat\mA_{s,\#}\veta_t)\odot\vnu_t)  +T^{-1/2} \sum_{t \in \tilde I_s} \mathcal{S}_{\dag}^{b'}((\hat\mA_{s,\#}(\vmu_s-\hat\vmu_s))\odot\vnu_t)\\
 &+&T^{-1/2} \sum_{t \in \tilde I_s}\mathcal{S}_{\dag}^{b'}((\hat\mA_{s,\#}(\mF_s-\hat\mF_s)\vxi_{t-1})\odot\vnu_t)\\
                &=&\mathcal{E}_{1,1}^b+\mathcal{E}_{1,2}^b+\mathcal{E}_{1,3}^b.
        \end{eqnarray*}
Because $\|\mathcal{S}_{\dag}^b-\mathcal S_{\dag}\| = \op(1)$, $\|\hat \vmu_s - \vmu_s\| =\op(1)$, $\|\hat \mA_s - \mA_s\| =\op(1)$, $\|\hat \mA_s^{-1} - \mA_s^{-1}\| =\op(1)$ and $\sum_{l=0}^{\infty} \| \hat \mF_s^l - \hat \mF_s^l\| =\op(1)$, whenever a $\Op^b(1)$ term is written with the estimated quantities instead of the true one, the difference is $o_p^b(1)$, so asymptotically negligible. Therefore, we proceed in the rest of the proof by replacing the estimated parameters mentioned above with their true values, and denote the remainder by $o_p^b(1)$.

Using these replacements, one can show that $\mathcal E_{1,1}^b = \op^b(1)$ and $\mathcal E_{1,2}^b = \op^b(1)$.
So, $\mathcal{E}_{1}^b = \mathcal{S}_{\dag}' \, T^{-1/2}\sum_{t\in \tilde I_s}((\mA_{s,\#}\veta_t)\odot\vnu_t)+\op^b(1)$. Since $\mA_{s,\#}\veta_t = \mA_{s,\#}\mA_{s,\#}^{-1} \vg_t = \vg_t=\vepsi_{t,\#}$, it follows that
\begin{equation}\label{defe1b}
\mathcal{E}_{1}^b = \mathcal{S}_{\dag}' \, T^{-1/2}\sum_{t\in \tilde I_s}(\vg_t\odot\vnu_t)+\op^b(1).
\end{equation}

$\bullet$ First, let $\mathcal{S}_{\dag} = \mathcal S_u$. Then $\mathcal{E}_{1}^b = T^{-1/2}\sum_{t\in \tilde I_s} u_t \nu_t+ \op^b(1) = T^{-1/2}\sum_{t\in \tilde I_s} d_{u,t} l_{u,t} \nu_t +\op^b(1)= \mathcal E_{1,1}^b+ \op^b(1)$, where recall that $d_{u,t}=d_{1,t}$ and $l_{u,t}$ is the first element of $\vl_t$.

 We now derive the limiting distribution of $\mathcal E_{1,1}^b$, in two parts: in part (i), we show that Lemma \ref{lem3} holds  for $\tilde{\mathcal E}_{1,1}^b = T^{-1/2}\sum_{t=1}^{[Tr]} l_{u,t} \nu_t$, i.e. $\tilde{\mathcal E}_{1,1}^b \stackrel{d_{p}^b}{\Rightarrow} \mB_{0}^{(1)}(r)$ in probability, where $\mB_{0}^{(1)}(r)$ is the first element of $\mB_{0}(r)$ defined just before Lemma \ref{lem6}; in part (ii), we show that the condition of Theorem 2.1 of \citet{Hansen:1992} holds, that is, the bootstrap unconditional variance of $\mathcal E_{1,1}^b$ converges in probability to the unconditional variance of $T^{-1/2}\sum_{t\in \tilde I_s} d_{u,t} l_{u,t} =\mathcal E_1$ (or equivalently, to the variance of $\mB_{0}^{(1)}(r)$). Note that here $\mathcal E_1 = \mathcal{S}_{\dag}' \left( T^{-1/2}\sum_{t\in \tilde I_s} \vg_t\right)$ is the sample equivalent of $\mathcal E_1^b$, defined in the proof of Lemma \ref{lem6} in the Supplementary Appendix, Section \ref{supsec1}.

        \textbf{Part (i). }First, $\nu_t$ is i.i.d, so conditional on the data, $\E^b(l_{u,t} \nu_t |\mathcal F_{t-1}^b) = l_{u,t} \E^b (\nu_t|\mathcal F_{t-1}^b)=0$, so $l_{u,t} \nu_t$ is a m.d.s.  Second, for some $C>0$, $\sup_t \E(\E^b |l_{u,t} \nu_t|^{2+\delta^*})= \sup_t \E|l_{u,t}|^{2+\delta^*} \sup_t \E^b|\nu_t|^{2+\delta^*}<C$ by Assumption \ref{a8}(iii) and Assumption \ref{aboot}(ii), so $\sup_t \E^b |l_{u,t} \nu_t|^{2+\delta^*}<\op(1)+C$. Third, by Lemma \ref{lem8}(iv), \begin{align*}
\E^b(\mathcal E_{1,1}^b)^2 &= T^{-1} \sum_{t=1}^{[Tr]} \E^b(l_{u,t}^2 \nu_t^2)=T^{-1} \sum_{t=1}^{[Tr]} l_{u,t}^2 -r  \inp 0.
\end{align*}
Forth, because $l_{u,t} \nu_t$ is i.i.d conditional on the data, the conditional and unconditional bootstrap second moments are the same, so $\E^b[(\mathcal E_{1,1}^b)^2 |\mathcal F_{t-1}^b]-\E^b(\mathcal E_{1,1}^b)^2 =0$. This shows that $\tilde{\mathcal E}_{1,1}^b = T^{-1/2}\sum_{t=1}^{[Tr]} l_{u,t} \nu_t \stackrel{d_{p}^b}{\Rightarrow} \mB_{0}^{(1)}(r)$ in probability (uniformly in $r$).

\textbf{Part (ii). } By Assumption \ref{a8}(ii),  $\E(d_{u,t}^2 l_{u,t}^2)=d_{u,t}^2$. Therefore, by Lemma \ref{lem8}(iv), uniformly in $r$,
\begin{align*}
\E^b(\mathcal E_{1,1}^b)^2 - \E(\mathcal E_1^2)&= T^{-1} \sum_{t=1}^{[Tr]} [d_{u,t}^2 l_{u,t}^2 -\E (d_{u,t}^2 l_{u,t}^2)] \inp 0.
\end{align*}
Therefore, by Theorem 2.1 in \citet{Hansen:1992}, $T^{-1/2}\sum_{t=1}^{[Tr]} d_{u,t} l_{u,t} \nu_t \stackrel{d_{p}^b}{\Rightarrow} \int_{0}^r d_{u}(\tau) d\mB_{0}^{(1)}(\tau) = \mathbb M_1(\tau_{s-1},\tau_s)$ in probability, where $\mathbb M_1(\tau_{s-1},\tau_s)$ is defined  just before Lemma \ref{lem6}. So for $\mathcal S_{\dag} = S_u$,
$
\mathcal E_1^b \stackrel{d_{p}^b}{\Rightarrow}  \mathbb M_1(\tau_{s-1},\tau_s).
$

Now let $\mathcal S_{\dag} = \vbeta_{x,\#}$ and note that $\mathcal{E}_{1}^b = \vbeta_{\vx}^{0'} T^{-1/2}\sum_{t\in \tilde I_s} \vv_t \nu_t$. Recall that by the decomposition of $\mS$ and a decomposition of $\mD_t$ exactly as $\mD(\tau)$ in \eqref{decompsd}, we have:
\begin{align}\label{decompvepsi}
\vg_t \odot \vnu_t = \vepsi_{t,\#} \odot \vnu_t = ( \mS \mD_t \vl_t)_{\#} \odot \vnu_t = \left[\vec( d_{u,t} l_{u,t} \nu_t \, , \, \vs_{p_1} d_{u,t} l_{u,t} \nu_t + \mS_{p_1} \mD_{\vv,t} \vl_{\vv, t}\nu_t \, , \, \mS_{p_2} \vl_{\vzeta,t})\right]_{\#}.
\end{align}

Therefore, $\mathcal{E}_{1}^b = \vbeta_{\vx}^{0'} T^{-1/2}\sum_{t\in \tilde I_s} \vv_t \nu_t = (\vbeta_{\vx}^{0'}\vs_{p_1}) \left( T^{-1/2}\sum_{t\in \tilde I_s}  d_{u,t} l_{u,t} \nu_t \right) + (\vbeta_{\vx}^{0'}\mS_{p_1})\left( T^{-1/2}\sum_{t\in \tilde I_s} \mD_{\vv,t} \vl_{\vv,t} \nu_t \right)$.

Because $E(\vl_{\vv,t} \vl_{\vv,t}')= \mI_{p_1}$, by similar arguments as for $T^{-1/2}\sum_{t=1}^{[Tr]} d_{u,t} l_{u,t} \nu_t \stackrel{d_{p}^b}{\Rightarrow} \int_{0}^r d_{u}(\tau) d\mB_{0}^{(1)}(\tau)$ in probability, it can be shown that
$(\vbeta_{\vx}^{0'}\mS_{p_1})\left( T^{-1/2}\sum_{t=1}^{[Tr]} \mD_{\vv,t} \vl_{\vv,t} \nu_t \right) \stackrel{d_{p}^b}{\Rightarrow} (\vbeta_{\vx}^{0'}\mS_{p_1}) \int_{0}^r \mD_v(\tau) d\mB_{0}^{(2:p_1+1)}(\tau)$ in probability, where $\mB_{0}^{(2:p_1+1)}(\cdot)$ refers to selecting elements $2:p_1+1$ in order from $\mB_0(\cdot)$.  Moreover, because $u_t\nu_t,\vv_t \nu_t$ share the same $\nu_t$ which is i.i.d and for which $E^b(\nu_t^2)=1$, $ (\vbeta_{\vx}^{0'}\vs_{p_1}) \left( T^{-1/2}\sum_{t= 1}^{[Tr]}  d_{u,t} l_{u,t} \nu_t \right)$ and $(\vbeta_{\vx}^{0'}\mS_{p_1})\left( T^{-1/2}\sum_{t= 1}^{[Tr]}  \mD_{\vv,t} \vl_{\vv,t} \nu_t \right)$  also jointly converge, and their unconditional bootstrap covariance converges to the unconditional covariance of their respective limits. Therefore, also for $\mathcal S_{\dag} = \vbeta_{\vx,\#}$,
\begin{align}\nonumber
\mathcal{E}_{1}^b &\weaks (\vbeta_{\vx}^{0'}\vs_{p_1}) \int_{\tau_{s-1}}^{\tau_s} d_{u}^2(\tau)\rd\mB_{0}^{(1)}(\tau) + (\vbeta_{\vx}^{0'}\mS_{p_1}) \int_{\tau_{s-1}}^{\tau_s} \mD_{\vv}(\tau) \rd\mB_{0}^{(2:p_1+1)}(\tau)\\ \label{distre1b}
& = (\mathcal S_{\dag}' \mS_{\#}) \int_{\tau_{s-1}}^{\tau_s} \mD(\tau) \rd \, \mB_{0,\#}(\tau) = \mathbb M_1(\tau_{s-1},\tau_s),
\end{align}
 with variance matrix $\mV_{\mathbb{M}_1(\tau_{s-1},\tau_s)}$ given in the Supplementary Appendix, Section \ref{supsec1}, Proof of Lemma \ref{lem6}.

        \bigskip
        $\bullet$ Next, consider $\mathcal{E}_3^b$. From \eqref{58} we have that: $\vxi_{t-1}^b=\hat\mF^{\tilde t-1}_s \vxi_{[\tau_{s-1}T]}^b + \sum_{l=0}^{\tilde t-2}\hat\mF_s^l\veta_{t-l-1}^b+\left(\sum_{l=0}^{\tilde t-2}\hat\mF_s^l\right)\hat\vmu_s$. Then, replacing estimated parameters with the true ones and denoting the remainder by $\op^b(1)$ for reasons discussed earlier,
  \begin{align} \nonumber
        \mathcal{E}^b_3&=T^{-1/2} \sum_{t \in \tilde I_s}\mathcal{S}\vxi_{t-1}^b\vg_t^{b\prime}\;\mathcal{S}_{\dag}
        \\ \nonumber
        &=T^{-1/2}(\mathcal{S}_{\dag}^{\prime}\vg_{[\tau_{s-1}T]+1}^b)(\mathcal{S} \vxi_{[\tau_{s-1}T]}^b)  + T^{-1/2} \sum_{t \in \tilde I_s^{-}}(\mathcal{S}_{\dag}^{\prime}\vg_{t}^b)\left[\mathcal{S}\mF^{\tilde t-1}_s \vxi_{[\tau_{s-1}T]}^b\right] \\\nonumber
        &+T^{-1/2} \sum_{t \in \tilde I_s^{-}}(\mathcal{S}_{\dag}^{\prime}\vg_{t}^b)\left[ \mathcal{S}\left(\sum_{l=0}^{\tilde t-2}\mF_s^l\right)\vmu_s\right]\\ \label{mathcale3b}
        &+ T^{-1/2} \sum_{t \in \tilde I_s^{-}} (\mathcal{S}_{\dag}^{\prime}\vg_{t}^b)\left[ \mathcal{S}\sum_{l=0}^{\tilde t-2}\mF_s^l \veta_{t-l-1}^b\right] +\op^b(1)=\mathcal{ E}^b_{3,1}+\mathcal{E}^b_{3,2}+\mathcal{E}^b_{3,3}+\mathcal{ E}^b_{3,4}+\op^b(1).
        \end{align}
First, note that by \eqref{rxib}, $\|\vxi_{[\tau_{s-1}T]}^b \|=\Op^b(T^{-\alpha})$, and also that $\|\vg_{[\tau_{s-1}T]+1}^b \|=  \Op^b(T^{-\alpha})$, for any $\alpha>0$. Therefore,  $\mathcal{E}_{3,1}^b=\op^b(1)$. For the same reason and by the fact that $\| \mF_s^l\|$ is exponentially decaying with $l$,
\begin{align*}
\|\mathcal{E}_{3,2}^b \| \leq  \| T^{-1/2} \sum_{t \in \tilde I_s^{-}}\mathcal{S}_{\dag}^{\prime}\vg_{t}^b\| \,\, \left(\| \mathcal{S}\|\,  \sup_l \| \mF^{l}_s \|\,\, \ \| \vxi_{[\tau_{s-1}T]}^b\| \right) = \| \mathcal E_{1}^{b} \| \left(\, \| \mathcal{S}\| \, \sup_l  \| \mF^{l}_s \|\,\right) \op^b(1) = \op^b(1).
\end{align*}
Next, note that by similar derivations as for \eqref{nstar1} in Supplementary Appendix, Section \ref{supsec1}, and artificially setting $\vg_{t-l}= 0, \vnu_{t-l}=0$ for all $t<l$ (as in \citet{Boswijketal:2016}) we have, for $\tilde n =[\tau_{s-1}T]+2$, and $\tilde I_s^- = [[\tau_{s-1}T]+2, [\tau_s T]]$, \begin{align}\nonumber
\mathcal{E}^b_{3,4}& = \sum_{l=1}^{\Delta \tau_s T -2}  \mathcal{S} \mF_s^l \, \left(T^{-1/2}\sum_{t\in \tilde I_s^-}   (\mathcal{S}_{\dag}^{\prime}\vg_t^b)  \veta_{t-l-1}^b \right)- T^{-1/2} \sum_{l=1}^{\Delta \tau_s T -2}\mathcal{S} \mF_s^l \, \sum_{j=0}^{l-1}\, (\mathcal{S}_{\dag}^{\prime} \vg_{\tilde n+j}^b ) \veta_{\tilde n+j-(\black{l+1})}^b\\\label{shortcuts}
&\equiv\mathcal{\tilde E}^b_{3,4}(\Delta \tau_s T-2)-\mathcal{L}.
\end{align}

We now show that $\mathcal L=\op^b(1)$. Let $\mathcal S_{\dag}'\vg_t = u_t$. Then,
 \begin{align}\nonumber
&(\mathcal S_{\dag}'\vg_t \nu_t) ( \vg_{t-l} \odot \vnu_{t-l})( \vg_{t-l} \odot \vnu_{t-l})'(\mathcal S_{\dag}'\vg_t \nu_t)  =\nu_t^2 u_t^2 \begin{bmatrix} u_{t-l}^2 \nu_{t-l}^2 & u_{t-l} \vv_{t-l}' \nu_{t-l}^2 & u_{t-l} \vzeta_{t-l}' \nu_{t-l} \\
(u_{t-l} \vv_{t-l}' \nu_{t-l}^2)' & \vv_{t-l} \vv_{t-l}' \nu_{t-l}^2 & \vv_{t-l} \vzeta_{t-l}' \nu_{t-l} \\\label{derivcross}
(u_{t-l} \vzeta_{t-l}' \nu_{t-l})' & (\vv_{t-l} \vzeta_{t-l}' \nu_{t-l})' & \vzeta_{t-l} \vzeta_{t-l}'
  \end{bmatrix}_{\#}\\
  & =
\begin{bmatrix} u_t^2 u_{t-l}^2 & u_t^2 u_{t-l} u_t^2 \vv_{t-l}'  & u_t^2 u_{t-l} \vzeta_{t-l}'  \\
u_t^2 (u_{t-l} \vv_{t-l}' )' & u_t^2 \vv_{t-l} \vv_{t-l}'  & u_t^2 \vv_{t-l} \vzeta_{t-l}'  \\
u_t^2 (u_{t-l} \vzeta_{t-l}')'  & (u_t^2 \vv_{t-l} \vzeta_{t-l}')' & u_t^2  \vzeta_{t-l} \vzeta_{t-l}'
  \end{bmatrix}_{\#} \odot  \begin{bmatrix} \nu_t^2 \nu_{t-l}^2 & (\nu_t^2 \nu_{t-l}^2) \viota_{p_1}' &(\nu_t^2  \nu_{t-l})\viota_{p_2}' \\
(\nu_t^2 \nu_{t-l}^2) \viota_{p_1} & (\nu_t^2  \nu_{t-l}^2) \mJ_{p_1} & (\nu_t^2 \nu_{t-l}) \viota_{p_1}\viota_{p_2}'\\
(\nu_t^2 \nu_{t-l}) \viota_{p_2} & (\nu_t^2  \nu_{t-l}) \viota_{p_2}\viota_{p_1}' & \nu_t^2 \mJ_{p_2}
  \end{bmatrix}_{\#}
 \end{align}
Therefore, for $l\geq 1$,
\begin{align}\label{ebforl}
\sup_t \E(\E^b[(\mathcal S_{\dag}'\vg_t \nu_t )( \vg_{t-l} \odot \vnu_{t-l})( \vg_{t-l} \odot \vnu_{t-l})'(\mathcal S_{\dag}'\vg_t \nu_t)]) = \sup_t \E( u_t^2 ((\vg_{t-l} \vg_{t-l}') \odot \mathcal J)).
\end{align}
By Assumption \ref{a8}, the non-zero elements of $\E( u_t^2 \otimes  ((\vg_{t-l} \vg_{t-l}')) \odot \mathcal J$, do not depend on $t$, and are elements of linear functions $\vrho_{0,l}$, so they are uniformly bounded in $l$,
Therefore, for element $\mathcal L^{(a,b)}$ of the matrix $\mathcal L$, and constants $c>0,c_1>0$,
\begin{align*}
&\sup_t \E (\E^b| \mathcal L^{(a,b)}|)\\
& \leq T^{-1/2} \sum_{l=1}^{\Delta \tau_s T -2}| (\mathcal S \mF_s^l \mA_s^{-1})^{(a,b)}\, | \,\,\sum_{j=0}^{l-1}\, \sup_t \E \E^b|\{(\mathcal{S}_{\dag}^{b\prime} \vg_{\tilde n+j} ) \nu_{\tilde n+j}\, [\vg_{\tilde n+j-(\black{l+1})}) \odot \vnu_{\tilde n+j-(\black{l+1})} ]\}^{(a,b)}| \\
& \leq T^{-1/2} \sum_{l=1}^{\Delta \tau_s T -2}\| \mA_s^{-1}\|\,\, \|\mathcal S\|\,\, \| \mF_s^l \| \, \, \sum_{j=0}^{l-1} c \leq c_1 T^{-1/2} \sum_{l=0}^{\infty} l \| \mF_s^l \| \rightarrow 0.
\end{align*}
Therefore, $\mathcal L=\op^b(1)$ for $\mathcal S_{\dag}=\mathcal S_u$, and by similar arguments, $\mathcal L=\op^b(1)$ for $\mathcal S_{\dag}=\vbeta_{\vx,\#}$.

Next, we analyze $\tilde {\mathcal E}_{3,4}^b(\Delta_{\tau_s}T-2)$. To that end, let for now $\mathcal S_{\dag}=\mathcal S_u$, and note that a crucial term in  $\tilde {\mathcal E}_{3,4}^b(\Delta_{\tau_s}T-2)$ is $
\mathcal L_1^b(l)=T^{-1/2} \sum_{t=1}^{[Tr]} u_t \nu_t (\vg_{t-l} \odot \vnu_{t-l})
$
for $l\geq 1$.
By the structure of $\mS$ and $\mD_t$ in \eqref{decompsd}, letting $\tilde \vnu_t = \vec(\nu_t \viota_{p_1+1},\viota_{p_2+1})$,
\begin{align*}
\vg_{t-l} \odot \vnu_{t-l} = \begin{bmatrix} d_{u,t-l} l_{u,t-l} \nu_{t-l} \\
\vs_{p_1} d_{u,t} l_{u,t-l} \nu_{t-l} + \mS_{p_1} \mD_{\vv,t} \vl_{\vv,t-l} \nu_{t-l} \\
\mS_{p_2} \mD_{\vzeta,t-l} \vl_{\vzeta,t-l} \end{bmatrix}_{\#} = \mS_{\#} \mD_{t-l,\#}  (\vl_{t-l} \odot \tilde \vnu_{t-l})_{\#}.
\end{align*}
Then, letting $\mathcal E_{t,l} =l_{u,t}\vl_{t}$ and $\mathcal E_{t,l}^b =  l_{u,t}  \vl_{t-l} \odot \vec(\nu_t\nu_{t-l} \viota_{p_1+1}, \nu_t\viota_{p_2})$.
\begin{align}\label{l1bl}
\mathcal L_1^b(l) = T^{-1/2} \sum_{t=1}^{[Tr]} u_t \nu_t (\vg_{t-l} \odot \vnu_{t-l}) = T^{-1/2} \sum_{t=1}^{[Tr]} ( d_{ut} \mS_{\#} \mD_{t-l,\#}) (\mathcal E_{t,l}^b)_{\#}.
\end{align}

We now proceed as for $\mathcal E_1^b$, in two parts: in part (i), we derive the limiting distribution of $\mB_{l,T,A}^b(r)= T^{-1/2} \sum_{t=1}^{[Tr]}  \mathcal E_{t,l}^b$ and its equivalent for $\mathcal S_{\dag} =\vbeta_{\vx,\#}$  for each $l$, by verifying Lemma \ref{lem3} (we verify this for both definitions of $\mathcal S_{\dag}$ and therefore replace $ \mathcal E_{t,l}^b$ with the appropriate quantities when $\mathcal S_{\dag} = \vbeta_{\vx,\#}$); in part (ii), we derive the limiting distribution of $\tilde {\mathcal E}_{3,4}^b(n^*)$ using Theorem 2.1 in \citet{Hansen:1992} for fixed $n^*$. Then we take the limit as $n^*\rightarrow \infty$.

\textbf{Part (i). }Let $\mathcal S_{\dag}=\mathcal S_u$. First, we apply Lemma \ref{lem3} to $\mB_{l,T,A}^b(r) = T^{-1/2} \sum_{t=1}^{[Tr]} \mathcal E_{t,l}^b$, for $l\geq 1$, where note that even though $\mV_{\mB_{l,T,A}^b(r)}=\plim_{T\rightarrow \infty} \var^b (\mB_{l,T,A}^b(r))$ does not converge to $r \mI_n$ as one condition in Lemma \ref{lem3} requires, it is symmetric and positive semi-definite, so by a decomposition of $\mV_{\mB_{l,T,A}^b} = \mE^{1/2}  \mE^{1/2'}$,  $ \mE^{-1/2}\mB_{l,T,A}^b(r)$ converges to a process whose limiting variance is $r$ times the identity matrix, where $\mE^{-1}$ is the generalized inverse. Therefore, in the rest of the analysis, we no longer need to verify this condition, except for deriving the limit of the unconditional bootstrap variance, and proceed to verify the rest of the conditions.
First, $\E^b(\mathcal E_{t,l}^b) = \mathcal E_{t,l} \odot \vec(\E^b (\nu_t\nu_{t-l}|\mathcal F_{t-1}^b) \viota_{p_1+1}, \E^b (\nu_t|\mathcal F_{t-1}^b)\viota_{p_2})=\vzeros_{n}$, so $\mathcal E_{t,l}^b$ is a m.d.s. Second, for $\phi_t^b$ denoting a typical element of $\mathcal E_{t,l}^b$, and $\phi_t$ denoting the corresponding element of $\mathcal E_{t,l}$, we have that $\sup_t \E(\E^b| \phi_t^b |^{2+\delta^*}) = \sup_t \E (|\phi_t|^{2+\delta^*}\sup_t \E^b |\nu_t \nu_{t-l}|^{2+\delta^*} <\infty$ by Assumptions \ref{a8}(iii) and Assumption \ref{aboot}(ii) for the first $p_1+1$ elements of $\mathcal E_{t,l}^b$, or we have that $\sup_t \E(\E^b| \phi_t^b |^{2+\delta^*}) = \sup_t \E|\phi_t|^{2+\delta^*} \sup_t \E^b |\nu_t |^{2+\delta^*} <\infty$ by the same assumptions, for the rest of the elements.

Third, to facilitate showing that $\var^b(\mB_{l,T,A}^b(r)|\mathcal F_{t-1}^b) - \var^b(\mB_{l,T,A}^b(r)) \inp 0$, note that, from \eqref{ebforl}, $\var^b(\mathcal E_{t,l}^b)=(\mathcal E_{t,l}\mathcal E_{t,l}') \odot \diag(\mJ_1,\mJ_2) =\var^b(\mathcal E_{t,l}^b)$, where recall that $\mJ_1, \mJ_2$ are matrices of ones of dimension $(p_1+1)\times (p_1+1)$, respectively $p_2 \times p_2$. Therefore, by \ref{A8prime}(iii) and Lemma \ref{lem8},
\begin{align}\nonumber
&\var^b(\mB_{l,T,A}^b(r)) = T^{-1} \sum_{t=1}^{[Tr]} (l_{u,t}^2 \vl_{t-l} \vl_{t-l}' ) \odot  \diag(\mJ_1,\mJ_2) \\
&= T^{-1} \sum_{t=1}^{[Tr]} E^b\begin{bmatrix} l_{u,t}^2 l_{u,t-l}^2 & l_{u,t}^2 l_{u,t-l} \vl_{\vv,t-l}' & \vzeros_{1\times p_2} \\
l_{u,t}^2 l_{u,t-l} \vl_{\vv,t-l} & l_{u,t}^2 \vl_{\vv,t-l} \vl_{\vv,t-l}' &\vzeros_{p_1 \times p_2} \\ \label{refeq0}
\vzeros_{p_2} & \vzeros_{p_2 \times p_1} & l_{u,t}^2 \vl_{\vzeta,t-l}\vl_{\vzeta,t-l}' \end{bmatrix} \inp \var(\mB_l(r)^{(n+1:2n)}) = \vrho_{i,i}^{(1:n,1:n)},
\end{align}
where $\mB_l(r)$ was defined just before Lemma \ref{lem5}, $ \mB_l(r)^{(1:n)}$ is the vector stacking elements $1:n$ of $\mB_l(r)$ in order, and $\vrho_{i,i}^{(1:n,1:n)}$ is the left upper $n\times n$ block of $\vrho_{i,i}$.

Note that so far, all the proofs went through using Assumptions \ref{a1}-\ref{aboot}. In the last equation above, the need for  \ref{A8prime}(iii), which imposes the block diagonal structure for $\rho_{i,i}^{(1:n,1:n)}$ (equivalently, it imposes $\E[ l_{u,t}^2 \vn_{t-l} \vl_{\vzeta,t-l}']=\vzeros_{p_2 \times p_2}$  for $l\geq 1$ and $\vn_t =\vec(l_{u,t},\vl_{\vv,t})$).

To verify the last condition in Lemma \ref{lem3}, because $\nu_{t-l},\nu_{t-l}^2$ is i.i.d. and $\sup_t \E^b|\nu_{t}|^{4+\delta^*}<\infty$, by Lemma \ref{lem1} and Lemma \ref{lem8}(iv), (proceeding element-wise, first conditional on the sample, then unconditionally),
\begin{align*}
&\var^b(\mB_{l,T,A}^b(r)| \mathcal F_{t-1}^b) -\var^b(\mB_{l,T,A}^b(r)) \\
&= T^{-1} \sum_{t=1}^{[Tr]} (l_{u,t}^2 \vl_{t-l} \vl_{t-l}' ) \odot  \begin{bmatrix}  (\nu_{t-l}^2-1) & (\nu_{t-l}^2-1) \viota_{p_1}' &(\nu_{t-l})\viota_{p_2}' \\
(\nu_{t-l}^2-1) \viota_{p_1} & (\nu_{t-l}^2-1) \mJ_{p_1} & (\nu_{t-l}) \viota_{p_1}\viota_{p_2}'\\
(\nu_{t-l}) \viota_{p_2} & (\nu_{t-l}) \viota_{p_2}\viota_{p_1}' & \vzeros_{p_2\times p_2}
  \end{bmatrix} =\op^b(1).
\end{align*}
Therefore,
\begin{equation}\label{blb}
\mB_{l,T,A}^b(r) = T^{-1/2} \sum_{t=1}^{[Tr]} \mathcal E_{t,l}^b \stackrel{d_p^b}{\Rightarrow} \mB_{l}^{(1:n)}(r),
\end{equation}
where $\mB_l$ was defined just before Lemma \ref{lem6}.\\
Now let $\mathcal S_{\dag} = \vbeta_{\vx_,\#}$. Then:
\begin{align} \nonumber
\mathcal L_2^b(l)&\equiv  T^{-1/2} \sum_{t=1}^{[Tr]} \vbeta_{\vx}'\vv_t \nu_t (\vg_{t-l} \odot \vnu_{t-l}) =(\vbeta_{\vx}' \vs_{p_1}) T^{-1/2} \sum_{t=1}^{[Tr]}  d_{ut} \,  [ l_{u,t} \nu_t  (\vg_{t-l} \odot \vnu_{t-l}) ]\,  \\ \nonumber
 & +  T^{-1/2} \sum_{t=1}^{[Tr]} \vbeta_{\vx}' \mS_{p_1} \mD_{\vv,t} \vl_{\vv,t} \nu_t (\vg_{t-l} \odot \vnu_{t-l}) ] \\ \nonumber
& = (\vbeta_{\vx}' \vs_{p_1}) \mathcal L_1^b(l) + (\vbeta_{\vx}' \otimes \mI_{n,\#}) (\mS_{p_1} \otimes \mS_{\#}) T^{-1/2} \sum_{t=1}^{[Tr]} (\mD_{\vv} \otimes \mD_{t-l,\#})  ( \vl_{\vv,t} \nu_t) \otimes (\vl_{t-l,\#} \odot \vnu_{t-l}) \\ \label{l2bl}
&= \mathcal L_{2,1}^b + \mathcal L_{2,2}^b.
\end{align}
The distribution $\mathcal L_{2,1}^b$  follows from \eqref{blb} and part (ii) below. Following similar steps as for $\mathcal L_1^b(l)$ above (where $\mathcal S_{\dag}=\mathcal S_u$), it can be shown that under Assumptions \ref{a1}-\ref{aboot} and \ref{A8prime} (iii), which imposes $\E[ b_t \vn_{t-l} \vl_{\vzeta,t-l}']=\vzeros_{p_1 \times p_2}$ for $b_t$ being any element of $\vl_{\vv,t}\vl_{\vv,t}'$, we have:
\begin{align}\label{blb2}
\mB_{l,T,B}^b(r) = T^{-1/2} \sum_{t=1}^{[Tr]}   ( \vl_{\vv,t} \nu_t) \otimes (\vl_{t-l} \odot \tilde \vnu_{t-l}) \stackrel{d_p^b}{\Rightarrow} \mB_{l}^{(n+1:n(p_1+1))}(r). \end{align}
To facilitate presentation, define for a matrix $\mO$ whose rows and columns are multiples of $n$, the operation
\begin{equation}\label{defblock}
\textbf{block}_{\kappa,\kappa^*}(\mO) = \mO^{(n(\kappa-1)+1: n \kappa, n(\kappa^*-1)+1:n\kappa^*)},
\end{equation} that is, the operation that selects the $(\kappa,\kappa^*)$ $n\times n$ sub-matrix of the matrix $\mO$. Also, for a $\bar{n} \times \bar{n}$ square matrix $\mO_1$, define the operation that makes $\mO_1$ block diagonal at row $j$ as follows:
\begin{equation}
\textbf{blockdiag}_{j} (\mO_1) = \diag( \mO_1^{(1:j, 1:j)}, \mO_1^{(j+1:\bar{n}, j+1:\bar{n})}).
\end{equation}

Then, by similar arguments to $\mathcal S_{\dag} = \mathcal S_u$,
\begin{align}\label{blab}
\vec(\mB_{l,T,A}^b(r), \mB_{l,T,B}^b(r)) \stackrel{d_p^b}{\Rightarrow} \mB_l^{(1:n(p_1+1))}(r),
\end{align}
where the relevant bootstrap condition we have to show, by analogy to $\mathcal S_{\dag}=\mathcal S_u$, is that $\E^b(\mB_{l,T,A}^b(r_1) \mB_{l,T,B}^b(r_2)') - \min(r_1,r_2) \E(\mB_l^{(1:n)}(r_1) (\mB_l^{(n+1:n(p_1+1))}(r_2))^{\prime})=\op(1)$, a condition proven below for $r_1=r_2=r$, because when $r_1\neq r_2$, the proof follows in a similar fashion. Letting $\mB_{l,T}^* = \vec(\mB_{l,T,A}^b(r_1),\mB_{l,T,B}^b(r_1))$,  note that
\begin{equation*}
\E^b(\mB_{l,T,A}^b(r) \mB_{l,T,B}^{b'}(r))= [\textbf{block}_{1,2}(\var^b(\mB_{l,T}^*)), \textbf{block}_{1, 3}(\var^b(\mB_{l,T}^*)), \ldots, \textbf{block}_{1, p_1+1}(\var^b(\mB_{l,T}^*)) ],
\end{equation*}
so we proceed with each block $2,\ldots, p_1+1$, and let $b_t =l_{u,t} l_{\vv_\kappa,t}$ for $\kappa = 1,\ldots, p_1$. Then, noting that $l_{u,t} l_{\vv_{\kappa},t}\nu_t (\vl_{t-l} \odot \tilde \vnu_{t-l})$ is a m.d.s. with respect to $\mathcal F_{t-1}^b$ for $l\geq 1$,  we have:
\begin{align*}
&\textbf{block}_{1,\kappa+1} \var^b(B_{l,T}^*) = T^{-1} \sum_{t=1}^{[Tr]} \E^b [b_t^2 \nu_t^2 (\vl_{t-l} \odot \tilde{\vnu}_{t-l}) (\vl_{t-l} \odot \tilde\vnu_{t-l})'] \\
&= T^{-1} \sum_{t=1}^{[Tr]} \textbf{blockdiag}_{p_1+1} ( b_t \vl_{t-l} \vl_{t-l}') \stackrel{p^b}{\rightarrow} \textbf{blockdiag}_{p_1+1} (\textbf{block}_{1,\kappa+1}(\vrho_{l,l})) =\textbf{block}_{1,\kappa+1}(\vrho_{l,l})),
\end{align*}
where the last line follows by replacing $l_{u,t}$ by $b_t$ in \eqref{refeq0}, by Lemma \ref{lem8}(iv) and by \ref{A8prime}(iii).

\textbf{Part (ii).} First, let $\mathcal S_{\dag} = \mathcal S_u$, and recall that, from \eqref{l1bl}, we need the distribution of \begin{align*}
\mathcal L_1^b(l) = T^{-1/2} \sum_{t=1}^{[Tr]} u_t \nu_t (\vg_{t-l} \odot \vnu_{t-l}) = T^{-1/2} \sum_{t=1}^{[Tr]} ( d_{u,t} \mS_{\#} \mD_{t-l,\#}) (\mathcal E_{t,l}^b)_{\#}.
\end{align*}
By \citet{Hansen:1992}, Theorem 2.1, because $\| d_{u,t} \mS_{\#} \mD_{t-l,\#}\|$ is bounded by Assumption \ref{a8}(ii), and $\mD(\tau-\frac{l}{T}) = \mD_{t-l}$ when $\tau \in [\frac{t}{T},\frac{t+1}{T})$, we have:
\begin{eqnarray} \nonumber
\mathcal L_1^b(l) &= \int_{0}^r d_u(\tau) \mS_{\#} \mD (\tau-\frac{l}{T}) d\mB_{l,T,A,\#}(\tau) \stackrel{d_p^b}{\Rightarrow} \int_{0}^r d_u(\tau) \mS_{\#} \mD_{\#}(\tau) \rd\mB_{l,\#}^{(1:n)}(\tau)\\ \label{distribl1}
&=((\mathcal{S}_{\dag}'\mS_{\#})\otimes \mS_{\#}) \int_{0}^{r}  \left(\mD_{\#}(\tau)\otimes\mD_{\#}\left(\tau\right)\right)\rd\mB_{l,\#}(\tau).
\end{eqnarray}
where the convergence follows because $\var^b(\mathcal L_1^b(l))- \mathcal L_1(l) \stackrel{p^b}{\rightarrow} 0$, which can be shown by similar arguments to \eqref{refeq0}, and using Lemma \ref{lem8}(iii) instead of \ref{lem8}(iv). Similarly, for $\mathcal S_{\dag} = \vbeta_{\vx,\#}$, we have:
\begin{eqnarray} \nonumber
\mathcal L_2^b(l) &=& (\vbeta_{\vx}' \vs_{p_1}) \mathcal L_1^b(l) + (\vbeta_{\vx}' \otimes \mI_{n,\#}) (\mS_{p_1} \otimes \mS_{\#}) T^{-1/2} \sum_{t=1}^{[Tr]} (\mD_{\vv,t} \otimes \mD_{t-l,\#})  ( \vl_{\vv,t} \nu_t) \otimes (\vl_{t-l,\#} \odot \vnu_{t-l})\\ \nonumber
&\stackrel{d_p^b}{\Rightarrow}& (\vbeta_{\vx}' \vs_{p_1}) \int_{0}^r d_u(\tau) \mS_{\#} \mD_{\tau,\#} \rd\mB_{l,\#}^{(1:n)}(\tau) + (\vbeta_{\vx}' \mS_{p_1} \otimes \mS_{\#}) \int_{0}^r [(\mD_{\vv}(\tau) \otimes \mD_{\#}(\tau)] \rd\mB_{l,\#}^{(n+1:n(p_1+1))}(\tau) \\ \label{distribl2}
& =& [(\mathcal{S}_{\dag}'\mS_{\#})\otimes \mS_{\#}] \int_{0}^{r}  \left(\mD_{\#}(\tau)\otimes\mD_{\#}\left(\tau\right)\right)\rd\mB_{l,\#}(\tau).
\end{eqnarray}

Next, we derive the distribution of $\mathcal E_{3,4}(n^*)$. This follows by similar arguments as above if we can verify that the off-diagonal elements of the bootstrap covariance $\cov^b(\mB_{l,T}^*(r), \mB_{l^*,T}^*(r))$ converge in probability to the counterpart elements of the covariance $\cov(\mB_l^{(1:n(p_1+1))}(r),\mB_{l^*}^{(1:n(p_1+1))}(r))$ for $l\neq l^*$. We only do so for $\textbf{block}_{1,1}(\cov^b(\mB_{l,T}^*(r), \mB_{l^*,T}^*(r))$; the rest follows by similar reasoning.
\begin{align*}
&\textbf{block}_{1,1}(\cov^b(\mB_{l,T}^*(r), \mB_{l^*,T}^*(r)) = T^{-1} \sum_{t=1}^{[Tr]} l_{u,t}^2 (\vl_{t-l} \vl_{t-l^*}') \odot \E^b ( \nu_t^2 \tilde \vnu_{t-l} \tilde \vnu_{t-l^*}') \\
&+ T^{-1} \sum_{t,t^*=1, t\neq t^*}^{[Tr]} l_{u,t} l_{u,t^*} (\vl_{t-l} \vl_{t^*-l^*}') \odot \E^b ( \nu_t \nu_{t^*} \tilde \vnu_{t-l} \tilde \vnu_{t^*-l^*}')  = T^{-1} \sum_{t=1}^{[Tr]} l_{u,t}^2 (\vl_{t-l} \vl_{t-l^*}') \odot \diag(\vzeros_{p_1+1,p_1+1}, \mJ_2) \\
&\inp \textbf{block}_{1,1}(\vrho_{l,l^*})=\textbf{block}_{1,1}(\cov(\mB_l(r),\mB_{l^*}(r))),
\end{align*}
because of \ref{A8prime}(ii) we imposed that  $\E[l_{u,t}^2 \vn_{t-l} \vn_{t-l^*}] =\vzeros_{(p_1+1)\times (p_1+1)}$ for $l,l^*\geq 1, l\neq l^*$, and \ref{A8prime}(iii) we imposed that  $\E[l_{u,t}^2 \vn_{t-l} \vl_{\vzeta,t-l^*}'] =\vzeros_{(p_1+1)\times p_2}$ for $l,l^*\geq 1, l\neq l^*$. In the general setting, for $\mathcal S_{\dag} = \mathcal\ S_u$ or $\mathcal S_{\dag} = \vbeta_{\vx,\#}$, by analogy we need $\E[(\vn_t \vn_t') \otimes (\vn_{t-l} \vn_{t-l^*}')]=\vzeros_{(p_1+1)^2 \times(p_1+1)^2}$ for $l,l^* \geq 1, l\neq l^*$,  and $\E[(\vn_t \vn_t') \otimes (\vn_{t-l} \vl_{\vzeta,t-l^*}')]=\vzeros_{(p_1+1)^2 \times ((p_1+1)p_2)}$ for $l,l^* \geq 1, l\neq l^*$,  which are also satisfied by \ref{A8prime}(ii)-(iii).

Using \eqref{distribl1}-\eqref{distribl2} in the expression for $\mathcal E_{3,4}^b(n^*) = \sum_{l=0}^{n^*}  \mathcal{S} \mF_s^l \mA_s^{-1} \, \left(T^{-1/2}\sum_{t\in \tilde I_s^-}   (\mathcal{S}_{\dag}^{\prime}\vg_t) \nu_t (\vg_{t-l-1} \odot \vnu_{t-l-1})\right)$, it follows that, for a fixed $n^*$,
\begin{align*}
\tilde {\mathcal E}_{3,4}^b(n^*) \stackrel{d_p^b}{\Rightarrow} \sum_{l=0}^{n^*}  \mathcal{S} \mF_s^l \mA_s^{-1} ((\mathcal{S}_{\dag}'\mS_{\#})\otimes \mS_{\#})  \int_{\tau_{s-1}}^{\tau_s}  \left(\mD_{\#}(\tau)\otimes\mD_{\#}\left(\tau\right)\right)\rd\mB_{l+1,\#}(\tau).
\end{align*}
Now as in the proof of Lemma \ref{lem6}, setting $n^*=T^{\alpha}$ for some $\alpha \in(0,1)$, and noting that the remainder $\tilde {\mathcal E}_{3,4}^b(\Delta \tau_s T-2) - \tilde {\mathcal E}_{3,4}^b(n^*)=\op^b(1)$, it can be shown that:
\begin{align*}
\tilde {\mathcal E}_{3,4}^b(\Delta \tau_s T-2) \stackrel{d_p^b}{\Rightarrow} \sum_{l=0}^{\infty}   (\mathcal{S}_{\dag}'\mS_{\#})\otimes (\mathcal{S} \mF_s^l \mA_s^{-1} \mS_{\#}) \int_{\tau_{s-1}}^{\tau_s}  \left(\mD_{\#}(\tau)\otimes\mD_{\#}\left(\tau\right)\right)\rd\mB_{l+1,\#}(\tau) = \mathbb M_{3,2}(\tau_{s-1},\tau_s),
\end{align*}
where $\mathbb M_{3,2}(\tau_{s-1},\tau_s)$ as defined above is also the asymptotic distribution of the sample counterpart of $\tilde {\mathcal E}_{3,4}^b(\Delta \tau_s T-2)$, that is $\tilde {\mathcal E}_{3,4}(\Delta \tau_s T-2)$, featuring  in Supplementary Appendix, Section \ref{supsec1} in \eqref{mathcalM3}, whose variance exists and is derived right after that equation. Therefore,
\begin{equation}\label{mathcale34b}
\mathcal E_{3,4}^b \stackrel{d_p^b}{\Rightarrow} \mathbb M_{3,2}(\tau_{s-1},\tau_s).
\end{equation}
Now consider $\mathcal E_{3,3}^b =  T^{-1/2} \sum_{t \in \tilde I_s^{-}}(\mathcal{S}_{\dag}^{b\prime}\vg_{t}^b)\left[ \mathcal{S}\left(\sum_{l=0}^{\tilde t-2}\mF_s^l\right)\vmu_s\right]$ in \eqref{mathcale3b}. By similar analysis as for $\mathcal E_{3,3}^b$, it can be shown that:
\begin{equation}\label{mathcale33b}
\mathcal E_{3,3}^b \stackrel{d_p^b}{\Rightarrow} \sum_{l=0}^{\infty}   (\mathcal{S}_{\dag}'\mS_{\#})\otimes (\mathcal{S} \mF_s^l) \int_{\tau_{s-1}}^{\tau_s} \mD_{\#}(\tau) \rd \mB_{0,\#}(\tau) = \mathbb M_{3,1}(\tau_{s-1},\tau_s).
\end{equation}
Also, $\mathcal E_{3,3}^b +\mathcal E_{3,4}^b$ can be shown to jointly converge to $ \mathbb M_{3}(\tau_{s-1},\tau_s) =\mathbb M_{3,1}(\tau_{s-1},\tau_s)+ \mathbb M_{3,2}(\tau_{s-1},\tau_s)$, provided that $\cov^b( \mB_{0,T}^b(r),\mB_{l,T}^*(r)) -\cov ( \mB_{0}(r),\mB_{l}^{(1:n(p_1+1))}(r)) \inp 0$ uniformly in $r$ for all $l\geq 1$, where $\mB_{0,T}^b = T^{-1/2} \sum_{t=1}^{[Tr]}\vl_{t}\odot \tilde \vnu_t$, and recall that $\mB_{l,T}^*(r) = \vec(\mB_{l,T,A}^b(r),\mB_{l,T,B}^b(r))$. Consider $\cov^b( \mB_{0,T}^{b,(1)}(r),\mB_{l,T,A}^b(r)) = (\cov^b( \mB_{0,T}(r),\mB_{l,T}^*(r)))^{(1:n,1:1)}$ (so for $\mathcal S_{\dag} = \mathcal S_u$); the proof for the rest of the elements is similar.
\begin{align*}
\cov^b( \mB_{0,T}^{b,(1)}(r),\mB_{l,T,A}^b(r))  &=  T^{-1/2} \sum_{t=1}^{[Tr]} (l_{u,t} \nu_t )( T^{-1/2} \sum_{t=1}^{[Tr]} l_{u,t} \nu_t (\vl_{t-l} \odot \tilde \vnu_{t-l}))  \\
& = T^{-1} \sum_{t=1}^{[Tr]} l_{u,t}^2 \nu_t^2 (\vl_{t-l} \odot \tilde \vnu_{t-l}) + T^{-1} \sum_{t,t^*=1,t\neq t^*}^{[Tr]} l_{u,t} l_{u,t^*} \nu_t \nu_{t^*} (\vl_{t^*-l} \odot \tilde \vnu_{t^*-l})\\
& =\mathcal L_{3,1}^b + \mathcal L_{3,2}^b.
\end{align*}
Now note that by Lemma \ref{lem6}(iv),
\begin{align*}
&\E^b(\mathcal L_{3,1}^b)=  T^{-1} \sum_{t=1}^{[Tr]} l_{u,t}^2 \nu_t^2 (\vl_{t-l} \odot \tilde \vnu_{t-l}) = T^{-1} \sum_{t=1}^{[Tr]} \begin{bmatrix} l_{u,t}^2 l_{u,t-l} \\
l_{u,t}^2 \vl_{\vv,t-l} \\
l_{u,t}^2 \vl_{\vzeta,t-l} \end{bmatrix} \odot \E^b \begin{bmatrix} \nu_t^2 \nu_{t-l}  \\
\nu_t^2 \nu_{t-l} \viota_{p_1}\\
\nu_t^2\viota_{p_2}\end{bmatrix} \\
&= T^{-1} \sum_{t=1}^{[Tr]} l_{u,t}^2 \vec (\vzeros_{1+p_1},\vl_{\vzeta,t-l})  \inp r [ \vrho_{l}^{(1:n,1:1)} \odot \mathcal I], \\
&\E^b(\mathcal L_{3,2}^b)=  T^{-1} \sum_{t,t^*=1,t\neq t^*}^{[Tr]} \begin{bmatrix} l_{u,t} l_{u,t^*} l_{u,t^*-l}\\
l_{u,t} l_{u,t^*}  \vl_{\vv,t^*-l}\viota_{p_1} \\
l_{u,t} l_{u,t^*} \vl_{\vzeta,t^*-l}\viota_{p_2}
 \end{bmatrix} \odot \E^b \begin{bmatrix} \nu_t \nu_{t^*} \nu_{t^*-l} \\
\nu_t \nu_{t^*} \nu_{t^*-l} \\
\nu_t \nu_{t^*} \end{bmatrix} = \vzeros_{n}.
\end{align*}
Therefore,
$
\cov^b( \mB_{0,T}^{b,(1)}(r),\mB_{l,T,A}^b(r))   -\cov ( \mB_{0}^{(1)}(r),\mB_{l}^{(1:n)}(r)) =\op(1),
$
by the restriction in \ref{A8prime}(i), which ensures that $\vrho_{l}^{(1:n,1:1)} =\vrho_{l}^{(1:n,1:1)} \odot \mathcal I$; also note that in the general definition of $\mathcal S_{\dag}$, by analogy, we need $\E^b[(\vn_t \vn_t')\otimes \vn_{t-l}]=\vzeros_{(p_1+1)^2\times (p_1+1)}$ for $l\geq 1$, imposed in \ref{A8prime}(i). So,
\begin{equation*}
\mathcal E_{3,3}^b +\mathcal E_{3,4}^b \stackrel{d_p^b}{\Rightarrow} \mathbb M_{3}(\tau_{s-1},\tau_s) =\mathbb M_{3,1}(\tau_{s-1},\tau_s)+ \mathbb M_{3,2}(\tau_{s-1},\tau_s).
\end{equation*}
Because we showed that $\mathcal E_{3,1}^b =\op^b(1)$ and $\mathcal E_{3,2}^b =\op(1)$, it follows that:
$
\mathcal E_{3}^b \stackrel{d_p^b}{\Rightarrow}  \mathbb M_{3}(\tau_{s-1},\tau_s),
$ and that
$
\vec (\mathcal E_{1}^b, \mathcal E_3^b) \stackrel{d_p^b}{\Rightarrow}  \vec( \mathbb M_{1}(\tau_{s-1},\tau_s),\mathbb M_{3}(\tau_{s-1},\tau_s)).
$

$\bullet$ Now consider $\mathcal E_2^b$. Note that $\vr_t = \mathcal S_{\vr} \vxi_{t}$ is not bootstrapped, and recall that
$
\vxi_t= \vmu_s + \veta_t+\mF_s \vxi_{t-1}.
$
Therefore, replacing again estimated parameters by the true values, because the rest of the terms are $\op^b(1)$ (therefore also replacing, as before, $\vg_t^b$ with $\vg_t \odot \vnu_t$),
\begin{align*}
        \mathcal E_2^b & = \left\{ T^{-1/2} \sum_{t \in \tilde I_s} [\mathcal{S}_{\dag}'(\vg_t \odot \vnu_t)] \right\} \mathcal S_{\vr} \vmu_s +  T^{-1/2} \sum_{t \in \tilde I_s} [\mathcal{S}_{\dag}'(\vg_t \odot \vnu_t)] \mathcal S_{\vr} \mF_s \vxi_{t-1}+T^{-1/2} \sum_{t \in \tilde I_s} [\mathcal{S}_{\dag}'(\vg_t \odot \vnu_t)] \mathcal S_{\vr} \veta_t\\
        &=\mathcal{ E}_{2,1}^b+\mathcal{ E}_{2,2}^b+\mathcal{ E}_{2,3}^b+\op^b(1).
        \end{align*}
Now consider $\mathcal{ E}_{2,1}^b$. From \eqref{defe1b} and \eqref{distre1b}, without any restrictions on $\vrho_{i},\vrho_{ij}$ except those in  Assumption \ref{a8},
\begin{align}\nonumber
\mathcal{ E}_{2,1}^b & = \mathcal S_{\vr}[\mathcal E_1^b +\op^b(1)] \vmu_s \weaks  \left\{(\mathcal S_{\dag}' \mS_{\#}) \int_{\tau_{s-1}}^{\tau_s} \mD(\tau) d \, \mB_{0,\#}(\tau) \right\} \mathcal S_{\vr}  \vmu_s \\ \label{e21b}
& = ((\mathcal{S}_{\dag}'\mS_{\#})\otimes (\mathcal{S}_{\vr}\mF_s^0)) \, \left(\left[\int_{\tau_{s-1}}^{\tau_s}  \mD_{\#}(\tau)\rd\mB_{0,\#}(\tau)\right] \otimes \vmu_s\right),
\end{align}
where the latter is the first term in  $\mathbb M_{2,1}(\tau_{s-1},\tau_s) = \sum_{l=0}^{\infty}((\mathcal{S}_{\dag}'\mS_{\#})\otimes (\mathcal{S}_{\vr}\mF_s^l)) \, \left(\left[\int_{\tau_{s-1}}^{\tau_s}  \mD_{\#}(\tau)\rd\mB_{0,\#}(\tau)\right] \otimes \vmu_s\right)$ (the rest of the terms appear from the distribution of $\mathcal E_{2,2}^b$ as seen below).

Now consider $\mathcal{E}^b_{2,3}= T^{-1/2} \sum_{t \in \tilde I_s}[\mathcal{S}_{\dag}'(\vg_t \odot \vnu_t)]\mathcal{S}_{\vr}\veta_t $. Recall from the arguments above \eqref{decompA} that $\mA_s^{-1}$ is also upper triangular with rows $p_1+2:n$ equal to $[\vzeros_{p_2}\;\;\vzeros_{p_2\times p_1}\;\;\mI_{p_2}].$ Therefore, it can be shown that $\mathcal{S}_{\vr}\veta_t = \mathcal{S}_{\vr}\mA_{s,\#}^{-1}\;\vg_{t} = \vzeta_t $, so, for $\mathcal S_{\dag} = \mathcal S_u$,
\begin{align*}
\mathcal{ E}_{2,3}^b = T^{-1/2} \sum_{t \in \tilde I_s}[\mathcal{S}_{\dag}'(\vg_t \odot \vnu_t)]  \mathcal{S}_{\vr} \veta_t^b  =    T^{-1/2} \sum_{t \in \tilde I_s} u_t \nu_t \vzeta_t +\op^b(1),
\end{align*}
where the last equality follows from replacing $\mathcal{S}_{\dag}'\mA_{s,\#}\veta_t^b = \mathcal{S}_{\dag}'\mA_{s,\#}((\hat \mA_s^{-1} \hat \vg_t) \odot \vnu_t)$ with $ \mathcal{S}_{\dag}' (\vg_t \odot \vnu_t)$, as for $\mathcal E_1^b$, since the rest of the terms are of lower order in all the relevant sums.
Note that:
\begin{align*}
\mathcal{ E}_{2,3}^b =T^{-1/2} \sum_{t \in \tilde I_s} (\mS_{p_2} \mD_{\vzeta,t}  d_{u,t})  l_{u,t}\nu_t \vl_{\vzeta,t},
\end{align*}
and consider first $\mB_{u\vzeta,T}^b(r)= T^{-1/2} \sum_{t=1}^{[Tr]}  l_{u,t} \vl_{\vzeta,t} \nu_t$.

\textbf{Part (i).} Since $\nu_t$ is i.i.d, $\E^b( l_{u,t} l_{\vzeta_{\kappa},t}\nu_t|\mathcal F_{t-1}^b)=0$, for any element $l_{\vzeta_{\kappa},t}$ of $\vl_{\vzeta,t}$. Also, for some $c>0$, $\sup_t \E \E^b |l_{u,t} l_{\vzeta_{\kappa},t}\nu_t|^{2+\delta^*} \leq \sup_t \E |l_{u,t} l_{\vzeta_{\kappa},t}|^{2+\delta^*} \sup_t \E^b|\nu_t|^{2+\delta^*}  <  C$.   Because $\nu_t$ is i.i.d, the conditional and unconditional bootstrap moments are the same, and it remains to verify that $\var^b(\mB_{u\vzeta,T}^b(r)) - \var(\mB_{u\vzeta}(r)) = \op(1)$, where $\mB_{u\vzeta}(r)$ was defined just before Lemma \ref{lem6}.
\begin{align*}
\var^b(\mB_{u\vzeta,T}^b(r)) &=  T^{-1} \sum_{t=1}^{[Tr]} (l_{u,t}^2 \vl_{\vzeta,t} \vl_{\vzeta,t}') \E^b (\nu_t^2)  = T^{-1} \sum_{t=1}^{[Tr]} (l_{u,t}^2 \vl_{\vzeta,t} \vl_{\vzeta,t}') \inp r\vrho_{u,\vzeta,0,0}=\var(\mB_{u\vzeta}(r)),
\end{align*}
where $\vrho_{u,\vzeta,0,0}$ was defined before Lemma \ref{lem5}, and the convergence follows by Lemma \ref{lem8}(iv) and Assumption \ref{a8}.

\textbf{Part (ii).} Because $\var^b(\mB_{u\vzeta,T}^b(r))\inp \var(\mB_{u\vzeta}(r))$, using Lemma \ref{lem8}(iii), it follows by \citet{Hansen:1992}, Theorem 2.1, that:
\begin{align}\label{defe23b}
\mathcal{ E}_{2,3}^b =T^{-1/2} \sum_{t \in \tilde I_s} (\mS_{p_2} \mD_{\vzeta,t}  d_{u,t})  l_{u,t}\nu_t \vl_{\vzeta,t} \weaks \mS_{p_2} \int_{\tau_{s-1}}^{\tau_s} d_{u}(\tau) \mD_{\vzeta} (\tau) d \mB_{u\vzeta}(\tau) = \mathbb M_{2,3}^{(1)}(\tau_{s-1},\tau_s),
\end{align}
where $\mathbb M_{2,3}^{(1)}(\tau_{s-1},\tau_s)$ was defined right before Lemma \ref{lem6}. Similarly, it can be shown that for $\mathcal S_{\dag} = \vbeta_{\vx,\#}$, without restrictions on $\vrho_{0,0}$ besides those imposed in Assumption \ref{a8},
\begin{align}\label{defe23b2}
\mathcal{ E}_{2,3}^b = \mathbb M_{2,3}^{(2)}(\tau_{s-1},\tau_s).
\end{align}

Next, consider $\mathcal E_{2,2}^b$. By backward substituting $\xi_{t-1}$, \begin{align} \nonumber
\mathcal E_{2,2}^b &= \mathcal S_{\vr} \mF_s  T^{-1/2} \sum_{t\in \tilde I_s} [\mathcal{S}_{\dag}'(\vg_t \odot \vnu_t)]  \vxi_{t-1} \\  \nonumber
        & =  \mathcal S_{\vr} \mF_s T^{-1/2} [\mathcal{S}_{\dag}'(\vg_t \odot \vnu_t)] \vxi_{[\tau_{s-1}T]} + S_{\vr} \mF_s T^{-1/2} \sum_{t\in \tilde I_s^-} \mF^{\tilde t-1}_s [\mathcal{S}_{\dag}'(\vg_t \odot \vnu_t)] \left[ \vxi_{[\tau_{s-1}T]}\right] \\  \nonumber
&+S_{\vr} \, T^{-1/2} \sum_{t\in \tilde I_s^-} [\mathcal{S}_{\dag}'(\vg_t \odot \vnu_t)] \left[ \left(\sum_{l=0}^{\tilde t-2}\mF_s^l\right)\vmu_s\right] + S_{\vr} \, T^{-1/2} \sum_{t\in \tilde I_s^-} [\mathcal{S}_{\dag}'(\vg_t \odot \vnu_t)] \left[ \sum_{l=0}^{\tilde t-2}\mF_s^l \veta_{t-l-1}\right] \\ \label{decompe22b}
     & = \sum_{i=1}^4 \mathcal E_{2,2,i}^b.
\end{align}

First, note that $\mathcal E_{2,2,1}^b =\op^b(1)$ because, as shown before, $(\vg_t \odot \vnu_t) =\Op^b(1)$ and $\vxi_{[\tau_{s-1}T]}=\Op(1)$ from the proof of Lemma \ref{lem2} in the Supplementary Appendix. Next, because $\sum_{l=0}^{\infty} \| \mF_s^l\|$ is bounded, because $\var^b(\vg_t \odot \vnu_t) =\Op(1)$ and $\vxi_{[\tau_{s-1}T]}=\Op(1)$, $\mathcal E_{2,2,2}^b =\op^b(1)$.

Next, by similar  arguments as for $\mathcal E_{3,3}^b$, and noting that no restrictions are needed on $\vrho_i, \vrho_{ij}$ besides those in Assumption \ref{a8} (because $\mathcal{ E}_{2,3}^b$ has at the basis the same random process as $\mathcal E_1^b$,)
\begin{align}\nonumber
\mathcal{ E}_{2,2,3}^b &= \mathcal S_r \mF_s T^{-1/2} \sum_{t\in \tilde I_s^-} [\mathcal{S}_{\dag}'(\vg_t \odot \vnu_t)] \left[ \left(\sum_{l=0}^{\tilde t-2}\mF_s^l\right)\vmu_s\right] \weaks \\ \nonumber
& =\sum_{l=1}^{\infty}((\mathcal{S}_{\dag}'\mS_{\#})\otimes (\mathcal{S}_{\vr}\mF_s^l)) \, \left(\left[\int_{\tau_{s-1}}^{\tau_s}  \mD_{\#}(\tau)\rd\mB_{0,\#}(\tau)\right] \otimes \vmu_s\right) \\ \label{distribe223b}
&= \mathbb M_1(\tau_{s-1},\tau_s) - ((\mathcal{S}_{\dag}'\mS_{\#})\otimes (\mathcal{S}_{\vr}\mF_s^0)) \, \left(\left[\int_{\tau_{s-1}}^{\tau_s}  \mD_{\#}(\tau)\rd\mB_{0,\#}(\tau)\right] \otimes \vmu_s\right).
\end{align}
Now consider $\mathcal{ E}_{2,2,4}^b$.
\begin{align*}
\mathcal{ E}_{2,2,4}^b = S_{\vr} \, T^{-1/2} \sum_{t\in \tilde I_s^-} [\mathcal{S}_{\dag}'(\vg_t \odot \vnu_t)] \left[ \sum_{l=0}^{\tilde t-2}\mF_s^l \veta_{t-l-1}\right].
\end{align*}

By similar arguments as for $\mathcal E_{3,4}^b$ in \eqref{shortcuts} and just below it,
\begin{align*}
\mathcal{ E}_{2,2,4}^b = S_{\vr} \,\sum_{l=0}^{\Delta\tau_sT-2}  \mF_s^l \, \left(T^{-1/2}\sum_{t\in \tilde I_s^-}   \mathcal{S}_{\dag}'(\vg_t \odot \vnu_t)  \veta_{t-l-1} \right) + \op^b(1) = \mathcal E_{5}^b(\Delta \tau_s T-2)+\op^b(1).
\end{align*}

Next, we analyze $\mathcal E_{5}^b(n^*)$, first for a fixed $n^*$. Note that a crucial term in  $\mathcal E_{5}^b(n^*)$ is $
\mathcal L_5^b(l)=T^{-1/2} \sum_{t=1}^{[Tr]} u_t \nu_t \vg_{t-l}
$
for $l\geq 1$, because $\veta_{t-l} =\mA_s^{-1}\vg_{t-l}$.
By the structure of $\mS$ and $\mD_t$ in \eqref{decompsd}, recalling that $\tilde \vnu_t = \vec(\nu_t \viota_{p_1+1},\viota_{p_2+1})$,
\begin{align*}
\vg_{t-l}  = \begin{bmatrix} d_{u,t-l} l_{u,t-l}  \\
\vs_{p_1} d_{u,t-l} l_{u,t-l} \nu_{t-l} + \mS_{p_1} \mD_{\vv,t-l}  \nu_{t-l} \\
\mS_{p_2} \mD_{\vzeta,t-l} \vl_{\vzeta,t-l} \end{bmatrix}_{\#} = \mS_{\#} \mD_{t-l,\#} ( \vl_{t-l} \odot \tilde \vnu_{t-l})_{\#}.
\end{align*}
Let $\mathcal S_{\dag}=\mathcal S_u$. Then, letting $\mathcal E_{t,l,5} =l_{u,t}\vl_{t-l}$ and $\mathcal E_{t,l,5}^b =  l_{u,t}  \vl_{t-l} \nu_t$,
\begin{align}\label{l5bl}
\mathcal L_5^b(l) = T^{-1/2} \sum_{t=1}^{[Tr]} d_{u,t} \mS_{\#} \mD_{t-l,\#} l_{u,t} \vl_{t-l,\#} \nu_t = T^{-1/2} \sum_{t=1}^{[Tr]} ( d_{ut} \mS_{\#} \mD_{t-l,\#}) (l_{u,t}  \vl_{t-l} \nu_t)_{\#}.
\end{align}

\textbf{Part (i). }First, consider $\mB_{l,T,C}^b(r) = T^{-1/2} \sum_{t=1}^{[Tr]} l_{u,t}  \vl_{t-l} \nu_t$, for $l\geq 1$. Because $\nu_t$ is i.i.d, it is m.d.s under the bootstrap measure conditional on the data, so by arguments similar to before, $\mB_{l,T,C}^b(r) \weaks \mB_{l}^{(1:n)}(r)$, provided that $\var^b(\mB_{l,T,C}^b(r)) \inp \var(\mB_{l}^{(1:n)}(r))$, which we verify below:
\begin{align*}
\var^b(\mB_{l,T,C}^b(r)) = T^{-1} \sum_{t=1}^{[Tr]} l_{u,t}^2 \vl_{t-l} \vl_{t-l}' \E^b(\nu_t^2) =  T^{-1} \sum_{t=1}^{[Tr]} l_{u,t}^2 \vl_{t-l} \vl_{t-l}' \inp \var(\mB_{l}^{(1:n)}(r)) =\textbf{block}_{1,1}(\vrho_{l,l}).
\end{align*}
The previous to last statement above follows by Lemma \ref{lem8}(iv) without restrictions on the form of $\vrho_{l,l}$ besides the ones in Assumption \ref{a8}. Therefore, $\mB_{l,T,C}^b(r) \weaks \mB_{l}^{(1:n)}(r))$.

\textbf{Part (ii). } By \citet{Hansen:1992}, Theorem 2.1, and Lemma \ref{lem8}(iii), $\mathcal L_5^b(l)$ defined in \eqref{l5bl} is such that:\begin{align*}
\mathcal L_5^b(l) \weaks  \int_{0}^r d_u(\tau) \mS_{\#} \mD_{\#}(\tau) d\mB_{l,\#}^{(1:n)}(\tau)=((\mathcal{S}_{\dag}'\mS_{\#})\otimes \mS_{\#}) \int_{0}^{r}  \left(\mD_{\#}(\tau)\otimes\mD_{\#}\left(\tau\right)\right)\rd\mB_{l,\#}(\tau).
\end{align*}
For $\mathcal S_{\dag} = \vbeta_{\vx_,\#}$, the same result can be shown by similar arguments, and with no restrictions on $\vrho_{l,l}$ besides being finite.

Now let $\mathcal S_{\dag} = \mathcal S_u$ again. To derive the limiting distribution of $\mathcal E_{5}^b(n^*)$, we need not only that  $\mB_{l,T,C}^b(r)  \weaks \mB_{l}^{(1:n)}(r))$, but also that $\vec(\mB_{l,T,C}^b(r), \mB_{l^*,T,C}^b(r))  \weaks \vec(\mB_{l}^{(1:n)}(r), \mB_{l^*}^{(1:n)}(r))$, which can be  shown using Lemma \ref{lem3} and Lemma \ref{lem8}(iv), because $\cov^b(\mB_{l,T,C}^b(r),\mB_{l^*,T,C}^b(r) \inp \cov( \mB_{l}^{(1:n)}(r),\mB_{l^*}^{(1:n)}(r) =\textbf{block}_{1,1} (\vrho_{l,l^*})$; the latter condition holds because:
\begin{align*}
\E^b(\mB_{l,T,C}^b(r) (\mB_{l^*,T,C}^b(r))') & = T^{-1} \sum_{t=1}^{[Tr]} l_{u,t}^2  \vl_{t-l} \vl_{t-l^*}' \E^b(\nu_t^2) +T^{-1} \sum_{t,t^*=1,t\neq t^*}^{[Tr]} l_{u,t}  l_{u,t^*} \vl_{t-l} \vl_{t^*-l^*}'\E^b(\nu_t \nu_t^*) \\ & = T^{-1} \sum_{t=1}^{[Tr]} l_{u,t}^2  \vl_{t-l} \vl_{t-l^*}' \inp \textbf{block}_{1,1} (\vrho_{l,l^*}),
\end{align*}
where the last statement follows by Lemma \ref{lem8}(iv), and under Assumptions \ref{a1}-\ref{aboot}. By analogy, no other restrictions besides Assumptions \ref{a1}-\ref{aboot} are needed also when $\mathcal S_{\dag} = \vbeta_{\vx,\#}$.

 Therefore, by \citet{Hansen:1992} and Lemma \ref{lem8}(iii), for a fixed $n^*$,
\begin{align*}
\mathcal E_{5}^b(n^*) & = S_{\vr} \,\sum_{l=0}^{n^*}  \mF_s^l \mA_s^{-1} \, \left(T^{-1/2}\sum_{t\in \tilde I_s^-}   (\mathcal{S}_{\dag}^{\prime}\vg_t^b)  \vg_{t-l-1} \right) \\
& \stackrel{d_p^b}{\Rightarrow} \sum_{l=0}^{n^*}  ((\mathcal{S}_{\dag}'\mS_{\#})\otimes (\mathcal{S}_r \mF_s^{l+1} \mA_s^{-1} \mS_{\#}))  \int_{\tau_{s-1}}^{\tau_s}  \left(\mD_{\#}(\tau)\otimes\mD_{\#}\left(\tau\right)\right)\rd\mB_{l+1,\#}(\tau).
\end{align*}
Letting as before $n^*=T^{\alpha}$, it can be shown that $\mathcal E_{5}^b(\Delta \tau_s T-2) =\mathcal E_{5}^b(n^*) +\op^b(1)$, and therefore
\begin{eqnarray}\nonumber
\mathcal E_{2,2,4}^b &=& \mathcal E_{5}^b(\Delta \tau_s T-2) +\op^b(1) \stackrel{d_p^b}{\Rightarrow} \sum_{l=0}^{\infty}  ((\mathcal{S}_{\dag}'\mS_{\#})\otimes (\mathcal{S}_r \mF_s^{l+1} \mA_s^{-1} \mS_{\#}))  \int_{\tau_{s-1}}^{\tau_s}  \left(\mD_{\#}(\tau)\otimes\mD_{\#}\left(\tau\right)\right)\rd\mB_{l+1,\#}(\tau)\\
&=&\mathbb M_{2,2}(\tau_{s-1},\tau_s),\label{distre5b}
\end{eqnarray}
where $\mathbb M_{2,2}(\tau_{s-1},\tau_s)$ is defined just before Lemma \ref{lem6}. Substituting $\mathcal E_{2,2,1}^b = \op^b(1)$, $\mathcal E_{2,2,2}^b = \op^b(1)$, and \eqref{distribe223b} and \eqref{distre5b} into \eqref{decompe22b}, and then using \eqref{e21b}, it follows that:
\begin{align}\nonumber
&\mathcal E_{2,2}^b \weaks \mathbb M_{2,1}(\tau_{s-1},\tau_s)- ((\mathcal{S}_{\dag}'\mS_{\#})\left(\left[\int_{\tau_{s-1}}^{\tau_s}  \mD_{\#}(\tau)\rd\mB_{0,\#}(\tau)\right] \otimes \vmu_s\right)\otimes (\mathcal{S}_{\vr}\mF_s^0)) +\mathbb M_{2,2}(\tau_{s-1},\tau_s) \\  \label{m21plusm22}
&\mathcal E_{2,1}^b+ \mathcal E_{2,2}^b \weaks  \mathbb M_{2,1}(\tau_{s-1},\tau_s)+\mathbb M_{2,2}(\tau_{s-1},\tau_s),
\end{align}
because the joint convergence of $\mathcal E_{2,3}^b, \mathcal E_{2,2,4}^b$  can be shown as above under Assumptions \ref{a1}-\ref{aboot}. Because all these terms share the same $\nu_t$, it can be shown that they also jointly converge with $\mathcal E_{2,3}^b$ and their bootstrap covariance to the covariances of the relevant limits, under Assumptions \ref{a1}-\ref{aboot}.

Therefore, for $\mathcal S_{\dag}=\mathcal S_u$,
\begin{align*}
\mathcal E_2^b = \sum_{i=1}^3 \mathcal E_{2,i}^b  \weaks \mathbb M_{2,1}(\tau_{s-1},\tau_s)+\mathbb M_{2,2}(\tau_{s-1},\tau_s)+\mathbb M_{2,3}^{(1)}(\tau_{s-1},\tau_s)= \mathbb M_{2}(\tau_{s-1},\tau_s).
\end{align*}
Similarly, for $\mathcal S_{\dag}=\vbeta_{\vx,\#}$, $\mathcal E_2^b \weaks \mathbb M_{2}(\tau_{s-1},\tau_s)$, completing the proof for the distribution of $\mathcal E_2^b$, which note that we proved only under Assumptions \ref{a1}-\ref{aboot}.

Now note that because $\mathcal E_1^b$ featured as part of $\mathcal E_2^b$, their joint convergence was already shown, and recall that it also followed under Assumptions \ref{a1}-\ref{aboot}. It remains to verify the condition:$$\cov^b(\vec(\mathcal E_2^b,\mathcal E_3^b)) -\cov(\vec(\mathbb M_2(\tau_{s-1},\tau_s),\mathbb M_{3}(\tau_{s-1},\tau_s)) \inp 0,$$ because then $\vec_{i=1:3}(\mathcal E_i^b) \weaks \vec_{i=1:3}(\mathbb M_i(\tau_{s-1},\tau_s))=\mathbb M(\tau_{s-1},\tau_s)$. This condition follows by similar arguments as before, if we show that (C 1) $\cov^b( \mathcal E_{2,3}^b,\mathcal E_{3,4}^b)$ converges to the joint covariance of their respective limits, and that (C 2) $\cov^b(\mathcal E_{2,2,4}^b,\mathcal E_{3,4}^b)$ converges to the joint covariance of their respective limits.  For (C 1), by arguments as before, it suffices to show $\cov^b(\mB_{l,T,A}^b(r), \mB_{u\vzeta,T}^b(r)) - \cov(\mB_{l}^{(1:n)}(r), \mB_{u\vzeta}(r)) \inp 0$ (here, we set $\mathcal S_{\dag} = \mathcal S_u$ for all terms and that is why we consider the first $n\times 1$ elements of $\mB_{l}^{b}(r)$; the proofs for the case $\mathcal S_{\dag} = \vbeta_{\vx,\#}$ is similar and is briefly discussed below). Note:
\begin{align*}
&\cov^b(\mB_{l,T,A}^b(r), \mB_{u\vzeta,T}^b(r)) = \E^b( T^{-1} \sum_{t,t^*=1}^{[Tr]} [ (l_{u,t}\vl_{t-l}) \odot \vec(\nu_t\nu_{t-l} \viota_{p_1+1}, \nu_t\viota_{p_2})] \, [l_{u,t} \vl_{\vzeta,t}' \nu_t]  \\
& = T^{-1} \sum_{t=1}^{[Tr]} (l_{u,t}^2 \vl_{t-l} \vl_{\vzeta,t}')\odot  \E^b\begin{bmatrix}
\nu_t^2 \nu_{t-l} \viota_{p_1+1} \viota_{p_2}' \\
\nu_t^2 \viota_{p_2} \viota_{p_2}'
\end{bmatrix}
+ T^{-1} \sum_{t,t^*=1,t\neq t^*}^{[Tr]} (l_{u,t} l_{u,t^*} \vl_{t-l} \vl_{\vzeta,t^*}') \E^b\begin{bmatrix}
\nu_t \nu_{t^*} \nu_{t-l} \viota_{p_1+1} \viota_{p_2}' \\
\nu_t \nu_{t^*} \viota_{p_2} \viota_{p_2}'
\end{bmatrix}\\
& =  T^{-1} \sum_{t=1}^{[Tr]} (l_{u,t}^2 \vl_{t-l} \vl_{\vzeta,t}') \odot \vec(\vzeros_{(p_1+1) \times p_2}, \mJ_2) \inp  T^{-1} \sum_{t=1}^{[Tr]} \E(l_{u,t}^2 \vl_{t-l} \vl_{\vzeta,t}') \odot \vec(\vzeros_{(p_1+1) \times p_2}, \mJ_2) +\op(1),
\end{align*}
which shows why we need $\E( l_{u,t}^2 \vn_{t-l} \vl_{\vzeta,t}')=\vzeros_{(p_1+1) \times p_2}$, imposed in \ref{A8prime}(iii). In the general case, for $b_t$ as defined before, by analogy, the condition needed and imposed in \ref{A8prime}(iii) is that for $l\geq 1$, $\E( (\vn_t \vn_t') \otimes( \vn_{t-l} \vl_{\vzeta,t}')=\vzeros_{(p_1+1)^2 \times (p_1+1)p_2}$.

For (C 2), notice that from \eqref{distre5b},
\begin{align}
\mathcal E_{2,2,4}^b & \weaks \mathbb M_{3,2}(\tau_{s-1},\tau_s)  = \sum_{l=0}^{\infty}  ((\mathcal{S}_{\dag}'\mS_{\#})\otimes (\mathcal{S}_r \mF_s^{l+1} \mA_s^{-1} \mS_{\#}))  \int_{\tau_{s-1}}^{\tau_s}  \left(\mD_{\#}(\tau)\otimes\mD_{\#}\left(\tau\right)\right)\rd\mB_{l+1,\#}(\tau) \\ \nonumber
& = [\viota_{np}' \otimes (\mathcal S_r \mF_s) ] \, \sum_{l=0}^{\infty}  ((\mathcal{S}_{\dag}'\mS_{\#})\otimes (\mF_s^{l} \mA_s^{-1} \mS_{\#}))  \int_{\tau_{s-1}}^{\tau_s}  \left(\mD_{\#}(\tau)\otimes\mD_{\#}\left(\tau\right)\right)\rd\mB_{l+1,\#}(\tau) \\\label{simil1}
& = [\viota_{np}' \otimes (\mathcal S_r \mF_s) ] \, \mathbb P (\tau_{s-1},\tau_s), \mbox{ say},
\end{align}
while from \eqref{mathcale34b},
\begin{align*}
\mathcal E_{3,4}^b & \weaks \sum_{l=0}^{\infty}  ((\mathcal{S}_{\dag}'\mS_{\#})\otimes (\mathcal{S} \mF_s^{l} \mA_s^{-1} \mS_{\#})  \int_{\tau_{s-1}}^{\tau_s}  \left(\mD_{\#}(\tau)\otimes\mD_{\#}\left(\tau\right)\right)\rd\mB_{l+1,\#}(\tau) = [\viota_{np}' \otimes \mathcal S  ] \mathbb P (\tau_{s-1},\tau_s).
\end{align*}
Therefore,  they jointly converge. It follows that for $I_i = \tilde I_s$,
\begin{equation*}
T^{-1/2}\sum_{t\in I_i}\vz_t^b\vg_t^{b'}\mathcal{S}^b_{\dag} = \vec_{i=1:3}(\mathcal E_i^b)\weaks \mathbb M(\tau_{s-1},\tau_s).
\end{equation*}
Using exactly the same arguments as in the end of the proof of Lemma \ref{lem6}, $ T^{-1/2}\sum_{t\in I_i}\vz_t^b\vg_t^{b'}\mathcal{S}^b_{\dag} \weaks \tilde{\mathbb M}_i$ for $I_i \neq \tilde I_s$, completing the proof.

\end{proof}

\begin{proof}[Proof of Lemma \ref{lemzuv_WF}]
        As for the proof of Lemma \ref{lem6}, consider the interval $I_i=\tilde I_s$.
        Let $\mathcal{S}_{\dag}^b=\mathcal{S}_u$ or $\mathcal{S}_{\dag}^b=\hat\vbeta_{\vx,\#}$. We need the asymptotic distribution of $\mathcal{Z}_T^b=T^{-1/2} \sum_{t\in \tilde I_s} \vz_t\vg_t^{b\prime}\mathcal{S}_{\dag}^b$.

        \begin{align*}
        \mathcal{Z}_T^b=T^{-1/2} \sum_{t\in \tilde I_s}\vz_t\vg_t^{b\prime}\;\mathcal{S}_{\dag}^b
        =  \begin{bmatrix}T^{-1/2} \sum_{t \in \tilde I_s}\vg_t^{b\prime}\;\mathcal{S}_{\dag}^b\\ T^{-1/2} \sum_{t \in \tilde I_s}\mathcal{S}_{\vr}\vxi_{t}\vg_t^{b\prime}\;\mathcal{S}_{\dag}^b\\T^{-1/2} \sum_{t \in \tilde I_s}\mathcal{S}\vxi_{t-1} \vg_t^{b\prime}\;\mathcal{S}_{\dag}^b
        \end{bmatrix} \equiv \begin{bmatrix} \mathcal F_1^b \\ \mathcal F_2^b \\ \mathcal F_3^b \end{bmatrix}.
        \end{align*}
Note that $\mathcal F_1^b = \mathcal E_1^b$, and $\mathcal F_2^b = \mathcal E_2^b$, defined in \eqref{defineste} and analyzed in the proof of Lemma \ref{lemzuv_WR}. Also note that, using \eqref{simil1} in the proof of Lemma \ref{lemzuv_WR}, and replacing  as in the proof of Lemma \ref{lemzuv_WR}, estimated parameters with true values because their difference is asymptotically negligible,
\begin{align*}
\mathcal F_3^b &= \mathcal S  \left\{T^{-1/2} \sum_{t\in \tilde I_s} [\mathcal{S}_{\dag}'(\vg_t \odot \vnu_t)]  \vxi_{t-1} \right\} +\op^b(1)\\
\mathcal E_{2,2}^b & = \mathcal S_{\vr} \mF_s \left\{T^{-1/2} \sum_{t\in \tilde I_s} [\mathcal{S}_{\dag}'(\vg_t \odot \vnu_t)]  \vxi_{t-1} \right\} +\op^b(1).
\end{align*}
Since they involve the same underlying random quantity, just scaled differently ($\mathcal S$ versus $\mathcal S_{\vr} \mF_s$), the desired distribution for $\mathcal F_3^b$ follows directly from the analysis of $\mathcal E_{2,2}^b$ in Lemma \ref{lemzuv_WR}. Careful inspection of the proof of Lemma \ref{lemzuv_WR} (focusing on the analysis of $\mathcal E_1^b$ and $\mathcal E_2^b$ only) also shows that $\mathcal{Z}_T^b \weaks \mathbb M(\tau_{s-1},\tau_s)$, and indicates that this result holds under Assumptions \ref{a1}-\ref{aboot}, without the need for \ref{A8prime}. By a similar argument as for the proof of Lemma \ref{lem6} in the Supplemental Appendix, Section \ref{supsec1}, when $I_i \neq \tilde I_s$, $\mathcal{Z}_T^b \weaks \tilde{\mathbb M}_i$, completing the proof.
\end{proof}

\begin{proof}[Proof of Theorem \ref{theo1boot}]\hfill\\
We consider only the WR bootstrap; for the WF bootstrap, the results follows in a similar fashion. Let for simplicity $I_i = I_{i,\vlam_k}$. From \eqref{supwald2}-\eqref{hatMi} and for the Eicker-White estimator $\hat \mM_{(i)}$,
\begin{align}\label{supwalda}
&Wald_{T\vlam_k}= T \, \hat \vbeta_{\vlam_k}' \,
        \mR_k' \,\left(\mR_k \hat \mV_{\vlam_k} \mR_k'\right)^{-1}\mR_k \,\hat \vbeta_{\vlam_k}, \mbox{ where } \hat{\mV}_{\vlam_k}=\diag_{i=1:k+1}(\hat \mQ_{(i)}^{-1}  \ \hat \mM_{(i)} \ \hat \mQ_{(i)}^{-1})\, \\ \nonumber
        & \hat \mQ_{(i)} = T^{-1} \sum_{t \in I_{i}} \hat \mUpsilon_t' \vz_t \vz_t'\hat \mUpsilon_t\, ,  \mbox{ and }\, \hat{\mM}_{(i)} = \widehat{EW}\left[\,\hat\mUpsilon_t^\prime \vz_t(\hat u_t+\hat\vv_t^\prime \hat \vbeta_{\vx,(i)});\,I_{i}\,\right].
         \end{align}
From \eqref{supwald2b}-\eqref{hatMib},
\begin{align}\label{supwaldab}
&Wald^b_{T\vlam_k}= T \, \hat \vbeta_{\vlam_k}^{b'} \,
        \mR_k' \,\left(\mR_k \hat \mV_{\vlam_k}^b \mR_k'\right)^{-1}\mR_k \,\hat \vbeta_{\vlam_k}^b \mbox{ where }\hat{\mV}_{\vlam_k}^b=\diag_{i=1:k+1}((\hat \mQ_{(i)}^b)^{-1}   \hat{\mM}_{(i)}^b \ (\hat \mQ_{(i)}^b)^{-1}) \\ \nonumber
        &\hat \mQ_{(i)}^b = T^{-1} \sum_{t \in I_{i}} \hat \mUpsilon_{t}^{b\prime} \vz_t^b \vz_t^{b\prime}\hat \mUpsilon_{t}^b\, ,  \mbox{ and }\hat{\mM}_{(i)}^b = \widehat{EW}\left[\,\hat\mUpsilon_t^{b\prime} \vz_t^b(u_t^b+\vv_t^{b\prime} \hat \vbeta_{\vx,(i)}^b);\,I_{i}\,\right].
   \end{align}
From Lemma \ref{lem2}, $\hat \mQ_{(i)} \inp \mathbb Q_i$ and from Lemmas \ref{lem9} and \ref{lemzuv_WF} \, $\hat \mQ_{(i)}^b \stackrel{p^b}{\rightarrow} \mathbb Q_i$.

 Now consider $\hat\vbeta_{\vlam_k}=\vec(\hat\vbeta_{i,\vlam_k})$. Let $\hat \mQ_{j^*} =T^{-1}\sum_{t\in I_j^*}\vz_t\vz_t^\prime$. By Lemma \ref{lem2}, $\hat \mQ_{j^*} \inp \int_{\pi_{j-1}^0}^{\pi_j^0} \mathbb Q_{\vz}(\tau)\rd\tau =\mathbb Q_{\vz,j^*} $. Therefore, from the proof of Theorem B \ref{theo_0vsk} in the Supplementary Appendix, Section \ref{supsec1},
                \begin{eqnarray}\nonumber
                        T^{1/2} (\hat\vbeta_{i,\vlam_k}-\vbeta^0) &=& \mathbb Q_i^{-1}\mUpsilon_t^{0'} \left(T^{-1/2} \sum_{t\in I_i} \vz_t u_t + T^{-1/2} \sum_{t\in I_i} \vz_t \vv_t^\prime \vbeta_{\vx}^0\right.\\ \label{beta1}
                        &-&\left.\;\; T^{-1} \sum_{t\in I_i}  \vz_t\vz_t^\prime\left\{\,\sum_{j=1}^{h+1}1_{t\in I_j^*} \mathbb Q_{\vz,j^*}^{-1} T^{-1/2} \sum_{t\in I_j^*} \vz_t \vv_t^\prime\vbeta_{\vx}^0\,\right\}\right)\,+\,o_p(1).
        \end{eqnarray}
From Lemma \ref{lem9} and \ref{lemzuv_WR},  $\hat \mUpsilon_t^b = \mUpsilon_t^{0}+\op^b(1)$. Also, $\hat \mQ_{j^*}^b =T^{-1}\sum_{t\in I_j^*}\vz_t^b\vz_t^{b\prime} \ins \mathbb Q_{\vz,j^*} $ (in probability) by the proof of Lemma \ref{lem9}, therefore:
\begin{eqnarray}\nonumber
                        T^{1/2} (\hat\vbeta_{i,\vlam_k}^b-\vbeta^0) &=& \mathbb Q_i^{-1} \mUpsilon_t^{0'} \left(T^{-1/2} \sum_{t\in I_i} \vz_t^b (u_t^b +\vv_t^{b\prime} \vbeta_{\vx}^0)\right.\\\label{beta2}
                        &-&\left.\;\; T^{-1} \sum_{t\in I_i} \vz_t^b\vz_t^{b\prime}\left\{\,\sum_{j=1}^{h+1}1_{t\in I_j^*}\mathbb Q_{\vz,j^*}^{-1} T^{-1/2} \sum_{t\in I_j^*} \vz_t^b \vv_t^{b\prime}\vbeta_{\vx}^0\,\right\}\right)\,+\,o_p^b(1).
        \end{eqnarray}
From Lemma \ref{lem6} and \ref{lemzuv_WR}, we have that $T^{-1/2} \sum_{t\in I_i} \vz_t^b \vv_t^{b\prime} \vbeta_{\vx}^0  - T^{-1/2} \sum_{t\in I_i} \vz_t \vv_t^{\prime} \vbeta_{\vx}^0 =\op^b(1)$ and $T^{-1/2} \sum_{t\in I_i} \vz_t^b u_t^b - T^{-1/2} \sum_{t\in I_i} \vz_t u_t =\op^b(1)$. Therefore, from \eqref{beta1}-\eqref{beta2}, $T^{1/2} (\hat\vbeta_{i,\vlam_k}^b- \hat\vbeta_{i,\vlam_k})=\op^b(1)$.

Because $\hat \mM_{(i)}$ and $\hat \mM_{(i)}^b$ estimate the same part of the variance of $T^{1/2} (\hat\vbeta_{i,\vlam_k}-\vbeta_{\vx}^0)$, and $T^{1/2} (\hat\vbeta_{i,\vlam_k}^b-\vbeta_{\vx}^0)$ respectively, from Lemma \ref{lem2}, \ref{lem6}, \ref{lem9}, and \ref{lemzuv_WR}, it follows that $\hat \mM_{(i)}^b-\hat \mM_{(i)} =\op^b(1)$. Putting these results together,
$\sup_{c\in\mathbb{R}}\left| P^{b} \left(\sup\text{-}Wald^b_T\leq c\right)-P(\sup\text{-}Wald_T\leq c)\right|\inp 0$ as $T\rightarrow\infty$.
\end{proof}
 \begin{proof}[Proof of Theorem \ref{theo2boot}]\hfill\\
Inspecting the alternative representation of the $sup\text{-}Wald_T(\ell+1|\ell)$ in the proof of Theorem B \ref{theo_ellvsell+1} in the Supplementary Appendix, Section \ref{supsec1}, and defining the same representation for $sup\text{-}Wald_T^b(\ell+1|\ell)$, the desired result follows using the same steps as in the proof of Theorem \ref{theo1boot}.
\end{proof}

\bibliography{hallhan5}
\newpage
\noindent
%

%

\begin{center}
\Large{Supplementary Appendix for $``$Bootstrapping Structural Change Tests$"$}\\
\vspace{0.2in}
\large{Otilia Boldea,} \small{Tilburg University}\\ \large{Adriana Cornea-Madeira,} \small{University of York}\\\large{Alastair R. Hall,} \small{University of Manchester}\\
\end{center}
\vspace{0.3in}
 \thispagestyle{empty}

\normalsize
In the proofs below we refer to the equation numbers from the paper and to
the equation numbers from this Supplemental Appendix by (eqnnumber) and respectively
(sectionnumber.eqnnumber). Also the tables from this Supplementary Appendix are referred to by sectionnumber.tablenumber, and the tables from the paper are referred to by tablenumber. All the sections referenced below refer to Supplementary Appendix sections.
\setcounter{section}{0}
\setcounter{equation}{0}
\setcounter{table}{0}
\renewcommand{\thesection}{\arabic{section}}
\renewcommand{\thetable}{\arabic{section}.\arabic{table}}
\renewcommand{\theequation}{\arabic{section}.\arabic{equation}}

\section{Proofs of Lemmas \ref{lem2}, \ref{lem4}-\ref{lem8} and asymptotic distribution of the $\sup\text{-}Wald$ tests}\label{supsec1}

\begin{proof}[Proof of Lemma \ref{lem2}]\hfill\\
        To begin, consider the case where $I_i=\left[\,[\tau_{s-1}T]+1,[\tau_s T]\right]$\black{, where $[[\tau_{s-1}T]+1,[\tau_s T]]$ are the intervals for which the coefficients in the VAR($p$) representation of $\tilde \vz_t$ are stable, and $s=1,\ldots,N$, with $N$ being the total number of breaks in the slope coefficients of the VAR($p$) representation for $\tilde \vz_t$ following from Assumptions \ref{a1}-\ref{a5}}. Then, letting $I_i=\tilde{I}_s \stackrel{d}{=}\left[\,[\tau_{s-1}T]+1,[\tau_s T]\right]$, we have:
        $$
        \hat \mQ_{\vz,(i)}  \;=\; T^{-1} \sum_{t \in \tilde I_s} \vz_t \vz_t^\prime \;=\;\mathcal{A}\,+\,o_p(1),
        $$
        where
        \begin{eqnarray*}
                \mathcal{A}&=&\,T^{-1} \sum_{t \in \tilde I_s}  \left[\,\begin{array}{ccc}1 &(\mathcal{S}_{\vr}\vxi_t)'&(\mathcal{S}\vxi_{t-1})^\prime\\
                        \mathcal{S}_{\vr}\vxi_t&\mathcal{S}_{\vr}\vxi_t(\mathcal{S}_{\vr}\vxi_t)'&\mathcal{S}_{\vr}\vxi_t(\mathcal{S}\vxi_{t-1})\prime\\\mathcal{S}\vxi_{t-1} &\mathcal{S}\vxi_{t-1}(\mathcal{S}_{\vr}\vxi_t)'&\mathcal{S}\vxi_{t-1}(\mathcal{S}\vxi_{t-1})^\prime \end{array}\,\right]\,=\left[\,\begin{array}{ccc} \Delta\tau_s &\mathcal{A}_1^\prime\mathcal{S}_{\vr}^\prime&\mathcal{A}_2'\mathcal{S}'\\ \mathcal{S}_{\vr}\mathcal{A}_1 & \mathcal{S}_{\vr}\mathcal{B}_1\mathcal{S}_{\vr}^\prime&\mathcal{S}_{\vr}\mathcal{B}_{2}\mathcal{S}'\\
                        \mathcal{S}\mathcal{A}_2&\mathcal{S}\mathcal{B}_{2}'\mathcal{S}_{\vr}'&\mathcal{S}\mathcal{B}_3\mathcal{S}'\end{array}\,\right],
        \end{eqnarray*}
        where
        $$
        \mathcal{A}_1\;=\;\,T^{-1} \sum_{t \in \tilde I_s}\vxi_t,\;\; \mathcal{A}_2\;=\;\,T^{-1} \sum_{t \in \tilde I_s}\vxi_{t-1},\;\;
        \mathcal{B}_1\;=\;T^{-1} \sum_{t \in \tilde I_s} \vxi_t\vxi_t^\prime,\;\; \mathcal{B}_{2}\;=\;T^{-1} \sum_{t \in \tilde I_s} \vxi_t\vxi_{t-1}^\prime,\;\;\mathcal{B}_3\;=\;T^{-1} \sum_{t \in \tilde I_s} \vxi_{t-1}\vxi_{t-1}^\prime.
        $$

        Consider $\mathcal{A}_1$. We have
        $
        \mathcal{A}_1\;=\;\sum_{i=1}^4 \mathcal{A}_{1,i},
        $
        \black{where below we denote $\Delta\tau_sT=[\tau_sT]-[\tau_{s-1}T]$,} and:
        \begin{eqnarray*}
                \mathcal{A}_{1,1}&=&T^{-1}\sum_{t=[\tau_{s-1}T]+1}^{[\tau_sT]}\tilde{\vxi}_t\\[0.1in]
                \mathcal{A}_{1,2}&=&T^{-1}\sum_{t=[\tau_{s-1}T]+1}^{[\tau_sT]}\left(\sum_{l=0}^{t-[\tau_{s-1}T]-1}\mF_s^l\right)\vmu_s-\mathcal{A}_{1,4}\\
                &=& T^{-1}\left(\sum_{l=0}^0\mF_s^l+\sum_{l=0}^1\mF_s^l+\ldots+\sum_{l=0}^{\Delta\tau_sT-1}\mF_s^l\right)\vmu_s-\mathcal{A}_{1,4}\\
                &=& T^{-1}\left(\Delta\tau_sT+(\Delta\tau_sT-1)\mF_s+(\Delta\tau_sT-2)\mF_s^2+\ldots+(\Delta\tau_sT-(\Delta\tau_sT-1))\mF_s^{\Delta\tau_sT-1}\right)\vmu_s-\mathcal{A}_{1,4}\\
                &=&T^{-1}\Delta\tau_sT\sum_{l=0}^{\Delta \tau_sT -1}\mF_s^l\,\vmu_s,\\[0.1in]
                \mathcal{A}_{1,3}&=&T^{-1}\left(\sum_{t=[\tau_{s-1}T]+1}^{[\tau_sT]}  \mF^{t-[\tau_{s-1}T]}_s \right)\vxi_{[\tau_{s-1}T]}\\
                &=&T^{-1}\sum_{l=1}^{\Delta \tau_s T} \mF_s^l\,\vxi_{[\tau_{s-1}T]},\\[0.1in]
                \mathcal{A}_{1,4}&=&-T^{-1}\left(\sum_{l=1}^{\Delta \tau_s T -1} l\mF_s^l\right)\,\vmu_s.
        \end{eqnarray*}
        Let $\tilde{\vxi}_{t,i}$ be the $i^{th}$ element of $\tilde{\vxi}_t$. From Assumptions \ref{a7} and \ref{a8}, $\{\tilde{\vxi}_{t,i},\mathcal{F}_t\}$ is a $L^1$-mixingale satisfying the conditions of Lemma \ref{lem1} and so $\mathcal{A}_{1,1}\stackrel{p}{\to}0$.  From Assumption \ref{a7}, it follows that $\mathcal{A}_{1,2}\to \Delta \tau_s(\mI_{np}-\mF_s)^{-1}\vmu_s$. From Assumptions \ref{a7} and \ref{a8}, it follows that $\var[\vxi_{[\tau_{s-1}T]}]=O(1)$ and so, again using Assumption \ref{a7}, it follows that $\mathcal{A}_{1,3}=o_p(1)$.  From Assumption \ref{a7}, it follows that $\sum_{l=1}^{\Delta \tau_s T -1} l\mF_s^l=O(1)$ and hence $\mathcal{A}_{1,4}=o(1)$. Combining these results, we obtain:
        $
        \mathcal{A}_1\;=\;\mathbb{A}(\tau_{s-1},\tau_s)\,+\,o_p(1),
        $
        where
        \begin{equation}
        \mathbb{A}(\tau_{s-1},\tau_s)\;=\;\Delta \tau_s(\mI_{np}-\mF_s)^{-1}\vmu_s = \int_{\tau_{s-1}}^{\tau_s} \mathbb Q_1(\tau).\label{abb}
        \end{equation}
        \vspace*{0.1in}
        By similar arguments, we have
        $
        \mathcal{A}_2\;=\;\mathbb{A}(\tau_{s-1},\tau_s)\,+\,o_p(1).      $
        Now consider $\mathcal{B}_1$. Since
        \begin{eqnarray*}
                \mathcal{B}_1&=&T^{-1} \sum_{t \in \tilde I_s}\left(\mF^{t-[\tau_{s-1}T]}_s \vxi_{[\tau_{s-1}T]} + \tilde\vxi_t+\left(\sum_{l=0}^{t-[\tau_{s-1}T]-1}\mF_s^l\right)\vmu_s\right)\left(\mF^{t-[\tau_{s-1}T]}_s \vxi_{[\tau_{s-1}T]} + \tilde\vxi_t+\left(\sum_{l=0}^{t-[\tau_{s-1}T]-1}\mF_s^l\right)\vmu_s\right)^\prime,
        \end{eqnarray*}

        we have
      $
        \mathcal{B}_1\;=\;\sum_{i=1}^3\mathcal{B}_{1,i}\,+\,\sum_{j=1}^3 \left\{\,\mathcal{B}_{1,3+j} + \mathcal{B}_{1,3+j}^\prime\,\right\},
 $
        where, setting $\tilde{t}=t-[\tau_{s-1}T]$,
        \begin{eqnarray}
        \mathcal{B}_{1,1}&=&T^{-1}\sum_{t\in I_i} \tilde{\vxi}_t\tilde{\vxi}_t^\prime=T^{-1}\sum_{t\in I_i}\left(\,\sum_{l=0}^{\tilde{t}-1}\mF_s^l\veta_{t-l}\,\right)\left(\,\sum_{l=0}^{\tilde{t}-1}\mF_s^l\veta_{t-l}\,\right)^\prime\nonumber\\[0.1in]
        \mathcal{B}_{1,2}&=&T^{-1}\sum_{t\in I_i} \left( \sum_{l=0}^{\tilde{t}-1}\mF_s^l\vmu_s\right)\,\left(\,\sum_{l=0}^{\tilde{t}-1}\mF_s^l\vmu_s\right)^\prime\nonumber\\[0.1in]
        \mathcal{B}_{1,3}&=&T^{-1}\sum_{t\in I_i} \mF_s^{\tilde{t}}\vxi_{[\tau_{s-1}T]}\vxi_{[\tau_{s-1}T]}^\prime \mF_s^{\tilde{t}\prime} \nonumber \\[0.1in]
        \mathcal{B}_{1,4}&=&T^{-1}\sum_{t\in I_i} \tilde{\vxi}_t \left(\,\sum_{l=0}^{\tilde{t}-1}\mF_s^l\vmu_s\right)^\prime \nonumber \\[0.1in] \nonumber
        \mathcal{B}_{1,5}&=&T^{-1}\sum_{t\in I_i} \tilde{\vxi}_t \vxi_{[\tau_{s-1}T]}^\prime \mF_s^{\tilde{t}'}\\[0.1in] \nonumber
        \mathcal{B}_{1,6}&=&T^{-1}\sum_{t\in I_i} \left(\,\sum_{l=0}^{\tilde{t}-1}\mF_s^l\vmu_s\right)\vxi_{[\tau_{s-1}T]}^\prime \mF_s^{\tilde{t}'}\label{b16}.
        \end{eqnarray}

Note that
        $
        \mathcal{B}_{1,1}\;=\;\sum_{i=1}^3 \mathcal{B}_{1,1}^{(i)}
        $
        where
        \begin{eqnarray*}
                \mathcal{B}_{1,1}^{(1)}&=&T^{-1}\sum_{t\in I_i}\sum_{l=0}^{\tilde{t}-1}\mF_s^l \mOmega_{t-l\mid t-l-1}\mF_s^{l\prime},\\[0.1in]
                \mathcal{B}_{1,1}^{(2)}&=&T^{-1}\sum_{t\in I_i}\sum_{l=0}^{\tilde{t}-1}\mF_s^l \left(\,\veta_{t-l}\veta_{t-l}^\prime -\mOmega_{t-l\mid t-l-1}\,\right)\mF_s^{l\prime},\\[0.1in]
                \mathcal{B}_{1,1}^{(3)}&=&T^{-1}\sum_{t\in I_i}\sum_{l,j=0,l\neq j}^{\tilde{t}-1} \mF_s^l \veta_{t-l}\veta_{t-j}^\prime \mF_s^{j\prime}.
        \end{eqnarray*}
        It can be shown that         $
        \mathcal{B}_{1,1}^{(1)}\;=\;\mathcal{C}_1\,+\,\mathcal{C}_2,
        $
        where
        \begin{eqnarray*}
                \mathcal{C}_1&=&\sum_{l=0}^{(\Delta \tau_sT)-1} \mF_s^l\,\left\{\, T^{-1}\sum_{t\in I_i}\mOmega_{t\mid t-1}\,\right\}\,\mF_s^{l\prime}\\[0.1in]
                \mathcal{C}_2&=&-T^{-1}\sum_{l=1}^{(\Delta \tau_sT)-1} \mF_s^l\,\left(\,\sum_{j=[\tau_s T] -l+1}^{[\tau_s T]}\mOmega_{j\mid j-1} \,\right)\mF_s^{l\prime}.
        \end{eqnarray*}
        Under our Assumptions, \black{it is shown in Section \ref{supsec2} that} $\mathcal{C}_2=o_p(1)$. Furthermore, we have:
        \begin{eqnarray}
        \mathcal{C}_1\;\stackrel{p}\to\;\sum_{l=0}^{\infty} \mF_s^l \left[\begin{array}{cc} \mA_s^{-1}\int_{\tau_{s-1}}^{\tau_s}\overline \mSigma(\tau)\rd \tau \mA_s^{-1'} &\vzeros_{n\times n(p-1)}\\\vzeros_{n\times n(p-1)}' & \vzeros_{n(p-1)\times n(p-1)}\end{array}\,\right] \mF_s^{l\prime}=\mathbb{B}_1(\tau_{s-1},\tau_s),\label{bb1}
        \end{eqnarray}
        where $\overline \mSigma (a)=\mS \{\mD(a)\}^2 \mS^\prime$, and $\mD(a) = \diag_{i=1:n}(d_i(a))$, with $d_i(a)$ defined in Assumption \ref{a8}.

        Now consider $\mathcal{B}_{1,1}^{(2)}$. Recall that the only non-zero elements of $\veta_{t-l}\veta_{t-l}^\prime -\mOmega_{t-l\mid t-l-1}$ are in the upper left-hand block and take the form:
        $$
        \mA_s^{-1}\left(\vepsi_{t-l} \vepsi_{t-l}^\prime\,-\, \overline{\mSigma}_{t-l\mid t-l-1}\right)\mA_s^{-1'}.
        $$
        Since each element of the matrix $\vepsi_{t} \vepsi_{t}^\prime-\E[\vepsi_{t} \vepsi_{t}^\prime|\mathcal{F}_{t-1}]$ is a mean-zero m.d.s., and has uniformly bounded $(2+\delta)$ moments (Assumption \ref{a8}), it follows that each element of the matrix $\sum_{l=0}^{\tilde{t}-1}\mF_s^l \left(\,\veta_{t-l}\veta_{t-l}^\prime -\mOmega_{t-l\mid t-l-1}\,\right)\mF_s^{l\prime}$ is a $L^1$-mixingale with constants that are uniformly bounded. Therefore, by Lemma \ref{lem1}, it follows that $\mathcal{B}_{1,1}^{(2)}\stackrel{p}{\to}0$.

        Now consider $\mathcal{B}_{1,1}^{(3)}$.  Under Assumptions \ref{a7} and \ref{a8}, each element of the matrix $\sum_{\ell,j=0,\ell\neq j}^{\tilde{t}-1}\mF_s^\ell\veta_{t-\ell}\veta_{t-j}^\prime \mF_s^{j\prime}$ is a $L^1$-mixingale satisfying the conditions of Lemma \ref{lem1}, so $\mathcal{B}_{1,1}^{(3)}\stackrel{p}{\to}0$.
        Therefore,
        \begin{equation}\label{bb1a}
        \mathcal B_{1,1} \inp \mathbb{B}_1(\tau_{s-1},\tau_s).
        \end{equation}

        From Assumption \ref{a7} and Section \ref{supsec2}, it follows that:
        \begin{equation}
        \label{bb2}
        \mathcal{B}_{1,2}\;\to\;\Delta \tau_s (\mI_{np}-\mF_s)^{-1}\vmu_s\vmu_s^\prime(\mI_{np}-\mF_s)^{-1^\prime} = \int_{\tau_{s-1}}^{\tau_s} \mathbb Q_1(\tau)\mathbb Q_1'(\tau) =\mathbb{B}_2(\tau_{s-1},\tau_s).
        \end{equation}

        Now consider $\mathcal{B}_{1,3}$. Under Assumptions \ref{a7} and \ref{a8}, it follows that $\vxi_{[\tau_{s-1}T]}\vxi_{[\tau_{s-1}T]}^\prime=O_p(1)$ and so from Assumption \ref{a7},  $\mathcal{B}_{1,3}=o_p(1)$.
        Next, consider $\mathcal{B}_{1,4}$. Under Assumptions \ref{a7} and \ref{a8},  $\tilde{\vxi}_t \left(\,\sum_{l=0}^{\tilde{t}-1}\mF_s^l\vmu_s\right)^\prime$ is a m.d.s. \black{with uniformly bounded $(2+\delta)$ moments}, so $\mathcal{B}_{1,4}=\op(1)$.         Next, consider $\mathcal{B}_{1,5}$. Under Assumptions \ref{a7} and \ref{a8}, it can be shown that $\tilde\vxi_t$ is a $L^1$-mixingale satisfying Lemma \ref{lem1}. Hence $$\mathcal{B}_{1,5}=T^{-1}\sum_{t\in I_i}   \sum_{l=0}^{\Delta\tau_{s}T-1}\mF_s^l\veta_{t-l}O_p(1)\mF_s^{\tilde{t}'}=\op(1).$$
        \noindent
        Finally, it is shown in Section \ref{supsec2} that under Assumption \ref{a7}, $\mathcal{B}_{1,6}=\op(1)$.
        %
        %
        %
        %
        %
        %
        %

        Combining these results, we obtain:
        $$
        \mathcal{B}_1\;\stackrel{p}{\to}\; \mathbb{B}(\tau_{s-1},\tau_s),
        $$
        where
        \begin{equation}
        \mathbb{B}(\tau_{s-1},\tau_s)\;=\;\mathbb{B}_1(\tau_{s-1},\tau_s)+\mathbb{B}_2(\tau_{s-1},\tau_s).\label{mathbbb}
        \end{equation}
        \black{Next, define $\tilde I_s^{-}=\{[\tau_{s-1}T]+2,[\tau_{s-1}T]+3,\ldots,[\tau_sT]\}$, and note that $\tilde I_s=\tilde I_s^{-}\cup\{[\tau_{s-1}T]+1\}$. Similarly to $\mathcal{B}_1$,}
        \begin{align}\nonumber
        \mathcal{B}_3&= T^{-1} \sum_{t \in \tilde I_s} \vxi_{t-1}\vxi_{t-1}^\prime = T^{-1} \sum_{t\in \tilde I_s^{-}} \vxi_{t-1}\vxi_{t-1}^\prime + T^{-1} \vxi_{[\tau_{s-1}T]-1}\vxi_{[\tau_{s-1}T]-1}' \\ \nonumber
        & =T^{-1} \sum_{t \in \tilde I_s} \vxi_{t}\vxi_{t}^\prime - T^{-1} \vxi_{[\tau_sT]}\vxi_{[\tau_sT]}'+T^{-1} \vxi_{[\tau_{s-1}T]-1}\vxi_{[\tau_{s-1}T]-1}' \\ \label{eqB3}
        & =  \mathbb{B}(\tau_{s-1},\tau_s) +\op(1)- T^{-1} \vxi_{[\tau_sT]}\vxi_{[\tau_sT]}'+T^{-1} \vxi_{[\tau_{s-1}T]-1}\vxi_{[\tau_{s-1}T]-1}.
        \end{align}
        \black{We now show that $\sup_t \E \|\vxi_t \vxi_t' \| <c$ for some $c>0$. By backward substitution of the first regime $s=1$, $\vxi_t = \mF^{t}_s \vxi_{0} + \left(\sum_{l=0}^{t-1}\mF_s^l\right)\vmu_s+ \tilde \vxi_t$. Let $\vc_t = \left(\sum_{l=0}^{t-1}\mF_s^l\right)\vmu_s$, and note that $ \|\vc_t\|< c_0$ for some $c_0>0$, by arguments similar to Section \ref{supsec2}. Also,
                \begin{align*}
                \E\| \vxi_t \vxi_t' \| & \leq \E\| \mF^{t}_s \vxi_{0}\vxi_{0}' \mF^{t'}_{s}\| + \| \vc_t \vc_t'\| +  \E\| \tilde \vxi_t \tilde \vxi_t'\| + 2 \E \|\mF^{t}_s \vxi_{0} \vc_t' \| + 2 \E\| \vc_t \tilde \vxi_t'\| + 2\E\|\tilde \vxi_t  \vxi_{0}'\mF^{t'}_s \|.
                \end{align*}
                First, $ \sup_t \E\| \mF^{t}_s \vxi_{0}\vxi_{0}' \mF^{t'}_{s}\| \leq \sup_t \| \mF_s^t\|^2\, \E \| \vxi_0\|^2 < c_1$ by Assumptions \ref{a7} and \ref{a8} for some $c_1>0$. Second, $\sup_t \|\vc_t\vc_t'\| < \sup_t \|\vc_t\|^2< c_0^2=c_2$. Third, $\sup_t \E \|\veta_{t-l} \veta_{t-j}\| \leq \sup_t (\E\|\veta_{t-l}\|^2)^{1/2} \sup_t(\E\|\veta_{t-j}\|^2)^{1/2}< c^*$ for some $c^*>0$ by Assumption \ref{a8}. Therefore, $ \sup_t \E\| \tilde \vxi_t \tilde \vxi_t'\| \leq  \sup_t \left( \sum_{l,j=0}^{t-1} \| \mF_s^l \| \, \, \| \mF_s^j\| \, \, \sup_t \E\| \veta_{t-l} \veta_{t-j}'\|  \right) \leq c^* \sup_t \left(\sum_{l,j=0} ^{t-1}\| \mF_s^l \| \, \, \| \mF_s^j\| \right)<c_3 $ for some $c_3>0$. Similarly,  $\sup_t \left( 2 \E \|\mF^{t}_s \vxi_{0} \vc_t' \| + 2 \E\| \vc_t \tilde \vxi_t'\| + 2\E\|\tilde \vxi_t  \vxi_{0}'\mF^{t'}_s \| \right)<c_4$ for some $c_4>0$. Therefore,
                \begin{equation}\label{momvxi}
                \sup_t \E\| \vxi_t \vxi_t' \| <c_1+c_2+c_3+c_4 = c <\infty,
                \end{equation}so,  by Markov's inequality, $ T^{-1} \vxi_{[\tau_sT]}\vxi_{[\tau_sT]} =\op(1)$ and $T^{-1} \vxi_{[\tau_{s-1}T]-1}\vxi_{[\tau_{s-1}T]-1} =\op(1)$ for $s=1$. For the other regimes $s>1$, repeated backward substitution yields the same result. Substituting this result into \eqref{eqB3}, it follows that $\mathcal{B}_3\;\stackrel{p}{\to}\; \mathbb{B}(\tau_{s-1},\tau_s)$. }

        \black{Now consider $\mathcal{B}_{2}$:
                \begin{eqnarray} \nonumber
                \mathcal{B}_{2}&=& T^{-1} \sum_{t\in \tilde I_s} \vxi_t \vxi_{t-1}' =  T^{-1} \sum_{t\in \tilde I_s} (\vmu_s+ \mF_s \vxi_{t-1} +\veta_t) \vxi_{t-1}' \\ \label{eqB22}
                &=& \vmu_s T^{-1} \sum_{t\in \tilde I_s} \vxi_{t-1}' + T^{-1} \sum_{t\in \tilde I_s} \veta_t \vxi_{t-1} + \mF_s \mathbb{B}(\tau_{s-1},\tau_s) +\op(1).
                \end{eqnarray}
                First, note that $\vmu_s T^{-1} \sum_{t\in \tilde I_s} \vxi_{t-1}' = \vmu_s \mathcal A_2' = \vmu_s \mathbb A'(\tau_{s-1},\tau_s)+\op(1)$. Next, $\veta_t \vxi_{t-1}'$ is a m.d.s. sequence, so each of its element is $L^1$-mixingale with bounded constants as defined in Lemma \ref{lem1}. Also, $\sup_t \E \| \veta_t \vxi_{t-1}'\|^{1+\delta/2} \leq \sup_t (\E \| \veta_t\|^{2+\delta})^{\frac{1}{2+\delta}} \sup_t (\E \| \vxi_{t}\|^{2+\delta})^{\frac{1}{2+\delta}} <\infty$, so $\veta_t \vxi_{t-1}$ satisfies Lemma \ref{lem1} element-wise, so $T^{-1} \sum_{t\in \tilde I_s} \veta_t \vxi_{t-1}' \inp 0$. Substituting these results into \eqref{eqB22}, we obtain:
                \begin{eqnarray}\label{eqB2}
                \mathcal{B}_{2}&\inp & \vmu_s \mathbb A'(\tau_{s-1},\tau_s)+ \mF_s \mathbb{B}(\tau_{s-1},\tau_s).
                \end{eqnarray}
                Therefore,}
        $$
        \mathcal{A}\;\stackrel{p}{\to}\;\left[\,\begin{array}{ccc} \Delta\tau_s &\mathbb{A}'(\tau_{s-1},\tau_s)\mathcal{S}_{\vr}^\prime &\mathbb{A}'(\tau_{s-1},\tau_s)\mathcal{S}^\prime\\\mathcal{S}_{\vr}\mathbb{A}(\tau_{s-1},\tau_s)&\mathcal{S}_{\vr}\mathbb{B}(\tau_{s-1},\tau_s)\mathcal{S}_{\vr}^\prime&\mathcal{S}_{\vr}\left( \vmu_s \mathbb A'(\tau_{s-1},\tau_s)+ \mF_s \mathbb{B}(\tau_{s-1},\tau_s)\right)\mathcal{S}^\prime \\
        \mathcal{S}\mathbb{A}(\tau_{s-1},\tau_s) &\mathcal{S}\left( \vmu_s \mathbb A'(\tau_{s-1},\tau_s)+ \mF_s \mathbb{B}(\tau_{s-1},\tau_s)\right)' \mathcal{S}_{\vr}^\prime& \mathcal{S}\mathbb{B}(\tau_{s-1},\tau_s)\mathcal{S}^\prime\end{array}\,\right]\,.
        $$
        Now consider the case of $I_i$ containing $N_i$ breaks from the total set of $N$ breaks, that is, there is an $s$ such that $\tau_{s-1}<\lambda_{i-1}\leq \tau_s$ and $\tau_{s+N_i}\leq\lambda_i<\tau_{s+N_i+1}$. Then, generalizing the previous results which were for $I_i=\tilde I_s=[[\tau_{s-1} T]+1, [\tau_s T]]$, we have by similar arguments that:
        $
        \hat{\mQ}_{\vz,(i)}\;\stackrel{p}{\to}\; \int_{\lambda_{i-1}}^{\lambda_{i}}\mathbb{Q}_{\vz}(\tau)\rd\tau .
        $

By Lemma \ref{lem7}, $\hat \mUpsilon_t = \mUpsilon_t^0 +\op(1)$, therefore
    $$
        \hat{\mQ}_{(i)} \;\stackrel{p}{\to}\; \int_{\lambda_{i-1}}^{\lambda_{i}}\mUpsilon'(\tau)\mathbb{Q}_{\vz}(\tau)\mUpsilon(\tau)\rd\tau = \mathbb Q_i.
        $$

\end{proof}

\begin{proof}[Proof of Lemma \ref{lem4}]
        Let $\phi_t$ denote the $(a,b)$ element of the matrix $[\E\,(\vl_t\vl_t'|\mathcal{F}_{t-1})-\mI_n]$, $[\E\,( (\vl_t \vl_t') \otimes \vl_{t-i} | \mathcal F_{t-1})-\vrho_i]$ or $[\E\,( (\vl_t \vl_t') \otimes \vl_{t-i} \vl_{t-j} | \mathcal F_{t-1})-\vrho_{i,j}]$. Note that $\phi_t$ is a m.d.s., so it is a $L^1$-mixingale with constants satisfying Lemma \ref{lem1}. We now show that $\sup_t\E\left|\phi_t\right|^b <\infty$, letting $b=1+\delta/4>1$ (for example).

        From Jensen's inequality, , we have:
        \begin{eqnarray}
        \left|\E(\phi_t|\mathcal{F}_{t-1})\right|^b\leq \E\left(|\phi_t|^b \, |\mathcal{F}_{t-1} \right),\,\,\,\left| \E(\phi_t)\right|^b\leq \E \left| \phi_t\right|^b.
        \label{Jen}
        \end{eqnarray}

        Consider $\E\left(\left|\E(\phi_t|\mathcal{F}_{t-1})-\E(\phi_t)\right|^b\right)$. We have:
        \begin{eqnarray*}
                \left| \E(\phi_t|\mathcal{F}_{t-1})-\E(\phi_t)\right|^b\leq 2^b \max\left( \left| \E(\phi_t|\mathcal{F}_{t-1})\right|^b,\left|\E(\phi_t)\right|^b\right).
        \end{eqnarray*}

        From \eqref{Jen}, $\sup_t \E(\left| \E(\phi_t|\mathcal{F}_{t-1})\right|^b)\leq \sup_t\E( \E(\left|\phi_t\right|^b\,|\mathcal{F}_{t-1}))=\sup_t \E\left| \phi_t\right|^b$, so $\sup_t \E\left(\left| \E(\phi_t|\mathcal{F}_{t-1})-\E(\phi_t)\right|^b\right)\leq 2^b \sup_t \E\left| \phi_t\right|^b<\infty$.  Therefore, the conditions of Lemma \ref{lem1} are satisfied, so $T^{-1}\sum_{t=1}^{[Tr]}\E(\phi_t|\mathcal{F}_{t-1})-\E(\phi_t)\inp 0$ uniformly in $r$.
\end{proof}

\begin{proof}[Proof of Lemma \ref{lem5}]\hfill\\
By Assumption \ref{a8}(i) and Lemma \ref{lem4}, Assumption 2 in \citet{Boswijketal:2016} is satisfied; therefore, the proof follows by exactly the same arguments as in the proof of their Lemma \ref{lem2}, page 79, paragraphs 1-2.

\end{proof}

\begin{proof}[Proof of Lemma \ref{lem6}]\hfill \\
        \vspace*{0.1in} We first derive the asymptotic distribution of $T^{-1/2} \sum_{t\in I_i} \vz_tu_t$ and $T^{-1/2} \sum_{t\in I_i} \vz_t\vv_t^\prime \vbeta_x^0$ for interval  $I_i=\tilde I_s=[[\tau_{s-1}T]+1,\ldots,[\tau_{s}T]]$ when all the coefficients of the VAR are stable. Let $\vg_t=\vepsi_{t,\#} = \vec(\vepsi_t,\vzeros_{n(p-1)})=\vec( u_t, \vv_t,\vzeta_t,\vzeros_{n(p-1)}).$

        Then $\veta_t=\mA_{s,\#}^{-1}\;\vg_t$ and $\vg_t=\mA_{s,\#}\;\veta_t$.  Hence, $T^{-1/2} \sum_{t\in \tilde I_s} \vz_tu_t=T^{-1/2} \sum_{t\in \tilde I_s} \vz_t\vg_t^\prime\mathcal{S}_u$ and $T^{-1/2} \sum_{t\in \tilde I_s} \vz_t\vv_t'\vbeta_x^{0}=T^{-1/2} \sum_{t\in \tilde I_s} \vz_t\vg_t'\vbeta_{\vx,\#}$.
        We now derive the limit of $\mathcal{Z}_T=T^{-1/2} \sum_{t\in \tilde I_s} \vz_t\vg_t^\prime\mathcal{S}_{\dag}$.
        \begin{align*}
        \mathcal{Z}_T=T^{-1/2} \sum_{t\in \tilde I_s}\vz_t\veta_t^\prime\mA_{s,\#}^\prime\;\mathcal{S}_{\dag}
        =  \begin{bmatrix}T^{-1/2} \sum_{t \in \tilde I_s}\veta_t^\prime\mA_{s,\#}^\prime\;\mathcal{S}_{\dag}\\ T^{-1/2} \sum_{t \in \tilde I_s}\mathcal{S}_{\vr}\vxi_{t}\veta_t^\prime\mA_{s,\#}^\prime\;\mathcal{S}_{\dag}\\T^{-1/2} \sum_{t \in \tilde I_s}\mathcal{S}\vxi_{t-1} \veta_t^\prime\mA_{s,\#}^\prime\;\mathcal{S}_{\dag}
        \end{bmatrix} \equiv \begin{bmatrix} \mathcal E_1 \\ \mathcal E_2 \\ \mathcal E_3 \end{bmatrix}.
        \end{align*}

        Note that $\veta_t=\mA_{s,\#}^{-1}\;\mS_{\#}\;\mD_{t,\#}\;\vl_{t,\#}$.

        $\bullet$ Consider first $\mathcal E_1$. Notice that
        $\mathcal{E}_1=\mathcal E_1'=T^{-1/2} \sum_{t \in \tilde I_s}\mathcal{S}_{\dag}'\mA_{s,\#}\veta_t \;=\vecc(\mathcal{E}_1)= \mathcal{S}_{\dag}'\mA_{s,\#}\mathcal{\tilde E}_1$, where $\tilde{\mathcal E}_1 = T^{-1/2} \sum_{t \in \tilde I_s}\veta_t$.  Hence, consider $\mathcal{\tilde E}_1$. By Lemma \ref{lem5},  $\mB_{0,T}(r)=T^{-1/2}\sum_{t=1}^{[rT]}\vl_t\Rightarrow\mB_0(r)$. By Assumption \ref{a8} and Lemma \ref{lem4}, the conditions in Lemmas 1-2 of \citet{Boswijketal:2016} are satisfied, so: \begin{eqnarray*}
                \mathcal{\tilde E}_1&=&T^{-1/2}\sum_{t\in \tilde I_s}\veta_t = \mA_{s,\#}^{-1}\mS_{\#}T^{-1/2}\sum_{t\in \tilde I_s}(\mD_{t,\#}\vl_{t,\#})=\mA_{s,\#}^{-1}\mS_{\#}\left(\int_{\tau_{s-1}}^{\tau_s}\mD_{\#}(\tau)\rd \mB_{0,T,\#}(\tau)\right)\\
                &\Rightarrow& \mA_{s,\#}^{-1}\mS_{\#}\left(\int_{\tau_{s-1}}^{\tau_s}\mD_{\#}(\tau)\rd \mB_{0,\#}(\tau)\right).
        \end{eqnarray*}
        Using the fact that $\mA_{s,\#}\mA_{s,\#}^{-1}=\mI_{\#}$, $\mI_{\#}\mS_{\#}=\mS_{\#}$,
        \begin{align}
        \label{mathcalM1}
        \mathcal{E}_{1} = \mathcal{S}_{\dag}'\mA_{s,\#}\mathcal{\tilde E}_1 \Rightarrow (\mathcal{S}_{\dag}'\mS_{\#})\left(\int_{\tau_{s-1}}^{\tau_s}\mD_{\#}(\tau)\rd \mB_{0,\#}(\tau)\right)=\mathbb{M}_{1}(\tau_{s-1},\tau_s),
        \end{align}
        with variance
        \begin{align}
        \label{varmathcalM1}
        \mV_{\mathbb{M}_{1}(\tau_{s-1},\tau_s)} =(\mathcal{S}_{\dag}'\mS_{\#})\left(\int_{\tau_{s-1}}^{\tau_s}\mD_{\#}(\tau)\mD_{\#}^{\prime}(\tau)\rd \tau\right)(\mathcal{S}_{\dag}'\mS_{\#})'.\end{align}

        $\bullet$ Next, consider $\mathcal{E}_3$. Note that $\mathcal{E}_3=\vecc(\mathcal{E}_3)=((\mathcal{S}_{\dag}'\mA_{s,\#})\otimes\mathcal{S})\;T^{-1/2} \sum_{t \in \tilde I_s}(\veta_t\otimes\vxi_{t-1})=((\mathcal{S}_{\dag}'\mA_{s,\#})\otimes\mathcal{S})\mathcal{\tilde E}_3$, where $\tilde{\mathcal E}_3 =T^{-1/2} \sum_{t \in \tilde I_s}(\veta_t\otimes\vxi_{t-1})$.  Hence consider $\mathcal{\tilde E}_3$.  Letting $\tilde t = t-[\tau_{s-1}T]$ and recall that $\tilde I_s^- = [[\tau_{s-1}T]+2, [\tau_s T]]$. Then:
        \begin{align*}
        \tilde{\mathcal E}_3 & = T^{-1/2} \sum_{t\in \tilde I_s} \veta_t \otimes \vxi_{t-1} \\
        & =   T^{-1/2} \veta_{[\tau_{s-1}T]+1} \vxi_{[\tau_{s-1}T]} + T^{-1/2} \sum_{t\in \tilde I_s^-} \veta_t \otimes \left[\mF^{\tilde t-1}_s \vxi_{[\tau_{s-1}T]}\right] + T^{-1/2} \sum_{t\in \tilde I_s^-} \veta_t \otimes \left[ \left(\sum_{l=0}^{\tilde t-2}\mF_s^l\right)\vmu_s\right]  \\
        & +  T^{-1/2} \sum_{t\in \tilde I_s^-} \veta_t \otimes \left[ \sum_{l=0}^{\tilde t-2}\mF_s^l \veta_{t-l-1}\right] \\
        &  \equiv \sum_{i=1}^4 \tilde {\mathcal E}_{3,i}.
        \end{align*}
        Note that $\E \|\tilde{\mathcal E}_{3,1} \| \leq T^{-1/2} \sup_t (\E \| \veta_{t}\|^2)^{1/2}\sup_t (\E \| \vxi_{t}\|^2)^{1/2} < c_{\veta} \sup_t (\E \| \vxi_{t}\|^2)^{1/2}$ for some $c_{\veta}>0$ by Assumption \ref{a8}. Since $ \sup_t \E \| \vxi_{t} \vxi_t'\|< c$ by \eqref{momvxi}, $\tilde{\mathcal E}_{3,1} \inp 0$.

        Next, from above, $\vxi_{[\tau_{s-1}T]} =\Op(1)$, and
        $$
        \tilde {\mathcal E}_{3,2} = \left( T^{-1/2} \sum_{t\in \tilde I_s^-} (\mI_{np} \otimes \mF_s^{t-[\tau_{s-1}T]-1} )\, (\veta_t \otimes \mI_{np} ) \right) \vxi_{[\tau_{s-1}T]},  $$ where for some $c^*>0$,
        \begin{align*}
        &  \E\| T^{-1/2} \sum_{t\in \tilde I_s^-} (\mI_{np} \otimes \mF_s^{t-[\tau_{s-1}T]-1} )\, (\veta_t \otimes \mI_{np})\|  \leq   T^{-1/2} \sum_{t\in \tilde I_s^-} \| \mI_{np} \otimes \mF_s^{t-[\tau_{s-1}T]-1} \| \, \,  \E \|\veta_t \otimes \mI_{np}\| \\
        & = T^{-1/2} \sum_{l=0}^{\Delta \tau_s T-2} \| \mI_{np} \| \, \,\| \mF_s^{l+1}\| \, \E\| \veta_t \| \,\, \| \mI_{np} \|  \leq c^*  T^{-1/2} \sum_{l=0}^{\Delta \tau_s T-2} \| \mF_s^{l+1}\|  \rightarrow 0,
        \end{align*}
        where we used $\| \mA \otimes \mB \| = \|\mA\|\,\, \|\mB\|$, and the last statement follows by Assumptions \ref{a7} and \ref{a8} and the derivations in Section \ref{supsec2} below. Therefore, by the Markov inequality, it follows that
        $
        \tilde{\mathcal E}_{3,2} =\op(1).
        $\\[0.1in]
        Next, we show that $\tilde{\mathcal E}_{3,4} = \sum_{l=0}^{\Delta\tau_sT-2}  [\mI_{np} \otimes \mF_s^l] \, \left(T^{-1/2}\sum_{t\in \tilde I_s^-}  [ \veta_t \otimes \veta_{t-l-1}]\right) +\op(1) \equiv \tilde{\mathcal E}_{3,4}^{(1)}+\op(1)$. To that end, let $\tilde n=[\tau_{s-1}T]$.

        \begin{align}\nonumber
        &T^{-1/2} \sum_{t\in \tilde I_s^-} \veta_t \otimes \left[ \sum_{l=0}^{t-\tilde n-2}\mF_s^l \veta_{t-l-1}\right]   =  T^{-1/2} \veta_{\tilde n+2} \otimes [\mF_s^0 \veta_{\tilde n+1}] + T^{-1/2} \veta_{\tilde n+3} \otimes [\mF_s^0 \veta_{\tilde n+2}+\mF_s^1 \veta_{\tilde n+1}] \\ \nonumber
        &+  T^{-1/2} \veta_{\tilde n+4} \otimes [\mF_s^0 \veta_{\tilde n+3}+\mF_s^1 \veta_{\tilde n+2} +\mF_s^2 \veta_{\tilde n+1}] + \ldots  \\ \nonumber
        & +   T^{-1/2} \veta_{\tilde n+ \Delta \tau_s T} \otimes [\mF_s^0 \veta_{\tilde  n+ \Delta \tau_s T-1}+\mF_s^1 \veta_{\tilde n+ \Delta \tau_s T -2}+ .. +\mF_s^{\Delta \tau_s T-2} \veta_{\tilde n+1}]  \\ \nonumber
        & = (T^{-1/2} \veta_{\tilde n+2} \otimes [\mF_s^0 \veta_{\tilde n+1}] + T^{-1/2} \veta_{\tilde n+3} \otimes [\mF_s^0 \veta_{\tilde n+2}] + \ldots + T^{-1/2} \veta_{\tilde n +\Delta \tau_s T} \otimes [\mF_s^0 \veta_{\tilde  n+ \Delta \tau_s T-1}]) + \\ \nonumber
        & + (T^{-1/2} \veta_{\tilde n+2} \otimes [\mF_s^1 \veta_{\tilde n}] + T^{-1/2} \veta_{\tilde n+3} \otimes [\mF_s^1 \veta_{\tilde n+1}] + T^{-1/2} \veta_{\tilde n+4} \otimes [\mF_s^0 \veta_{\tilde n+2}] + \ldots + T^{-1/2} \veta_{\tilde n +\Delta \tau_s T} \otimes [\mF_s^1 \veta_{\tilde  n+ \Delta \tau_s T-2}]  ) \\ \nonumber
        & - T^{-1/2} \veta_{\tilde n+2} \otimes [\mF_s^1 \veta_{\tilde n}] \\ \nonumber
        & + (T^{-1/2} \veta_{\tilde n+2} \otimes [\mF_s^2 \veta_{\tilde n-1}] + T^{-1/2} \veta_{\tilde n+3} \otimes [\mF_s^2 \veta_{\tilde n}] + T^{-1/2} \veta_{\tilde n+4} \otimes [\mF_s^2 \veta_{\tilde n+1}] + \ldots + T^{-1/2} \veta_{\tilde n +\Delta \tau_s T} \otimes [\mF_s^2 \veta_{\tilde  n+ \Delta \tau_s T-3}]  ) \\  \nonumber
        & - (T^{-1/2} \veta_{\tilde n+2} \otimes [\mF_s^2 \veta_{\tilde n-1}]+ T^{-1/2} \veta_{\tilde n+3} \otimes [\mF_s^2 \veta_{\tilde n}]) + \ldots \\ \nonumber
        & + (T^{-1/2} \veta_{\tilde n+2} \otimes [\mF_s^{\Delta \tau_s T-2} \veta_{\tilde n+3 - \Delta \tau_s T}] + T^{-1/2} \veta_{\tilde n+3} \otimes [\mF_s^{\Delta \tau_s T-2} \veta_{\tilde n+4 - \Delta \tau_s T}] + \ldots + T^{-1/2} \veta_{\tilde n +\Delta \tau_s T} \otimes [\mF_s^{\Delta \tau_s T-2} \veta_{\tilde  n+ 1}]  )\\ \nonumber
        & - ( T^{-1/2} \veta_{\tilde n+2} \otimes [\mF_s^{\Delta \tau_s T-2} \veta_{\tilde n+3 - \Delta \tau_sT}] + T^{-1/2} \veta_{\tilde n+3} \otimes [\mF_s^{\Delta \tau_s T-2} \veta_{\tilde n+4 - \Delta \tau_s T}] + \ldots + T^{-1/2} \veta_{\tilde n +\Delta \tau_s T-1} \otimes [\mF_s^{\Delta \tau_s T-2} \veta_{\tilde  n+ 2}]  )\\ \label{nstar1}
        & =  \tilde{\mathcal E}_{3,4}^{(1)} - T^{-1/2} \sum_{l=1}^{\Delta \tau_s T -2}[ \mI_{np} \otimes \mF_s^l ] \, \sum_{j=0}^{l-1}\, \veta_{\tilde n+2+j} \otimes \veta_{\tilde n+2+j-(\black{l+1})}.
        \end{align}

        Note that by Assumption \ref{a7}, for some $c,c^*>0$,
        \begin{align}\nonumber
        &\E \|T^{-1/2} \sum_{l=1}^{\Delta \tau_s T -2}[ \mI_{np} \otimes \mF_s^l ] \, \sum_{j=0}^{l-1}\, \veta_{\tilde n+2+j} \otimes \veta_{\tilde n+2+j-\black{(l+1)}} \, \| \\ \nonumber
        &\leq T^{-1/2} \sum_{l=1}^{\Delta \tau_s T -2}\| \mI_{np} \otimes \mF_s^l \| \sum_{j=0}^{l-1}\, \E\|\veta_{\tilde n+2+j} \otimes \veta_{\tilde n+2+j-\black{(l+1)}}\|
        \\ \nonumber
        & \leq T^{-1/2} \sum_{l=1}^{\Delta \tau_s T -2}\| \mI_{np} \otimes \mF_s^l \| \sum_{j=0}^{l-1}  (\| \E([\veta_{\tilde n+2+j} \veta_{\tilde n+2+j}'] \otimes [\veta_{\tilde n+2+j-(l-1)}\veta_{\tilde n+2+j-\black{(l+1)}}^\prime])\|)^{1/2}\\ \nonumber
        &\leq T^{-1/2} \sum_{l=1}^{\Delta \tau_s T -2}\| \mI_{np} \otimes \mF_s^l \| \sum_{j=0}^{l-1}  \|\mA_s^{-1} \otimes \mA_s^{-1}\|^2 \,\, (\| \E([\vepsi_{\tilde n+2+j} \vepsi_{\tilde n+2+j}'] \otimes [\vepsi_{\tilde n+2+j-\black{(l+1)}}\vepsi_{\tilde n+2+j-\black{(l+1)}}^\prime]\|)^{1/2}\\ \nonumber
        & \leq T^{-1/2} \sum_{l=1}^{\Delta \tau_s T -2}\sum_{j=0}^{l-1} \| \mI_{np} \otimes \mF_s^l \|c \,\, (\| \E([\vl_{\tilde n+2+j} \vl_{\tilde n+2+j}^\prime] \otimes [\vl_{\tilde n+2+j-\black{(l+1)}}\vl_{\tilde n+2+j-\black{(l+1)}}^\prime\|)^{1/2} \\ \label{nstar2}
        &  \leq c T^{-1/2} \sum_{l=1}^{\Delta \tau_s T -2} np \,\,  l \|\mF_s^l \|  \sup_j (||\vrho_{jj}\|)^{1/2} \leq c^* \, \sum_{l=1}^{\Delta \tau_s T -2} l\| \mF_s^l \| \rightarrow 0,
        \end{align}
        where the last statement follows from Assumptions \ref{a7} and \ref{a8} and arguments similar to the ones in Section \ref{supsec2}. Therefore, by Markov's inequality, $T^{-1/2} \sum_{l=1}^{\Delta \tau_s T -2}[ \mI_{np} \otimes \mF_s^l ] \, \sum_{j=0}^{l-1}\, \veta_{\tilde n+2+j} \otimes \veta_{\tilde n+2+j-(l-1)} =\op(1)$, so

        $$\tilde{\mathcal E}_{3,4} = \tilde{\mathcal E}_{3,4}^{(1)} + \op(1).$$

        By similar arguments, $$\tilde{\mathcal E}_{3,3} =  \tilde{\mathcal E}_{3,3}^{(1)} + \op(1),$$
        where $\tilde{\mathcal E}_{3,3}^{(1)} =\sum_{l=0}^{\Delta\tau_sT-2}  [\mI_{np} \otimes \mF_s^l] \, \left(T^{-1/2}\sum_{t\in \tilde I_s^-}  [ \veta_t \otimes \vmu_s]\right)$.

        Putting the results about $\tilde{\mathcal E}_{3,i}$ for $i=1,2,3,4$ together, and letting $$\tilde{\mathcal E}_4(l) = [\mI_{np} \otimes \mF_s^l] \, \left(T^{-1/2}\sum_{t\in \tilde I_s^-}  \veta_t \otimes [ \vmu_s + \veta_{t-l-1}]\right),$$  and $\bar {\mathcal E}_4 (\Delta\tau_s T-2) = \sum_{l=0}^{\Delta\tau_s T-2} \tilde{\mathcal E}_4(l)$, we have:
        \begin{equation}
        \tilde{\mathcal E}_3 = \sum_{l=0}^{\Delta\tau_s T-2} \tilde{\mathcal E}_4(l) + \op(1) =\bar {\mathcal E}_4 (\Delta\tau_s T-2) +\op(1).
        \end{equation}

        We are interested in the asymptotic distribution of $\bar{\mathcal E}_4 (\Delta\tau_s T-2)$. First, consider the asymptotic distribution of $\tilde{\mathcal E}_4(l)$ for a given $l$.  First, $\mB_{0,T,\#}(r)=T^{-1/2}\sum_{t=1}^{[rT]}\vl_{t,\#}\Rightarrow\mB_{0,\#}(r)$.
        We also have that: $\veta_t\otimes\veta_{t-l-1}=(\mA_{s,\#}^{-1}\otimes \mA_{s,\#}^{-1})(\vg_t\otimes\vg_{t-l-1})=(\mA_{s,\#}^{-1}\otimes \mA_{s,\#}^{-1})((\mS_{\#}\mD_{t,\#})\otimes(\mS_{\#}\mD_{t-l-1,\#}))(\vl_{t,\#}\otimes\vl_{t-l-1,\#})$. By Lemma \ref{lem5} , for $l\geq 0$, $\mB_{l+1,T}(r)=T^{-1/2}\sum_{t=1}^{[rT]}(\vl_t\otimes\vl_{t-l-1})\Rightarrow\mB_{l+1}(r)$. Also, we have that $\mB_{l+1,T,\#}(r)=T^{-1/2}\sum_{t=1}^{[rT]}(\vl_{t,\#}\otimes\vl_{t-l-1,\#})\Rightarrow\mB_{l+1,\#}(r)=\left[\begin{array}{c}\mB_{l+1}(r)\\\vzeros_{n^2(p^2-1)}\end{array}\right]$, with variance matrix  $\vrho_{l+1,l+1,\#}^{perm} =\mP_1 \vrho_{l+1,l+1,\#}\mP_2$, where $\mP_1, \mP_2$ are permutation matrices defined as in \citet{Tracy/Jinadasa:1989}, Theorem 7, equation (18); note that this result holds because $\vl_{t\#} \otimes \vl_{t,\#}$ is a permutation of $(\vl_t \otimes \vl_t)_{\#}$.

        Now let  $\vu_t^{(n^*)} =  \vec(\vl_t, \vl_t \otimes\vl_{t-1}, \vl_t \otimes\vl_{t-2},\ldots, \vl_t \otimes\vl_{t-n^*})$ , and $\mB_T^{(n^*)}(r)  =T^{-1/2} \sum_{t=1}^{[Tr]} \vu_t^{(n^*)}$. Then, by Lemma \ref{lem5},
        $
        \mB_T^{(n^*)}(r) =  T^{-1/2} \sum_{t=1}^{[Tr]} \vu_t^{(n^*)} \Rightarrow \mB^{(n^*)}(r)=\vec_{i=0:n^*}(\mB_{i}(r)),
   $
        with variance
        \begin{align*}
        \mV_{\mB^{n^*}}(r) =  r\begin{bmatrix} \mI_n & \vrho_{0,1} & \ldots & \vrho_{0,n^*}\\
        \vrho_{0,1}' & \vrho_{1,1} & \ldots &\vrho_{1,n^*} \\
        \ldots &\ldots & \ldots &\ldots \\
        \vrho_{0,n^*}'& \vrho_{1,n^*}' & \ldots &\vrho_{n^*,n^*}.\end{bmatrix}
        \end{align*}

        We have:
        \begin{eqnarray*}
                \bar{\mathcal E}_4( [\Delta \tau_s T]-2) &=& \bar{\mathcal E}_4(n^*)+ (\bar{\mathcal E}_4( [\Delta \tau_s T]-2) -\bar{\mathcal E}_4(n^*))\\
                &=& \sum_{l=0}^{n^*}  [\mI_{np} \otimes \mF_s^l] \, \left(T^{-1/2}\sum_{t\in \tilde I_s^-}  (\veta_t \otimes \vmu_s)\right)+ \sum_{l=0}^{n^*}  [\mI_{np} \otimes \mF_s^l] \, \left(T^{-1/2}\sum_{t\in \tilde I_s^-}  (\veta_t \otimes \veta_{t-l-1})\right)\\
                &+& (\bar{\mathcal E}_4( [\Delta \tau_s T]-2) -\bar{\mathcal E}_4(n^*))+ \op(1) \\
                &=&\mathcal{\tilde E}_{*,1}(n^*)+\mathcal{\tilde E}_{*,2}(n^*)+(\bar{\mathcal E}_4( [\Delta \tau_s T]-2) -\bar{\mathcal E}_4(n^*))+\op(1)\\
                \mathcal{\tilde E}_{*,1}(n^*)&=& \sum_{l=0}^{n^*}  [\mI_{np} \otimes \mF_s^l] \, \left(\left[\mA_{s,\#}^{-1}\mS_{\#}\int_{\tau_{s-1}}^{\tau_s}  \mD_{\#}(\tau)\rd\mB_{0,T,\#}(\tau)\right] \otimes \vmu_s\right)\\
                &=& \sum_{l=0}^{n^*}  [\mI_{np} \otimes \mF_s^l]((\mA_{s,\#}^{-1}\mS_{\#})\otimes \mI_{np}) \, \left(\left[\int_{\tau_{s-1}}^{\tau_s}  \mD_{\#}(\tau)\rd\mB_{0,T,\#}(\tau)\right] \otimes \vmu_s\right)\\
                &=& \sum_{l=0}^{n^*}  ((\mA_{s,\#}^{-1}\mS_{\#})\otimes \mF_s^l) \, \left(\left[\int_{\tau_{s-1}}^{\tau_s}  \mD_{\#}(\tau)\rd\mB_{0,T,\#}(\tau)\right] \otimes \vmu_s\right)\\
                \mathcal{\tilde E}_{*,2}(n^*)&=& \sum_{l=0}^{n^*}  [\mI_{np} \otimes \mF_s^l] \, ((\mA_{s,\#}^{-1}\mS_{\#})\otimes (\mA_{s,\#}^{-1}\mS_{\#}))\int_{\tau_{s-1}}^{\tau_s}  \left(\mD_{\#}(\tau)\otimes\mD_{\#}\left(\tau-\frac{l+1}{T}\right)\right)\rd\mB_{l+1,T,\#}(\tau)
                \\
                &=& \sum_{l=0}^{n^*}  ((\mA_{s,\#}^{-1}\mS_{\#})\otimes (\mF_s^l\mA_{s,\#}^{-1}\mS_{\#}))\int_{\tau_{s-1}}^{\tau_s}  \left(\mD_{\#}(\tau)\otimes\mD_{\#}\left(\tau-\frac{l+1}{T}\right)\right)\rd\mB_{l+1,T,\#}(\tau).
        \end{eqnarray*}

        Using the convergence $ \mB_{l,T}(r)\Rightarrow \mB_l(r)$ , $l=0,1,\ldots,n^*$, and Theorem 2.1 in \citet{Hansen:1992}, it follows that, as $T\rightarrow \infty$, with $n^*=T^{\alpha}$ and $\alpha \in (0,1)$,
        \begin{align*}
        \mathcal{\tilde E}_{*,1}(n^*)+\mathcal{\tilde E}_{*,2}(n^*)&\Rightarrow\sum_{l=0}^{\infty} ((\mA_{s,\#}^{-1}\mS_{\#})\otimes \mF_s^l) \, \left(\left[\int_{\tau_{s-1}}^{\tau_s}  \mD_{\#}(\tau)\rd\mB_{0,\#}(\tau)\right] \otimes \vmu_s\right)\\
        &+  \sum_{l=0}^{\infty}  ((\mA_{s,\#}^{-1}\mS_{\#})\otimes (\mF_s^l\mA_{s,\#}^{-1}\mS_{\#}))\int_{\tau_{s-1}}^{\tau_s}  \left(\mD_{\#}(\tau)\otimes\mD_{\#}\left(\tau\right)\right)\rd\mB_{l+1,\#}(\tau).
        \end{align*}
        where the variance of the right hand side quantity exists by Assumption \ref{a7} and \ref{a8}. Next, note that under the same assumptions, for $n^*=T^{\alpha}$ and $T\rightarrow \infty$,
        \begin{align*}
        \|\bar{\mathcal E}_4( [\Delta \tau_s T]-2) -\bar{\mathcal E}_4(n^*)  \| & \leq  \sum_{l=n^*+1}^ {[\Delta \tau_s T]-2}  \| ((\mA_{s,\#}^{-1}\mS_{\#})\otimes \mF_s^l) \, \left(\left[\int_{\tau_{s-1}}^{\tau_s}  \mD_{\#}(\tau)\rd\mB_{0,\#}(\tau)\right) \otimes \vmu_s\right)\| \\
        &+  \sum_{l=n^*+1}^{ [\Delta \tau_s T]-2}  \| ((\mA_{s,\#}^{-1}\mS_{\#})\otimes (\mF_s^l\mA_{s,\#}^{-1}\mS_{\#}))\int_{\tau_{s-1}}^{\tau_s}  \left(\mD_{\#}(\tau)\otimes\mD_{\#}\left(\tau\right)\right)\rd\mB_{l+1,\#}(\tau) \| \\
        & \leq \sum_{l=n^*+1}^ {[\Delta \tau_s T]-2}  \| \mF_s^l \| \,\, \| \mA_{s,\#}^{-1}\mS_{\#}\| \,  \Op(1) + \sum_{l=n^*+1}^ {[\Delta \tau_s T]-2}  \| \mF_s^l \| \,\, \| \mA_{s,\#}^{-1}\mS_{\#}\|^2 \ \Op(1) \inp 0.
        \end{align*}
        Therefore,
        \begin{align*}
        \tilde{\mathcal{E}}_{3}& = \bar{\mathcal E}_4( [\Delta \tau_s T]-2)+ \op(1) \Rightarrow \sum_{l=0}^{\infty} ((\mA_{s,\#}^{-1}\mS_{\#})\otimes \mF_s^l) \, \left(\left[\int_{\tau_{s-1}}^{\tau_s}  \mD_{\#}(\tau)\rd\mB_{0,\#}(\tau)\right] \otimes \vmu_s\right)\\
        &+  \sum_{l=0}^{\infty}  ((\mA_{s,\#}^{-1}\mS_{\#})\otimes (\mF_s^l\mA_{s,\#}^{-1}\mS_{\#}))\int_{\tau_{s-1}}^{\tau_s}  \left(\mD_{\#}(\tau)\otimes\mD_{\#}\left(\tau\right)\right)\rd\mB_{l+1,\#}(\tau),
        \end{align*}

        and because $\mathcal E_3 = ((\mathcal{S}_{\dag}'\mA_{s,\#})\otimes\mathcal{S}) \, \tilde{\mathcal{E}}_{3}$, we have:
        \begin{align}
        \mathcal{E}_{3}&  \Rightarrow\sum_{l=0}^{\infty}  \left((\mathcal{S}_{\dag}'\mS_{\#}) \otimes (\mathcal{S}\mF_s^l)\right) \, \left(\left[\int_{\tau_{s-1}}^{\tau_s}  \mD_{\#}(\tau)\rd\mB_{0,\#}(\tau)\right] \otimes \vmu_s\right)\nonumber\\
        &+  \sum_{l=0}^{\infty}  ((\mathcal{S}_{\dag}'\mS_{\#})\otimes (\mathcal{S}\mF_s^l\mA_{s,\#}^{-1}\mS_{\#}))\int_{\tau_{s-1}}^{\tau_s}  \left(\mD_{\#}(\tau)\otimes\mD_{\#}\left(\tau\right)\right)\rd\mB_{l+1,\#}(\tau)\nonumber\\ &=\mathbb{M}_{3,1}(\tau_{s-1},\tau_s)+\mathbb{M}_{3,2}(\tau_{s-1},\tau_s)=\mathbb{M}_3(\tau_{s-1},\tau_s).\label{mathcalM3}
        \end{align}
        Note that $\mathbb{M}_3(\tau_{s-1},\tau_s)$ depends on the parameters in the interval $\tilde I_s$, $\mA_s$ and $\vmu_s$, so when the interval over which $\mathbb M_3(\cdot)$ is evaluated changes, these coefficients also change. Also note that the variance matrix of $\mathbb{M}_3(\tau_{s-1},\tau_s)$ is \begin{align}
        \label{varmathcalM3}
        \mV_{\mathbb{M}_3(\tau_{s-1},\tau_s)}=\mV_{\mathbb{M}_{3,1}(\tau_{s-1},\tau_s)}+\mV_{\mathbb{M}_{3,2}(\tau_{s-1},\tau_s)}+\mV_{\mathbb{M}_{3,1}(\tau_{s-1},\tau_s),\mathbb{M}_{3,2}(\tau_{s-1},\tau_s)}+\mV_{\mathbb{M}_{3,1}(\tau_{s-1},\tau_s),\mathbb{M}_{3,2}(\tau_{s-1},\tau_s)}^\prime,
        \end{align}
        where using the fact that the variance of $\rd\mB_{0,\#}(\cdot)$ is $\mI_{\#}$ for all $l$, \begin{eqnarray}
        \mV_{\mathbb{M}_{3,1}(\tau_{s-1},\tau_s)}&=&\sum_{l,\kappa=0}^{\infty}  \left((\mathcal{S}_{\dag}'\mS_{\#}) \otimes (\mathcal{S}\mF_s^l)\right) \, \left(\mV_0 \otimes \vmu_s\vmu_s^\prime\right)  \left((\mathcal{S}_{\dag}'\mS_{\#}) \otimes (\mathcal{S}\mF_s^\kappa)\right)'\label{varmathcal31}\\
        \mV_0&=& \int_{\tau_{s-1}}^{\tau_s}  \mD_{\#}(\tau)\mD_{\#}^\prime(\tau)\rd\tau\label{V0}\\
        \mV_{\mathbb{M}_{3,2}(\tau_{s-1},\tau_s)}&=& \sum_{l,\kappa=0}^{\infty}  ((\mathcal{S}_{\dag}'\mS_{\#})\otimes (\mathcal{S}\mF_s^l\mA_{s,\#}^{-1}\mS_{\#}))\,\mV_{l,\kappa}\,((\mathcal{S}_{\dag}'\mS_{\#}) \otimes(\mathcal{S}\mF_s^\kappa\mA_{s,\#}^{-1}\mS_{\#}))'\label{varmathcalM32}\\
        \mV_{l-1,\kappa-1}&=& \int_{\tau_{s-1}}^{\tau_s}  (\mD_{\#}(\tau)\otimes\mD_{\#}(\tau))\,\vrho_{l-1,\kappa-1,\#}^{perm}\,(\mD_{\#}(\tau)\otimes \mD_{\#}(\tau))^\prime\rd\tau\label{Vll}\\
        \mV_{\mathbb{M}_{3,1}(\tau_{s-1},\tau_s),\mathbb{M}_{3,2}(\tau_{s-1},\tau_s)}&=& \sum_{l,\kappa=0}^{\infty} ((\mathcal{S}_{\dag}'\mS_{\#})\otimes (\mathcal{S}\mF_s^l\mA_{s,\#}^{-1}\mS_{\#}))\,\mV_{l}\,(\mI_{np} \otimes\vmu_s') \left((\mathcal{S}_{\dag}'\mS_{\#}) \otimes (\mathcal{S}\mF_s^\kappa)\right)'\label{covmathcal3132}\\
        \mV_{l-1}&=&\int_{\tau_{s-1}}^{\tau_s}  (\mD_{\#}(\tau)\otimes\mD_{\#}(\tau))\vrho_{l-1,\#}^{perm}(\mD_{\#}(\tau)^\prime\otimes1)\rd\tau\label{Vl}
        \end{eqnarray}
        where $\vrho_{l,\kappa,\#}^{perm} = \mP_1 \vrho_{l,\kappa,\#} \mP_2$, with $\mP_1,\mP_2$ the previously defined permutations, and the only non-zero block of $\vrho_{l,\kappa,\#}$ is given by the upper-left block $\vrho_{l,\kappa}$ in Assumption \ref{a8}(v); $\vrho_{l,\#}^{perm}$ is a permutation of $\vrho_{l,\#}$ which ensures that $\vrho_{l,\#}^{perm}= \E(\rd\mB_{l,\#}(\cdot) \rd \mB_{0,\#}(\cdot)) = \E((\vl_{t,\#} \otimes \vl_{t-l,\#}') \vl_{t,\#}')$; and the only non-zeros block of $\vrho_{l,\#}$ is given by upper-left block $\vrho_{l}$ in Assumption \ref{a8}(iv). \\[0.1in]

        $\bullet$ Consider $
        \mathcal E_2$. Substituting into the expression for $\mathcal E_2$ the expression $\vxi_t = \vmu_s+\veta_t + \mF_s\vxi_{t-1}$, we have:
        \begin{align*}
        \mathcal E_2 & =\vecc(\mathcal{E}_2)=((\mathcal{S}_{\dag}'\mA_{s,\#})\otimes\mathcal{S}_{\vr})\;T^{-1/2} \sum_{t \in \tilde I_s}(\veta_t\otimes\vxi_{t})=((\mathcal{S}_{\dag}'\mA_{s,\#})\otimes\mathcal{S}_{\vr})T^{-1/2} \sum_{t \in \tilde I_s}(\veta_t\otimes \vmu_s)\\
        &+((\mathcal{S}_{\dag}'\mA_{s,\#})\otimes\mathcal{S}_{\vr})T^{-1/2} \sum_{t \in \tilde I_s}(\veta_t\otimes (\mF_s\vxi_{t-1}))+((\mathcal{S}_{\dag}'\mA_{s,\#})\otimes\mathcal{S}_{\vr})T^{-1/2} \sum_{t \in \tilde I_s}(\veta_t\otimes \veta_t)\\
        &=\mathcal{ E}_{2,1}+\mathcal{ E}_{2,2}+\mathcal{ E}_{2,3}.
        \end{align*}
        Consider $\mathcal{ E}_{2,1}$ and $\mathcal{ E}_{2,2}$ first. Using similar arguments as before,
        \begin{align}
        \mathcal{E}_{2,1}&\Rightarrow   \sum_{l=0}^{\infty}((\mathcal{S}_{\dag}'\mS_{\#})\otimes (\mathcal{S}_{\vr}\mF_s^l)) \, \left(\left[\int_{\tau_{s-1}}^{\tau_s}  \mD_{\#}(\tau)\rd\mB_{0,\#}(\tau)\right] \otimes \vmu_s\right)=\mathbb{M}_{2,1}(\tau_{s-1},\tau_s)\label{mathcalM21}\\
        \mathcal{E}_{2,2}&\Rightarrow   \sum_{l=0}^{\infty}  ((\mathcal{S}_{\dag}'\mS_{\#})\otimes (\mathcal{S}_{\vr}\mF_s^{l+1}\mA_{s,\#}^{-1}\mS_{\#}))\int_{\tau_{s-1}}^{\tau_s}  \left(\mD_{\#}(\tau)\otimes\mD_{\#}\left(\tau\right)\right)\rd\mB_{l+1,\#}(\tau)=\mathbb{M}_{2,2}(\tau_{s-1},\tau_s),\label{mathcalM22}
        \end{align}
        with variances derived similarly to $\mV_{\mathbb{M}_{3,1}(\tau_{s-1},\tau_s)}$ and $\mV{\mathbb{M}_{3,2}(\tau_{s-1},\tau_s)}$ above:
        \begin{align}
        \mV_{\mathbb{M}_{2,1}(\tau_{s-1},\tau_s)}&=   \sum_{l,\kappa=0}^{\infty}((\mathcal{S}_{\dag}'\mS_{\#})\otimes (\mathcal{S}_{\vr}\mF_s^l))\,\mV_0\,  ((\mathcal{S}_{\dag}'\mS_{\#})\otimes (\mathcal{S}_{\vr}\mF_s^\kappa))'\label{varmathcal21}\\
        \mV_{\mathbb{M}_{2,2}(\tau_{s-1},\tau_s)}&=   \sum_{l,\kappa=0}^{\infty} ((\mathcal{S}_{\dag}'\mS_{\#})\otimes (\mathcal{S}_{\vr}\mF_s^{l+1}\mA_{s,\#}^{-1}\mS_{\#}))\,\mV_{l,l}\, ((\mathcal{S}_{\dag}'\mS_{\#})\otimes (\mathcal{S}_{\vr}\mF_s^{\kappa+1}\mA_{s,\#}^{-1}\mS_{\#}))'\label{varmathcal22}
        \end{align}
        and covariance derived similarly to $\mV_{\mathbb{M}_{3,1}(\tau_{s-1},\tau_s),\mathbb{M}_{3,2}(\tau_{s-1},\tau_s)}$ above:
        \begin{align}
        \mV_{\mathbb{M}_{2,1}(\tau_{s-1},\tau_s),\mathbb{M}_{2,2}(\tau_{s-1},\tau_s)}=  \sum_{l,\kappa=0}^{\infty} ((\mathcal{S}_{\dag}'\mS_{\#})\otimes (\mathcal{S}_{\vr}\mF_s^{l+1}\mA_{s,\#}^{-1}\mS_{\#}))\,\mV_l\,((\mathcal{S}_{\dag}'\mS_{\#})\otimes (\mathcal{S}_{\vr}\mF_s^\kappa))'\label{covmathcal2122}.
        \end{align}
        Consider next $\mathcal{E}_{2,3}$, where $\mathcal{E}_{2,3}=((\mathcal{S}_{\dag}'\mA_{s,\#})\,\otimes \,\mathcal{S}_{\vr})\, T^{-1/2} \sum_{t \in \tilde I_s}(\veta_t\,\otimes \,\veta_t)=T^{-1/2} \sum_{t \in \tilde I_s}\mathcal{S}_{\vr}\veta_t\vg_t'\mathcal{S}_{\dag}$. Note that $\mA_s$ given in \eqref{AS} is upper triangular with ones on the main diagonal. Hence, $\mA_s^{-1}$ is also upper triangular and has ones on the main diagonal. Denote by $\va_{1,\bcdot}$ the first row of $\mA_{s}^{-1}$, by $\mA_{p_1,\bcdot}$ the subsequent $p_1$ rows, and by $\mA_{p_2,\bcdot}=[\vzeros_{p_2}\;\;\vzeros_{p_2\times p_1}\;\;\mI_{p_2}]$ the subsequent $p_2$ rows. Then:
        \begin{eqnarray*}\mathcal{S}_{\vr}\veta_t=\mathcal{S}_{\vr}\mA_{s,\#}^{-1}\;\vg_{t}=\mathcal{S}_{\vr}\left[\begin{array}{c} \mA_s^{-1}\vepsi_t\\\vzeros_{n(p-1)}\end{array}\right]=\mathcal{S}_{\vr}\left[\begin{array}{c} \va_{1,\bcdot}\;\vepsi_t\\\mA_{p_1,\bcdot}\;\vepsi_t\\\mA_{p_2,\bcdot}\;\vepsi_t\\\vzeros_{n(p-1)}\end{array}\right]=\mathcal{S}_{\vr}\left[\begin{array}{c} \va_{1,\bcdot}\;\vepsi_t\\\mA_{p_1,\bcdot}\;\vepsi_t\\\vzeta_t\\\vzeros_{n(p-1)}\end{array}\right]=\vzeta_{t},
        \end{eqnarray*}
        and $\vg_t'\mathcal{S}_{\dag}$ is equal to either $u_t$ or $\vv_t'\vbeta_{\vx}^0$. Let $\mathcal{E}_{2,3}=\mathcal{E}_{2,3}^{(1)}=T^{-1/2} \sum_{t \in \tilde I_s}\vzeta_t u_t$ (when $\vg_t'\mathcal{S}_{\dag}=u_t$) and let $\mathcal{E}_{2,3}=\mathcal{E}_{2,3}^{(2)}=T^{-1/2}\sum_{t\in \tilde I_s}\vzeta_t\vv_t'\vbeta_{\vx}^0$ (when $\vg_t'\mathcal{S}_{\dag}=\vv_t'\vbeta_{\vx}^{0}$).

        Consider first $\mathcal{E}_{2,3}^{(1)}$. To that end, by Assumption \ref{a8} we have:
        \begin{eqnarray*}
                \mD_t=\left[\begin{array}{ccc}d_{u,t}&\vzeros_{p_1}'&\vzeros_{p_2}' \\
                        \vzeros_{p_1}&\mD_{\vv,t}&\vzeros_{p_1\times p_2}\\
                        \vzeros_{p_2}&\vzeros_{p_1\times p_2}'&\mD_{\vzeta,t}\end{array} \right],\;\;\mS=\left[ \begin{array}{ccc}1 &\vzeros_{p_1}'&\vzeros_{p_2}'\\
                        \vs_{p_1}&\mS_{p_1}&\vzeros_{p_1\times p_2}\\
                        \vzeros_{p_2}&\vzeros_{p_1\times p_2}'&\mS_{p_2} \end{array}\right],\;\;\vl_t=\left[ \begin{array}{c} l_{u,t}\\\vl_{\vv,t}\\\vl_{\vzeta,t}\end{array}\right ],
        \end{eqnarray*}
        where $\mD_t$ is partitioned exactly as $\mD(\tau)$ in the notation preceding the statement of Lemma \ref{lem6}, $\mS$ is partitioned exactly the same way in the notation section before Lemma \ref{lem6}, $l_{u,t}$ is a scalar and $\vl_{v,t}$ is a $p_1\times 1$.  Therefore,
        \begin{eqnarray}
        \left[\begin{array}{c}u_t\\\vv_t\\\vzeta_t \end{array}\right]=\vepsi_t=\mS \mD_t\vl_t=\left[\begin{array}{c} d_{u,t}\;l_{u,t}\\\vs_{p_1}\;d_{u,t} \;l_{u,t}+\mS_{p_1}\;\mD_{\vv,t}\;\vl_{\vv,t}\\\mS_{p_2}\;\mD_{\vzeta,t}\;\vl_{\vzeta,t}\end{array} \right].\label{lut}
        \end{eqnarray}
        Let $\mB_{u\vzeta,T}(r)=T^{-1/2}\sum_{t=1}^{[Tr]} l_{u,t} \vl_{\vzeta,t}$. By Lemma \ref{lem5}, $\mB_{u\vzeta,T}(\cdot) \Rightarrow \mB_{u\vzeta}(\cdot)$. By Assumption \ref{a8}, $\mS, \mD_t$ are bounded, so by Theorem 2.1 in \citet{Hansen:1992}, we have:
        \begin{eqnarray}
        \mathcal{E}_{2,3}^{(1)}&=&T^{-1/2}\sum_{t\in \tilde I_s} u_t \vzeta_t=T^{-1/2}\sum_{t\in \tilde I_s} \mS_{p_2}(d_{u,t} \mD_{\vzeta,t})(l_{u,t} \vl_{\vzeta,t})\nonumber\\
        &=&\mS_{p_2} \int_{\tau_{s-1}}^{\tau_s} d_{u}(\tau) \mD_{\vzeta}(\tau) \rd\mB_{u\vzeta,T}(\tau)\nonumber\\
        &\Rightarrow& \mS_{p_2} \int_{\tau_{s-1}}^{\tau_s} (d_{u}(\tau) \mD_{\vzeta}(\tau))\rd\mB_{u\vzeta}(\tau)=\mathbb{M}_{2,3}^{(1)}(\tau_{s-1},\tau_s).\label{mathcalM231}
        \end{eqnarray}
        with variance
        \begin{align}
        \mV_{\mathbb{M}_{2,3}^{(1)}(\tau_{s-1},\tau_s)}=\mS_{p_2}\int_{\tau_{s-1}}^{\tau_s}d_{u}^2(\tau_s)\mD_{\vzeta}(\tau)\vrho_{u,\vxi,0,0}\,\mD_{\vzeta}'(\tau)\mS_{p_2}'\rd\tau\label{varmathcal231}
        \end{align}

        Consider $\mathcal{E}_{2,3}^{(2)}$. Similarly to $\mathcal{E}_{2,3}^{(1)}$,  $\mB_{\vv\vzeta,T}(r)=T^{-1/2}\sum_{t=1}^{[Tr]} \vl_{\vv,t}\otimes\vl_{\vzeta,t}\Rightarrow \mB_{\vv\vzeta}(r)$, where recall that $\mB_{\vv\vzeta}(\cdot)$ is a $p_2 \times 1$ vector of Brownian motions with variance $\vrho_{\vv,\vzeta,0,0}=\E((\vl_{\vv,t}\vl_{\vv,t}')\otimes(\vl_{\vzeta,t}\vl_{\vzeta,t}'))$ which is just the $(p_1 p_2 )\times (p_1 p_2)$ lower-right  block of $\vrho_{\vxi,0,0}$. Also, by Lemma \ref{lem5}, $\mB_{\vzeta,T}(\cdot)=\vec(\mB_{u\vzeta,T}(\cdot),\mB_{\vv\vzeta,T}(\cdot))$ also jointly converge to $\mB_{\vzeta}(\cdot)$, a process defined just before Lemma \ref{lem5}. Therefore, \begin{eqnarray}
        \mathcal{E}_{2,3}^{(2)}&=& \vbeta_{\vx,(s)}^{0'}\left(T^{-1/2}\sum_{t\in \tilde I_s}(\vv_t\otimes\vzeta_t)\right)\nonumber\\
        &=& (\vbeta_{\vx,(s)}^{0'}\otimes 1)\left(T^{-1/2}\sum_{t\in \tilde I_s}((\vs_{p_1}\;d_{u,t} \;l_{u,t})\otimes(\mS_{p_2}\;\mD_{\vzeta,t}\;\vl_{\vzeta,t}))\right)\nonumber\\
        &+&  (\vbeta_{\vx,(s)}^{0'}\otimes 1 )\left(T^{-1/2}\sum_{t\in \tilde I_s}((\mS_{p_1}\;\mD_{\vv,t}\;\vl_{\vv,t})\otimes(\mS_{p_2}\;\mD_{\vzeta,t}\;\vl_{\vzeta,t}))\right)\nonumber\\
        &\Rightarrow&((\vbeta_{\vx,(s)}^{0'}\vs_{p_1}) \otimes \mS_{p_2})\int_{\tau_{s-1}}^{\tau_s} (d_{u}(\tau)\otimes\mD_{\vzeta}(\tau))\rd\mB_{u\vzeta}(\tau)+((\vbeta_{\vx,(s)}^{0'}\mS_{p_1}) \otimes \mS_{p_2})\int_{\tau_{s-1}}^{\tau_s} (\mD_{\vv}(\tau)\otimes\mD_{\vzeta}(\tau))\rd\mB_{\vv\vzeta}(\tau)\nonumber\\
        &=& \mathbb{M}_{2,3}^{(2)}(\tau_{s-1},\tau_s),\label{mathcalM232}
        \end{eqnarray}
        with variance
        \begin{align}
        &\mV_{\mathbb{M}_{2,3}^{(2)}(\tau_{s-1},\tau_s)}=\mV^{(1)}+\mV^{(2)}+\mV^{(3)}+\left(\mV^{(3)}\right)'\label{varmathcal232}\\
        \mV^{(1)}&=((\vbeta_{\vx,(s)}^{0'}\vs_{p_1}) \otimes \mS_{p_2})\int_{\tau_{s-1}}^{\tau_s} (d_{u}(\tau)\otimes\mD_{\vzeta}(\tau))\vrho_{u,\vxi,0,0}(d_{u}(\tau)\otimes\mD_{\vzeta}(\tau))'\rd\tau((\vbeta_{\vx,(s)}^{0'}\vs_{p_1}) \otimes \mS_{p_2})'\nonumber\\
        \mV^{(2)}&= ((\vbeta_{\vx,(s)}^{0'}\mS_{p_1}) \otimes \mS_{p_2})\int_{\tau_{s-1}}^{\tau_s} (\mD_{\vv}(\tau)\otimes\mD_{\vzeta}(\tau))\vrho_{\vv,\vxi,0,0}(\mD_{\vv}(\tau)\otimes\mD_{\vzeta}(\tau))'\rd\tau ((\vbeta_{\vx,(s)}^{0'}\mS_{p_1}) \otimes \mS_{p_2})'\nonumber\\
        \mV^{(3)}&= ((\vbeta_{\vx,(s)}^{0'}\vs_{p_1}) \otimes \mS_{p_2})\int_{\tau_{s-1}}^{\tau_s} (d_{u}(\tau)\otimes\mD_{\vzeta}(\tau))\vrho_{u,\vv,\vxi,0,0}(\mD_{\vv}(\tau)\otimes\mD_{\vzeta}(\tau))'\rd\tau ((\vbeta_{\vx,(s)}^{0'}\mS_{p_1}) \otimes \mS_{p_2})',\nonumber
        \end{align}
        where $\vrho_{u,\vv,\vxi,0,0}$ is the upper-right block of $\vrho_{\vxi,0,0}$.




        In conclusion, we have:
        \begin{eqnarray*}
                \mathcal{E}_{2}\Rightarrow \mathbb{M}_{2,1}(\tau_{s-1},\tau_s)+\mathbb{M}_{2,2}(\tau_{s-1},\tau_s)+\mathbb{M}_{2,3}(\tau_{s-1},\tau_s)=\mathbb{M}_{2}(\tau_{s-1},\tau_s),
        \end{eqnarray*}
        where $\mathbb{M}_{2,2}(\tau_{s-1},\tau_s)=\mathbb{M}_{2,2}^{(1)}(\tau_{s-1},\tau_s)$ when $\vg_t'\mathcal{S}_{\dag}=u_t$ and $\mathbb{M}_{2,2}(\tau_{s-1},\tau_s)=\mathbb{M}_{2,2}^{(2)}(\tau_{s-1},\tau_s)$ when $\vg_t'\mathcal{S}_{\dag}=\vv_t'\vbeta_{\vx,(s)}^{0}$.
        with asymptotic variance $\mV_{\mathbb{M}_2(\tau_{s-1},\tau_s)}$ which can be derived by similar arguments as those used for $\mV_{\mathbb{M}_1(\tau_{s-1},\tau_s)}$ and $\mV_{\mathbb{M}_3(\tau_{s-1},\tau_s;s)}$. We have:
        \begin{align}
        \mV_{\mathbb{M}_2(\tau_{s-1},\tau_s)}&=\mV_{\mathbb{M}_{2,1}(\tau_{s-1},\tau_s)}+\mV_{\mathbb{M}_{2,2}(\tau_{s-1},\tau_s)}+\mV_{\mathbb{M}_{2,3}^{(i)}(\tau_{s-1},\tau_s)}+\mV_{\mathbb{M}_{2,1}(\tau_{s-1},\tau_s),\mathbb{M}_{2,2}(\tau_{s-1},\tau_s)}\label{varmathcal2_1}\\&+\mV_{\mathbb{M}_{2,1}(\tau_{s-1},\tau_s),\mathbb{M}_{2,2}(\tau_{s-1},\tau_s)}'+\mV_{\mathbb{M}_{2,1}(\tau_{s-1},\tau_s),\mathbb{M}_{2,3}^{(i)}(\tau_{s-1},\tau_s)}+\mV_{\mathbb{M}_{2,1}(\tau_{s-1},\tau_s),\mathbb{M}_{2,3}^{(i)}(\tau_{s-1},\tau_s)}'\\
        &+\mV_{\mathbb{M}_{2,2}(\tau_{s-1},\tau_s),\mathbb{M}_{2,3}^{(i)}(\tau_{s-1},\tau_s)}+\mV_{\mathbb{M}_{2,2}(\tau_{s-1},\tau_s),\mathbb{M}_{2,3}^{(i)}(\tau_{s-1},\tau_s)}'\label{varmathcal2_2}
        \end{align}
        where the terms in \eqref{varmathcal2_1} are given in \eqref{varmathcal21}, \eqref{varmathcal22}, \eqref{varmathcal231} and \eqref{covmathcal2122} respectively, and $\mV_{\mathbb{M}_{2,1}(\tau_{s-1},\tau_s),\mathbb{M}_{2,3}^{(i)}(\tau_{s-1},\tau_s)}$
        and $\mV_{\mathbb{M}_{2,2}(\tau_{s-1},\tau_s),\mathbb{M}_{2,3}^{(i)}(\tau_{s-1},\tau_s)}$
        can be obtained similarly using $\vrho_{u,0}$ (the upper block of $\vrho_{\vxi,0}$ derived before the proof of Lemma \ref{lem6}) and $\vrho_{\vv,0}$ (the lower block of $\vrho_{\vxi,0}$) respectively for $i=1,2$.

        Now note that $\mathcal E_1, \mathcal E_2, \mathcal E_3$ are functions of the same underlying Brownian motions which were shown to jointly converge, therefore they also jointly converge:
        \begin{align*}
        \mathcal Z_T = \begin{bmatrix} \mathcal E_1 \\
        \mathcal E_2 \\
        \mathcal E_3 \end{bmatrix} \Rightarrow \begin{bmatrix} \mathbb{M}_1(\tau_{s-1},\tau_s) \\
        \mathbb{M}_2(\tau_{s-1},\tau_s)\\
        \mathbb{M}_3(\tau_{s-1},\tau_s)
        \end{bmatrix}\equiv  \mathbb M(\tau_{s-1},\tau_s).
        \end{align*}
        This completes the proof for $I_i = \tilde I_s = [ [\tau_{s-1}T]+1, [\tau_s T]]$. Now consider the case of $I_i$ containing $N_i$ breaks from the total set of $N$ breaks, that is, there is an $s$ such that $\tau_{s-1}<\lambda_{i-1}\leq \tau_s$ and $\tau_{s+N_i-1}\leq\lambda_i<\tau_{s+N_i}$. Then, generalizing the previous results which were for $I_i=\tilde I_s=[[\tau_{s-1} T]+1, [\tau_s T]]$,
        \begin{align*}
        \mathcal Z_T \Rightarrow \begin{cases} \mathbb M(\lambda_{i-1}, \tau_{s}) + \sum_{j=1}^{N_i}\mathbb M(\tau_{s+j-1}, \tau_{s+j}) + \mathbb M(\tau_{s+N_i},\lambda_i) & \mbox{ if } N_i\geq 2\\
        \mathbb M(\lambda_{i-1}, \tau_s) +  \mathbb M(\tau_s, \lambda_i) & \mbox{ if } N_i =1 \\
        \mathbb M(\lambda_{i-1}, \lambda_i) & \mbox{ if } N_i=0. \end{cases}
        \end{align*}

\end{proof}

\begin{proof}[Proof of Lemma \ref{lem7}]\hfill \\
        \black{Because of Assumptions \ref{a4} of fixed breaks, this lemma fits the setting of \citet{Hall/Han/Boldea:2012} and \citet{Bai/Perron:1998}.
                If the reduced form is stable, then $h=0$, and Lemma \ref{lem7}(ii) follows from Lemma \ref{lem2} and Lemma \ref{lem6}. Lemma \ref{lem7}(iii) follows from  \citet{Hall/Han/Boldea:2012} Theorems 1-2, where their Assumptions 6-11 are automatically satisfied by our assumptions and Lemma \ref{lem2} as follows: their Assumption 6 is our Assumption \ref{a1}, their Assumption 7 is our Assumption \ref{a6}, their Assumption 8 is automatically satisfied by our Assumption \ref{a8}, their Assumption 9 by our Assumption \ref{a9}, and their Assumptions 10-11 hold by Lemma \ref{lem2} and Assumption \ref{a7}. If $h>0$, Lemma \ref{lem7}(i) is the same as \citet{Hall/Han/Boldea:2012} Assumption 19(i), and is a special case of Proposition 1 in \citet{Bai/Perron:1998}, where the \citet{Bai/Perron:1998} Assumptions 1-3 hold by Lemma \ref{lem2} and our Assumption \ref{a7}, the \citet{Bai/Perron:1998} Assumption 4 encompasses Assumption \ref{a7} as a special case, and the \citet{Bai/Perron:1998} Assumption 5 is exactly our Assumption \ref{a3}. Lemma \ref{lem7}(ii) follows from Lemma \ref{lem2} and \ref{lem6}. Lemma \ref{lem7}(iii) follows from \citet{Hall/Han/Boldea:2012} Theorem 8(i)-(ii). In particular, their Assumption 6 is the same as our Assumption \ref{a1}, their Assumption 8 is automatically satisfied by our Assumption \ref{a8}, their Assumptions 10-11 hold by Lemma \ref{lem2}, their Assumption 17 is our Assumption \ref{a3}, their Assumption 18 is our Assumption \ref{a6}, and their Assumption 19(ii) holds by our Assumption \ref{a9}.
        }

\end{proof}

\begin{proof}[Proof of Lemma \ref{lem8}]\hfill \\
Let $\phi_t$ be the $(a,b)$ element of $\vepsi_t \vepsi_t'$, $(\vepsi_t \vepsi_t') \otimes \vepsi_{t-i}$, or $(\vepsi_t \vepsi_t') \otimes (\vepsi_{t-i}\vepsi_{t-j})$, for $i,j \geq 0$.
\begin{align*}
T^{-1} \sum_{t=1}^{[Tr]} \phi_t  = T^{-1} \sum_{t=1}^{[Tr]} (\phi_t -\E(\phi_t|\mathcal F_{t-1})) + T^{-1} \sum_{t=1}^{[Tr]} (\E(\phi_t|\mathcal F_{t-1}) - \E(\phi_t)).
\end{align*}
Note that  $(\phi_t -\E(\phi_t|\mathcal F_{t-1}))$ is a m.d.s. so it is a $L^1$-mixingale with uniformly bounded constants. Moreover, by similar arguments  as in the proof of Lemma \ref{lem4}, for $b=1+\delta/4>1$ (for example), $\sup_t \E|\phi_t|^b < \infty$.
Therefore, by Lemma \ref{lem1},  $T^{-1} \sum_{t=1}^{[Tr]} (\phi_t -\E(\phi_t|\mathcal F_{t-1})) \inp 0$ uniformly in $r$. By similar arguments, $T^{-1} \sum_{t=1}^{[Tr]} (\E(\phi_t|\mathcal F_{t-1}) - \E(\phi_t)) \inp 0$ uniformly in $r$, completing the proof.
\end{proof}

\begin{theoA}
        \label{theo_0vsk}
        Under Assumption \ref{a1}-\ref{a9} and the null hypothesis $k=0$,
        \begin{align*}
        Wald_{T\vlam_k}\; =\;  T \, \hat \vbeta_{\vlam_k}' \,
        \mR_k' \,\left(\mR_k \hat \mV_{\vlam_k} \mR_k'\right)^{-1}\mR_k \,\hat \vbeta_{\vlam_k} \Rightarrow \mathbb N(\vlam_k),
        \end{align*}
        where
        $
        \mathbb N(\vlam_k) = [\vec_{i=1:k+1}(\mathbb Q_i^{-1} \mathbb N_i)]' \mR_k' \, \left(\mR_k \, \diag_{i=1:k+1}(\mathbb Q_i^{-1} \mathbb V_i  \mathbb Q_i^{-1}) \mR_k' \, \right)^{-1} [\vec_{i=1:k+1}(\mathbb Q_i^{-1} \mathbb N_i)],
        $
        and $\mathbb Q_i, \mathbb V_i, \mathbb N_i$ are defined in equations \eqref{defq} in the paper, and \eqref{defv} and \eqref{defn} in this Appendix.
        \end{theoA}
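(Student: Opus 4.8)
The plan is to express $Wald_{T\vlam_k}$ as a continuous functional of $T^{1/2}\mR_k\hat\vbeta_{\vlam_k}$ and $\hat\mV_{\vlam_k}$, to obtain the joint weak limit of the first object and the probability limit of the second, and then to conclude by the continuous mapping theorem. First I would establish an asymptotic linear (Bahadur-type) representation for the second-stage 2SLS estimator $\hat\vbeta_{(i)}$ in each candidate regime $I_{i,\vlam_k}$. Writing the OLS normal equations for \eqref{2sls_step2_mod} with $\hat\vw_t=\vec(\hat\vx_t,\vz_{1,t})$ and $\hat\vx_t'=\vz_t'\hat\mDelta_t$, substituting $y_t=\vw_t'\vbeta^0+u_t$ (valid since $m=0$ forces $\vbeta_t^0\equiv\vbeta^0$), and isolating the generated-regressor discrepancy $\hat\vx_t-\vx_t$ in terms of $(\hat\mDelta_t-\mDelta_t^0)\vz_t'$ and $\vv_t'$, one is led to the representation \eqref{beta1} already exploited in the proof of Theorem \ref{theo1boot}. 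The decisive inputs at this stage are Lemma \ref{lem7}, which provides $T^{1/2}(\hat\mDelta_{(j)}-\mDelta_{(j)}^0)=\Op(1)$ and $T(\hat\pi_j-\pi_j^0)=\Op(1)$: the former lets me linearise the first-stage contribution, while the latter ensures that replacing the estimated reduced-form regimes $\hat I_j^*$ by their population counterparts alters the relevant $T^{-1/2}$ sums only by $\op(1)$.

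Next I would pass to the limit in each of the three components of \eqref{beta1}. By Lemma \ref{lem6}, the score $T^{-1/2}\sum_{t\in I_i}\vz_t u_t$ converges weakly to $\widetilde{\mathbb M}_i$ (taken with $\mathcal S_{\dag}=\mathcal S_u$) and the generated-regressor term $T^{-1/2}\sum_{t\in I_i}\vz_t\vv_t'\vbeta_{\vx}^0$ converges to $\widetilde{\mathbb M}_i$ (taken with $\mathcal S_{\dag}=\vbeta_{\vx,\#}$); because both are continuous functionals of the same underlying Brownian motions, they converge jointly, and jointly across $i$. Lemma \ref{lem2} supplies $T^{-1}\sum_{t\in I_i}\vz_t\vz_t'\inp\int_{\lambda_{i-1}}^{\lambda_i}\mathbb Q_{\vz}(\tau)\,d\tau$ and $\hat\mQ_{j^*}\inp\mathbb Q_{\vz,j^*}$, which govern the first-stage correction term. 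Combining these in \eqref{beta1} gives $T^{1/2}(\hat\vbeta_{(i)}-\vbeta^0)\weak\mathbb Q_i^{-1}\mathbb N_i$ jointly in $i$, where $\mathbb N_i$ (namely \eqref{defn}) is defined as $\mUpsilon^{0\prime}$ applied to the sum of the two $\widetilde{\mathbb M}_i$ limits minus the limit of the first-stage correction. In parallel I would show $\hat\mM_{(i)}\inp\mathbb V_i$ (equation \eqref{defv}) by combining Lemma \ref{lem8} (the uniform laws of large numbers for the pertinent second and fourth sample moments) with the consistency of the residuals $\hat u_t,\hat\vv_t$ and of $\hat\mUpsilon_t$; together with Lemma \ref{lem2} this yields $\hat\mV_{(i)}=\hat\mQ_{(i)}^{-1}\hat\mM_{(i)}\hat\mQ_{(i)}^{-1}\inp\mathbb Q_i^{-1}\mathbb V_i\mathbb Q_i^{-1}$ and hence $\hat\mV_{\vlam_k}\inp\diag_{i=1:k+1}(\mathbb Q_i^{-1}\mathbb V_i\mathbb Q_i^{-1})$.

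Finally I would assemble the statistic. Since $m=0$ we have $\mR_k(\viota_{k+1}\otimes\vbeta^0)=\vzeros$, so the restriction annihilates the common mean and $T^{1/2}\mR_k\hat\vbeta_{\vlam_k}=\mR_k\vec_{i=1:k+1}\big(T^{1/2}(\hat\vbeta_{(i)}-\vbeta^0)\big)\weak\mR_k\vec_{i=1:k+1}(\mathbb Q_i^{-1}\mathbb N_i)$. Writing
\[
Wald_{T\vlam_k}=\big(T^{1/2}\mR_k\hat\vbeta_{\vlam_k}\big)'\big(\mR_k\hat\mV_{\vlam_k}\mR_k'\big)^{-1}\big(T^{1/2}\mR_k\hat\vbeta_{\vlam_k}\big)
\]
and applying the continuous mapping theorem --- legitimate because $\mR_k\diag_{i=1:k+1}(\mathbb Q_i^{-1}\mathbb V_i\mathbb Q_i^{-1})\mR_k'$ is positive definite under Assumptions \ref{a7} and \ref{a9}, so matrix inversion is continuous at the limit --- delivers $Wald_{T\vlam_k}\weak\mathbb N(\vlam_k)$. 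The main obstacle is the first step: obtaining the linear representation \eqref{beta1} with generated regressors and an unstable reduced form. The first-stage estimation error does not disappear but contributes the correction term, so one must simultaneously show that estimating the reduced-form break locations is asymptotically negligible (through the $T(\hat\pi_j-\pi_j^0)=\Op(1)$ rate) and that the direct and correction scores converge jointly, rather than only marginally, to a common Brownian functional; this joint convergence, underwritten by Lemma \ref{lem6}, is where most of the technical work resides.
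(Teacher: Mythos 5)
Your proposal is correct and follows essentially the same route as the paper's own proof: the null-imposed decomposition of $\tilde u_t$ into the score, generated-regressor, and first-stage-correction terms, the rates from Lemma \ref{lem7} to justify treating the reduced-form regimes as known, the joint weak limits from Lemma \ref{lem6} and the probability limits from Lemmas \ref{lem2} and \ref{lem8}, and the continuous mapping theorem to assemble $\mathbb N(\vlam_k)$. The only difference is presentational — you spell out the final CMT step and the annihilation of the common mean by $\mR_k$, which the paper leaves implicit after establishing $T^{1/2}(\hat\vbeta_{i,\vlam_k}-\vbeta^0)\Rightarrow\mathbb Q_i^{-1}\mathbb N_i$.
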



\begin{proof}[Proof of Theorem B \ref{theo_0vsk}] Recall that $\hat \mDelta_{t} = \hat{\mDelta}_{(j)}$ if $t \in \left[ [\hat \pi_{j-1}^*T]+1,[\hat \pi_j^*T]\right]$ for $ j=1,\ldots, h+1$, and $\hat \mUpsilon_{t} = (\hat \mDelta_{t}, \mPi)$.

        For notation ease, set $I_i=I_{i,\vlam_k}$, and recall that $Wald_{T\vlam_k}\; =\;  T \, \hat \vbeta_{\vlam_k}' \,
        \mR_k' \,\left(\mR_k \hat \mV_{\vlam_k} \mR_k'\right)^{-1}\mR_k \,\hat \vbeta_{\vlam_k}$, where \begin{align*}
        &\hat{\mV}_{\vlam_k}=\diag_{i=1:k+1}(\hat \mV_{(i)}) , \qquad \hat \mV_{(i)}\;=\;\hat \mQ_{(i)}^{-1}  \ \hat \mM_{(i)} \ \hat \mQ_{(i)}^{-1}\, , \qquad \hat \mQ_{(i)} = T^{-1} \sum_{t \in I_{i}} \hat \vw_t \hat \vw_t'\, ,  \\
        &\hat{\mM}_{(i)} \inp \lm \var\left(T^{-1/2}\sum_{t \in I_{i}} \mUpsilon_t^{0'} \vz_t\left(u_t+\vv_t'\vbeta_{\vx}^0\right)\right).
        \label{hatMix}
        \end{align*}

        By Lemma \ref{lem2},
        $
        \hat \mQ_{(i)}  \inp \mathbb Q_i$.
        Letting $\widetilde{\mathbb M}(\tau)=  \mathbb M(\tau)|_{ \{\mathcal{S}_{\dag}=\mathcal{S}_u\}} + \mathbb M(\tau) |_{\{\mathcal{S}_{\dag}= \vbeta_{\vx,\#}^0\}}$, and $\mathbb \mathbb M(\tau_{s-1},\tau_s) = \int_{\tau_{s-1}}^{\tau_s} d \mathbb M(\tau)$, we have:
        \begin{eqnarray}\label{defv}
        \lm \var\left(T^{-1/2}\sum_{t \in I_{i}} \mUpsilon_t^{0'} \vz_t\left(u_t+\vv_t'\vbeta_{\vx}^0\right)\right)= \int_{\lambda_{i-1}}^{\lambda_i} \mUpsilon^{'} (\tau) \, \var(d\tilde {\mathbb M}(\tau)) \mUpsilon^{'} (\tau) = \mathbb V_{i}.
        \end{eqnarray}

        Now consider $\hat\vbeta_{\vlam_k}=\vec(\hat\vbeta_{i,\vlam_k})$. Under the null hypothesis, $\vbeta_{(i)}^0=\vbeta^0$, and so therefore we write:
        \begin{equation}
        \tilde{u}_t  = y_t - \hat \vw_t^{\prime} \vbeta^0 = u_t + (\vx_t - \hat{\vx}_t)^{\prime}
        \vbeta_{\vx}^0  = u_t + \vv_t^{\prime} \vbeta_x^0 - \vz_t^{\prime}
        \ \left(\hat{\mDelta}_t - \mDelta^0_t\right)\vbeta_x^0.\label{tildeut}
        \end{equation}
        where, by Lemma \ref{lem7},
        \begin{align*}
        T^{1/2} (\hat \mDelta_t -\mDelta_t^0) = \sum_{j=1}^{h+1}1_{t\in I_j^*}\left\{T^{-1}\sum_{t\in I_j^*}\vz_t\vz_t^\prime\right\}^{-1} T^{-1/2} \sum_{t\in I_j^*} \vz_t \vv_t^\prime +\op(1),
        \end{align*}
        where $I_j^*=\{[\pi_{j-1}^0T]+1,[\pi_{j-1}^0T]+2,\ldots,[\pi_{j}^0T]\}$. \black{From the definition of $\hat\vbeta_{i,\vlam_k}$, it follows that under $H_0$, we have
                \begin{eqnarray*}
                        T^{1/2} (\hat\vbeta_{i,\vlam_k}-\vbeta^0) &=&\left( T^{-1} \sum_{t\in I_i} \hat \vw_t\hat \vw_t^\prime\right)^{-1}\left(T^{-1/2} \sum_{t\in I_i} \hat \vw_t \tilde u_t\right)\\
                        &=&\left( T^{-1} \sum_{t\in I_i} \hat \vw_t\hat \vw_t^\prime\right)^{-1}\left(T^{-1/2} \sum_{t\in I_i} \hat \vw_t(u_t +\vv_t^\prime \vbeta_{\vx}^0)\right.\\
                        &\;&\left.\;-\; T^{-1} \sum_{t\in I_i} \hat \vw_t\vz_t^\prime\left\{\,\sum_{j=1}^{h+1}1_{t\in I_j^*}\left\{T^{-1}\sum_{t\in I_j^*}\vz_t\vz_t^\prime\right\}^{-1} T^{-1/2} \sum_{t\in I_j^*} \vz_t \vv_t^\prime\vbeta_{\vx}^0\,\right\}\right)\,+\,o_p(1) \\
                        &=& \left( T^{-1} \sum_{t\in I_i} \hat \mUpsilon_t' \vz_t \vz_t' \hat \mUpsilon_t\right)^{-1} \left(T^{-1/2} \sum_{t\in I_i} \hat \mUpsilon_t' \vz_t(u_t +\vv_t^\prime \vbeta_{\vx}^0)\right.\\
                        &-&\left.\;\; T^{-1} \sum_{t\in I_i} \hat \mUpsilon_t' \vz_t\vz_t^\prime\left\{\,\sum_{j=1}^{h+1}1_{t\in I_j^*}\left\{T^{-1}\sum_{t\in I_j^*}\vz_t\vz_t^\prime\right\}^{-1} T^{-1/2} \sum_{t\in I_j^*} \vz_t \vv_t^\prime\vbeta_{\vx}^0\,\right\}\right)\,+\,o_p(1).
        \end{eqnarray*}}
        where $I_i=\{[\lambda_{i-1}T]+1,[\lambda_{i-1}T]+2,\ldots,[\lambda_{i}T]\}$. Therefore, letting $I_i$ contain $N_i$ true breaks of the VAR representation in \eqref{ah1}, letting $s_0=\lambda_{i-1}$,  $s_{N_i+1} = \lambda_i$, as well as denoting $\mathbb Q_{\vz,j^*} = \int_{\pi_{j-1}^0}^{\pi_j^0} \mathbb Q_{\vz}(\tau),$
        \begin{align} \nonumber
        T^{1/2} (\hat\vbeta_{i,\vlam_k}-\vbeta^0)& \Rightarrow  \mathbb Q_i^{-1} \int_{\lambda_{i-1}}^{\lambda_{i}} \mUpsilon'(\tau) d\widetilde {\mathbb M}(\tau) -\mathbb Q_i^{-1} \left(\int_{\lambda_{i-1}}^{\lambda_{i}}\mUpsilon'(\tau)\mathbb{Q}_{\vz}(\tau)d\tau \right)  \sum_{j=1}^{h+1}1_{t\in I_j^*} \mathbb Q_{\vz,j^*} \tilde{\mathbb M}_{j^*} |_{ \{\mathcal{S}_{\dag}= \vbeta_{\vx,\#}^0\}}\\ \label{defn}
        & = \mathbb Q_i^{-1}\mathbb N_i (\lambda_{i-1},\lambda_i) \equiv\mathbb Q_i^{-1}\mathbb N_i ,
        \end{align}
        where $\tilde{\mathbb M}_{j^*}$ is defined as $\tilde{\mathbb M}_{i}$, but with $\pi_{j-1}^0$ replacing $\lambda_{i-1}$ and $\pi_j^0$ replacing $\lambda_{i}$.
\end{proof}
\vspace*{0.1in}
\begin{theoA}\label{theo_ellvsell+1}
        Under Assumptions \ref{a1}-\ref{a9} and the null hypothesis $H_0:m=\ell$,
        $$
        \sup\text{-}Wald_T(\ell+1\,|\,\ell)\;=\;\max_{i=1,2,\ldots\ell+1}\,\left\{\,\sup_{\varpi_i\in \mathcal{N}(\hat{\vlambda}_\ell)}\mathcal{W}_{i,1}(\varpi_i)^\prime \{\mathcal{W}_{i,2}(\varpi_i)\}^{-1}\mathcal{W}_{i,1}(\varpi_i)\right\},
        $$
        where $\mathcal W_{i,1} (\varpi_i)$ and $\mathcal W_{i,2}(\varpi_i)$ are defined in \eqref{defw1}-\eqref{defw2}.
\end{theoA}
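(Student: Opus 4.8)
The plan is to derive the limiting distribution of $\sup\text{-}Wald_T(\ell+1\,|\,\ell)$ by reducing the $\ell+1$-breaks-versus-$\ell$ problem, regime by regime, to the no-break-versus-one-break analysis already carried out for Theorem B \ref{theo_0vsk}. First I would use Lemma \ref{lem7}(iii), which guarantees $T(\hat\lambda_i-\lambda_i^0)=\Op(1)$, together with Assumptions \ref{a1}-\ref{a2} (fixed, distinct breaks) to argue that under $H_0:m=\ell$ the estimated partition $\hat{\vlambda}_\ell$ consistently isolates each true regime. Within the $i$-th estimated regime $\hat I_i$, the parameter vector $\vbeta^0_{(i)}$ is constant, so testing for a single additional break at $\varpi_i\in\mathcal N(\hat{\vlambda}_\ell)$ is, up to asymptotically negligible terms, the \emph{same} computation as the $k=0$ versus $k=1$ test applied to the subsample $\hat I_i$. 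The main technical point here is to show that the sampling error in $\hat\lambda_{i-1},\hat\lambda_i$ is of order $\Op(T^{-1})$ and hence does not contribute to the weak limit, which is standard given Lemma \ref{lem7}(iii).

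Next I would define the two building blocks $\mathcal W_{i,1}(\varpi_i)$ and $\mathcal W_{i,2}(\varpi_i)$ appearing in the statement. Following the Wald expansion in \eqref{supWald_ell}, $\mathcal W_{i,1}(\varpi_i)$ should be the (scaled) contrast $\mR_1\,T^{1/2}\hat{\vbeta}(\varpi_i)$ localized to $\hat I_i$, and $\mathcal W_{i,2}(\varpi_i)$ the corresponding sandwich variance $\mR_1\hat{\mV}(\varpi_i)\mR_1^\prime$. The key step is to apply Lemmas \ref{lem2}, \ref{lem6} and \ref{lem7} to each of the two sub-regimes $\hat I_i^{(1)}(\varpi_i)$ and $\hat I_i^{(2)}(\varpi_i)$ exactly as in the proof of Theorem B \ref{theo_0vsk}: Lemma \ref{lem2} delivers $\hat\mQ_j(\varpi_i)\inp \int \mUpsilon'\mathbb Q_{\vz}\mUpsilon$ over the relevant subinterval, and Lemma \ref{lem6} delivers the partial-sum functional convergence $T^{-1/2}\sum \vz_t(u_t+\vv_t'\vbeta_{\vx}^0)\Rightarrow \widetilde{\mathbb M}$. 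Combining these with the first-stage correction term (the $\hat\mDelta_t-\mDelta_t^0$ expansion) as in \eqref{defn} gives, for each $\varpi_i$, a process representation of $T^{1/2}\hat{\vbeta}_{(j)}(\varpi_i)$ in terms of $\mathbb M(\cdot)$ evaluated on the split $[\hat\lambda_{i-1},\varpi_i]$ and $[\varpi_i,\hat\lambda_i]$. The continuous mapping theorem over $\varpi_i\in\mathcal N(\hat{\vlambda}_\ell)$ then yields the $\sup$ over each regime, and the outer $\max$ over $i=1,\ldots,\ell+1$ follows because the regimes are asymptotically separated and the joint convergence is inherited from the joint convergence of the underlying Brownian functionals in Lemma \ref{lem6}.

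The hard part will be handling the first-stage estimation uncertainty uniformly in $\varpi_i$. The correction term involving $\hat\mDelta_t-\mDelta_t^0$ must be shown to converge not just pointwise in $\varpi_i$ but uniformly over the closed trimmed set $\mathcal N(\hat{\vlambda}_\ell)$, so that the $\sup$ and the weak limit can be interchanged; this requires that the functional convergence in Lemma \ref{lem6} holds as a process in the split point, and that the inverse variance $\{\mathcal W_{i,2}(\varpi_i)\}^{-1}$ stays bounded away from singularity uniformly on $\mathcal N(\hat{\vlambda}_\ell)$. The trimming built into $\mathcal N(\hat{\vlambda}_\ell)=[\hat\lambda_{i-1}+\epsilon,\hat\lambda_i-\epsilon]$ (inherited from Assumption \ref{a6}) is exactly what guarantees the latter, since each subinterval retains a positive fraction of the sample and hence a positive-definite $\mathbb Q_{(j)}$ limit. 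A secondary subtlety, which I would address by appealing to Lemma \ref{lem7}(iii) once more, is that replacing the estimated regime endpoints $\hat\lambda_{i-1},\hat\lambda_i$ by their true values $\lambda_{i-1}^0,\lambda_i^0$ inside all these functionals introduces only $\op(1)$ errors uniformly in $\varpi_i$, so the stated limit is free of the break-location estimation error.
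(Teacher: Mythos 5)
Your proposal follows essentially the same route as the paper's proof: Lemma \ref{lem7}(iii) is used to replace $\hat\lambda_{i-1},\hat\lambda_i$ by $\lambda_{i-1}^0,\lambda_i^0$ up to asymptotically negligible terms, each estimated regime is then treated as a self-contained $0$-versus-$1$ break problem, the limits of $\hat\mQ_j(\varpi_i)$, $\hat\mM_j(\varpi_i)$ and the score $T^{-1/2}\sum_t\hat\vw_t\tilde u_t$ are obtained from Lemmas \ref{lem2}, \ref{lem6} and \ref{lem7} together with the first-stage correction of \eqref{defn}, and the CMT delivers the inner $\sup$ and outer $\max$. The only substantive difference is presentational: the paper derives the closed forms \eqref{defw1}--\eqref{defw2} by re-parameterizing (\ref{2sls_step2_mod_caseii}) so that $\mR_1\hat{\vbeta}(\varpi_i)$ equals a partialled-out coefficient $\hat{\valpha}(\varpi_i)$ (Frisch--Waugh), whereas you work directly with the stacked sub-regime estimators and the contrast $\mR_1\hat{\vbeta}(\varpi_i)$; the two representations of the limit coincide after rearrangement.
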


\begin{proof}[Proof of Theorem B \ref{theo_ellvsell+1}] We begin by deriving an alternative representation of the $\sup\text{-}Wald_T(\ell+1\,|\,\ell)$. Define $\hat{\mE}_i(\varpi)$ to be the $(\hat{T}_i-\hat{T}_{i-1})\times d_\beta$ matrix with $\bar{t}^{th}$ row given by
        \begin{eqnarray*}
                \{\hat \mE_i(\varpi)\}_{\bar{t},\cdot}&=&\hat{\vw}_t^\prime,\qquad \mbox{ for } t\,=\,\hat{T}_{i-1}+ \bar{t}, \; \bar{t}=1,2,\ldots, [\varpi T],\\
                &=&\vzeros_{d_\beta}^\prime, \qquad \mbox{ for } \bar t=[\varpi T]+1,[\varpi T]+2,\ldots, \hat{T}_i.
        \end{eqnarray*}

        Then we can re-parameterize the model in (\ref{2sls_step2_mod_caseii}) as
        \begin{equation}
        \label{waldreparm}
        \vy_i\;=\;\hat{\mW}_i\vgamma\,+\,\hat{\mE}_i(\varpi_i)\valpha\,+\,\mbox{error}
        \end{equation}
        where $\vy_i$ is the $(\hat{T}_i-\hat{T}_{i-1})\times 1$ vector with $\bar{t}^{th}$ element $y_{\hat{T}_{i-1}+\bar t}$ and $\hat{\mW}_i=\vec_{\hat{T}_{i-1}+1:\hat{T}_i}(\hat{\vw}_t)$. If $\hat{\valpha}(\varpi_i)$ denotes the OLS estimator of $\valpha$ based on (\ref{waldreparm}) then it follows by straightforward arguments that $\mR_1\hat{\vbeta}(\varpi_i)=\hat{\valpha}(\varpi_i)$. Using the Frisch-Waugh theorem, we have
        $$
        \mR_1\hat{\vbeta}(\varpi_i)=\hat{\valpha}(\varpi_i)\;=\;\left\{\hat \mE_i(\varpi)^\prime \mM_{\hat{\mW}_i}\hat \mE_i(\varpi)\right\}^{-1}\hat \mE_i(\varpi)^\prime \mM_{\hat{\mW}_i}\vy_i,
        $$
        where $\mM_{\hat{\mW}_i}=\mI_{\hat{T}_i-\hat{T}_{i-1}}\,-\,{\hat{\mW}_i}\left({\hat{\mW}_i}^\prime {\hat{\mW}_i}\right)^{-1}{\hat{\mW}_i}^\prime$.

        Let $\mE_i(\varpi)$ and $\mW_i$ be defined analogously to $\hat \mE_i(\varpi)$ and $\hat \mW_i$ only replacing $\hat{T}_{i-1}$ and $\hat{T}_i$ by $T_{i-1}^0$ and $T_{i}^0$ respectively. Further define $\tilde{\vu}_i$ to be the $(T_i^0-T_{i-1}^0)\times 1$ vector with $\bar{t}^{th}$ element $\tilde{u}_{T_{i-1}^0+\bar{t}}$, where
        $$
        \tilde{u}_t  \equiv y_t - \hat \vw_t^{\prime} \vbeta_{(i)}^0 = u_t + (\vx_t - \hat{\vx}_t)^{\prime}
        \vbeta_{\vx,(i)}^0  = u_t + \vv_t^{\prime} \vbeta_{\vx,(i)}^0 - \vz_t^{\prime}
        \ \left(\hat{\mDelta}_t - \mDelta^0_t\right)\vbeta_{\vx,(i)}^0.
        $$
        Under $H_0$, from Lemma \ref{lem7}, it follows that $\hat{T}_i-T_i^0=O_p(1)$ for $i=1,2,\ldots, \ell$, so we can treat $T_i^0$ as known for the rest of this proof. Additionally, $\vbeta_{\vx,(i)}^0$ is constant in interval $[T_{i-1}^0+1,T_i^0]$, so $\mR_1 \vec(\vbeta_{\vx,(i)}^0, \vbeta_{\vx,(i)}^0)=\vzeros_{2d_{\beta}}$, therefore:
        $$
        T^{1/2}\mR_1\hat{\vbeta}(\varpi_i)\;=\;T^{1/2}\left\{\mE_i(\varpi)^\prime \mM_{\mW_i} \mE_i(\varpi)\right\}^{-1} \mE_i(\varpi)^\prime \mM_{\hat{\mW}_i}\tilde \vu_i\,+\,o_p(1),
        $$
        uniformly in $\varpi_i$. Therefore,
        \begin{equation}\label{supWald_ellv2}
        \sup\text{-}Wald_T(\ell+1\,|\,\ell)\;=\;\max_{i=1,2,\ldots\ell+1}\,\left\{\,\sup_{\varpi_i\in \mathcal{N}(\hat{\vlambda}_\ell)}Wald_T(\varpi_i;\,\ell)\right\}\,+\,o_p(1)
        \end{equation}
        where
        \begin{eqnarray}
        Wald_T(\varpi_i;\,\ell)&=&T\tilde \vu_i^\prime \mM_{\mW_i}\mE_i(\varpi)\left\{\mE_i(\varpi)^\prime \mM_{\mW_i} \mE_i(\varpi)\right\}^{-1}\left(\mR_1\bar{\mV}(\varpi_i)\mR_1^\prime\right)^{-1}\nonumber\\[0.1in]
        &\;&\;\times \left\{\mE_i(\varpi)^\prime \mM_{\mW_i} \mE_i(\varpi)\right\}^{-1} \mE_i(\varpi)^\prime \mM_{\hat{\mW}_i}\tilde \vu_i,
        \label{Waldella}
        \end{eqnarray}
        where
        \begin{align*}
        &\hat{\mV}(\varpi_i)\;=\;\diag\left(\,\hat \mV_{1}(\varpi_i),\hat \mV_{2}(\varpi_i)\,\right) , \qquad \hat \mV_{j}(\varpi_i)\;=\;\{\hat \mQ_{j}(\varpi_i)\}^{-1}  \ \hat \mM_{j}(\varpi_i) \ \{\hat \mQ_{j}(\varpi_i)\}^{-1},\\
        & \hat \mQ_{j}(\varpi_i) = T^{-1} \sum_{t \in  I_{i}^{(j)}} \hat \vw_t \hat \vw_t^\prime,\qquad
        \hat{\mM}_{j}(\varpi_i) \inp \lm \var\left(T^{-1/2}\sum_{t \in I_{i}^{(j)}} \mUpsilon_t^{0\prime} \vz_t\left(u_t+\vv_t^\prime\vbeta_{\vx,(i)}^0\right)\right),
        \end{align*}
        and, $I_i^{(1)}(\varpi_i)=\{t:\,[\lambda_{i-1}^0T]+1,[\lambda_{i-1}^0T]+2,\ldots,[\varpi_iT]\}$ and  $I_i^{(2)}(\varpi_i)=\{t:\,[\varpi_iT]+1,[\varpi_iT]+2,\ldots,[\lambda_i^0T]\}$.

        \vspace*{0.1in}
        To derive the limit of the $Wald_T(\varpi_i;\,\ell)$, note that by Lemma \ref{lem7} and $\hat \mUpsilon_t \inp \mUpsilon_t^0$, $T^{-1/2}\sum_{t=T_{i-1}^0+1}^{[T\varpi_i]}\hat{\vw}_t\tilde{u}_t \Rightarrow \mathbb N (\lambda_{i-1}^0,\varpi_i) $, where the definition of $\mathbb N (\cdot,\cdot)$ for this entire section is as in equation \eqref{defn}  but with $\vbeta_{\vx}^0$ replaced by $\vbeta_{\vx,(i)}^0$ when $\mathbb N(\cdot,\cdot)$ is evaluated in (a subset of) the interval $[\lambda_{i-1}^0,\lambda_i^0]$. Similarly, $T^{-1/2}\sum_{T_{i-1}^0+1}^{T_i^0}\hat{\vw}_t\tilde{u}_t \Rightarrow \mathbb N (\lambda_{i-1}^0,\lambda_i^0)$. Similar to \eqref{defv}, one can derive
        $ \mathbb V_{i,j} (\varpi_i)= \lm \var\left(T^{-1/2}\sum_{t \in I_{i}^{(j)}} \mUpsilon_t^{0\prime} \vz_t\left(u_t+\vv_t^\prime\vbeta_{\vx,(i)}^0\right)\right)$.

        Using these results, Lemma \ref{lem2}, and letting $\mQ(\tau)=\mUpsilon'(\tau)\mathbb Q_{\vz}(\tau)\mUpsilon(\tau)$, we have:
        \begin{eqnarray*}
                T^{-1}\mW_i^\prime \mW_i &\stackrel{p}{\to}&\int_{\lambda_{i-1}^0}^{\lambda_i^0}
                \mathbb{Q}(\tau) d\tau \equiv \mathbb Q_{i}^0\\[0.1in]
                T^{-1} \mE_i(\varpi_i)^\prime \mW_i \;=\;\hat \mQ_{1}(\varpi_i)&\stackrel{p}{\to}&\int_{\lambda_{i-1}^0}^{\lambda_{i-1}^0+\varpi_i}\mathbb{Q}(\tau)d\tau\;\equiv\;\mathbb{Q}_{i,1}(\varpi_i),\\[0.1in]
                \hat \mQ_{2}(\varpi_i)&\stackrel{p}{\to}&\int_{\lambda_{i-1}^0+\varpi_i}^{\lambda_i^0}\mathbb{Q}(\tau)d\tau\;\equiv\;\mathbb{Q}_{i,2}(\varpi_i),\\[0.1in]
                \hat{\mM}_{1}(\varpi_i)&\stackrel{p}{\to}& \mathbb V_{i,1}(\varpi_i)\\[0.1in]
                \hat{\mM}_{2}(\varpi_i)&\stackrel{p}{\to}& \mathbb V_{i,2}(\varpi_i),\\[0.1in]
                T^{-1/2}\mE_i(\varpi_i)^\prime\tilde{u}_i&\Rightarrow&\mathbb N(\lambda_{i-1}^0,\varpi_i),\\[0.1in]
                T^{-1/2}\mW_i^\prime\tilde{u}_i&\Rightarrow& \mathbb N(\lambda_{i-1}^0,\lambda_i^0).
        \end{eqnarray*}
        It then follows by the continuous mapping theorem (CMT) that:
        $$
        Wald_T(\varpi_i;\,\ell)\;\Rightarrow\; \mathcal{W}_{i,1}'(\varpi_i) \{\mathcal{W}_{i,2}(\varpi_i)\}^{-1}\mathcal{W}_{i,1}'(\varpi_i)
        $$
        where
        \begin{eqnarray}\label{defw1}
        \mathcal{W}_{i,1}&=&\left( \mathbb Q_{i,1}(\varpi_i)\, \{\mathbb Q_i^0\}^{-1} \, \mathbb Q_{i,1}(\varpi_i)\right)^{-1}\left(\mathbb Q_{i,1}(\varpi_i)\,\, \{\mathbb Q_i^0\}^{-1} \,\, \mathbb N(\lambda_{i-1}^0,\lambda_i^0) \right),\\[0.1in] \label{defw2}
        \mathcal{W}_{i,2}(\varpi_i)&=&\mR_1\,\diag_{j=1:2}\left(
        \mathbb V_{i,j}(\varpi_i)\right) \mR_1^\prime.
        \end{eqnarray}

        Therefore, by the CMT, we have
        $$
        sup\text{-}Wald_T(\ell+1\,|\,\ell)\;=\;\max_{i=1,2,\ldots\ell+1}\,\left\{\,\sup_{\varpi_i\in \mathcal{N}(\hat{\vlambda}_\ell)}\mathcal{W}_{i,1}(\varpi_i)^\prime \{\mathcal{W}_{i,2}(\varpi_i)\}^{-1}\mathcal{W}_{i,1}(\varpi_i)\right\}.
        $$
\end{proof}

\section{Analysis of $\mathcal{C}_{2}$, $\mathcal{B}_6$ and $\mathcal{B}_{1,2}$ defined in Section \ref{supsec1}, Proof of Lemma \ref{lem2}} \label{supsec2}
This section considers $\mathcal{C}_{2}$, $\mathcal{B}_6$ and $\mathcal{B}_{1,2}$ in more detail, but it also shows that  $\sum_{l=0}^{\infty} \| \mF_s^l \| < \infty$.

Let $\{\gamma_i;\,i=1,2,\ldots np\}$ be the eigenvalues of $\mF_s$. We consider four cases relating to the roots of the characteristic equation of VAR in (\ref{a1}): (i) real and distinct; (ii) real and repeated; (iii) complex and distinct; (iv) complex and repeated. (The case of repeated complex and real roots is easily deduced from Case's (i) and (iv) and so is omitted for brevity.) \\[0.1in]

\noindent
{\it Case (i): real and distinct roots }

We have, for $\mOmega_j =\mA_s^{-1}\mSigma_j\{\mA_s^{-1}\}$ and $\mSigma,\overline \mSigma$ defined in the main paper, at the beginning of its appendix,
\begin{eqnarray*}
\|\mathcal{C}_{2}\|&=&T^{-1}\bigg\| \sum_{l=1}^{(\Delta\tau_sT)-1} \mF_s^l\left\{\,\sum_{j=[\tau_sT]-l+1}^{[\tau_sT]}\mOmega_{j}\,\right\} \mF_s^{l\prime}\bigg\|,\\[0.1in]
&\leq&T^{-1} \sum_{l=1}^{(\Delta\tau_sT)-1} \bigg\|\mF_s^l\left\{\,\sum_{j=[\tau_sT]-l+1}^{[\tau_sT]}\mOmega_{j}\,\right\} \mF_s^{l\prime}\bigg\|,\\[0.1in]
&\leq& T^{-1}\,\sum_{l=1}^{(\Delta\tau_sT)-1}\|\mF_s^l\|\,\bigg\|\left\{\,\sum_{j=[\tau_sT]-l+1}^{[\tau_sT]}\mOmega_{j}\,\right\}\,\bigg\|\| \mF_s^{l\prime}\|\\[0.1in]
&\leq& T^{-1}\,\sum_{l=1}^{(\Delta\tau_sT)-1}\|\mF_s^l\|^2\,\bigg\|\left\{\,\sum_{j=[\tau_sT]-l+1}^{[\tau_sT]} \mOmega_{j}\,\right\}\bigg\|\,
\end{eqnarray*}

Given the definition of $\mOmega_{j}$, we have
$$
\|\mOmega_j\|\;=\;\|\mA_s^{-1}\mSigma_j\{\mA_s^{-1}\}^\prime\|\;\leq\;\|\mA_s^{-1}\|^2\,\bigg\|\,\sum_{j=[\tau_sT]-l+1}^{[\tau_sT]}\overline \mSigma_j\,\bigg\|
$$
Under our assumptions, we have  $\|\mA_s^{-1}\|^2=O(1)$. \black{Since} $\sup\|\overline \mSigma_t\|<C_\Sigma$ for some finite positive constant $C_\Sigma$, we have have for $l=1,2,\ldots,(\Delta\tau_sT)-1$:
$$
\bigg\|\,\sum_{j=[\tau_sT]-l+1}^{[\tau_sT]}\overline \mSigma_j\,\bigg\|\;\leq \; l C_\Sigma
$$

Now consider $\|\mF_s^l\|$. From \citet{Hamilton:1994} (p.259) and \citet{Lutkepohl:1993} (p.460), it follows that under the assumptions about the roots,\footnote{The dependence of both $\mP$ and $\mGamma$ on $s$ is suppressed for ease of notation.}
$$
\mF_s\;=\;\mP^{-1}\mGamma \mP
$$
where $\mGamma=\diag(\gamma_1,\gamma_2,\ldots,\gamma_{np})$ and $\mP$ is a real matrix with $\mP=O(1)$, and hence that
$$
\mF_s^l\;=\;\mP^{-1}\mGamma^l \mP.
$$
Therefore, we have:
\begin{equation}
\label{normFl}
\|\mF_s^l\|\;\leq\;\|\mP^{-1}\|\|\mGamma^l\|\| \mP\|\;\leq\; \|\mP^{-1}\|\,\| \mP\| \sqrt{np}\, \gamma_{max, \mGamma^2}^l = \sqrt{np}\, \gamma_{max, \mGamma^2}^l,
\end{equation}
where $\gamma_{max,\mGamma^2}$ is the positive square root of the maximum eigenvalue of $\mGamma^2$.
\black{Therefore,}
$$
\|\mathcal{C}_{2}\| \leq T^{-1}\,\sum_{l=1}^{(\Delta\tau_sT)-1} \sqrt{np}\, l \gamma_{max, \gamma^2}^l C_\Sigma
$$


Since $\sum_{i=1}^\infty i \gamma^i=\gamma(1-\gamma)^{-2}=O(1)$, it follows that $\mathcal{C}_{2}=o(1)$.\\[0.1in]

Now consider $\mathcal{B}_6$. We have
\begin{eqnarray*}
\|\mathcal{B}_6\|&=&T^{-1}\bigg\|\sum_{t\in I_i} \left(\,\sum_{i=0}^{\tilde{t}-1}\mF_s^l\vmu_s\right)\vxi_{[\tau_{s-1}T]}^\prime \mF_s^{\tilde{t}\prime}\bigg\|\\[0.1in]
&\leq &T^{-1}\sum_{t\in I_i} \,\sum_{i=0}^{\tilde{t}-1}\,\|\,\mF_s^l\vmu_s\vxi_{[\tau_{s-1}T]}^\prime \mF_s^{\tilde{t}\prime}\,\|\\[0.1in]
&\leq &T^{-1}\|\vmu_s\|\,\|\vxi_{[\tau_{s-1}T]}^\prime\|\,\sum_{t\in I_i} \,\sum_{l=0}^{\tilde{t}-1}\,\|\mF_s^l\|\,\|\mF_s^{\tilde{t}\prime}\,\|.
\end{eqnarray*}
Using (\ref{normFl}), we have:

$$
\sum_{t\in I_i} \,\sum_{i=0}^{\tilde{t}-1}\,\|\mF_s^l\|\,\|\mF_s^{\tilde{t}\prime}\,\|\;\leq\; K \sum_{t\in I_i} \,\sum_{l=0}^{\tilde{t}-1} \gamma^{l+\tilde{t}}
$$
where $\gamma=\gamma_{max, \gamma^2}$ and $K$ is a bounded positive constant. Expanding the sums and using $0<\gamma<1$, it can be shown that $\sum_{t\in I_i} \,\sum_{l=0}^{\tilde{t}-1} \gamma^{l+\tilde{t}}\leq \sum_{l=1}^\infty l\gamma^l=(1-\gamma)^{-2}=O(1)$ and so
$$
\|\mathcal{B}_6\|\;\leq T^{-1} K\|\vmu_s\|\,\|\vxi_{[\tau_{s-1}T]}^\prime\| \,(1-\gamma)^{-2}\;=\;o_p(1)\;\stackrel{p}{\to}\;0.
$$

Now consider $\mathcal{B}_{1,2}$. Recall that
$$
\mathcal{B}_{1,2}\;=\;T^{-1}\sum_{t\in I_i} \left( \sum_{l=0}^{\tilde{t}-1}\mF_s^l\vmu_s\right)\,\left(\,\sum_{l=0}^{\tilde{t}-1}\mF_s^l\vmu_s\right)^\prime,
$$
where $\tilde{t}=t-[\tau_{s-1}T]$.
$$
\sum_{l=0}^{\tilde{t}-1}\mF_s^l\vmu_s\;=\;\sum_{l=0}^{\Delta\tau_{s}T-1}\mF_s^l\vmu_s\,-\,\sum_{\tilde{t}}^{\Delta\tau_{s}T-1}\mF_s^l\vmu_s
$$
we have
\begin{eqnarray}
\mathcal{B}_{1,2}&=&T^{-1}\sum_{t\in I_i}\left(\,\sum_{l=0}^{\Delta\tau_{s}T-1}\mF_s^l\vmu_s\,\right)\left(\,\sum_{l=0}^{\Delta\tau_{s}T-1}\mF_s^l\vmu_s\,\right)^\prime\nonumber\\
&\;&-\;T^{-1}\sum_{t\in I_i}\left(\,\sum_{l=0}^{\Delta\tau_{s}T-1}\mF_s^l\vmu_s\,\right)\left(\,\sum_{l=\tilde{t}}^{\Delta\tau_{s}T-1}\mF_s^l\vmu_s\,\right)^\prime\nonumber\\
&\;&-\;T^{-1}\sum_{t\in I_i}\left(\,\sum_{l=\tilde{t}}^{\Delta\tau_{s}T-1}\mF_s^l\vmu_s\,\right)\left(\,\sum_{l=0}^{\Delta\tau_{s}T-1}\mF_s^l\vmu_s\,\right)^\prime\nonumber\\
&\;&+\;T^{-1}\sum_{t\in I_i}\left(\,\sum_{l=\tilde{t}}^{\Delta\tau_{s}T-1}\mF_s^l\vmu_s\,\right)\left(\,\sum_{l=\tilde{t}}^{\Delta\tau_{s}T-1}\mF_s^l\vmu_s\,\right)^\prime\label{AppBB12}.
\end{eqnarray}
The first term on the right-hand side (rhs) of (\ref{AppBB12}) converges to $\mathbb{B}_2(\tau_{s-1},\tau_2)$.
Since $\sum_{t\in I_i}\sum_{\tilde{t}}^{\Delta\tau_{s}-1}\mF_s^l\vmu_s=\sum_{l=1}^{\Delta \tau_sT-1}l \mF_s^l\vmu_s$ and is $O(1)$ by same arguments as above, it follows that the second and third terms on the rhs of  (\ref{AppBB12}) are $o(1)$. Now consider the fourth term.
Noting that
$$
\sum_{l=j}^{\Delta\tau_{s}T-1}\mF_s^l\vmu_s=\mF_s^j\vmu_s\,+\,\sum_{l=j+1}^{\Delta\tau_{s}T-1}\mF_s^l\vmu_s
$$
and so
\begin{eqnarray*}
\left(\sum_{l=j}^{\Delta\tau_{s}T-1}\mF_s^l\vmu_s\right)\left(\sum_{l=j}^{\Delta\tau_{s}T-1}\mF_s^l\vmu_s\right)^\prime&=&\mF_s^j\vmu_s\vmu_s^\prime \mF_s^j\,+\,\mF_s^j\vmu_s\left(\sum_{l=j+1}^{\Delta\tau_{s}T-1}\mF_s^l\vmu_s\right)^\prime\,+\,\left(\sum_{l=j+1}^{\Delta\tau_{s}T-1}\mF_s^l\vmu_s\right)\vmu_s^\prime \mF_s^j\\
&\;&\;+\;\left(\sum_{l=j+1}^{\Delta\tau_{s}T-1}\mF_s^l\vmu_s\right)\left(\sum_{l=j+1}^{\Delta\tau_{s}T-1}\mF_s^l\vmu_s\right)^\prime,
\end{eqnarray*}
it follows that
\begin{eqnarray}
\sum_{t\in I_i}\left(\,\sum_{l=\tilde{t}}^{\Delta\tau_{s}T-1}\mF_s^l\vmu_s\,\right)\left(\,\sum_{l=\tilde{t}}^{\Delta\tau_{s}T-1}\mF_s^l\vmu_s\,\right)^\prime&=&\sum_{l=1}^{\Delta\tau_s T-1}l\mF_s^l\vmu_s\vmu_s^\prime\mF_s^{l\prime}\,+\,\sum_{l=1}^{\Delta\tau_s T-2}l\mF_s^l\vmu_s\vmu_s^\prime\sum_{j=l+1}^{\Delta\tau_s T-1}\mF_s^{j\prime}\nonumber\\&\;&\,+\,\sum_{l=1}^{\Delta\tau_s T-2}\sum_{j=l+1}^{\Delta\tau_s T-1}\mF_s^{j}\vmu_s\vmu_s^\prime l\mF_s^l.\label{AppBB12a}
\end{eqnarray}
Using similar arguments to above, it can be shown that the rhs of (\ref{AppBB12a}) is $O(1)$ and so the fourth term on the rhs of (\ref{AppBB12}) is $o(1)$. This completes the derivation.

\vspace*{0.1in}
\noindent
{\it Case (ii) roots are real but not distinct}\\
Using the Jordan decomposition, Magnus \& Neudecker (1991, p.17), there is a nonsingular matrix $P$ such that
$$
\mF\;=\;\mP^{-1}\mGamma \mP
$$
where
$$
\mGamma=blockdiag(\mGamma_1,\mGamma_2,\ldots,\mGamma_k)
$$
and $\mGamma_j$ is the $n_j\times n_j$ matrix
$$
\mGamma_j\;=\;\left[\,\begin{array}{ccccc} \gamma_j & 1 & 0 & \hdots & 0\\ 0 & \gamma_j & 1 & \hdots & 0\\\vdots & \vdots & \vdots & {}&\vdots\\ 0 & 0 & 0 &\hdots & 1\\0 & 0 & 0 & \hdots & \gamma_j\end{array}\,\right],
$$
and $\{\gamma_j\}$ are the eigenvalues of $F$.\footnote{An eigenvalue may appear in more than one block, see \citet{Lutkepohl:1993} (p.460).} As before, we have:
$$
\mF^l\;=\;\mP^{-1}\mGamma^l \mP
$$
where from \citet{Lutkepohl:1993} (p.460),
$$
\mGamma=blockdiag(\mGamma_1^l,\mGamma_2^l,\ldots,\mGamma_k^l)
$$
and
\begin{equation}
\label{Lamjl}
\mGamma_j^l\;=\;\left[\,\begin{array}{ccccc} \gamma_j^l & { l \choose 1}\gamma_j^{l-1} & { l \choose 2}\gamma_j^{l-2} & \hdots & { l \choose n_j-1}\gamma_j^{l-n_j+1}\\ 0 & \gamma_j^l & { l \choose 1}\gamma_j^{l-1} & \hdots & { l \choose n_j-2}\gamma_j^{l-n_j+2}\\\vdots & \vdots & \vdots &{}& \hdots\\ 0 & 0 & 0 &\hdots & { l \choose 1}\gamma_j^{l-1}\\0 & 0 & 0 & \hdots & \gamma_j^l\end{array}\,\right].
\end{equation}
where ${l \choose c}=0$ if $c>l$.\\[0.1in]

Given the block diagonal structure,
$$
\|\mGamma^l\|\;=\; \left(\sum_{i=1}^k \|\mGamma_i^l\|^2\,\right)^{1/2},
$$
and from (\ref{Lamjl}),
$$
\|\mGamma_i^l\|^2\;=\;\sum_{m=0}^{n_j-1}\sum_{c=0}^m\left\{{l \choose c}\gamma_i^{l-c}\right\}^2.
$$

\vspace*{0.1in}
Set $\overline n=\max_j\{n_j\}$. Then:
$$
\|\mGamma_i^l\|^2\;\leq\;\sum_{m=0}^{\overline n}\sum_{c=0}^m\left\{{l \choose c}\gamma_*^{l-c}\right\}^2,
$$
\black{where $\gamma^*=\max\{\tilde{\gamma}_i; \,i=1,2,\ldots,np\}$, and $\tilde{\gamma}_i=|\gamma_i|$.}
For $0\leq c\leq \overline n$ and $l=1,2,\ldots \overline n$, we have (as $|\gamma_i|<1$)
$$
\left\{{l \choose c}\gamma_*^{l-c}\right\}^2\leq \left\{l^{\overline n}\gamma_*^{l-\overline n}\right\}^2
$$
and so
\begin{equation}
\label{normGAmmal}
\|\mGamma^l\|^2\;\leq \;(\overline n +1)^2(l^{\overline n }\gamma_*^{l-\overline n})^2.
\end{equation}

Therefore, for $l\geq \overline n$, we have
$$
\|\mGamma^l\|\;\leq \;K_1l^{\overline n}\gamma_*^{l},
$$
for some finite positive constant $K_1$.\\[0.1in]
%
We have
$$
\sum_{l=1}^{(\Delta\tau_sT)-1}\|\mF^l\|^2\,l C_\Sigma\;=\;\sum_{l=1}^{\overline n}\|\mF^l\|^2\,l C_\Sigma\,+\,\sum_{l=\overline n +1}^{{(\Delta\tau_sT)-1}}\| \mF^l\|^2\,l C_\Sigma,
$$
where the first term on the right-hand side is evidently $O(1)$ as $\overline n$ is finite. So now consider the other term. By similar arguments to Case (i),
$$
\sum_{l=\overline n+1}^{{(\Delta\tau_sT)-1}}\|\mF^l\|^2\,l C_\Sigma\;\leq\; K_3 \sum_{l=\overline n+1}^{{(\Delta\tau_sT)-1}}l\|\mGamma^l\|\;\leq\;K \sum_{l=\overline n+1}^{{(\Delta\tau_sT)-1}}l^{\overline n +1}\gamma_*^{l}\;=\;O(1)
$$
(The last quality holds by the following reasoning. We have $\sum_{j=0}^\infty x^j=(1-x)^{-1}$. Differentiating both sides with respect to $x$ gives $\sum_{j=1}^\infty jx^{j-1}=(1-x)^{-2}$ and so $\sum_{j=1}^\infty jx^{j}=x(1-x)^{-2}$. The second derivative gives $\sum_{j=2}^\infty j(j-1)x^{j-2}=2(1-x)^{-3}$ and so $\sum_{j=2}^\infty j(j-1)x^{j}=2x^2(1-x)^{-3}$ and $\sum_{j=2}^\infty j^2x^{j}=2x^2(1-x)^{-3}-\{x(1-x)^{-2}-x\}=O(1)$. Repeated differentiation and this logic yields: $\sum_{j=\overline n}^\infty j^{\overline n}x^{j}=O(1)$.)\\[0.1in]

\noindent
It then follows by similar arguments to Case (ii) that $\mathcal{C}_{2}=o(1)$. The proof that $\mathcal{B}_6=o_p(1)$  and $\mathcal{B}_{1,2}\to\mathbb{B}(\tau_{s-1},\tau_s)$ follow via similar arguments and so is omitted for brevity.\\[0.1in]

\noindent
{\it Case 3: distinct complex roots}\\[0.1in]
Without loss of generality, we consider the case where the first two eigenvalues of $F$ are complex and the remainder are all real. Let $\gamma_1=\mu+i\nu$ and $\gamma_2=\mu-i\nu$. Given that the VAR is stationary, we have $r=\sqrt{\mu^2+\nu^2}<1$. In this case, we have:
$$
\mF^l=\mP^{-1} \overline \mGamma^l \mP
$$
where
$$
\overline \mGamma\;=\;\left[\,\begin{array}{ccccccc} \mu & -\nu & 0 & 0 &\hdots &0\\\nu & \mu & 0 & 0 & \hdots & 0\\0 & 0 & \gamma_3 & 0 &\hdots &0\\\vdots &\vdots & \vdots &\ddots &\hdots \\ 0 & 0 & 0 & 0 &\hdots & \gamma_{np}\end{array}\,\right].
$$
Using polar coordinate, we write $\mu=rCos(\theta)$ and $\nu=rSin(\theta)$. It then follows that
$$
\overline \mGamma^l\;=\;\left[\,\begin{array}{ccccccc} r^lCos(l\theta) & -r^lSin(l\theta) & 0 & 0 &\hdots &0\\ r^lSin(l\theta) & r^lCos(l\theta) & 0 & 0 & \hdots & 0\\0 & 0 & \gamma_3^l & 0 &\hdots &0\\\vdots &\vdots & \vdots &\ddots &\hdots \\ 0 & 0 & 0 & 0 &\hdots & \gamma_{np}^l\end{array}\,\right].
$$
Thus we have
$$
\| \overline \mGamma^l\|^2\;=\;2r^{2l}\,+\,\gamma_3^{2l}\,+\,\ldots\,+\,\gamma_{np}^{2l}.
$$
We can then apply the same argument as in Case (i) to show $\mathcal{C}_{2}=o(1)$ provided we replace $\gamma_{max,\mGamma^2}$ by $\max\{r, |\gamma_3|,\ldots,|\gamma_{np}|\}$. The proof that $\mathcal{B}_6=o_p(1)$ and $\mathcal{B}_{1,2}\to\mathbb{B}_2(\tau_{s-1},\tau_s)$ follow via similar arguments and so is omitted for brevity.\\[0.1in]

\noindent
{\it Case 4: repeated complex roots}\\[0.1in]
Without loss of generality, we consider the case where the first four eigenvalues of $\mF$ are complex and repeated, and the remainder are all real. Let $\gamma_1=\gamma_3=\mu+i\nu$ and $\gamma_2=\gamma_4=\mu-i\nu$. Given that the VAR is stationary, we have $r=\sqrt{\mu^2+\nu^2}<1$. In this case, we have:
$$
\mF^l=\mP^{-1} \overline \mGamma^l \mP
$$
where
$$
\overline \mGamma\;=\;\left[\,\begin{array}{cccccccc} \mu & -\nu & 1 & 0 &0&0&\hdots &0\\\nu & \mu & 0 & 1 &0& 0&\hdots & 0\\
0&0&\mu & -\nu  &0&0&\hdots &0\\0&0&\nu & \mu & 0&0&\hdots & 0\\
0&0&0 & 0 & \gamma_5 & 0 &\hdots &0\\\vdots &\vdots &\vdots &\vdots & \vdots &\vdots &\hdots &\vdots \\ 0&0&0 & 0 & 0 & 0 &\hdots & \gamma_{np}\end{array}\,\right],
$$
and
$$
\overline \mGamma^l\;=\;\left[\,\begin{array}{cccccccc} r^la(l) & -r^lb(l) & lr^{l-1}a(l-1) &-lr^{l-1}b(l-1) &0&\hdots &0\\r^lb(l) & r^la(l) & lr^{l-1}b(l-1) & lr^{l-1}a(l-1) &0& 0&\hdots & 0\\
0&0&r^la(l) & -r^lb(l)  &0&0&\hdots &0\\0&0&r^lb(l) & r^la(l) & 0&0&\hdots & 0\\
0&0&0 & 0 & \gamma_5^l & 0 &\hdots &0\\\vdots &\vdots &\vdots &\vdots & \vdots &\vdots &\hdots &\vdots \\ 0&0&0 & 0 & 0 & 0 &\hdots & \gamma_{np}^l\end{array}\,\right],
$$
where $a(l)=Cos(l\theta)$ and $b(l)=Sin(l\theta)$. Therefore, we have\footnote{Using $Cos^2(\theta)+Sin^2(\theta)=1$.}
$$
\|\overline \gamma \|^2\;=\;4 r^{2l}\,+\,2l^2r^{2(l-1)}\,\gamma_5^{2l}\,+\,\ldots\,+\,\gamma_{np}^{2l}
$$
and
$$
\|\overline \mGamma \|\;=\;Kl^2\gamma_*^{(l-1)}
$$
where $\gamma_*=\max\{r, |\gamma_5|,\ldots,|\gamma_{np}|\}$. By a similar argument to Case (ii), it follows that $\mathcal{C}_{2}=o(1)$. The proof that $\mathcal{B}_6=o_p(1)$ and $\mathcal{B}_{1,2}\to\mathbb{B}_2(\tau_{s-1},\tau_s)$ follow via similar arguments and so is omitted for brevity.

Inspecting the derivations, it also becomes clear that we showed that $\sum_{l=0}^{\infty} \| \mF_s^l \| < \infty$.

\section{Proof of \eqref{mathcalR}} \label{supsec3}
We  have
$$
\sum_{\ell=0}^\infty \|\hat \mF_s^\ell\,-\,\mF_s^\ell\|\;=\;\sum_{\ell=1}^\infty \|\hat \mF_s^\ell\,-\,\mF_s^\ell\|.
$$
Using
$$
\hat \mF_s^\ell\,-\,\mF_s^\ell\;=\;\hat \mF_s^{\ell-1}(\hat \mF_s\,-\,\mF_s)\,+\,(\hat \mF_s^{\ell-1}\,-\,\mF_s^{\ell-1})\mF_s,
$$
we have, via repeated back substitution,
$$
\hat \mF_s^\ell\,-\,\mF_s^\ell\;=\;\sum_{j=1}^\ell\hat \mF_s^{j-1}\,(\hat \mF_s\,-\,\mF_s)\mF_s^{\ell-j}.
$$
Therefore, it follows that
$$
\|\hat \mF_s^\ell\,-\,\mF_s^\ell\|\;\leq\;\sum_{j=1}^\ell\|\hat \mF_s^{j-1}\|\,\|\hat \mF_s\,-\,\mF_s\|\,\|\mF_s^{\ell-j}\|\;=\;\|\hat \mF_s\,-\,\mF_s\|\,\sum_{j=1}^\ell\|\hat \mF_s^{j-1}\|\,\|\mF_s^{\ell-j}\|,
$$
and
$$
\sum_{\ell=0}^\infty \|\hat \mF_s^\ell\,-\,\mF_s^\ell\|\;=\sum_{\ell=1}^\infty \|\hat \mF_s^\ell\,-\,\mF_s^\ell\|\;\leq\;\|\hat \mF_s\,-\,\mF_s\|\sum_{\ell=1}^\infty\sum_{j=1}^\ell\|\hat \mF_s^{j-1}\|\,\|\mF_s^{\ell-j}\|.
$$

We now show that $\sum_{\ell=1}^\infty\sum_{j=1}^\ell\|\hat \mF_s^{j-1}\|\,\|\mF_s^{\ell-j}\| =\op(1)$ for the cases in Section \ref{supsec2}.\\[0.1in]

\noindent
{\it Case (i): real and distinct roots}\\
From \eqref{normFl}, it follows that (using $K$ repeatedly to denote any finite constant)
\begin{eqnarray*}
\|\hat \mF_s^{j-1}\|&\leq&K\hat \gamma_*^{j-1},\\
\|\mF_s^{\ell-j}\|&\leq&K \gamma_*^{\ell-j}.
\end{eqnarray*}
where $\gamma_*$ denotes $\gamma_{max,\mGamma}^2$ from Section \ref{supsec2}, and $\hat \gamma_*$ is its analogue based on $\hat \mF_s$. Let $\gamma_\bullet=max\{\gamma_*,\hat \gamma_*\}$. Then, we have
$$
\sum_{j=1}^\ell\|\hat \mF_s^{j-1}\|\,\|\mF_s^{\ell-j}\|\;\leq\;K\ell \gamma_\bullet^{\ell-1},
$$
and so
$$
\sum_{\ell=1}^\infty\sum_{j=1}^\ell\|\hat \mF_s^{j-1}\|\,\|\mF_s^{\ell-j}\|\;\leq\;K\sum_{\ell=1}^\infty\ell \gamma_\bullet^{\ell-1}\;=\;O_p(1),
$$
from Assumption \ref{a7}.\footnote{Note that $\hat \mF_s\stackrel{p}{\to}\mF_s$ implies $\hat\gamma_*\stackrel{p}{\to}\gamma_*$ and $0<\gamma_*<1$ from Assumption \ref{a7}.}

\vspace*{0.1in}
\noindent
{\it Case (ii): roots are real but repeated}\\
Using \eqref{normGAmmal}, we have for $\ell\geq \bar{n}$ (and $\bar n$ is defined as in Section \ref{supsec2} Case (ii))
\begin{eqnarray*}
\|\hat \mF_s^{j-1}\|&\leq&K(j-1)^{\bar n}\hat \gamma_*^{j-1},\\
\|\mF_s^{\ell-j}\|&\leq&K (\ell-j)^{\bar n}\gamma_*^{\ell-j}.
\end{eqnarray*}
where $\gamma_*$ is defined as in Section \ref{supsec2} Case (ii) and $\hat \gamma_*$ is defined analogously only replacing $\mF_s$ with $\hat \mF_s$. For ease of presentation, assume the same multiplicity of eigenvalues for $\mF_s$ and $\mF_s$; it is straightforward to modify the argument to allow for different multiplicities. Using $j\leq \ell$, we have
$$
\|\hat \mF_s^{j-1}\|\,\|\mF_s^{\ell-j}\|\;\leq\;K \ell^{2\bar n}\gamma_\bullet^{\ell-1},
$$
and
$$
\sum_{j=1}^\ell\|\hat \mF_s^{j-1}\|\,\|\mF_s^{\ell-j}\|\;\leq\;K\ell^{2\bar n+1}\gamma_\bullet^{\ell-1}.
$$
This gives
$$
\sum_{\ell=1}^\infty\sum_{j=1}^\ell\|\hat \mF_s^{j-1}\|\,\|\mF_s^{\ell-j}\|\;\leq\;K\sum_{\ell=1}^\infty\ell^{2\bar n+1} \gamma_\bullet^{\ell-1}\;=\;O_p(1),
$$
using the results noted toward the end of Section \ref{supsec2}, Case (ii).\footnote{If roots of $\hat{\mF}_s$ and $\mF_s$ have different multiplicities then define $\bar{n}$ to be the max over the two. Recall that $\hat\gamma_*\stackrel{p}{\to}\gamma_*$ and hence the multiplicities match with probability one as $T\to\infty$.} The proof for the rest of the cases follow by similar reasoning.

\section{Proof of \eqref{mathcaloi}} \label{supsec4}
We need to analyze the following cases:
\begin{itemize}
\item[2)] $t-\kappa=t^*-\kappa^*, t-l=t^*-l^*, t-\kappa\neq t-l$
\item[3)] $t-l=t^*-\kappa^*,t-\kappa=t^*-l^*, t-\kappa\neq t-l$
\item[4)] $t-l=t-\kappa=t^*-l^*=t^*-\kappa^*$
\item[5)] $t-\kappa\neq t-l\neq t^*-\kappa^*\neq t^*-l^*$
\item[6)] $t-\kappa=t-l=t^*-\kappa^*, t-\kappa\neq t^*-l^*$
\item[7)] $t-\kappa=t-l=t^*-l^*, t-\kappa\neq t^*-\kappa^*$
\item[8)] $t-\kappa=t^*-l^*=t^*-\kappa^*, t-\kappa\neq t-l$
\item[9)] $t-l=t^*-l^*=t^*-\kappa^*, t-\kappa\neq t-l$
\end{itemize}
For cases 2)-9) we show that$\mathcal{O}_i=\op(1)$, $i=2,\ldots,9$. Recall that the estimation error in $\hat\vg_{t}$ is asymptotically negligible and $\hat\mF_s^{l}-\mF_s^l= \mathcal{R}_{s,l}$, where $\sum_{l=0}^{\infty} \|\mathcal{R}_{s,l}\|=\| \hat \mF_s-\mF_s\| \ \Op(1)=\op(1)$, and $\hat\mA_{s,\#}^{-1}=\mA_{s,\#}^{-1}+O_p(T^{-1})$. These arguments will be used to replace the estimated values with the true values in all terms $\mathcal{O}_i$, $i=2,\ldots,9$, because the limits will be the same.
Consider case 2).
\begin{eqnarray*}
        \mathcal{O}_2&=& T^{-2}\sum_{t,t^*\in \tilde I_s}\sum_{l,\kappa=0,l\neq \kappa}^{\tilde{t}-1}((\hat\mF_s^\kappa\hat\mA_{s,\#}^{-1})\otimes(\hat\mF_s^l\hat\mA_{s,\#}^{-1}))\mathcal{G}((\hat\mF_s^{t^*-t+\kappa}\hat\mA_{s,\#}^{-1})'\otimes(\hat\mF_s^{t^*-t+l}\hat\mA_{s,\#}^{-1})')\\
        &\times&1_{t^*-t+\kappa>0,t^*-t+l>0,\kappa\neq l}
\end{eqnarray*}
where
\begin{eqnarray*}
        \mathcal{G}&=& \E^b(\vg^b_{t-\kappa}\vg_{t-\kappa}^{b\prime})\otimes\E^b(\vg^b_{t-l}\vg^{b\prime}_{t-l})=((\hat\vg_{t-k}\hat\vg_{t-k}')\otimes(\hat\vg_{t-l}\hat\vg_{t-l}'))\odot(\E^b(\vnu_{t-\kappa}\vnu_{t-\kappa}')\otimes\E^b(\vnu_{t-l}\vnu_{t-l}'))\\
        &=& ((\hat\vg_{t-k}\hat\vg_{t-k}')\otimes(\hat\vg_{t-l}\hat\vg_{t-l}'))\odot(\mathcal{J}\otimes\mathcal{J})= ((\hat\vg_{t-k}\hat\vg_{t-k}')\odot\mathcal{J})\otimes((\hat\vg_{t-l}\hat\vg_{t-l}')\odot\mathcal{J}).
\end{eqnarray*}
Note that
\begin{eqnarray*}
        \left\| \mathcal{O}_2\right\|&\leq& \left\|T^{-2} \sum_{t\in \tilde I_s}\sum_{l,\kappa=0,l\neq \kappa}^{\tilde{t}-1}((\mF_s^\kappa\mA_{s,\#}^{-1})\otimes(\mF_s^l\mA_{s,\#}^{-1}))\left((\vg_{t-\kappa}\vg_{t-\kappa}')\odot\mathcal{J})\otimes((\vg_{t-l}\vg_{t-l}')\odot\mathcal{J})\right)\right\|\\
        &\times&\sup_{t-l,t-\kappa}\left\| \sum_{t^*\in \tilde I_s}((\mF_s^{t^*-t+\kappa}\mA_{s,\#}^{-1})'\otimes(\mF_s^{t^*-t+l}\mA_{s,\#}^{-1})')\right\| +\op(1)\\
        &=&\left\|T^{-2} \sum_{t\in \tilde I_s}\sum_{l,\kappa=0,l\neq \kappa}^{\tilde{t}-1}((\mF_s^\kappa\mA_{s,\#}^{-1})\otimes(\mF_s^l\mA_{s,\#}^{-1}))\left((\vg_{t-\kappa}\vg_{t-\kappa}')\odot\mathcal{J})\otimes((\vg_{t-l}\vg_{t-l}')\odot\mathcal{J})\right)\right\|O(1)\\
        &\leq&\sum_{l,\kappa=0}^{\infty}\left\|\mF_s^{\kappa}\right\| \left\|\mF_s^{l}\right\|\left\|\mA_{s,\#}^{-1}\right\|^2T^{-1}\left(\sup_{t}{\left\|\vepsi_t\right\|^2}\right)^2O(1)=O_p(T^{-1})=\op(1),
\end{eqnarray*}
because $(\vg_{t-\kappa}\vg_{t-\kappa}')\odot\mathcal{J} $ is the upper left block $\vepsi_{t-\kappa}\vepsi_{t-\kappa}'$, and $\|(\vepsi_{t-\kappa}\vepsi_{t-\kappa}') \otimes ( \vepsi_{t-l}\vepsi_{t-l}')\|
\leq \sup_{t-\kappa} \| \vepsi_{t-\kappa}\|^2 \sup_{t-l}\| \vepsi_{t-l}\|^2
= (\sup_{t} \| \vepsi_t \|^{2})^2=O(1)$ by Assumption \ref{a8}.

For case 3) we have
\begin{eqnarray*}
        \mathcal{O}_3&=& T^{-2}\sum_{t,t^*\in \tilde I_s}\sum_{l,\kappa=0,l\neq \kappa}^{\tilde{t}-1}((\hat\mF_s^\kappa\hat\mA_{s,\#}^{-1})\otimes(\hat\mF_s^l\hat\mA_{s,\#}^{-1}))\mathcal{G}((\hat\mF_s^{t^*-t+l}\hat\mA_{s,\#}^{-1})'\otimes(\hat\mF_s^{t^*-t+\kappa}\hat\mA_{s,\#}^{-1})')\\
        &\times&1_{t^*-t+l>0,t^*-t+\kappa>0,\kappa\neq l}
\end{eqnarray*}
where $\mathcal{G}= \E^b(\vg^b_{t-\kappa}\vg_{t-l}^{b\prime})\otimes\E^b(\vg^b_{t-l}\vg^{b\prime}_{t-\kappa})$ which is just a permutation of $\mathcal{G}$ from case 2). Hence, by similar arguments as for $\mathcal{O}_2$ and no additional assumptions, we have $\mathcal{O}_3=\op(1)$.

For case 4) we have
\begin{eqnarray*}
        \mathcal{O}_4&=& T^{-2}\sum_{t,t^*\in \tilde I_s}\sum_{\kappa=0}^{\tilde{t}-1}((\hat\mF_s^\kappa\hat\mA_{s,\#}^{-1})\otimes(\hat\mF_s^\kappa\hat\mA_{s,\#}^{-1}))\mathcal{G}((\hat\mF_s^{t^*-t+\kappa}\hat\mA_{s,\#}^{-1})'\otimes(\hat\mF_s^{t^*-t+\kappa}\hat\mA_{s,\#}^{-1})')\\
        &\times&1_{t^*-t+\kappa>0}
\end{eqnarray*}
where
\begin{eqnarray*}
        \mathcal{G}&=& \E^b((\vg^b_{t-\kappa}\vg_{t-\kappa}^{b\prime})\otimes(\vg^b_{t-\kappa}\vg^{b\prime}_{t-\kappa}))=((\hat\vg_{t-\kappa}\hat\vg_{t-\kappa}')\otimes(\hat\vg_{t-\kappa}\hat\vg_{t-\kappa}'))\odot(\E^b(\vnu_{t-\kappa}\vnu_{t-\kappa}')\otimes(\vnu_{t-\kappa}\vnu_{t-\kappa}')).
\end{eqnarray*}
For a generic scalar $N$ and generic matrices $\mA$, $\mA_{j}$, $j=1,\ldots,N$, $\vec_{N}(\mA)=\vec_{j=1:N}(\mA)$; $\mat_{N,N}(\mA)$ a $N\times N$ block matrix with typical block given by $\mA$. Note that $\vec_N(\mA)=\mat_{N,1}(\mA)$ and $\mat_{N,N}(a)$ is a $N\times N$ matrix with typical scalar element $a$. Then:\begin{align*}
\E^b((\vnu_{t-\kappa}\vnu_{t-\kappa}')\otimes(\vnu_{t-\kappa}\vnu_{t-\kappa}'))&=\left[\begin{array}{ccc}\mat_{1+p_1,1+p_1}(\mathcal{Y}_1)&\mat_{1+p_1,p_{2}}(\mathcal{Y}_2)&\mat_{1+p_1,n(p-1)}(\vzeros_{np\times np})\\
(\mat_{1+p_1,p_2}(\mathcal{Y}_2))'&\mat_{p_2,p_2}(\mathcal{J})&\mat_{p_2,n(p-1)}(\vzeros_{np\times np})\\
(\mat_{1+p_1,n(p-1)}(\vzeros_{np\times np}))'&\mat_{p_2,n(p-1)}(\vzeros_{np\times np})&\mat_{n(p-1),n(p-1)}(\vzeros_{np\times np})
\end{array}   \right]\\
\mathcal{Y}_1&=\left[\begin{array}{ccc}\mat_{1+p_1,1+p_1}(\E^b(\nu_t^4))&\mat_{1+p_1,p_2}(\E^b(\nu_t^3))&\vzeros_{(1+p_1)\times n(p-1)}\\
(\mat_{1+p_1,p_2}(\E^b(\nu_t^3)))'&\mJ_{p_2}&\vzeros_{p_2\times n(p-1)}\\
\vzeros_{(1+p_1)\times n(p-1)}'&\vzeros_{p_2\times n(p-1)}'&\vzeros_{n(p-1)\times n(p-1)}
\end{array} \right]\\
\mathcal{Y}_2&=\left[\begin{array}{ccc}\mat_{1+p_1,1+p_1}(\E^b(\nu_t^3))&\mat_{1+p_1,p_2}(1)&\vzeros_{(1+p_1)\times n(p-1)}\\
(\mat_{1+p_1,p_2}(1))'&\vzeros_{p_2\times p_2}&\vzeros_{p_2\times n(p-1)}\\
\vzeros_{(1+p_1)\times n(p-1)}'&\vzeros_{p_2\times n(p-1)}'&\vzeros_{n(p-1)\times n(p-1)}
\end{array} \right].
\end{align*}
We can see from above that the only nonzero elements of $\E^b(\vnu_{t-\kappa}\vnu_{t-\kappa}')\otimes(\vnu_{t-\kappa}\vnu_{t-\kappa}')$ are $\E^b(\nu_t^4)$, $\E^b(\nu_t^3)$ and 1.  By similar arguments as for case 2) with the additional assumption that $\E^b(\nu_t^4)\leq \bar c$, $\bar c>0$, it follows that $\mathcal{O}_4=\op(1)$.

For case 5) we have
\begin{eqnarray*}
        \mathcal{O}_5&=& T^{-2}\sum_{t,t^*\in \tilde I_s}\sum_{l,\kappa=0}^{\tilde{t}-1}\sum_{l^*,\kappa^*=0}^{\tilde{t}^*-1}((\hat\mF_s^\kappa\hat\mA_{s,\#}^{-1})\otimes(\hat\mF_s^l\hat\mA_{s,\#}^{-1}))\mathcal{G}((\hat\mF_s^{\kappa^*}\hat\mA_{s,\#}^{-1})'\otimes(\hat\mF_s^{l^*}\hat\mA_{s,\#}^{-1})')\times 1_{t-\kappa\neq t-l\neq t^*-\kappa^*\neq t^*-l^*}\\
        \mathcal{G}&=&\E^b\left(\left(\vg^b_{t-\kappa}\otimes\vg^b_{t-l}\right)\left(\vg^b_{t^*-\kappa^*}\otimes\vg^b_{t^*-l^*}\right)^\prime\right)\\
        &=&((\hat\vg_{t-\kappa}\hat\vg_{t^*-\kappa^*}^\prime)\otimes(\hat\vg_{t-l}\hat\vg_{t^*-l^*}^\prime))\odot\E^b((\vnu_{t-\kappa}\vnu_{t^*-\kappa^*}^\prime)\otimes(\vnu_{t-l}\vnu_{t^*-l^*}^{\prime})),
\end{eqnarray*}
where\begin{align*}
\E^b((\vnu_{t-\kappa}\vnu_{t^*-\kappa^*}^\prime)\otimes(\vnu_{t-l}\vnu_{t^*-l^*}^{\prime}))&=\left[\begin{array}{ccc}\mat_{1+p_1,1+p_1}(\vzeros_{np\times np})&\mat_{1+p_1,p_{2}}(\vzeros_{np\times np})&\mat_{1+p_1,n(p-1)}(\vzeros_{np\times np})\\
(\mat_{1+p_1,p_2}(\vzeros_{np\times np}))'&\mat_{p_2,p_2}(\mathcal{J})&\mat_{p_2,n(p-1)}(\vzeros_{np\times np})\\
(\mat_{1+p_1,n(p-1)}(\vzeros_{np\times np}))'&\mat_{p_2,n(p-1)}(\vzeros_{np\times np})&\mat_{n(p-1),n(p-1)}(\vzeros_{np\times np})
\end{array}   \right]
\end{align*}
which only selects the cross products $(\vzeta_{t-\kappa}\vzeta_{t-l}')\otimes (\vzeta_{t^*-\kappa^*}\vzeta_{t^{*}-l^{*}}^\prime)$. We have
\begin{eqnarray*}
        \left\| \mathcal{O}_5\right\|&\leq& \left\|T^{-2} \sum_{t,t^*\in \tilde I_s}\sum_{l,\kappa=0,l\neq \kappa}^{\tilde{t}-1}\sum_{l,\kappa=0}^{\tilde{t}-1}\sum_{l^*,\kappa^*=0}^{\tilde{t}^*-1}((\mF_s^\kappa\mA_{s,\#}^{-1})\otimes(\mF_s^l\mA_{s,\#}^{-1}))\left((\vg_{t-\kappa}\vg^\prime_{t^*-\kappa^*})\otimes(\vg_{t-l}\vg^\prime_{t^*-l^*}))\odot\mathcal{J}_2)\right)\right.\\
        &\times&\left. ((\mF_s^{\kappa^*}\mA_{s,\#}^{-1})'\otimes(\mF_s^{l^*}\mA_{s,\#}^{-1})')\right\|+\op(1)\\
        &\leq&\sum_{l,\kappa,l^*,\kappa^*=0}^{\infty}\left\|\mF_s^{\kappa}\right\| \left\|\mF_s^{l}\right\|\left\|\mF_s^{\kappa^*}\right\| \left\|\mF_s^{l^*}\right\|\left\|\mA_{s,\#}^{-1}\right\|^4T^{-1}\left(\sup_{t}{\left\|\vzeta_t\right\|^2}\right)^2 +\op(1)=\op(1).
\end{eqnarray*}

Consider now case 6). We have:
\begin{eqnarray*}
        \mathcal{O}_6&=&T^{-2}\sum_{t,t^*\in \tilde I_s}\sum_{\kappa=0}^{\tilde{t}-1}\sum_{l^*=0}^{\tilde{t}^*-1}((\hat\mF_s^\kappa\hat\mA_{s,\#}^{-1})\otimes(\hat\mF_s^\kappa\hat\mA_{s,\#}^{-1}))\mathcal{G}((\hat\mF_s^{\kappa^*}\hat\mA_{s,\#}^{-1})'\otimes(\hat\mF_s^{l^*}\hat\mA_{s,\#}^{-1})')
\end{eqnarray*}
where
\begin{eqnarray*}
        \mathcal{G}&=&\E^b(\left(\vg^b_{t-\kappa}\otimes\vg^b_{t-\kappa}\right)\left(\vg^b_{t-\kappa}\otimes\vg^b_{t^*-l^*}\right)^\prime)\\
        &=&((\hat\vg_{t-\kappa}\hat\vg_{t-\kappa}^\prime)\otimes(\hat\vg_{t-\kappa}\hat\vg_{t^*-l^*}^\prime))\odot\E^b((\vnu_{t-\kappa}\vnu_{t-\kappa}^\prime)\otimes(\vnu_{t-\kappa}\vnu_{t^*-l^*}^{\prime})),
\end{eqnarray*}
with\begin{align*}
\E^b((\vnu_{t-\kappa}\vnu_{t-\kappa}^\prime)\otimes(\vnu_{t-\kappa}\vnu_{t^*-l^*}^{\prime}))&=\left[\begin{array}{ccc}\mat_{1+p_1,1+p_1}(\mathcal{\tilde Y}_1)&\mat_{1+p_1,p_{2}}(\mathcal{\tilde Y}_2)&\mat_{1+p_1,n(p-1)}(\vzeros_{np\times np})\\
(\mat_{1+p_1,p_2}(\mathcal{\tilde Y}_2))'&\mat_{p_2,p_2}(\mathcal{J}_2)&\mat_{p_2,n(p-1)}(\vzeros_{np\times np})\\
(\mat_{1+p_1,n(p-1)}(\vzeros_{np\times np}))'&\mat_{p_2,n(p-1)}(\vzeros_{np\times np})&\mat_{n(p-1),n(p-1)}(\vzeros_{np\times np})
\end{array}   \right]\\
\mathcal{\tilde Y}_1&=\left[\begin{array}{ccc}\vzeros_{(1+p_1)\times (1+p_1)}&\mat_{1+p_1,p_2}(\E^b(\nu_t^3))&\vzeros_{(1+p_1)\times n(p-1)}\\
\vzeros_{p_2\times (1+p_1)}&\vzeros_{p_2\times p_2}&\vzeros_{p_2\times n(p-1)}\\
\vzeros_{(1+p_1)\times n(p-1)}'&\vzeros_{p_2\times n(p-1)}'&\vzeros_{n(p-1)\times n(p-1)}
\end{array} \right]\\
\mathcal{\tilde Y}_2&=\left[\begin{array}{ccc}\vzeros_{(1+p_1)\times (1+p_1)}&\mat_{1+p_1,p_2}(1)&\vzeros_{(1+p_1)\times n(p-1)}\\
\vzeros_{p_2\times (1+p_1)}&\vzeros_{p_2\times p_2}&\vzeros_{p_2\times n(p-1)}\\
\vzeros_{(1+p_1)\times n(p-1)}'&\vzeros_{p_2\times n(p-1)}'&\vzeros_{n(p-1)\times n(p-1)}
\end{array} \right]
\end{align*}
which only has $\E(\nu_t^3)<c^*$ (by Assumption \ref{aboot}) and 1 as non-zero elements. Hence,
\begin{eqnarray*}
        \left\| \mathcal{O}_6\right\|&\leq& \left\|T^{-2} \sum_{t,t^*\in \tilde I_s}\sum_{\kappa=0}^{\tilde{t}-1}\sum_{l^*=0}^{\tilde{t}^*-1}((\mF_s^\kappa\mA_{s,\#}^{-1})\otimes(\mF_s^\kappa\mA_{s,\#}^{-1}))\left(((\vg_{t-\kappa}\vg_{t-\kappa}')\otimes(\vg_{t-\kappa}\vg^\prime_{t^{*}-l^{*}}))\odot\mathcal{J}_3\right)\right.\\
        &\times&\left.((\mF_s^{\kappa}\mA_{s,\#}^{-1})'\otimes(\mF_s^{l^*}\mA_{s,\#}^{-1})')\right\|+\op(1)\\
        &\leq&\sum_{l^*,\kappa=0}^{\infty}\left\|\mF_s^{\kappa}\right\|^3 \left\|\mF_s^{l^*}\right\|\left\|\mA_{s,\#}^{-1}\right\|^4T^{-1}\left(\sup_{t}{\left\|\vepsi_t\right\|^2}\right)^2\left\|\mathcal{J}_3\right\|^2+\op(1)=\op(1),
\end{eqnarray*}
which follows by similar arguments as for case 2) and the fact that $\left\|\mathcal{J}\right\|^2=O(1)$. Similarly, under no additional assumptions, $\mathcal{O}_{i}=\op(1)$, $i=7,8,9$.

\section{Validity of the WF and WR bootstrap for the $\sup$-$F$-statistic}\label{supsec5}

We begin by defining the $\sup$-$F$ statistics in both the sample and the bootstrap. For ease of reference some equations in the main paper are repeated.

\noindent
{\it Case (i): $H_0:\, m=0$ versus $H_1:\,m=k$}\\
\noindent
Under $H_0$, the second stage estimation involves regression via OLS of $y_t$ on $\hat{\vw}_t$ where $\hat{\vw}_t =  (\hat{\vx}_t^\prime,\vz_{1,t}^\prime)^\prime$ using the complete sample. Let $SSR_0$ denote the residual sum of squares from this estimation. Under $H_1$, the second stage estimation involves estimation via OLS of the model in equation \eqref{2sls_step2_mod} in the paper, that is,
$$
y_t\;=\;\hat{\vw}_t^\prime\vbeta_{(i)}\,+\,\text{error},\quad i=1,...,k+1,\quad  t\in I_{i,\vlambda_k},
$$
for all possible $k$-partitions $\vlambda_k$. Let $SSR_k(\vlambda_k,\hat \vbeta_{\vlam_k})$ denote the residual sum of squares associated with this estimation. The $\sup$-$F$ test statistic is defined as:
\begin{align}
\label{supF_def}
\sup\text{-}F_T\;=\;\sup_{\vlambda_k\in\mLambda_{\epsilon,k}}F_T(\vlambda_k).
\end{align}
where
\begin{equation}
\label{F_def}
F_T(\vlambda_k)\;=\;\left(\,\frac{T-(k+1)d_\beta}{kd_\beta}\,\right)\left(\,\frac{SSR_0-SSR_k(\vlambda_k;\,\hat \vbeta_{\vlambda_k})}{SSR_k(\vlambda_k;\,\hat \vbeta_{\vlambda_k})}\,\right),
\end{equation}
and  $\mLambda_{\epsilon,k}=\{\vlambda_k: |\lambda_{i+1}-\lambda_i|\ge\epsilon,\lambda_1\ge\epsilon,\lambda_k\le 1-\epsilon\}$ and $d_\beta=\dim(\vbeta_{(i)})=q_1+p_1$.

The WR bootstrap version of the $\sup$-$F$ statistic is calculated as follows.  Let $\hat \vw_t^b$ be calculated as described for the WR bootstrap in Section \ref{sect2.3}.  For a given $k$-partition $\vlambda_k$, the second stage of the 2SLS in the bootstrap samples involves OLS estimation of \eqref{2sls_step2_mod_boot} that is,
$$
y_t^b\;=\;\hat{\vw}_t^{b\prime}\vbeta_{(i)}\,+\,\text{error},\quad i=1,...,k+1,\quad  t\in I_{i,\vlambda_k};
$$
let $SSR_k(\vlambda_k,\hat \vbeta_{\vlam_k}^b)$ denote the residual sum of squares associated with this estimation. Let $SSR_0^b$ be the residual sum of squares associated with estimation of \eqref{2sls_step2_mod_boot} subject to the restriction that $\vbeta_{(i)}$ takes the same value in each of the $k+1$ regimes. The WR bootstrap version of the $\sup$-$F$ statistic:
\begin{align}
\label{supF_defb}
\sup\text{-}F_T^b\;=\;\sup_{\vlambda_k\in\mLambda_{\epsilon,k}}F_{T\vlambda_k}^b.
\end{align}
where
\begin{equation}
\label{F_defb}
F_{T\vlambda_k}^b\;=\;\left(\,\frac{T-(k+1)d_\beta}{kd_\beta}\,\right)\left(\,\frac{SSR_0^b-SSR_k^b(\vlambda_k;\,\hat \vbeta_{\vlambda_k}^b)}{SSR_k^b(\vlambda_k;\,\hat \vbeta^b_{\vlambda_k})}\,\right).
\end{equation}
The only difference for the WF bootstrap version of the $\sup$-$F$ statistic the only difference is that $\hat \vw_t^b$ is calculated as described for the WF bootstrap in Section \ref{sect2.3} of the paper.\\[0.1in]

\noindent
{\it Case (ii): $H_0:\, m=\ell$ versus $H_1:\,m=\ell+1$}\\
\noindent
As in Section \ref{sect2.2}, let the estimated break fractions for the $\ell$-break model be $\hat{\vlambda}_\ell$ and the associated break points be denoted $\{\hat{T}_i\}_{i=1}^\ell$ where $\hat{T}_i=[T\hat{\lambda}_i]$. Let $\hat{I}_i=I_{i,\hat{\vlambda}_\ell}$, the set of observations in the $i^{th}$ regime of the $\ell$-break model and partition this set as $\hat{I}_i=\hat{I}_i^{(1)}(\varpi_i)\cup\hat{I}_i^{(2)}(\varpi_i)$ where  $\hat{I}_i^{(1)}(\varpi_i)=\{t:\,[\hat{\lambda}_{i-1}T]+1,[\hat{\lambda}_{i-1}T]+2,\ldots,[\varpi_iT]\}$ and  $\hat{I}_i^{(2)}(\varpi_i)=\{t:\,[\varpi_iT]+1,[\varpi_iT]+2,\ldots,[\hat{\lambda}_iT]\}$.  Consider estimation of the model in \eqref{2sls_step2_mod_caseii} that is,
$$
y_t\;=\;\hat{\vw}_t^\prime\vbeta_{(j)}\,+\,\text{error},\quad j=1,2\quad  t\in \hat{I}_{i}^{(j)},
$$
for all possible choices of $\varpi_i$ (where for notational brevity we suppress the dependence of $\vbeta_{(j)}$ on $i$). Let $SSR_i(\varpi_i)$ be the residual sum of squares associated with this estimation, and let $SSR_i$ be the residual sum of squares associated with estimation of the model subject to the restriction that $\vbeta_{(1)}=\vbeta_{(2)}$. The $sup-F$ statistic for the same  test is given by
\begin{equation}
\label{supF_ell}
\sup\text{-}F_T(\ell+1\,|\,\ell)\;=\;\max_{i=1,2,\ldots\ell+1}\,\left\{\,\sup_{\varpi_i\in \mathcal{N}(\hat{\vlambda}_\ell)}\,\left(\,\frac{SSR_i\,-\,SSR_i(\varpi_i)}{SSR_i}\,\right)\,\left(\frac{\hat{T}_i-\hat{T}_{i-1}-d_\beta)}{d_\beta}\right)\,\right\}
\end{equation}
where $\mathcal{N}_i(\hat{\vlambda}_\ell)=[\hat{\lambda}_{i-1}+\epsilon,\hat{\lambda}_i-\epsilon]$.

For each bootstrap the first stage of the 2SLS estimation and the construction of $\hat{\vw}_t$ is the same as described above in Case (i). The second stage of the 2SLS involves estimation via OLS of \eqref{2sls_step2_mod_caseii_boot} that is,
$$
y_t^b\;=\;\hat{\vw}_t^{b\prime}\vbeta_{(j)}\,+\,\text{error},\quad j=1,2\quad  t\in \hat{I}_{i}^{(j)},
$$
for all possible $\varpi$ (where, once again, we suppress the dependence of $\vbeta_{(j)}$ on $i$). Let $SSR_i^b(\varpi_i)$ be the residual sum of squares associated with this estimation, and $SSR_i^b$ be the residual sum of squares associated with estimation of the model subject to the restriction that $\vbeta_{(1)}=\vbeta_{(2)}$. The bootstrap version of $\sup\text{-}F_T^b(\ell+1\,|\,\ell)$ is given by
\begin{equation}
\label{supF_ellb}
\sup\text{-}F_T^b(\ell+1\,|\,\ell)\;=\;\max_{i=1,2,\ldots\ell+1}\,\left\{\,\sup_{\varpi_i\in \mathcal{N}(\hat{\vlambda}_\ell)}\,\left(\,\frac{SSR_i^b\,-\,SSR_i^b(\varpi_i)}{SSR_i^b}\,\right)\,\left(\frac{\hat{T}_i-\hat{T}_{i-1}-d_\beta)}{d_\beta}\right)\,\right\}.
\end{equation}

\vspace*{0.1in}
To establish the asymptotic validity of the bootstrap versions $\sup\text{-}F_T^b$ and $\sup\text{-}F_T^b(\ell+1\,|\,\ell)$, it is most convenient to work with alternative formulae for the $F$-statistics.  To illustrate, consider Case (i). From standard LS theory,\footnote{For example, see \citet{Greene:2012} (p.163). We are grateful to a referee for drawing our attention to this alternative representation of the $F$-statistic.} it follows that:
$$
F_{T\vlambda_k}\;=\;\frac{T}{kd_\beta} \, \hat \vbeta_{\vlam_k}^\prime \,
\mR_k' \,\left(\mR_k \bar \mV_{\vlam_k} \mR_k^\prime\right)^{-1}\mR_k \,\hat \vbeta_{\vlam_k}
$$
where $d_\beta$ is the dimension of $\beta$, $\bar{\mV}_{\vlam_k}=\hat{\sigma}^2(\vlambda_k)\diag_{i=1:k+1}(\hat \mQ_{(i)}^{-1})$ and $\hat{\sigma}^2(\vlambda_k)=SSR_k(\vlambda_k;\,\hat \vbeta_{\vlambda_k}/(T-(k+1)d_\beta)$. The asymptotic validity of the bootstrap version of the $sup$-$F$ statistics can then be established using similar arguments to the proofs of Theorems \ref{theo1boot}-\ref{theo2boot}.

\begin{theoA}
        \label{theof}
        Under Assumption \ref{a1}-\ref{aboot} for the WF bootstrap, and Assumptions \ref{a1}-\ref{aboot} and \ref{A8prime} for the WR bootstrap,
(i) under the null hypothesis $m=0$, $$\sup_{c\in\mathbb{R}}\left| P^{b} \left(\sup\text{-}F^b_T\leq c\right)-P(\sup\text{-}F_T\leq c)\right|\inp 0 \mbox{ as }T\rightarrow\infty;$$
(ii) under the null hypothesis $m=\ell$,
     $$\sup_{c\in\mathbb{R}}\left| P^{b} \left(\sup\text{-}F^b_T(\ell+1|\ell)\leq c\right)-P(\sup\text{-}F_T(\ell+1|\ell)\leq c)\right|\inp 0\mbox{ as }T\rightarrow\infty.$$ \end{theoA}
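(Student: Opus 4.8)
The plan is to exploit the alternative representation of the $F$-statistic recorded just above the statement, namely
\[
F_{T\vlambda_k}\;=\;\frac{T}{kd_\beta}\,\hat\vbeta_{\vlambda_k}'\,\mR_k'\,\left(\mR_k\bar{\mV}_{\vlambda_k}\mR_k'\right)^{-1}\mR_k\,\hat\vbeta_{\vlambda_k},\qquad \bar{\mV}_{\vlambda_k}=\hat\sigma^2(\vlambda_k)\diag_{i=1:k+1}(\hat\mQ_{(i)}^{-1}),
\]
together with the identical representation of its bootstrap analogue $F_{T\vlambda_k}^b$ obtained by replacing $\hat\vbeta_{\vlambda_k}$, $\hat\mQ_{(i)}$ and $\hat\sigma^2(\vlambda_k)$ by their bootstrap counterparts. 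Written this way, $F_{T\vlambda_k}$ is structurally identical to $Wald_{T\vlambda_k}$ in \eqref{supwald2}, the only difference being that the Eicker--White ``meat'' $\hat\mM_{(i)}$ is replaced by the homoskedastic form $\hat\sigma^2(\vlambda_k)\hat\mQ_{(i)}$ and the scaling $T$ by $T/(kd_\beta)$. Consequently the entire argument of the proofs of Theorems \ref{theo1boot}--\ref{theo2boot} transfers once one additional fact is supplied: that the scalar $\hat\sigma^2(\vlambda_k)$ and its bootstrap version $\hat\sigma^{2,b}(\vlambda_k)$ converge, uniformly in $\vlambda_k\in\mLambda_{\epsilon,k}$, to one and the same non-random positive limit $\bar\sigma^2$ under $P$ and $P^b$ respectively.

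First I would establish the sample limit. Under $m=0$ the $k+1$ regime estimators all satisfy $T^{1/2}(\hat\vbeta_{i,\vlambda_k}-\vbeta^0)=\Op(1)$ (from the proof of Theorem B \ref{theo_0vsk}), so expanding $SSR_k(\vlambda_k;\hat\vbeta_{\vlambda_k})$ about $\vbeta^0$ shows that the coefficient deviations contribute only $\Op(1)$ to $SSR_k$ and hence $o_p(1)$ to $\hat\sigma^2(\vlambda_k)$, uniformly in $\vlambda_k$. The remaining term $T^{-1}\sum_t\tilde u_t^2$, with $\tilde u_t=u_t+\vv_t'\vbeta_{\vx}^0-\vz_t'(\hat\mDelta_t-\mDelta_t^0)\vbeta_{\vx}^0$, reduces by Lemma \ref{lem7} (which makes the reduced-form plug-in term uniformly $\Op(T^{-1/2})$ per observation, hence $o_p(1)$ after averaging) and Lemma \ref{lem8} to $\bar\sigma^2=\lim_{T}T^{-1}\sum_t\E[(u_t+\vv_t'\vbeta_{\vx}^0)^2]$. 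The bootstrap counterpart follows the same template: using $\hat\vbeta_{(i)}^b-\vbeta^0=\Op^b(T^{-1/2})$ from the proof of Theorem \ref{theo1boot} and the bootstrap effective error $\tilde u_t^b=(\hat u_t+\hat\vv_t'\vbeta_{\vx}^0)\nu_t+o_p^b(1)$, one gets $T^{-1}SSR_k^b=T^{-1}\sum_t(\hat u_t+\hat\vv_t'\vbeta_{\vx}^0)^2\,\nu_t^2+o_p^b(1)$; since $\E^b[\nu_t^2]=1$, $\sup_t\E^b|\nu_t|^{4+\delta^*}<\infty$ (Assumption \ref{aboot}) and $T^{-1}\sum_t(\hat u_t+\hat\vv_t'\vbeta_{\vx}^0)^2\inp\bar\sigma^2$ by Lemma \ref{lem8} and standard 2SLS consistency, a conditional mean/variance calculation plus Markov's inequality gives $\hat\sigma^{2,b}(\vlambda_k)\ins\bar\sigma^2$ uniformly in $\vlambda_k$.

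Given this, the conclusion assembles exactly as in Theorems \ref{theo1boot}--\ref{theo2boot}. From Lemmas \ref{lem2} and \ref{lem9} one has $\hat\mQ_{(i)}\inp\mathbb Q_i$ and $\hat\mQ_{(i)}^b\ins\mathbb Q_i$; combining with the $\hat\sigma^2$ results yields $\bar{\mV}_{\vlambda_k}^b-\bar{\mV}_{\vlambda_k}=o_p^b(1)$ uniformly. The numerators agree because $T^{1/2}(\hat\vbeta_{i,\vlambda_k}^b-\hat\vbeta_{i,\vlambda_k})=o_p^b(1)$ was already shown in the proof of Theorem \ref{theo1boot} via Lemmas \ref{lem6} and \ref{lemzuv_WR} (and \ref{lemzuv_WF} for the WF design). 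Hence $\sup$-$F_T^b$ and $\sup$-$F_T$ share the same weak limit, a continuous functional of the Gaussian processes $\mathbb M(\cdot)$ of Lemma \ref{lem6}, obtained through the continuous mapping theorem after taking the $\sup$ over $\mLambda_{\epsilon,k}$. Part (ii) is handled identically after rewriting $\sup$-$F_T(\ell+1\,|\,\ell)$ in the regime-by-regime quadratic form used for $\sup$-$Wald_T(\ell+1\,|\,\ell)$ in the proof of Theorem B \ref{theo_ellvsell+1}, with $\hat\sigma^2$ now computed over each $\ell$-break regime, and defining the analogous bootstrap representation. Because the common limiting law is continuous, P\'olya's theorem upgrades the weak convergence to the uniform (Kolmogorov-distance) statement.

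The hard part will be step two for the wild recursive bootstrap, where $\hat\sigma^{2,b}(\vlambda_k)$ must be controlled even though $y_t^b$ and $\vx_t^b$ are generated recursively and therefore differ from the sample paths; the bootstrap residuals $\hat u_t^b=y_t^b-\hat\vw_t^{b'}\hat\vbeta_{(i)}^b$ have to be re-expressed in terms of the quantities already bounded in Lemmas \ref{lem9} and \ref{lemzuv_WR} (in particular the bounds $T^{-\alpha}\vxi_t^b=\op^b(1)$ and $T^{-\alpha}\vxi_t^b\vxi_t^{b\prime}=\op^b(1)$), and one must verify that the recursive feedback does not perturb the averaged squared residual at first order. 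A secondary point requiring care is confirming that the limiting distribution is continuous and that $\mR_k\bar{\mV}_{\vlambda_k}\mR_k'$ remains positive definite in the limit, so that the quadratic form is well defined; this follows from Assumption \ref{a9} and the positive definiteness of $\mS\mS'$ in Assumption \ref{a8}.
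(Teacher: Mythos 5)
Your proposal is correct and follows essentially the same route as the paper's own proof: rewrite $F_{T\vlambda_k}$ as a Wald-type quadratic form whose only difference from $Wald_{T\vlambda_k}$ is the central matrix $\hat\sigma^2(\vlambda_k)\diag_{i}(\hat\mQ_{(i)}^{-1})$, show that the sample and bootstrap versions of $\hat\sigma^2(\vlambda_k)$ differ by $o_p^b(1)$ uniformly in $\vlambda_k$, and then inherit the rest of the argument from the proofs of the $\sup$-$Wald$ bootstrap theorems (and, for part (ii), from the regime-by-regime representation used for $\sup$-$Wald_T(\ell+1\,|\,\ell)$). Your treatment of the residual variance is in fact more explicit than the paper's one-line appeal to its lemmas, but it is the same proof.
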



\begin{proof}[Proof of Theorem B \ref{theof}]\hfill \\
(i) Recall that
$
Wald_{T\vlam_k}\; =\;  T \, \hat \vbeta_{\vlam_k}' \,
        \mR_k' \,\left(\mR_k \hat \mV_{\vlam_k} \mR_k'\right)^{-1}\mR_k \,\hat \vbeta_{\vlam_k},
$
where $\hat{\mV}_{\vlam_k}=\diag_{i=1:k+1}(\hat \mV_{(i)})$, where $\hat \mV_{(i)}\;=\;\hat \mQ_{(i)}^{-1}  \ \hat \mM_{(i)} \ \hat \mQ_{(i)}^{-1}$. It can be recognized that the version of the $F_{T\vlambda_k}$ described above only differs from the $Wald_{T\vlambda_k}$ in terms of the choice of matrix in the center of the quadratic form. Clearly an analogous representation is available for the bootstrap versions of the test.

        By Lemmas \ref{lem2}, \ref{lem5} and \ref{lem6} in the paper, it can be shown that the bootstrap equivalent of $\hat{\sigma}^2(\vlambda_k)$ is such that its difference with $\hat{\sigma}^2(\vlambda_k)$ is $\op^b(1)$. For the rest of the quantities, the analysis is similar to the proof of Theorem \ref{theo1boot} in the paper.

(ii) Recalling the alternative representation of $\sup\text{-} Wald_T(\ell+1|\ell)$ in the proof of Theorem B \ref{theo_ellvsell+1}, the proof follows as for part (i).
\end{proof}
\newpage
\FloatBarrier
\section{More simulation evidence}\label{supsec6}

In this section we present further simulation evidence on the WR and WF bootstrap $\sup$-$Wald$ and $\sup$-$F$ using the same DGPs as in the main paper.

We have considered the behavior of the $\sup$-$F$ test under both the null and the alternative hypotheses. From the first two columns of Tables \ref{tabhm=00F}-\ref{tabhm=11F} it can be seen that the WR $\sup$-$F$ test works better than the WF $\sup$-$F$ who is in general oversized. Comparing the two tests, the $\sup$-$F$ and the $\sup$-$Wald$, their WR versions are similar under the null, while the WF $\sup$-$Wald$ is less size distorted than the WF $\sup$-$F$ in general (see column 2 of Tables \ref{tabhm=00}-\ref{tabhm=11} from the paper for the $\sup$-$Wald$ and column 2 of Tables \ref{tabhm=00F}-\ref{tabhm=11F} below for the $\sup$-$F$). Regarding the power, it can be seen from columns 3-6 of Tables \ref{tabhm=00F}-\ref{tabhm=11F} that the $\sup$-$F$ is more powerful than the $\sup$-$Wald$ for $T=120$, but for $T=240$, $480$, the $\sup$-$Wald$ is as powerful or slightly more powerful than the $\sup$-$F$ after adjusting for size (see for example Table \ref{tabhm=11} of the paper and Table  \ref{tabhm=11F} below).

In Tables \ref{tabhm=10} and \ref{tabhm=11} of the main paper we have  tested sequentially for the presence of max $2$ breaks in the RF for $x_t$ (in \eqref{xtsim3i}-\eqref{xtsim3ii} and \eqref{xtsim5i}-\eqref{xtsim5ii} respectively).  The fraction of times that $0,1,2$ breaks were detected in RF (out of 1,000 replications of the scenarios), is given in Tables \ref{tabhm=10fracRF}-\ref{tabhm=11fracRF}. These tables indicate that for all sample sizes considered, only in about 5 percent of the cases the null $H_0:h=\ell=1$ was not rejected and that the null $H_0:h=\ell=0$ was rejected all the time in general.

In order to assess the impact of the pre-testing in RF, in Tables \ref{tabhm=10true2} and \ref{tabhm=11true2} we have obtained the rejection frequencies of the bootstrap tests when the number of breaks in the RF is held at the true number, $h=1$, and the estimated location is imposed in the estimation of RF and SE and computing the test statistics $\sup$-$Wald$ and $\sup$-$F$ for 2SLS.
Comparing these tables with the first two columns of Tables \ref{tabhm=10} and  \ref{tabhm=11} from the paper, and Tables \ref{tabhm=10F} and \ref{tabhm=11F} below
we can see that the rejection frequencies are similar. Note that the true number of breaks in RF for Tables \ref{tabhm=10true2} and \ref{tabhm=11true2}
is the same, $h=1$, but the DGP for the RF is different since in Table \ref{tabhm=11true2} (corresponding to scenario $(h,m)=(1,1)$)  the break in the SE results in a break in the mean of $y_{t-1}$, a regressor in the RF. Table \ref{tabhm=10true2} corresponds to the scenario when there is no break in SE.

In Tables \ref{tabhm=10small} and \ref{tabhm=11small} we have considered a break in RF of smaller size than the one mentioned after \eqref{xtsim3i}-\eqref{xtsim3ii} by taking $\vdelta_{r,(1)}^0=(1,1,1,1)'$ (and the rest of the parameters' values are as mentioned after \eqref{xtsim3i}-\eqref{xtsim3ii}). Tables \ref{tabhm=10small} and \ref{tabhm=11small} present the rejection frequencies for the WR and WF bootstrap $\sup$-$Wald$ under the null hypothesis when we have sequentially   tested for the presence of max $2$ breaks in the RF for $x_t$ (in \eqref{xtsim3i}-\eqref{xtsim3ii} and \eqref{xtsim5i}-\eqref{xtsim5ii} respectively) using the WR/WR $\sup$-$Wald$ for OLS, and the resulting number of RF breaks was imposed in each simulation prior to estimating the RF and SE and computing the test statistics for 2SLS (see the first two columns of Tables \ref{tabhm=10small} and \ref{tabhm=11small}). Columns 3-4 of Tables \ref{tabhm=10small} and \ref{tabhm=11small} present the rejection frequencies when the number of breaks in the RF is held at the true number, $h=1$  (and the estimated location is taken into account in the estimation of SE). The last two columns report fraction of times  that $0,1,2$ breaks were detected in RF out of 1,000 replications of the scenarios. We notice that for $T=120$, in $5$-$9$ percent of the cases no break was detected in RF, but nevertheless the rejection frequencies of the bootstrap $\sup$-$Wald$ (columns 1-2 of Tables \ref{tabhm=10small} and \ref{tabhm=11small}) remain close to the case when no pre-testing in RF took place (columns 3-4 of Tables \ref{tabhm=10small} and \ref{tabhm=11small}) with the WR bootstrap performing again better than the WF bootstrap. Note that the true number of breaks in RF for Tables \ref{tabhm=10small} and \ref{tabhm=11small}
is the same, $h=1$, but the DGP for the RF is different because in Table \ref{tabhm=11small} (which corresponds to scenario $(h,m)=(1,1)$)  the break in the SE results in a break in the mean of $y_{t-1}$, a regressor in the RF. In Table \ref{tabhm=10small}, there is no break in SE.

\begin{table}[h!]
\caption{\label{tabhm=00F}{\it Scenario:(h,m)=(0,0)} - rejection probabilities from testing $H_0:\,m=0$ vs. $H_1:\,m=1$ with bootstrap $\sup$-$F$ test.}
\begin{center}
\addtolength{\tabcolsep}{-2pt}
\begin{tabular}{@{}c ccc c ccc c ccc c ccc c ccc c ccc@{}}
\toprule
\midrule
&\multicolumn{3}{c}{WR bootstrap}&&\multicolumn{3}{c}{WF bootstrap}&&\multicolumn{3}{c}{WR bootstrap}&&
\multicolumn{3}{c}{WF bootstrap}&&\multicolumn{3}{c}{WR bootstrap}&&\multicolumn{3}{c}{WF bootstrap}\\
&\multicolumn{3}{c}{Size }&&\multicolumn{3}{c}{Size}&&\multicolumn{3}{c}{Power }&&
\multicolumn{3}{c}{Power}&&\multicolumn{3}{c}{Power}&&\multicolumn{3}{c}{Power }\\
&\multicolumn{3}{c}{$g$=0}&&\multicolumn{3}{c}{$g$=0}&&\multicolumn{3}{c}{$g=-0.007$}&&
\multicolumn{3}{c}{$g=-0.007$}&&\multicolumn{3}{c}{$g=-0.009$}&&\multicolumn{3}{c}{$g=-0.009$}
\\
\cmidrule{2-4}\cmidrule{6-8} \cmidrule{10-12} \cmidrule{14-16} \cmidrule{18-20} \cmidrule{22-24}\\
T&10\%&5\%&1\%&&10\%&5\%&1\%&&10\%&5\%&1\%&&10\%&5\%&1\%&&10\%&5\%&1\%&&10\%&5\%&1\%\\
\cmidrule{2-4}\cmidrule{6-8} \cmidrule{10-12} \cmidrule{14-16} \cmidrule{18-20} \cmidrule{22-24}
\multicolumn{24}{c}{{\it Case A}}\\
\midrule
120&9.8&        5.1&    1.4&&   9.5&    5.6     &1.7&&  99.9&   99.9&   99.7&&  99.9&   99.8&   99.7&&  100&    100&    100&&   100     &100&   100\\
240&9.4&        4.6&    0.5&&   8.8&    4.5&    0.7&&   99.9&   99.9&   97&&    100&    100&    100&&   100&    100&    100&&   100&    100&    100\\
480&10.9&       5.9&    1.5&&   9.4&    5.4&    1.6&&   100&    100&    100&&   100&    100&    100&&   100&    100&    100&&   100&    100&    100\\

\midrule
\multicolumn{24}{c}{{\it Case B}}\\
\midrule
120&11.3&       6.1&    1.7&&   11.5&   6.3&    2.2&&   99.4&   99.4&   98.6&&  99.6&   99.4&   99&&    99.8&   99.7&   99.6&&  99.9&   99.9&   99.7\\
240&10.6&       5.9&    1       &&9.7   &4.9&   1&&     100&    100&    100&&   100&    100&    100&&   100&    100&    100&&   100&    100&    100\\
480&10.3        &6.2&   1.9&&   9.4&    5.4&    1.3&&   100&    100     &100&&  100&    100&    100&&   100&    100&    100&&   100&    100&    100\\
\midrule
\multicolumn{24}{c}{{\it Case C}}\\

\midrule
120&9.1&        5       &1      &&9.8&  5.4&    1&&     99.7&   99.5&   99.4&&  99.7&   99.6&   99.4&&  99.9&   99.9&   99.9&&  100&    100&    100\\
240&11.1&       5.7&    0.9&&   11.2&   5.5&    0.8&&   100&    100&    100&&   100&    100&    100&&   100&    100&    100&&   100&    100&    100\\
480&9.6&        4.7&    1.1&&   8.7&    4.9&    0.7&&   100&    100&    100&&   100&    100&    100&&   100&    100&    100&&   100&    100&    100\\

\midrule
\multicolumn{24}{c}{{\it Case D}}\\
\midrule
120&10.4&       5.8&    0.9&&   10.6&   4.9&    0.9&&   100&    100&    99.7&&  99.9&   99.9&   99.7&&  100&    100&    100&&   100&    100&    99.9\\
240&10  &4.5&   1       &&9.6&  5       &1.1&&  100&    100&    100&&   100&    100&    100&&   100&    100&    100&&   100&    100&    100\\
480&10.5&       3.6&    0.9&&   8.8&    3.6&    0.9&&   100&    100&100&&       100&    100&    100&&   100&    100&    100&&   100&    100&    100\\
\midrule
\bottomrule
\multicolumn{24}{l}{\small{Notes. The first two columns refer to the case when $H_0: m=0$ is true ($g$=0 in equation \eqref{power}). The next columns refer }}\\ \multicolumn{24}{l}{\small{  to the case when we test for $H_0: m=0$, but $H_1:m=1$ is true ($g=-0.007, -0.009$ in equation \eqref{power}). Under the null }}\\\multicolumn{24}{l}{\small{ and the alternative hypotheses we impose $h=0$ in the RF.}}
\end{tabular}%
\end{center}
\end{table}

\begin{table}[h!]
\caption{\label{tabhm=01F}{\it Scenario:(h,m)=(0,1)} - rejection probabilities from testing $H_0:\,m=1$ vs. $H_1:\,m=2$ with bootstrap $\sup$-$F$ test.}
\begin{center}
\addtolength{\tabcolsep}{-2pt}
\begin{tabular}{@{}c ccc c ccc c ccc c ccc c ccc c ccc@{}}
\toprule
\midrule
&\multicolumn{3}{c}{WR bootstrap}&&\multicolumn{3}{c}{WF bootstrap}&&\multicolumn{3}{c}{WR bootstrap}&&
\multicolumn{3}{c}{WF bootstrap}&&\multicolumn{3}{c}{WR bootstrap}&&\multicolumn{3}{c}{WF bootstrap}\\
&\multicolumn{3}{c}{Size }&&\multicolumn{3}{c}{Size}&&\multicolumn{3}{c}{Power }&&
\multicolumn{3}{c}{Power}&&\multicolumn{3}{c}{Power}&&\multicolumn{3}{c}{Power }\\
&\multicolumn{3}{c}{$g$=0}&&\multicolumn{3}{c}{$g$=0}&&\multicolumn{3}{c}{$g$=0.3}&&
\multicolumn{3}{c}{$g$=0.3}&&\multicolumn{3}{c}{$g$=0.4}&&\multicolumn{3}{c}{$g$=0.4}
\\
\cmidrule{2-4}\cmidrule{6-8} \cmidrule{10-12} \cmidrule{14-16} \cmidrule{18-20} \cmidrule{22-24}\\
T&10\%&5\%&1\%&&10\%&5\%&1\%&&10\%&5\%&1\%&&10\%&5\%&1\%&&10\%&5\%&1\%&&10\%&5\%&1\%\\
\cmidrule{2-4}\cmidrule{6-8} \cmidrule{10-12} \cmidrule{14-16} \cmidrule{18-20} \cmidrule{22-24}
\multicolumn{24}{c}{{\it Case A}}\\
\midrule
120&10.8&       5.6&    1.2&&   14.2&   7.7&    1.7&&   100&    100&    100&&   100&    100&    100&&   100&    100&    100&&   100&    100     &100\\
240&9.3&        4.7&    1.1&&   11.4&   5.5&    1.4&&   100&    100&    100&&   100&    100&    100&&   100&    100&    100&&   100&    100&    100\\
480&8.7 &4.7&   1.5&&   9.8&    5.2&    1.7&&   100&    100&    100&&   100&    100&    100&&   100&    100&    100&&   100&    100&    100\\

\midrule
\multicolumn{24}{c}{{\it Case B}}\\
\midrule
120&13.1&       6.9&    1.9&&   15.3&   8.7&    2.7&&   98.6&   97.5&   95.1&&  98.4&   97.6&   95.9&&  99.4&   99.3&   98.8&&  99.6&   99.3&   98.6\\
240&11.1&       4.9&    0.9&&   13&     6.2&    1.2&&   99.9&   99.7&   98.9&&  99.9&   99.6&   98.9&&  100&    100&    99.8&&  100     &99.9&  99.8\\
480&10& 5.1&    1.3&&   10.6&   5.8&    1.2&&   100&    99.9&   99.8&&  99.9&   99.8&   99.8&&  100&    100&    100&&   100&    100&    100\\

\midrule
\multicolumn{24}{c}{{\it Case C}}\\

\midrule
120&12.1        &7.2&   1.1&&   17.6&   10.8&   2.4&&   96.8&   93.7&   82.5&&  98.5&   96.3&   87.1&&  99.9&   99.8&   99.1&&  100     &100    &99.4\\
240&10.7&       4.9&    0.7&&   15.5&   8.5&    2       &&100   &100&   100&&   100&    100&    100&&   100     &100&   100&&   100&    100&    100\\
480&11.3&       6.4&    1       &&14.1& 8.4&    1.9&&   100&    100&    100&&   100&    100&    100&&   100&    100&    100&&   100&    100&    100\\

\midrule
\multicolumn{24}{c}{{\it Case D}}\\
\midrule
120&11.9&       7.1&    1.4&&   16.3&   10.2&   2.9&&   99.8&   99.8&   99.2&&  99.9&   99.8&   99.8&&  100     &100&   100&&   100&    100&    100\\
240&11.5&       5.4&    1.4&&   17&     8.4&    2.2&&   100&    100&    100&&   100&    100&    100&&   100&    100&    100&&   100&    100&    100\\
480&11.6&       6       &1.1    &&14.2& 7.5&    1.7&&   100&    100&    100&&   100&    100&    100&&   100&    100&    100&&   100&    100&    100\\

\midrule
\bottomrule
\multicolumn{24}{l}{\small{Notes. The first two columns refer to the case when $H_0: m=1$ is true ($g$=0 in equation \eqref{power}). The next columns refer }}\\ \multicolumn{24}{l}{\small{  to the case when we test for $H_0: m=1$, but $H_1:m=2$ is true ($g=0.3, 0.4$ in equation \eqref{power}). Under the null and }}\\\multicolumn{24}{l}{\small{the alternative hypotheses we impose $h=0$ in the RF.}}
\end{tabular}%
\end{center}
\end{table}

\begin{table}
\caption{\label{tabhm=10F}{\it Scenario:(h,m)=(1,0)} - rejection probabilities from testing $H_0:\,m=0$ vs. $H_1:\,m=1$ with bootstrap $\sup$-$F$ test; number of breaks in the RF was estimated and imposed in each simulation using a sequential strategy based on the WR/WF $\sup$-$F$ for OLS}
\begin{center}
\addtolength{\tabcolsep}{-2pt}
\begin{tabular}{@{}c ccc c ccc c ccc c ccc c ccc c ccc@{}}
\toprule
\midrule
&\multicolumn{3}{c}{WR bootstrap}&&\multicolumn{3}{c}{WF bootstrap}&&\multicolumn{3}{c}{WR bootstrap}&&
\multicolumn{3}{c}{WF bootstrap}&&\multicolumn{3}{c}{WR bootstrap}&&\multicolumn{3}{c}{WF bootstrap}\\
&\multicolumn{3}{c}{Size }&&\multicolumn{3}{c}{Size }&&\multicolumn{3}{c}{Power }&&
\multicolumn{3}{c}{Power }&&\multicolumn{3}{c}{Power }&&\multicolumn{3}{c}{Power }\\
&\multicolumn{3}{c}{$g=0$}&&\multicolumn{3}{c}{$g=0$}&&\multicolumn{3}{c}{$g=-0.05$}&&
\multicolumn{3}{c}{$g=-0.05$}&&\multicolumn{3}{c}{$g=-0.07$}&&\multicolumn{3}{c}{$g=-0.07$}\\
\cmidrule{2-4}\cmidrule{6-8} \cmidrule{10-12} \cmidrule{14-16} \cmidrule{18-20} \cmidrule{22-24}\\
T&10\%&5\%&1\%&&10\%&5\%&1\%&&10\%&5\%&1\%&&10\%&5\%&1\%&&10\%&5\%&1\%&&10\%&5\%&1\%\\
\cmidrule{2-4}\cmidrule{6-8} \cmidrule{10-12} \cmidrule{14-16} \cmidrule{18-20} \cmidrule{22-24}
\multicolumn{24}{c}{{\it Case A}}\\
\midrule
120&7.6 &3.6&   0.7&&   11.5&   5.8&    0.9&&   72.6&   67&     56.1&&  76.4&   70.7&   60&&    86.9&   81.6&   71.5&&  88.4&   84.4&   74.2\\
240&7.7&        3.2&    0.2&&   11.3&   5       &0.8&&  97.5&   96.9&   93.8&&  99.7&   97.4&   94.5&&  99.5&   99.1&   98&&    99.7&   99.3&   98.2\\
480&9.7&        5       &0.9&&  12.4&   5.6&    1.3&&   100&    99.9&   99.9&&  100&    100&    99.9&&  100&    100&    100&&   100&    100&    100\\

\midrule
\multicolumn{24}{c}{{\it Case B}}\\
\midrule
120&10& 5.8&    1.1&&   12.7&   6.4&    1.3&&   71.8&   64.9&   53.4&&  76.4&   68.9&   56.3&&  85.3&   80.1&   69.6&&  87.6&   83.1&   73.8\\
240&8.8&        3.9&    0.8&&   10.3&   4.7&    0.9&&   96.7&   95.3&   91.1&&  97.6&   96.1&   92.7&&  99.1&   98.6&   96.8&&  99.3&   98.9&   97.6\\
480&9.1&        4.7&    1.2&&   10.6&   6       &1.2&&  100&    100&    99.6&&  100&    99.9&   99.7&&  100&    100&    100&&   100&    100&    99.9\\

\midrule
\multicolumn{24}{c}{{\it Case C}}\\

\midrule

120&8.9&        4.4&    0.6&&   14.4&   7.7&    1.8&&   61.9&   54.4&   40.9&&  67.2&   60.6&   47.2&&  75.6&   68.8&   58.2&&  80.5&   74.1&   62.7\\
240&10.1&       5.1&    0.8&&   14.4&   8       &1.5&&  93.5&   91.5&   86.1&&  94.2&   93.1&   89.1&&  97.2&   96.3&   93.7&&  99.7&   97.1&   95\\
480&8.4&        4       &0.6&&  10.9&   5.6&    0.9&&   99.8&   99.6&   99.4&&  99.9&   99.8&   99.5&&  100&    100&    100&&   100&    100&    100\\

\midrule
\multicolumn{24}{c}{{\it Case D}}\\
\midrule
120&9.3&        4.4&    0.6&&   13.1&   6.1&    1.6&&   63.9&   56.1&   42.5&&  68.2&   60.6&   45.8&&  78.1&   72.8&   61.1&&  80.6&   75.2&   64.8\\
240&9.3&        4.2&    0.5&&   12.5&   6.1&    1.5&&   94&     92.5&   88.7&&  94.7&   93&     89.8&&  97.9&   96.9&   94.3&&  98.2&   97.4&   94.7\\
480&8.5&        4&      0.8&&   11.6&   5.4&    0.8&&   99.8&   99.5&   99.3&&  99.8&   99.5&   99.4&&  100&    100&    100&&   100&    100&    99.9\\
\midrule
\bottomrule
\multicolumn{24}{l}{\small{Notes. The first two columns refer to the case when $H_0: m=0$ is true ($g$=0 in equation \eqref{power}). The next columns refer }}\\ \multicolumn{24}{l}{\small{  to the case when we test for $H_0: m=0$, but $H_1:m=1$ is true ($g= -0.05, -0.07$ in equation \eqref{power}). Prior to testing    }}\\ \multicolumn{24}{l}{\small{$H_0:m=0$ vs $H_1:m=1$ (for all columns above), we tested sequentially for the presence of maximum two breaks in the   }}\\ \multicolumn{24}{l}{\small{ RF (we used the WR/WF bootstrap $\sup$-$F$ for OLS\ to test $H_0:h=\ell$ vs. $H_1:\ell+1$, $\ell=0,1$). If breaks are detected in }}\\ \multicolumn{24}{l}{\small{RF, the number of breaks and the estimated location are imposed when estimating the SE.}}
\end{tabular}%
\end{center}
\end{table}

\begin{table}
\caption{\label{tabhm=11F}{\it Scenario:(h,m)=(1,1)} - rejection probabilities from testing $H_0:\,m=1$ vs. $H_1:\,m=2$ with bootstrap $\sup$-$F$ test;  number of breaks in the RF was estimated and imposed in each simulation using a sequential strategy based on the WR/WF $\sup$-$F$ for OLS}
\begin{center}
\addtolength{\tabcolsep}{-2pt}
\begin{tabular}{@{}c ccc c ccc c ccc c ccc c ccc c ccc@{}}
\toprule
\midrule
&\multicolumn{3}{c}{WR bootstrap}&&\multicolumn{3}{c}{WF bootstrap}&&\multicolumn{3}{c}{WR bootstrap}&&
\multicolumn{3}{c}{WF bootstrap}&&\multicolumn{3}{c}{WR bootstrap}&&\multicolumn{3}{c}{WF bootstrap}\\
&\multicolumn{3}{c}{Size }&&\multicolumn{3}{c}{Size}&&\multicolumn{3}{c}{Power }&&
\multicolumn{3}{c}{Power}&&\multicolumn{3}{c}{Power}&&\multicolumn{3}{c}{Power }\\
&\multicolumn{3}{c}{$g$=0}&&\multicolumn{3}{c}{$g$=0}&&\multicolumn{3}{c}{$g=0.5$}&&
\multicolumn{3}{c}{$g=0.5$}&&\multicolumn{3}{c}{$g=-0.5$}&&\multicolumn{3}{c}{$g=-0.05$}
\\
\cmidrule{2-4}\cmidrule{6-8} \cmidrule{10-12} \cmidrule{14-16} \cmidrule{18-20} \cmidrule{22-24}\\
T&10\%&5\%&1\%&&10\%&5\%&1\%&&10\%&5\%&1\%&&10\%&5\%&1\%&&10\%&5\%&1\%&&10\%&5\%&1\%\\
\cmidrule{2-4}\cmidrule{6-8} \cmidrule{10-12} \cmidrule{14-16} \cmidrule{18-20} \cmidrule{22-24}
\multicolumn{24}{c}{{\it Case A}}\\
\midrule
120&11.4&       6       &0.7&&  14.3&   8.6&    1.8&&   98.2&   98.2&   97.7&&  98.4&   98.2&   98&&    98.9&   98.9&   98.6&&  99&     98.8&   98.5\\
240&9.8&        4.8&    0.7&&   12.2&   6.1&    1.5&&   98.7&   98.7&   98.6&&  98.8&   98.8&   98.7&&  98.9&   98.9&   98.6&&  99&     98.9&   98.6\\
480&8.2&        4.5&    0.9&&   9.5&    5.4&    1.1&&   100&    100&    99.9&&  100&    100&    99.8&&  99.9&   99.9&   99.9&&  99.9&   99.9&   99.9\\

\midrule
\multicolumn{24}{c}{{\it Case B}}\\
\midrule
120&11.4&       5.4&    1.4&&   14.4&   7.2&    1.9&&   96.7&   95.8&   93.6&&  96.9&   96.4&   94.8&&  97.4&   96.8&   94.9&&  97.5&   97&     95.9\\
240&9.7&        5.1&    1.1&&   11.3&   6.1&    1.2&&   98.2&   98.1&   97.2&&  98&     97.9&   97.2&&  98.4&   98.3&   97.3&&  98.3&   98&     96.9\\
480&9.7&        5       &1.1&&  10.8&   5.6&    0.9&&   98.7&   98.7&   98.5&&  98.6&   98.6&   98.3&&  99.2&   99.2&   99.2&&  98.8&   98.7&   98.7\\

\midrule
\multicolumn{24}{c}{{\it Case C}}\\

\midrule
120&9.8&        3.7&    0.4&&   14.5&   7.6&    1       &&96.3& 94.1&   86&&    97.2&   95.7&   89.8&&  93.1&   90.4&   78.3&&  94.3&   91.8&   80.2\\
240&10.6&       5.2&    0.13&&  14.2&   8.4&    1.6&&   97.4&   97.4&   97.3&&  97.5&   97.5&   97.4&&  97.1&   97.1&   97&&    96.7&   96.7&   96.6\\
480&10.1&       5&      0.8&&   12.4&   7.5&    1.2&&   98.5&   98.5&   97.9&&  98.6&   98.6&   98.1&&  97.1&   97.1&   97&&    97.1&   97&     96.8\\

\midrule
\multicolumn{24}{c}{{\it Case D}}\\
\midrule
120&8.6&        4.6&    0.8&&   13&     7.5&    1.3&&   98.3&   97.8&   96.4&&  98.2&   97.9&   96.8&&  98.5&   98.1&   96.6&&  99      &98.4&  96.9\\
240&10.2&       4.6&    1.2&&   14.1&   7.1&    1.4&&   99&     98.9&   98.6&&  99&     98.9&   98.5&&  98.5&   98.5&   98.3&&  98.6&   98.6&   98.4\\
480&9.9&        5&      0.9&&   12.4&   6.9&    1.6&&   99.7&   99.6&   99.4&&  99.7&   99.6&   99.5&&  99.6&   99.1&   98.8&&  99.5&   99.2&   98.7\\

\midrule
\bottomrule
\multicolumn{24}{l}{\small{Notes. The first two columns refer to the case when $H_0: m=1$ is true ($g$=0 in equation \eqref{power}). The next columns refer }}\\ \multicolumn{24}{l}{\small{  to the case when we test for $H_0: m=1$, but $H_1:m=2$ is true ($g= -0.5, 0.5$ in equation \eqref{power}). Prior to testing    }}\\ \multicolumn{24}{l}{\small{$H_0:m=1$ vs $H_1:m=2$ (for all columns above), we tested sequentially for the presence of maximum two breaks in the   }}\\ \multicolumn{24}{l}{\small{ RF (we used the WR/WF bootstrap $\sup$-$F$ for OLS\ to test $H_0:h=\ell$ vs. $H_1:\ell+1$, $\ell=0,1$). If breaks are detected in }}\\ \multicolumn{24}{l}{\small{RF, the number of breaks and the estimated location are imposed when estimating the SE.}}
\end{tabular}%
\end{center}
\end{table}

\begin{table}
\caption{\label{tabhm=10true2}{\it Scenario:(h,m)=(1,0)} - rejection probabilities from testing $H_0:\,m=0$ vs. $H_1:\,m=1$ with bootstrap $\sup$-$Wald$ and $\sup$-$F$ tests; the number of breaks in the RF is held at the true number ($h=1$); $H_0:m=0$ is true.}
\begin{center}
\addtolength{\tabcolsep}{-2pt}
\begin{tabular}{@{}c ccc c ccc c ccc c ccc c ccc c ccc@{}}
\toprule
\midrule
&\multicolumn{3}{c}{WR bootstrap}&&\multicolumn{3}{c}{WF bootstrap}&&\multicolumn{3}{c}{WR bootstrap}&&
\multicolumn{3}{c}{WF bootstrap}\\
&\multicolumn{3}{c}{$\sup$-$Wald$}&&\multicolumn{3}{c}{$\sup$-$Wald$}&&\multicolumn{3}{c}{$\sup$-$F$}&&
\multicolumn{3}{c}{$\sup$-$F$}\\
\cmidrule{2-4}\cmidrule{6-8} \cmidrule{10-12} \cmidrule{14-16} \\
T&10\%&5\%&1\%&&10\%&5\%&1\%&&10\%&5\%&1\%&&10\%&5\%&1\\
\cmidrule{2-4}\cmidrule{6-8} \cmidrule{10-12} \cmidrule{14-16}
\multicolumn{16}{c}{{\it Case A}}\\
\midrule
120&11& 4.3&    0.9&&   14.8&   7&      1.3&&   8       &4.2&   0.9&&   12.1&   5.9&    1.1\\
240&10.3&       5.5&    1       &&13.1& 5.8&    0.9&&   8.6&    4.1&    0.4&&   11.6&   5.9&    0.9\\
480&10.5&       4.8&    0.5&&   12.3&   6.2&    1       &&10.1& 5.3&    0.9&&   12.5&   6.2&    1.2\\

\midrule
\multicolumn{16}{c}{{\it Case B}}\\
\midrule
120&9.4&        5.2&    0.8&&   13.2&   7       &1.6&&  11.4&   5.5&    1.3&&   11.4&   5.5&    1.3\\
240&9.6&        4.7&    1.3&&   12.2&   6.3&    1.4&&   9.8&    4.7&    1.1&&   9.8&    4.7&    1.1\\
480&11.5&       4.9&    0.4&&   12.9&   6.4&    1.1&&   9.9&    5.3&    1.5&&   9.9&    5.3&    1.5\\

\midrule
\multicolumn{16}{c}{{\it Case C}}\\
\midrule
120&10.2&       4.4&    1&     &14.8&  7&      1.8&&   10.2&   5&      0.8&&   12&     5&      0.8\\
240&10.8&       5&      0.4     &&15.4& 7.4&    1.6&&   10.8&   5.3&    0.6&&   10.8&   5.3&    0.6\\
480&9.7&        5.6&    1&&     11.8&   6.1&    1.3&&   9.1&    4.9&    1       &&9.1&  4.9&    1       \\

\midrule
\multicolumn{16}{c}{{\it Case D}}\\
\midrule
120&10.6        &5&     1.2&&   14.1&   7&      1.5&&   9.6&    4.8&    1       &&9.6&  4.8&    1\\
240&10.3&       5.9&    1.1&&   14.7&   7.5&    2&&     9.7&    4.7&    0.5&&   9.7&    4.7&    0.5\\
480&10.3&       5.6&    0.8&&   13&     6.3&    1.5&&   9.8&    4.4&    0.7&&   9.8&    4.4&    0.7\\
\midrule
\bottomrule
\multicolumn{16}{l}{\small{Notes. For both $\sup$-$F$ and $\sup$-$Wald$ bootstrap tests the number of breaks in}}\\\multicolumn{16}{l}{\small{the RF is held at the true number ($h=1$), we estimated the  location of the
}}\\\multicolumn{16}{l}{\small{RF break and imposed it when the SE was estimated.}}

\end{tabular}\end{center}
\end{table}

\begin{table}
\caption{\label{tabhm=11true2}{\it Scenario:(h,m)=(1,1)} - rejection probabilities from testing $H_0:\,m=1$ vs. $H_1:\,m=2$ with bootstrap $\sup$-$Wald$ and $\sup$-$F$ tests; the number of breaks in the RF is held at the true number ($h=1$) $H_0:m=1$ is true.}
\begin{center}
\addtolength{\tabcolsep}{-2pt}
\begin{tabular}{@{}c ccc c ccc c ccc c ccc c ccc c ccc@{}}
\toprule
\midrule
&\multicolumn{3}{c}{WR bootstrap}&&\multicolumn{3}{c}{WF bootstrap}&&\multicolumn{3}{c}{WR bootstrap}&&
\multicolumn{3}{c}{WF bootstrap}\\
&\multicolumn{3}{c}{$\sup$-$Wald$}&&\multicolumn{3}{c}{$\sup$-$Wald$}&&\multicolumn{3}{c}{$\sup$-$F$}&&
\multicolumn{3}{c}{$\sup$-$F$}\\
\cmidrule{2-4}\cmidrule{6-8} \cmidrule{10-12} \cmidrule{14-16} \\
T&10\%&5\%&1\%&&10\%&5\%&1\%&&10\%&5\%&1\%&&10\%&5\%&1\\
\cmidrule{2-4}\cmidrule{6-8} \cmidrule{10-12} \cmidrule{14-16}
\multicolumn{16}{c}{{\it Case A}}\\
\midrule

120&9.3&        5.2&    0.7&&   8.8&    4.5&    0.8&&   10.9&   5.8&    1.1&&   14.8&   8.8&    2.1\\
240&10.4&       5.6&    0.7&&   10.2&   5.2&    0.9&&   10.5&   5.6&    1.1&&   12.7&   7.6&    1.5\\
480&9.6&        4.3&    0.7&&   10&     4.6&    0.8&&   8.7&    4.8&    1.1&&   10.3&   5.6&    1.4\\

\midrule
\multicolumn{16}{c}{{\it Case B}}\\
\midrule
120&9.4&        4.1&    0.9&&   8.6&    3.2&    0.8&&   12.1&   5.7&    1.5&&   15.1&   8&      2.4\\
240&10.4&       4.6&    0.8&&   10.3&   5.2&    0.9&&   10.6&   6&      1.3&&   12.7&   6.9     &2\\
480&10.3&       4.2&    0.8&&   11&     5.3&    0.7&&   10.4&   5.4&    1.2&&   11.6&   6&      1.2\\

\midrule
\multicolumn{16}{c}{{\it Case C}}\\
\midrule
120&9.8&        3.9&    1.1&&   9.4&    4&      0.3&&   10&     4.7&    0.6&&   15.1&   8&      1.6\\
240&10& 5.1&    1.2&&   10.2&   5       &1&&    10.5&   5       &0.8&&  15.1&   8.6&    2.7\\
480&11.2&       4.7&    1       &&12.1& 5.3&    0.6&&   10.3&   5.9&    1&&     12.6&   8&      1.5\\
\midrule
\multicolumn{16}{c}{{\it Case D}}\\
\midrule
120&10.1&       4.4&    1.6&&   8.5&    3.8&    0.6&&   9.7&    4.8&    1.1&&   13.5&   7.5     &1.7\\
240&11& 5.1&    0.9&&   11.8&   5.1&    0.6&&   11.2&   4.5&    1.3&&   14.8&   8.2&    1.6\\
480&9.7&        5.1&    1.4&&   10.7&   5.2&    1.2&&   11.4&   5.7&    1.2&&   12.9&   7.3&    1.8\\

\midrule
\bottomrule
\multicolumn{16}{l}{\small{Notes. For both $\sup$-$F$ and $\sup$-$Wald$ bootstrap tests the number of breaks in}}\\\multicolumn{16}{l}{\small{the RF is held at the true number ($h=1$), we estimated the  location of the
}}\\\multicolumn{16}{l}{\small{RF break and imposed it when the SE was estimated.}}

\end{tabular}\end{center}
\end{table}

\begin{table}
\caption{\label{tabhm=10fracRF}Percentage of times (out of 1,000 replications) when 0, 1 and 2 breaks in RF were detected with $\sup$-$Wald$ and $\sup$-$F$ bootstrap  tests before testing the true null hypothesis $H_0:m=0$ vs. $H_1:m=1$ in {\it Scenario:(h,m)=(1,0)}.}
\begin{center}
\addtolength{\tabcolsep}{-2pt}
\begin{tabular}{@{}c ccc c ccc c ccc c ccc @{}}
\toprule
\midrule
&\multicolumn{3}{c}{WR $\sup$-$Wald$}&&\multicolumn{3}{c}{WF $\sup$-$Wald$}&&\multicolumn{3}{c}{WR $\sup$-$F$}&&
\multicolumn{3}{c}{WF $\sup$-$F$}\\\cmidrule{2-4}\cmidrule{6-8} \cmidrule{10-12} \cmidrule{14-16}
&\multicolumn{3}{c}{\% of RF  breaks}&&\multicolumn{3}{c}{\% of RF  breaks}&&\multicolumn{3}{c}{\% of RF  breaks}&&
\multicolumn{3}{c}{\% of RF  breaks}\\
\cmidrule{2-4}\cmidrule{6-8} \cmidrule{10-12} \cmidrule{14-16}
T&0&1&2 &&0&1&2&&0&1&2&&0&1&2\\
\cmidrule{2-4}\cmidrule{6-8} \cmidrule{10-12} \cmidrule{14-16}

\multicolumn{16}{c}{{\it Case A}}\\
\midrule
120&0&  94.6&   5.4&&   0&      98.7&   1.3&&   0&      93.3&   6.7&&   0&      92.6&   7.4\\
240&0&  95.8&   4.2&&   0&      95.6&   4.4&&   0&      93.6&   6.4&&   0&      93.3&   6.7\\
480&0&  94.4&   5.6&&   0&      93.6&   6.4&&   0&      95.1&   4.9&&   0&      94.7&   5.3\\

\midrule
\multicolumn{16}{c}{{\it Case B}}\\
\midrule
120&0.3&        93.7&   6&&     0.5&    98.1&   1.4&&   0&      91.1&   8.9&&   0&      91.1&   8.9\\
240&0&  95.9&   4.1&&   0&      95.5&   4.5&&   0&      90.9&   9.1&&   0&      90.3&   9.7\\
480&0&  94.8&   5.2&&   0&      94.8&   5.2&&   0&      93.3&   6.7&&   0&      93.3&   6.7\\

\midrule
\multicolumn{16}{c}{{\it Case C}}\\
\midrule
120&0&  94&     6&&     0.2&    98.4&   1.4&&   0&      93.5&   6.5&&   0&      93&     7\\
240&0&  93.8&   6.2&&   0&      95.9&   4.1&&   0&      93.5&   6.5&&   0&      92.7&   7.3\\
480&0&  93.6&   6.4&&   0&      93.2&   6.8&&   0       &93.9&  6.1&&   0&      93.5&   6.5\\

\midrule
\multicolumn{16}{c}{{\it Case D}}\\
\midrule
120&0&  94.6&   5.4&&   0.2&    97.4&   2.4&&   0&      94.3&    5.7&&     0&      94.1&   5.9\\
240&0   &95.8&  4.2&&   0&      95.8&   4.2&&   0&      94.3&   5.7&&   0&      94.1&   5.9\\
480&0   &93.5&  6.5&&   0&      91.5&   8.5&&   0&      94.2&   5.8&&   0&      94.1&   5.9\\

\midrule\bottomrule
\multicolumn{16}{l}{\small{Notes. In RF we tested $H_0:h=\ell$ vs $H_1:h=\ell+1$, $\ell=0,1$ with the WR   }}\\\multicolumn{16}{l}{\small{and WF bootstrap $\sup$-$Wald$ and $\sup$-$F$ tests. A(n) (additional) break was}}\\\multicolumn{16}{l}{\small{  detected if the WR or WF bootstrap $p$-value was smaller than $5\%$.}}
\end{tabular}%
\end{center}
\end{table}

\begin{table}
\caption{\label{tabhm=11fracRF}Percentage of times (out of 1,000 replications) when 0, 1 and 2 breaks in RF were detected with $\sup$-$Wald$ and $\sup$-$F$ bootstrap  tests before testing the true null hypothesis $H_0:m=1$ vs. $H_1:m=2$ in {\it Scenario:(h,m)=(1,1)}.}
\begin{center}
\addtolength{\tabcolsep}{-2pt}
\begin{tabular}{@{}c ccc c ccc c ccc c ccc @{}}
\toprule
\midrule
&\multicolumn{3}{c}{WR $\sup$-$Wald$}&&\multicolumn{3}{c}{WF $\sup$-$Wald$}&&\multicolumn{3}{c}{WR $\sup$-$F$}&&
\multicolumn{3}{c}{WF $\sup$-$F$}\\\cmidrule{2-4}\cmidrule{6-8} \cmidrule{10-12} \cmidrule{14-16}
&\multicolumn{3}{c}{\% of RF  breaks}&&\multicolumn{3}{c}{\% of RF  breaks}&&\multicolumn{3}{c}{\% of RF  breaks}&&
\multicolumn{3}{c}{\% of RF  breaks}\\
\cmidrule{2-4}\cmidrule{6-8} \cmidrule{10-12} \cmidrule{14-16}
T&0&1&2 &&0&1&2&&0&1&2&&0&1&2\\
\cmidrule{2-4}\cmidrule{6-8} \cmidrule{10-12} \cmidrule{14-16}

\multicolumn{16}{c}{{\it Case A}}\\
\midrule
120&0   &94.8&  5.2&&   0&      98.6&   1.4&&   0&      94.5&   5.5     &&0&    93.9&   6.1\\
240&0&  95.9&   4.1&&   0&      95.8&   4.2&&   0&      94.4&   5.6&&   0&      93.9&   6.1\\
480&0   &96.3&  3.7&&   0&      96.2&   3.8&&   0&      94.7&   5.3&&   0&      94.6&   5.4\\

\midrule
\multicolumn{16}{c}{{\it Case B}}\\
\midrule
120&0.5&        93.6&   5.9&&   0.8&    98&     1.2&&   0&      92.8&   7.2&&   0.01&   92.89&  7.1\\
240&0.2&        95.6&   4.2&&   0.2&    95      &4.8    &&0     &92     &8&&     0       &91.9&  8.1\\
480&0   &96.2&  3.8&&   0&      96.1&   3.9&&   0&      93.4&   6.6&&   0&      93.2&   6.8\\

\midrule
\multicolumn{16}{c}{{\it Case C}}\\
\midrule
120&0.1&        94.2&   5.7&&   0.4&    98&     1.6&&   0&      93.4&   6.6&&   0&      93.6&   6.4\\
240&0&  93.7&   6.3&&   0&      95.8&   4.2&&   0&      94.2&   5.8&&   0&      93.8&   6.2\\
480&0&  94.1&   5.9&&   0       &94.6&  5.4&&   0&      94.4&   5.6&&   0&      94.2&   5.8\\

\midrule
\multicolumn{16}{c}{{\it Case D}}\\
\midrule
120&0&  94.4&   5.6&&   0.2&    97.8&   2&&     0       &94.1&  5.9&&   0       &94.2&  5.8\\
240&0&  95.7&   4.3&&   0       &96&    4       &&0     &94.7&  5.3&&   0&      94.9&   5.1\\
480&0   &95&    5       &&0&    93.6&   6.4&&   0&      93.8&   6.2&&   0&      94.1&   5.9\\

\midrule\bottomrule
\multicolumn{16}{l}{\small{Notes. In RF we tested $H_0:h=\ell$ vs $H_1:h=\ell+1$, $\ell=0,1$ with the WR   }}\\\multicolumn{16}{l}{\small{and WF bootstrap $\sup$-$Wald$ and $\sup$-$F$ tests. A(n) (additional) break was }}\\\multicolumn{16}{l}{\small{  detected if the  WR or WF bootstrap $p$-value was smaller than $5\%$.}}
\end{tabular}%
\end{center}
\end{table}

\begin{table}
\caption{\label{tabhm=10small}{\it Scenario:(h,m)=(1,0)} - rejection probabilities from testing $H_0:\,m=0$ vs. $H_1:\,m=1$ with bootstrap $\sup$-$Wald$ test (size of break in RF is smaller than in Table \ref{tabhm=10} of the paper)}
\begin{center}
\addtolength{\tabcolsep}{-2pt}
\begin{tabular}{@{}c ccc c ccc c ccc c ccc c ccc c ccc@{}}
\toprule
\midrule
&\multicolumn{3}{c}{WR bootstrap}&&\multicolumn{3}{c}{WF bootstrap}&&\multicolumn{3}{c}{WR bootstrap}&&
\multicolumn{3}{c}{WF bootstrap}&&\multicolumn{3}{c}{WR bootstrap}&&\multicolumn{3}{c}{WF bootstrap}\\
&\multicolumn{3}{c}{RF tested}&&\multicolumn{3}{c}{RF tested}&&\multicolumn{3}{c}{RF not tested}&&
\multicolumn{3}{c}{RF not tested}&&\multicolumn{3}{c}{breaks detected}&&\multicolumn{3}{c}{breaks detected}\\
\cmidrule{2-4}\cmidrule{6-8} \cmidrule{10-12} \cmidrule{14-16} \cmidrule{18-20} \cmidrule{22-24}\\
T&10\%&5\%&1\%&&10\%&5\%&1\%&&10\%&5\%&1\%&&10\%&5\%&1\%&&0&1&2&&0&1&2\\
\cmidrule{2-4}\cmidrule{6-8} \cmidrule{10-12} \cmidrule{14-16} \cmidrule{18-20} \cmidrule{22-24}
\multicolumn{24}{c}{{\it Case A}}\\
\midrule
120&9.9&        4.9&    1.1&&   15.5&   8.8&    2.2&&   10.3&   5.7&    1       &&15.6& 7.6&    2.3&&   5&      87.2&   7.8&&   5.5&    92.4&   2.1\\
240&9.5&        4.6&    0.8&&   13.4&   6.5&    1.2&&   9.6&    4.5&    0.8&&   12.8&   7.5&    1.2&&   0       &93.2&  6.8&&   0&      96.3&   3.7\\
480&9.3&        4.5&    0.8&&   11.6&   5.7&    1       &&9.5&  4.7&    0.8&&   11.5&   5.6&    1.1&&   0&      95.4&   4.6&&   0       &95.7&  4.3\\

\midrule
\multicolumn{24}{c}{{\it Case B}}\\
\midrule
120&10.6&       5.3&    1&&     15&     8.1&    2       &&11.6& 5.9&    1.1&&   14.6&   8.3&    1.9&&   6.4&    85.3&   8.3&&   6.8&    91.5&   1.7\\
240&9.7&        4.8&    0.9&&   11.9&   6.2&    1.3&&   9.5&    4.6&    1&&     12&     6.1&    1.4&&   0.1&    93&     6.9&&   0.1&    95.6&   4.3\\
480&9.8&        5.2&    0.9&&   11.3&   6.4&    1.3&&   9.9&    4.9&    0.5&&   11.3&   5.9&    1.4&&   0&      95&     5&&     0&      94.8&   5.2\\

\midrule
\multicolumn{24}{c}{{\it Case C}}\\
\midrule
120&9.7&        4.8&    1&&     15.6&   7.8&    1.7&&   9.5&    4.7&    1       &&14.2& 7.9&    1.4&&   8       &84.9&  7.1&&   9.1&    88.9&   2\\
240&9.9&        5.4&    0.7&&   15.5&   10&     2.5&&   10.7&   5.5&    1.2&&   15&     9.7&    1.9&&   0&      94.8&   5.2&&   0&      96.3&   3.7\\
480&9.4&        4.9&    1.5&&   11.9&   6.5&    2&&     9.8&    4.4&    1.2&&   12.5&   6.4&    2&&     0&      92.8&   7.2&&   0&      92.3&   7.7\\

\midrule
\multicolumn{24}{c}{{\it Case D}}\\
\midrule
120&7.6&        3.6&    0.7&&   11.3&   6.2&    1.4&&   7.6&    3.7&    0.4&&   11.6&   5&      1.3&&   8.1&    86.3&   5.6&&   8.9&    89.1&   2\\
240&8.9&        4.1&    0.8&&   13.2&   7.5&    1.9&&   9.5&    4.4&    0.8&&   12.9    &7.7&   1.3&&   0&      93.9&   6.1&&   0       &95.6&  4.4\\
480&10.3&       5.5&    1.1&&   13.2&   7.4&    1.2&&   10.1&   5.7&    1.1&&   13.4&   7.1&    1.7&&   0&      92.9&   7.1&&   0&      93.3&  6.7\\

\midrule
\bottomrule

\multicolumn{24}{l}{\small{Notes. All the columns refer to the case when $H_0: m=0$ is true. The first two columns correspond to the case when }}\\ \multicolumn{24}{l}{\small{we tested sequentially for a maximum of 2 breaks (using the WR/WF bootstrap $\sup$-$Wald$ for OLS to test }}\\ \multicolumn{24}{l}{\small{ $H_0:h=\ell$ vs $H_1:h=\ell+1$, $\ell=0,1$). The resulting number of RF breaks and their estimated location was imposed  }} \\ \multicolumn{24}{l}{\small{in each simulation prior to estimating RF and SE and testing $H_0:\,m=0$. The next two columns refer  to the case }} \\ \multicolumn{24}{l}{\small{when under the null and the alternative hypotheses the number of breaks in RF is held at the true number ( $h=1$) }}\\ \multicolumn{24}{l}{\small{and the estimated location of the  RF break in imposed when the SE was estimated. The last two columns give the }}\\ \multicolumn{24}{l}{\small{percentage of times that the bootstrap tests detected and    imposed 0, 1 or 2 breaks when SE was estimated.}}
\end{tabular}%
\end{center}
\end{table}

\begin{table}
\caption{\label{tabhm=11small}{\it Scenario:(h,m)=(1,1)} - rejection probabilities from testing $H_0:\,m=1$ vs. $H_1:\,m=2$ with bootstrap $\sup$-$Wald$ test (size of break in RF is smaller than in Table \ref{tabhm=11} of the paper)}
\begin{center}
\addtolength{\tabcolsep}{-2pt}
\begin{tabular}{@{}c ccc c ccc c ccc c ccc c ccc c ccc@{}}
\toprule
\midrule
&\multicolumn{3}{c}{WR bootstrap}&&\multicolumn{3}{c}{WF bootstrap}&&\multicolumn{3}{c}{WR bootstrap}&&
\multicolumn{3}{c}{WF bootstrap}&&\multicolumn{3}{c}{WR bootstrap}&&\multicolumn{3}{c}{WF bootstrap}\\
&\multicolumn{3}{c}{RF tested}&&\multicolumn{3}{c}{RF tested}&&\multicolumn{3}{c}{RF not tested}&&
\multicolumn{3}{c}{RF not tested}&&\multicolumn{3}{c}{breaks detected}&&\multicolumn{3}{c}{breaks detected}\\
\cmidrule{2-4}\cmidrule{6-8} \cmidrule{10-12} \cmidrule{14-16} \cmidrule{18-20} \cmidrule{22-24}\\
T&10\%&5\%&1\%&&10\%&5\%&1\%&&10\%&5\%&1\%&&10\%&5\%&1\%&&0&1&2&&0&1&2\\
\cmidrule{2-4}\cmidrule{6-8} \cmidrule{10-12} \cmidrule{14-16} \cmidrule{18-20} \cmidrule{22-24}
\multicolumn{24}{c}{{\it Case A}}\\
\midrule
120&9.4&        5       &0.8&&  9.8&    5.2&    1       &&9.7&  4.9&    0.8&&   9.3&    4.4&    0.7&&   7.5&    85&    7.5&&   8.2&    89.5&   2.3\\
240&9.1&        5.3&    0.8&&   10.5&   4.7&    0.6&&   9.7&    5.5&    0.8&&   10.3&   4.6&    0.5&&   0.5&    92.6&   6.9&&   0.7&    95.9&   3.4\\
480&9.7&        5.5&    1.1&&   10.8&   5.6&    1       &&9.6&  5.3&    1&&     10.9&   5.4&    0.9&&   0.5&    95.7&   3.8&&   0.5&    96&     3.5\\

\midrule
\multicolumn{24}{c}{{\it Case B}}\\
\midrule
120&8.3 &3.7&   0.6&&   8.1&    3.6&    0.9&&   8.4&    4       &0.7&&  7.7&    3.2&    0.6&&   7.7&    84&     8.3&&   7.8&    90&     2.2\\
240&10.1&       5       &0.9&&  10.6&   5.2&    1.1&&   10.1&   5.2&    1&&     10.5&   5.3&    1&&     1.3&    91.5&   7.2&&   1.5&    94.4&  4.1\\
480&9.5&        5.2&    1.1&&   10.6&   5.7&    1.5&&   9.1&    4.9&    0.9&&   10.3&   5.4&    1.2&&   0.8&    94.5&   4.7&&   1&      94.8    &4.2\\

\midrule
\multicolumn{24}{c}{{\it Case C}}\\
\midrule
120&8.7&        4.3&    0.8&&   9.3&    4.4&    0.4&&   9.3&    4.3&    0.9&&   9&      4.1&    0.2&&   8.9&    84.8&   6.3&&   10.6&   87.5&   1.9\\
240&10.2&       5.3&    1.1&&   10.9&   4.9&    0.9&&   10.3&   5.5&    1.2&&   10.7&   5&      0.9&&   1.2&    93.1&   5.7&&   1.5&    94.9&   3.6\\
480&10.6        &5.4&   1&&     12.2&   6       &1&&    10.4&   5.2&    1.1&&   11.9&   5.6&    0.8&&   0.4     &93.6&  6       &&0.5&  93.9&   5.6\\

\midrule
\multicolumn{24}{c}{{\it Case D}}\\
\midrule
120&9.8 &4.4&   0.5&&   9.3&    4.2&    0.5&&   10&     4.2&    0.5&&   8.6&    3.5&    0.1&&   9&      84.2&   6.8&&   10.3&   87.4&   2.3\\
240&9.8&        5.4     &0.8&&  11.1&   4.5&    0.4&&   9.9&    5.5     &0.8&&  11.1&   4.5&    0.4&&   0.4&    93.7&   5.9&&   0.5&    95.1&   4.4\\
480&9.7&        4.9&    1.5     &&10.6& 5.2&    1.4&&   9.8&    4.7&    1.5&&   10.6&   5.1&    1.3&&   0.1&    94      &5.9&&  0.2&    94.2&   5.6\\

\midrule
\bottomrule

\multicolumn{24}{l}{\small{Notes. All the columns refer to the case when $H_0: m=1$ is true. The first two columns correspond to the case when }}\\ \multicolumn{24}{l}{\small{we tested sequentially for a maximum of 2 breaks (using the WR/WF bootstrap $\sup$-$Wald$ for OLS to test }}\\ \multicolumn{24}{l}{\small{ $H_0:h=\ell$ vs $H_1:h=\ell+1$, $\ell=0,1$). The resulting number of RF breaks and their estimated location was imposed  }} \\ \multicolumn{24}{l}{\small{in each simulation prior to estimating RF and SE and testing $H_0:\,m=1$. The next two columns refer  to the case }} \\ \multicolumn{24}{l}{\small{when under the null and the alternative hypotheses  the number of breaks in RF is held at the true number ( $h=1$) }}\\ \multicolumn{24}{l}{\small{and the estimated location of the  RF break in imposed when the SE was estimated. The last two columns give the }}\\ \multicolumn{24}{l}{\small{percentage of times that the bootstrap tests detected and    imposed 0, 1 or 2 breaks when SE was estimated.}}
\end{tabular}%
\end{center}
\end{table}
\clearpage
\newpage

\end{document}